\documentclass[acmsmall,nonacm,screen]{acmart}
\usepackage{booktabs}
\usepackage{stmaryrd}
\usepackage{tikz}
\usepackage{adjustbox}
\usepackage{tabularx}
\usepackage{multirow}
\usepackage{amssymb} 
\usepackage{bbold}
\usepackage{colortbl}
\usetikzlibrary{graphs,chains,matrix,calc}

\usepackage{forest} 
\usepackage{tikz-qtree} 

\newcommand{\NRMDatalogSN}{\texorpdfstring{$\text{NRMD}^{\neg}$}{NRMD¬}}
\newcommand{\MDatalogSN}{\texorpdfstring{$\text{MD}^{\neg}$}{MD¬}}
\newcommand{\MRA}{{MRA}}
\newcommand{\SPARQL}{{SPARQL}}

\newcommand{\Variable}[1]{{{?}{#1}}}
\newcommand{\varX}{\Variable{X}}
\newcommand{\varY}{\Variable{Y}}
\newcommand{\varZ}{\Variable{Z}}

\newcommand{\sI}{\mathbf{I}}
\newcommand{\sL}{\mathbf{L}}
\newcommand{\sV}{\mathbf{V}}
\newcommand{\sT}{\mathbf{T}}
\newcommand{\sC}{\mathbf{C}}
\newcommand{\sA}{\mathbf{A}}
\newcommand{\sN}{\mathbf{N}}

\newcommand{\calL}{\mathcal{L}}
\newcommand{\calQ}{\mathcal{Q}}
\newcommand{\calD}{\mathcal{D}}
\newcommand{\calS}{\mathcal{S}}

\newcommand{\Eval}{\operatorname{Eval}}

\newcommand{\unull}{\operatorname{null}}

\newcommand{\ans}{\operatorname{atoms}}

\newcommand{\gpattern}{\operatorname{gp}}

\newtheorem{claim}{Claim}

\newcommand{\colCopy}[2]{\langle#1,#2\rangle}

\newcommand\rsb{\rule{0pt}{2.6ex}} 
\newcommand\rsa{\rule[-1.2ex]{0pt}{0pt}} 
\newcommand\rs{\rsb \rsa} 

\newcommand{\NullRel}{\operatorname{Null}}
\newcommand{\CompRel}{\operatorname{Comp}}
\newcommand{\TripRel}{\operatorname{Trip}}

\newcommand{\dt}{\operatorname{dt}}

\newcommand{\col}{\operatorname{coloring}}

\newcommand{\equal}{\operatorname{eq}}

\newcommand{\triple}{\operatorname{triple}}
\newcommand{\term}{\operatorname{term}}
\renewcommand{\terms}{\operatorname{terms}} 
\newcommand{\iri}{\operatorname{iri}}

\newcommand{\dom}{\operatorname{dom}}

\newcommand{\inScope}{\operatorname{inScope}}
\newcommand{\range}{\operatorname{range}}
\newcommand{\false}{\operatorname{\emph{false}}}
\newcommand{\true}{\operatorname{\emph{true}}}
\newcommand{\error}{\operatorname{\emph{error}}}
\newcommand{\comp}{\operatorname{comp}}

\newcommand{\NotNull}{\operatorname{NotNull}}
\newcommand{\Error}{\operatorname{Error}}
\newcommand{\subs}{\operatorname{subs}}

\newcommand{\NULL}{\mathtt{NULL}}

\newcommand{\SELECT}{\operatorname{SELECT}}

\newcommand{\WHERE}{\operatorname{WHERE}}
\newcommand{\AAND}{\operatorname{AND}}
\newcommand{\UNION}{\operatorname{UNION}}

\newcommand{\OPTIONAL}{\operatorname{OPT}}
\newcommand{\FILTER}{\operatorname{FILTER}}
\newcommand{\MINUS}{\operatorname{MINUS}}

\newcommand{\bound}{\operatorname{bound}}
\newcommand{\EXCEPT}{\operatorname{EXCEPT}}

\def\lojoin{\hbox{\raise -.2em\hbox to-.32em{$\urcorner$} \hbox to-.08em{$\lrcorner$} $\Join$}\,}
\newcommand{\ev}[2]{\llbracket #1 \rrbracket_{#2}}

\newcommand{\var}{\operatorname{var}}

\newcommand{\T}{\mathcal{T}}
\newcommand{\TN}{\mathcal{N}}
\newcommand{\TE}{\mathcal{E}}
\newcommand{\TL}{\mathcal{L}}

\newcommand{\card}{\operatorname{card}} 
\newcommand{\set}{\operatorname{set}}

\newcommand{\Individual}[1]{\mathsf{#1}}
\newcommand{\iAlice}{\Individual{Alice}}
\newcommand{\iBob}{\Individual{Bob}}
\newcommand{\iCarol}{\Individual{Carol}}
\newcommand{\iSantiago}{\Individual{Santiago}}
\newcommand{\iLima}{\Individual{Lima}}
\newcommand{\Predicate}[1]{\mathsf{#1}}
\newcommand{\pKnows}{\Predicate{knows}}
\newcommand{\pLivesIn}{\Predicate{livesIn}}

\newcommand{\iriFor}[1]{\mathit{iri}\_{#1}}
\newcommand{\iriT}[1]{\mathit{iri}\_t^{#1}}
\newcommand{\iriB}{\mathit{iri}\_b}

\newcommand{\iriA}{\mathit{iri}\_A}
\newcommand{\litA}{\mathit{lit}\_A}

\newcommand{\eliminar}[1]{}
\makeatletter
\@ifpackageloaded{stix}{%
}{%
	\DeclareFontEncoding{LS2}{}{\noaccents@}
	\DeclareFontSubstitution{LS2}{stix}{m}{n}
	\DeclareSymbolFont{stix@largesymbols}{LS2}{stixex}{m}{n}
	\SetSymbolFont{stix@largesymbols}{bold}{LS2}{stixex}{b}{n}
	\DeclareMathDelimiter{\lBrace}{\mathopen} {stix@largesymbols}{"E8}%
	{stix@largesymbols}{"0E}
	\DeclareMathDelimiter{\rBrace}{\mathclose}{stix@largesymbols}{"E9}%
	{stix@largesymbols}{"0F}
}
\makeatother

\begin{document}

\title[The multiset semantics of SPARQL patterns]{Multiset semantics in SPARQL, Relational Algebra and Datalog}
\titlenote{This is the authors' preprint of: R. Angles, C. Gutierrez, and D. Hern\'andez. Multiset semantics in SPARQL, Relational Algebra and Datalog. \emph{Semantic Web}, 2026. DOI: \href{https://doi.org/10.1177/22104968261439426}{10.1177/22104968261439426}.}

\author{Renzo Angles}
\orcid{0000-0002-6740-9711}
\affiliation{%
  \institution{Department of Computer Science, Faculty of Engineering, Universidad de Talca}
  \country{Chile}
}
\affiliation{%
  \institution{Millenium Institute for Foundational Research on Data (IMFD)}
  \country{Chile}
}
\email{rangles@utalca.cl}

\author{Claudio Gutierrez}
\orcid{0000-0002-4559-6544}
\affiliation{%
  \institution{Department of Computer Science, Universidad de Chile}
  \country{Chile}
}
\affiliation{%
  \institution{Millenium Institute for Foundational Research on Data (IMFD)}
  \country{Chile}
}
\email{cgutierr@dcc.uchile.cl}

\author{Daniel Hernández}
\authornote{Corresponding author.}
\orcid{0000-0002-7896-0875}
\affiliation{%
  \institution{Institute for Artificial Intelligence, University of Stuttgart}
  \country{Germany}
}
\email{daniel.hernandez@ki.uni-stuttgart.de}

\renewcommand{\shortauthors}{Angles et al.}

\begin{abstract}
The paper analyzes and characterizes the algebraic and logical structure of the multiset semantics for SPARQL patterns involving AND, UNION, FILTER, EXCEPT, and SELECT. To do this, we align SPARQL with two well-established query languages: Datalog and Relational Algebra. Specifically, we study (i) a version of non-recursive Datalog with safe negation extended to support multisets, and (ii) a multiset relational algebra comprising projection, selection, natural join, arithmetic union, and except. We prove that these three formalisms are expressively equivalent under multiset semantics.
\end{abstract}

\keywords{Query Languages, Multisets, Bags, SPARQL, Datalog, Relational Algebra}

\maketitle

\section{Introduction}%
\label{sec:introduction}

Informally speaking, multisets are sets in which each element could occur multiple times, that is, the number of ``copies'' of each element matters. In the field of databases, the notion of multisets (also called ``duplicates'' or ``bags'')\footnote{There seems to be no agreement on the best terminology~\cite[p. 27]{Melton2002}. In this paper, we will use the word ``multiset''.} has been studied in several contexts, including programming languages \cite{91640,91641}, bag languages \cite{90818,90191,90190,91631,91629,90189}, relational algebra \cite{91030,91632,91627}, Datalog \cite{90820,91368,91634,90960,91630}, SQL \cite{91635,91642}, {\SPARQL} \cite{91643,91636,91302,91353} and data integration \cite{91639}.

The incorporation of multisets in query languages is essentially due to practical concerns: duplicate elimination is expensive, and duplicates might be required for some applications, e.g., for aggregation. Although this design decision may be debatable (e.g., see~\cite{Date2006}), today multisets are an established reality in database systems~\cite{Green09,Lamperti}.

The classical theory behind declarative query languages includes formalisms (relational algebra or relational calculus) that for sets have a clear and intuitive semantics for users, developers and theoreticians~\cite{Abiteboul-Book}.  The same cannot be said for their extensions to multisets, whose theory is complex (particularly the containment of queries), and their practical use not always clear~\cite{Green09}. Worst, there exist several possible ways of extending set relational operators to multisets, which makes the study and design of multiset semantics for query languages challenging.

\begin{table}
  \caption{\emph{Possible ways of extending set operators with multiset semantics in SQL and {\SPARQL}.}\hspace{0.5em} The table shows several extended relational algebra operations for multisets currently present (or possible to implement) in SQL and {\SPARQL}.
  Let $R$, $S$ and $T$ be multiset relations satisfying that $R$ and $S$ have the same attributes, and $T$ does not have attributes in common with $R$.
  The cardinality of an element $x$ in a relation $R$ is represented as $R(x)$.
  Note that SPARQL works with multisets of bindings, whose corresponding schema is a set of variables.
  }
  \label{multiset-operations-sql-sparql}
  \resizebox{\textwidth}{!}{%
  \begin{tabular}{lllll}
    \toprule
    \textbf{Operation}
    & \textbf{Operator}
    & \textbf{Cardinality for $x$}
    & \textbf{SQL}
    & \textbf{SPARQL} \\
    \midrule
    Selection
    & $\sigma_\varphi(R)$
    & $\left\{
      \begin{array}{ll}
        R(x) & \text{ if }x\text{ satisfies }\varphi, \\
        0 & \text{ otherwise.} \\
      \end{array}\right.$
    & \texttt{SELECT * FROM $R$ WHERE $\varphi$}
    & \texttt{$R$ FILTER ($\varphi$)}
    \\
    \midrule
    Cartesian product
    & $R \times T$
    & $R(x) \times T(x)$ 
    & \texttt{$R$ CROSS JOIN $T$}
    & \texttt{$R$ AND $T$}
    \\
    \midrule
    Join
    & $R \Join_{\varphi} T$
    & $ R(x) \times T(x)$ 
    & \texttt{($R$ CROSS JOIN $T$) WHERE $\varphi$}
    & \texttt{($R$ AND $T$) FILTER ($\varphi$)}
    \\
    \midrule
    Max-union
    & $R \sqcup S$
    & $\max(R(x), S(x))$ 
    & \texttt{($R$ UNION ALL $S$) EXCEPT ALL} 
    & -- 
    \\
    & & & \texttt{($R$ INTERSECT ALL $S$)}
    \\
    \midrule
    Arithmetic union
    & $R \cup S$
    & $R(x) + S(x)$ 
    & \texttt{$R$ UNION ALL $S$}
    & \texttt{$R$ UNION $S$}
    \\ 
    \midrule
    Min-intersection
    & $R \cap S$
    & $\min(R(x), S(x))$
    & \texttt{$R$ INTERSECT ALL $S$}
    & --
    \\ 
    \midrule
    Max-intersection
    & $R \sqcap S$
    & $S(x) \times S(x)$
    & \texttt{$R$ NATURAL JOIN $S$}
    & \texttt{$R$ AND $S$}
    \\ 
    \midrule
    Arithmetic difference
    & $R - S$
    & $\max(0, R(x) - S(x))$
    & \texttt{$R$ EXCEPT ALL $S$}
    & --
    \\
    \midrule
    Existential negation
    & $R \setminus S$
    & $\left\{
      \begin{array}{ll}
        R(x) & \text{ if } S(x) = 0,\\
        0 & \text{ otherwise.}
      \end{array}\right.$
    & 
     $ 
      \begin{array}{l}
        \texttt{SELECT * FROM $R$} \\
        \texttt{WHERE $x$ NOT IN ($S$)} 
      \end{array}
     $
    & \texttt{$R$ MINUS $S$} 
    \\
    \midrule
    Projection
    & $\pi_{Atts}(R)$
    & $\sum_{t \in R,\, t[Atts]=x} R(x)$
    & \texttt{SELECT $Atts$ FROM $R$}
    & \texttt{SELECT $Atts$}
    \\
    \bottomrule
   \\
  \end{tabular}%
  }
\end{table}

To illustrate the variety of possible semantics, we will show the different extensions to multisets of set operators found in the literature.
Consider the following multiset relations:
$R(W,X) = \lBrace (a,b),(a,b),(a,d) \rBrace$, 
$S(W,X) = \lBrace (a,b) \rBrace$ and 
$T(Y,Z) = \lBrace (b,c),(b,c) \rBrace$.
For the first relation, $R$ is the name of the relation, $W$ and $X$ are the attributes that conform the schema of $R$, $R$ contains three tuples, and the tuple $(a,b)$ is duplicated (i.e., its cardinality is $2$). A similar description can be given for the relations $S$ and $T$. Note that $R$ and $S$ have the same attributes, while $T$ does not have attributes in common with $R$ and $S$.

  
\begin{itemize}
\item 
The \emph{selection} returns the tuples satisfying a given condition but keeping cardinalities. 
For example, $\sigma_{X='b'}(R)$ returns the multiset $\lBrace (a,b),(a,b) \rBrace$ with schema $(W,X)$.

\item 
The \emph{cartesian product} results in the multiplication of the cardinalities. 
For example, $R \times T$ returns the multiset $\lBrace$ $(a,b,b,c)$, $(a,b,b,c)$, $(a,b,b,c)$, $(a,b,b,c)$, $(a,d,b,c)$, $(a,d,b,c)$ $\rBrace$ with schema $(W,X,Y,Z)$.

\item 
The \emph{join} results in the multiplication of the cardinalities, as it is expressed as a cartesian product followed by a selection. 
For example, $R \Join_{X = Y} T$ returns the multiset $\lBrace$ $(a,b,b,c)$, $(a,b,b,c)$, $(a,b,b,c)$, $(a,b,b,c)$ $\rBrace$ with schema $(W,X,Y,Z)$.

\item 
The \emph{max-union} takes the maximum number of occurrences of an element. 
For example, $R \sqcup S$ returns the multiset $\lBrace$ $(a,b)$, $(a,b)$, $(a,d)$ $\rBrace$ with schema $(W,X)$.

\item 
The \emph{arithmetic union} adds up cardinalities. 
For example, $R \cup S$ returns the multiset $\lBrace$ $(a,b)$, $(a,b)$, $(a,d)$, $(a,b)$ $\rBrace$ with schema $(W,X)$.

\item
The \emph{min-intersection} takes the minimum number of occurrences of each element in the intersection. 
For example, $R \cap S$ returns the multiset $\lBrace$ $(a,b)$ $\rBrace$ with schema $(W,X)$.

\item
The \emph{max-intersection} returns the product of the cardinalities of each element in the intersection.
For example, $R \sqcap S$ returns the multiset $\lBrace$ $(a,b)$, $(a,b)$ $\rBrace$ with schema $(W,X)$.

\item 
The \emph{arithmetic difference} subtracts the cardinalities of the elements up to zero. 
For example $R - S$ returns the multiset $\lBrace$ $(a,b)$, $(a,d)$ $\rBrace$ with schema $(W,X)$.

\item 
The \emph{existential negation} returns the elements in the first multiset that do not occur in the second one,
and keeping the cardinalities of such elements in the first multiset. 
For example, the expression $R \setminus S$ returns the multiset $\lBrace$ $(a,d)$ $\rBrace$ with schema $(W,X)$.

\item 
The \emph{projection} reduces the number of attributes in each tuple, and gives rise to new cardinalities for the resulting tuples.
For example $\pi_{W} (R)$ returns the multiset $\lBrace$ $(a)$, $(a)$, $(a)$ $\rBrace$ with schema $(W)$.

\end{itemize}


Table \ref{multiset-operations-sql-sparql} shows a summary of the above operators, and their corresponding implementation in SQL and {\SPARQL}. Note that SQL can express all the operators, whereas {\SPARQL} does not support max-union, min-intersection, and arithmetic difference.
Also note that SPARQL uses the AND operator to implement cartesian product and max-intersection. The first case occurs because $R$ and $T$ do not have variables in common, and the second case occurs because $R$ and $S$ have the same set of variables. 

The landscape of operators over multisets poses important challenges for integrating multisets in query languages.
First, as shown in Table \ref{multiset-operations-sql-sparql}, some operators exhibit different semantics when applied to multisets. 
Second, while relational algebra and SQL support all the semantics listed, SPARQL and Datalog only support a subset. 
Third -- and this is the main motivation for our research -- it remains unclear whether there exists an optimal set of multiset operators for SPARQL, and if so, which one it is. To tackle these questions, it is essential to understand how formalisms that are ``closed'' with respect to SPARQL behave, and how their design and behavior can inform or be translated into SPARQL. In technical terms, this means analyzing the expressive power of SPARQL regarding multisets. To this end, we focus on two natural and well-studied reference points: relational algebra and Datalog. That is the aim of this article. Next, we review the existing literature about multisets. 


\paragraph{Related Work.}
First, we consider the research works that define general algebras for manipulating bags.
Albert~\cite{90818} extended typical set operations (union, intersection, difference, and boolean selection) to bags, and demonstrated that some of the algebraic properties for sets fail for multisets.  
Grumbach et al.~\cite{90190} introduced a bag algebra, called BALG, that extends relational operations to handle duplicates. This paper shows that BALG is more expressive than standard relational algebra because it can count duplicates, but it still has low data complexity (LOGSPACE).
Grumbach and Milo~\cite{91631} focused on designing bag algebras that are both expressive and computationally tractable. They introduce restricted forms of projection and join to maintain tractable data complexity.
Libkin and Wong~\cite{90191,90189} introduced BQL, a query language for handling bags and aggregate functions (sum, count, avg.). They show that BQL is more expressive than traditional set-based languages, and shows that after incorporating structural recursion to BQL, it is able to express all primitive recursive functions, significantly increasing its computational power.
Ricciotti and Cheney~\cite{91642} explored how to mix set and bag semantics in query languages, addressing practical needs found in SQL (e.g., SELECT versus SELECT DISTINCT). They propose a formal model that supports both semantics and allows translation between them. 



The first attempt to extend the relational algebra to include multisets was made by Dayal et al.~\cite{91030}. In this work, the authors introduced a multiset relational algebra (formed by the operators of projection, selection, join, max-union, arithmetic union, min-intersection and arithmetic difference) and studied their algebraic properties. This work laid the groundwork for formalizing bag semantics in relational query languages.
%
Klauser and Goodman~\cite{91632} provided a semantic framework for understanding the role of multirelations (relations with duplicates) at the conceptual level. The authors explain how any query language can be extended consistently to have full multirelational expressiveness. 
Afrati et al.~\cite{90960} studied query containment in relational databases under bag semantics and bag-set semantics (duplicates allowed in intermediate steps but not in final output). The authors identify conditions under which containment is decidable and provide complexity results.
Console et al.~\cite{91627} investigated fragments of bag relational algebra, focusing on their expressive power. The authors also study query answering over bags with nulls (i.e. under incomplete data).

Multisets have also been the subject of study in the context of Datalog, with various extensions proposed to support bag semantics.
Mumick et al.~\cite{91368} defined the Magic Sets transformation for optimizing recursive queries, and described how to adapt this technique to support duplicates. They also showed how to efficiently evaluate recursive queries under multiset semantics. 
In a subsequent work~\cite{90820}, Mumick et al. extended the Magic Sets technique to support duplicates and aggregate functions in recursive queries. The authors also studied the challenges of preserving correct bag semantics when applying recursion and aggregation.
Cohen~\cite{91634} studied the problem of query equivalence under bag semantics. This work includes complexity results and demonstrates that equivalence checking is significantly harder under bag semantics.
Bertossi et al.~\cite{91368} developed a translation of Datalog under bag semantics into warded Datalog$^\pm$, a well-behaved extension under set semantics. The authors investigated the properties of the resulting Datalog$^\pm$ programs, the problem of deciding multiplicities, and expressibility of some bag operations.

For SPARQL -- the standard query language for RDF databases -- Pérez et al.~\cite{10145} provided the first formal treatment of its multiset semantics.
This work influenced the definition of SPARQL 1.0 \cite{10155} and SPARQL 1.1 \cite{90699}, whose semantics are based on operations over multisets of mappings (although a database is a set of RDF triples).
Schmidt et al.~\cite{90380} presented a formal framework for SPARQL query optimization, addressing both set and bag semantics. The authors analyzed the algebraic properties of SPARQL operations like OPTIONAL, UNION, and FILTER under multisets, and introduced equivalence rules and normal forms for optimizing queries. 
Kaminski et al.~\cite{Kaminski} presented a formal investigation of subqueries and aggregate functions in SPARQL 1.1, focusing on their semantics under multisets. The authors analyzed the expressive power of these constructs, showing that SPARQL 1.1 is strictly more expressive than SPARQL 1.0 due to these features. 

Finally, we review research articles that present comparisons and translations among SPARQL, relational algebra, and Datalog.
Cyganiak~\cite{10140} was among the first to translate a core fragment of SPARQL into relational algebra. 
Polleres~\cite{10150} proved the inclusion of the fragment of {\SPARQL} patterns with safe filters into Datalog by providing a precise and correct set of rules.
Schenk~\cite{10158} proposed a formal semantics for {\SPARQL} based on Datalog, but concentrated on complexity more than expressiveness issues. Both Polleres and Schenk did not consider the multiset semantics of {\SPARQL} in their translations.  
Angles and Gutierrez~\cite{90006} studied the expressive power of SPARQL by providing a translation to non-recursive safe Datalog with negation.
Chebotko et al.~\cite{90827} addressed the problem of translating SPARQL queries into SQL while preserving bag semantics. The authors proposed a formal translation framework that captures the subtleties of OPTIONAL, UNION, and FILTER, and ensures that duplicates in the result sets are handled correctly when mapped to relational databases.
Angles and Gutierrez \cite{91353} studied the multiset semantics of SPARQL patterns by translating its patterns into two languages: a version of multiset relational
algebra and multiset non-recursive Datalog with safe negation.
Angles et al. \cite{sparqlog} implemented the translation from {\SPARQL} to Datalog within the Vadalog system \cite{vadalog}. 




\paragraph{Objectives and Contributions.}


The main objective of this article is to examine the theoretical foundations of SPARQL's multiset semantics. To do so, we compare it with classical algebraic and logical frameworks -- specifically, Relational Algebra and Datalog. We focus on the SPARQL fragment built from AND, UNION, FILTER, EXCEPT, and SELECT, characterizing its structure and proving its expressive equivalence with corresponding fragments of Relational Algebra and Datalog.

The specific contributions of our research are as follows:

\paragraph{(1)}
Based on the work of Mumick et al~\cite{90820}, who defined the multiset semantics for Datalog without negation, we defined a version called {\em Non-Recursive Multiset Datalog with Safe Negation} ({\NRMDatalogSN}).
The definition of \NRMDatalogSN includes negation and follows a proof-theoretic semantics.


\paragraph{(2)} 
Based on the work of Dayal et al.~\cite{91030}, who extended the relational algebra to include multiset relations, we defined a {\em Multiset Relational Algebra} ({\MRA}).
The definition of {\MRA} includes the operators of projection ($\pi$), selection ($\sigma$), natural join ($\Join$), arithmetic union ($\cup$) and filter difference ($\setminus$), all of them working under multiset semantics.


\paragraph{(3)} 
We show the equivalence among the aforementioned {\SPARQL} fragment, {\MRA} and {\NRMDatalogSN} by providing translations for databases, queries, and answers. 
Table~\ref{table:equivalences} shows a glimpse of these translations, whose details are developed in this paper.


\medskip

This paper extends a previously published conference paper~\cite{91353}.  Herein, we provide extended discussion throughout, we extend the study to some operators that were introduced in the version 1.1 of {\SPARQL} after the publication of our previous work, and we extend the analysis to also consider bag semantics. Some of the additional contributions of this paper come from Hernandez's Ph.D. thesis~\cite{phd-hernandez}.

The rest of the article is organized as follows.  Section~\ref{sec:preliminaries} presents basic concepts and notations. 
The {\SPARQL} query language is defined in  Section~\ref{sec:sparql}.
Non-recursive Multiset Datalog with Safe Negation ({\NRMDatalogSN}) is defined in Section~\ref{sec:datalog}.
The Multiset Relational Algebra ({\MRA}) is defined in Section~\ref{sec:mra}.
The equivalence between {\SPARQL} and {\NRMDatalogSN} is presented in Section~\ref{sec:sparql-nrmd}.
The equivalence between {\MRA} and {\NRMDatalogSN} is presented in  Section~\ref{sec:mra-nrmd}.
The equivalence between {\MRA} and {\SPARQL} is presented in Section~\ref{sec:mra-sparql}.
Conclusions are presented in Section~\ref{sec:conclusions}.

\newcommand{\tetoprule}{%
\cmidrule[\heavyrulewidth](r){1-2}
\cmidrule[\heavyrulewidth](l){4-6}}
\newcommand{\temidrule}{%
\cmidrule[0.2pt](r){1-2}
\cmidrule[0.2pt](l){4-6}}
\newcommand{\tebotrule}{%
\cmidrule[\heavyrulewidth](r){1-2}
\cmidrule[\heavyrulewidth](l){4-6}}

\begin{table}[t]
\caption{{\sc Schema of correspondences among:} {\SPARQL} graph patterns, Multiset Relational Algebra ({\MRA}) expressions, Non-Recursive Datalog with safe Negation ({\NRMDatalogSN}) rules, and SQL expressions. 
The operator $\EXCEPT$ is not part of {\SPARQL}, but it replaces the standard operators $\MINUS$ and $\OPTIONAL$ without changing the expressiveness of the fragment. In {\MRA}, $\uplus$ is the arithmetic union and $\setminus$ is the multiset filter difference. {\SPARQL} patterns are assumed normalized, that is, variables in the filter condition are in the schema of the filtered pattern, and operators $\EXCEPT$ AND $\UNION$ assume operands with the same schema. $P_1$ and $P_2$ are {\SPARQL} patterns that are associated to atoms $L_1$ and $L_2$ in the {\NRMDatalogSN} translation, and relations $r_1$ and $r_2$ in the {\MRA} translations, respectively.}
\label{table:equivalences}
\resizebox{\textwidth}{!}{%
\begin{tabular}{llll}
  \toprule
  {\SPARQL}
  & {\NRMDatalogSN}
  & {\MRA}
  & SQL
  \\ \toprule
  $\SELECT \mathcal{X} \; P_1$
  & $L \gets
    L_1,\unull(\mathcal{X}\setminus\mathcal{X}_1)$
  & $\pi_{\mathcal{X}}(r_1) \Join
    \unull(\mathcal{X}\setminus\mathcal{X}_1)$
  & $\begin{aligned}[t]
    &\mathtt{SELECT}\;\mathcal{X}\\[-6pt]
    &\mathtt{FROM}\;r_1
    \;\mathtt{NATURAL\;JOIN}\;
    \unull(\mathcal{X}\setminus\mathcal{X}_1)
    \end{aligned}
    $
  \\ \midrule
  $P_1 \FILTER \,X=a$
  & $L \gets L_1,X=a$
  & $\sigma_{X=a}(r_1)$
  & {\tt FROM $r_1$ WHERE  $X=a$ }
  \\ \midrule
  $P_1 \AAND P_2$
  & $\begin{aligned}[t]
    L \gets &v_1(L_1), v_2(L_2),\\[-6pt]
    &\comp(v_1,v_2,\mathcal{X})
    \end{aligned}
    $
  & $\begin{aligned}[t]
    \pi_{\bar{\mathcal{X}}}(
    \rho_{v_1}(r_1) \Join
    \rho_{v_2}(r_2) \Join \\[-6pt]
    \comp(v_2,v_2,\mathcal{X}))
    \end{aligned}
    $
  & $\begin{aligned}[t]
    &\mathtt{SELECT}\; \mathcal{X}\\[-6pt]
    &\begin{aligned}[t]
      \mathtt{FROM}\; &r_1\;
      \mathtt{NATURAL\;JOIN}\; r_2\;
      \mathtt{NATURAL\;JOIN}\\[-6pt]
      &\comp(v_2,v_2,\mathcal{X})
      \end{aligned}
    \end{aligned}
    $
  \\ \midrule
  $P_1 \UNION P_2$
  & $L \gets L_1\,;\; L \gets L_2$
  & $r_1 \uplus r_2$
  & {\tt $r_1$ UNION  ALL  $r_2$}
  \\ \midrule
  $P_1 \EXCEPT P_2$
  & $L \gets L_1, \neg L_2$
  & $r_1 \setminus r_2$
  & {\tt $r_1$ EXCEPT  $r_2$}
  \\ \bottomrule
\end{tabular}%
}
\end{table}

\section{Preliminaries}
\label{sec:preliminaries}

This section provides the concepts and formal notation we will follow regarding multisets
and the expressive power of query languages.

\subsection{Multisets}

Informally, a multiset is an unordered collection of elements where each element may occur more than once.  Formally, a \emph{multiset} is a tuple $M = (S,\mathit{card})$ where $S$ is the underlying set of $M$ (containing the distinct elements), and $\mathit{card} : S \to \mathbb{N}^+$ is a function that defines the cardinality in $M$ of each element $a \in S$.  
We write $\set(M) = S$ to denote that the underlying set of $M$ is $S$. Given a positive natural number $n$, $\card(a,M) = n$ denotes $a \in \set(M)$ and the cardinality of $a$ in $M$ is $n$, and is usually written as $(a,n) \in M$.
Abusing notation, we write $\card(a,M)=0$ if $a\notin\set(M)$ and $a \in M$ when $\card(a,M) \ge 1$.
In what follows, we will prefer these formal notions over the informal and intuitive $\lBrace a, a, a, b \rBrace$.

When dealing with multisets, formally describing each ``copy'' of an element is challenging. The notion of \emph{colored set} \cite{90820} is a formalism to do this. Indeed, assuming that the set of colors is $\mathbb{N}^+$, the \emph{colored set} of a multiset $M$, denoted $\col(M)$, is the set $\{ \colCopy{a}{i} \mid a \in \set(M) \mbox{ and } 1 \leq i \leq \card(a,M) \}$. In the contrary direction, we write $\col^{-1}(C)$ for the multiset $M$ with $\set(M)=\set(C)$ and defined by a colored set $C$ when forgetting the colors, that is, when for every element $a \in \set(M)$, $\card(a,M)$ is the number of colored copies of $a$ in $C$. Abusing notation, we write $\col^{-1}(\colCopy{a}{i}) = a$.

\begin{example}
Let $A$ be the set $\{a,b,c\}$, $M$ be a multiset with $\set(M) = A$, $\card(a,M) = 1$, $\card(b,M) = 2$, and $\card(c,M) = 3$. Then, $\col(M)$ is the colored set
\[
C = \{ \colCopy{a}{1}, \colCopy{b}{1}, \colCopy{b}{2}, \colCopy{c}{1}, \colCopy{c}{2}, \colCopy{c}{3}\}.
\]
\end{example}

\subsection{Comparing the expressive power of query languages}%
\label{sec:comparing-expressive-power}

Next, we present the notion of query language and two notions of expressive power used in this paper.

\begin{definition}[Query language]
  A query language $\calL$ is a quadruple $(\calQ, \calD, \calS, \Eval)$, where $\calQ$ is the set of queries in $\calL$, $\calD$ is the set of databases in $\calL$, $\calS$ is the set of query answers in $\calL$, and $\Eval: \calQ \times \calD \to \calS$ is the query evaluation function of $\calL$.
\end{definition}

Let $\calL = (\calQ, \calD, \calS, \Eval)$ be a query language.  Two queries $Q_1, Q_2 \in \calQ$ are said to be \emph{equivalent}, denoted $Q_1 \equiv Q_2$, if for every database $D \in \calD$, it holds that $\Eval(Q_1,D) = \Eval(Q_2,D)$, i.e., they return the same query answer for all input databases.

Given a query language $(\calQ, \calD, \calS, \Eval)$, a query $Q \in \calQ$ determines a function $q:\calD\to\calS$ defined as $q(D)=\Eval(Q,D)$, called the {\em query function} of $Q$. Two queries $Q_1$ and $Q_2$ are thus equivalent, denoted $Q_1 \equiv Q_2$, if they determine the same query function.

In this context, the {\em expressive power} of a query language $\calL$ is understood as the set of all query functions that are expressible by $\calL$.  Abiteboul et al.~\cite{Abiteboul-Book} summarizes how this notion is used to compare the expressive power of relational algebra, Datalog, and relational calculus.  In the context of SPARQL, Zhang and Van den Bussche~\cite{DBLP:journals/ipl/ZhangB14}, Kontchakov et al.~\cite{91309}, and Angles and Gutierrez~\cite{91301} use this notion to compare different fragments of SPARQL.

The query languages studied in this paper do not satisfy the aforementioned property of having a common set of databases and query answers. Thus, we need an extended version of the notion of expressive power as in Definition~\ref{def:expressive-power} below.  

\begin{definition}[Generalized expressive power]%
  \label{def:expressive-power}
  Given two query languages $\calL_1 = (\calQ_1, \calD_1, \calS_1, \Eval_1)$ and $\calL_2 = (\calQ_2, \calD_2, \calS_2, \Eval_2)$, we say that $\calL_1$ is contained in $\calL_2$ if and only if there exist functions $g:\calD_1 \to \calD_2$ (called the \emph{database translation}), $f:\calQ_1 \to \calQ_2$ (called the \emph{query translation}), and $h:\calS_2 \to \calS_1$ (called the \emph{query answer translation}), such that for every $Q \in \calQ_1$ and database $D \in \calD_1$ it holds that
  \begin{align*}
    \Eval_1(Q, D) = h(\Eval_2(f(Q), g(D))).
  \end{align*}
  If that is the case, we say that the triple $(f,g,h)$ is a \emph{simulation} of $\calL_1$ in $\calL_2$. We say that the languages $\calL_1$ and $\calL_2$ have the same expressive power, denoted $\calL_1 \cong \calL_2$, if and only if $\calL_1$ is contained in $\calL_2$ and $\calL_2$ is contained in $\calL_1$.
\end{definition}

The above definition of generalized expressive power is implicit in the translations by Polleres~\cite{10150}, Angles and Gutierrez~\cite{90006,91353}, and Polleres and Wallner~\cite{91643}.

Observe that the extended notion defined above defines a partial order: the containment relation on the equivalence classes over the relation $\cong$.  In fact, reflexivity and antisymmetry follow directly from the definition, while transitivity is shown in Figure~\ref{fig:transitivity-containment}.

\begin{figure}[t]
  \begin{tikzpicture}
    \node (d1) at (0,2) {$\calD_1$};
    \node (d2) at (3,2) {$\calD_2$};
    \node (d3) at (6,2) {$\calD_3$};
    \node (s1) at (0,0) {$\calS_1$};
    \node (s2) at (3,0) {$\calS_2$};
    \node (s3) at (6,0) {$\calS_3$};
    \path (d1) edge [->]
    node[left] {${\scriptstyle Q_1}$} (s1);
    \path (d2) edge [->]
    node[left] {$\scriptstyle f_{12}(Q_1)$} (s2);
    \path (d3) edge [->]
    node[left] {$\scriptstyle f_{12} \circ f_{23}(Q_1)$} (s3);
    \path (d1) edge [->]
    node[above] {${\scriptstyle g_{12}}$} (d2);
    \path (d2) edge [->]
    node[above] {${\scriptstyle g_{23}}$} (d3);
    \path (s3) edge [->]
    node[above] {${\scriptstyle h_{23}}$} (s2);
    \path (s2) edge [->]
    node[above] {${\scriptstyle h_{12}}$} (s1);
  \end{tikzpicture}
  \caption{Transitivity of language containment. The figure represents three languages $\calL_i=(\calQ_i,\calD_i,\calS_i,\Eval_i)$ where $i\in\{1,2,3\}$. The containment of a language $\calL_i$ in $\calL_{i+1}$ is given by the simulation $(f_{i,i+1},\, g_{i,i+1},\, h_{i,i+1})$. The transitive containment of $\calL_1$ in $\calL_3$ is given by the simulation $(f_{12}\circ f_{23},\, g_{12}\circ g_{23},\, h_{23}\circ h_{12})$ where $\circ$ denotes the composition of functions (e.g., $g_{12}\circ g_{23}$ denotes the function from $D_1$ to $D_3$ that results from composing $g_{12}$ and $g_{23}$).}
  \label{fig:transitivity-containment}
\end{figure}

\subsection{\texorpdfstring{Comparing {\SPARQL}, {\NRMDatalogSN} and {\MRA}}{Comparing SPARQL, NRMD¬ and MRA}}

In the remainder of this paper, we define three families of query languages: Non-recursive Multiset Datalog with Safe Negation ({\NRMDatalogSN}), Multiset Relational Algebra ({\MRA}) and a core fragment of SPARQL. After defining these languages, we present simulations that show the equivalence among these three families of query languages.  These simulations are depicted in Figure~\ref{fig:triangle}.

\begin{figure}[h!]
  \centering
  \begin{tikzpicture}
    \node (sparql) at (0, 4) {(1) SPARQL};
    \node (md) at (-2, 1) {(2) {\NRMDatalogSN}};
    \node (mra) at (2, 1) {(3) {\MRA}};
    \path [bend right,->] (sparql) edge node[left] {$T_{12}$} (md);
    \path [bend right=15,->] (md) edge node[left] {$T_{21}$} (sparql);
    \path [bend right,->] (md) edge node[below] {$T_{23}$} (mra);
    \path [bend right=15,->] (mra) edge node[below] {$T_{32}$} (md);
    \path [bend right,->] (mra) edge node[right] {$T_{31}$} (sparql);
    \path [bend right=15,->] (sparql) edge node[right] {$T_{13}$} (mra);
  \end{tikzpicture}
  \caption{The triangle of simulations among {\SPARQL},
  Non-Recursive Multiset Datalog with Safe Negation ({\NRMDatalogSN}), and Multiset Relational Algebra ({\MRA})
  described in this paper. The query languages are identified by
  numbers, and $T_{ij}$ denotes the simulation of
  language $i$ using language $j$.}
  \label{fig:triangle}
\end{figure}

\section{Multiset {\SPARQL}}
\label{sec:sparql}

{\SPARQL} \cite{10155,90699} is the standard query language for RDF.  In this paper we study a fragment of {\SPARQL}, the ``relational core'', described by Angles and Gutierrez~\cite{91353}, which considers the operators FILTER, SELECT, AND, UNION, and EXCEPT. This fragment captures essentially the graph pattern queries in {\SPARQL}. In fact, it has been proved \cite{91353,91309} that it is mutually expressible with the standard-core consisting of the operators FILTER, SELECT, AND, UNION, OPTIONAL, and MINUS. (In what follows when speaking of ``SPARQL'' we will mean this fragment).

\subsection{RDF Graphs}
Assume two disjoint infinite sets $\sI$ and $\sL$, called IRIs and literals, respectively. An \emph{RDF term} is an element in the set $\sT = \sI \cup \sL$.  An \emph{RDF triple} is a triple $(s,p,o) \in \sI \times \sI \times \sT$ where $s$ is called the \emph{subject}, $p$ is called the \emph{predicate} and $o$ is called the \emph{object}.  An \emph{RDF graph} (just graph from now on) is a set of RDF triples.
Given a graph $G$, the function $\terms(G)$ returns the RDF terms that occur in $G$, that is, 
$\terms(G)$ = 
$\{ s \mid (s,p,o) \in G \}$ $\cup$ 
$\{ p \mid (s,p,o) \in G \}$ $\cup$
$\{ o \mid (s,p,o) \in G \}$
The \emph{union} of graphs, $G_1 \cup G_2$, is the theoretical union of their sets of triples.

A {\SPARQL} database will be a set of RDF triples. 

\paragraph{Note:}
In addition to $\sI$ and $\sL$, {\SPARQL} admits as terms anonymous resources called blank nodes. In this paper, we do not include them to help focus on the issues arising from multisets. Avoiding blank nodes does not affect the results presented in this paper. Indeed, in {\SPARQL}, blank nodes in the data can be consistently replaced by IRIs and produce equivalent query results, and blank nodes in queries can be replaced by fresh variables without changing the semantics of the query~\cite{91033}.

\subsection{{\SPARQL} Syntax}%
\label{sec:sparql-syntax}
Assume the existence of an infinite set $\sV$ of variables disjoint from $\sT$ (RDF terms).  A \emph{filter condition} is defined recursively as follows: (i) If $\varX,\varY \in \sV$ and $c \in \sT$ then $(\varX = c)$, $(\varX = \varY)$ and $\bound(\varX)$ are atomic filter conditions; (ii) If $\varphi_1$,$\varphi_2$ are filter conditions then $(\varphi_1 \land \varphi_2)$,$(\varphi_1 \lor \varphi_2)$ and $\neg \varphi_1$ are complex filter conditions. We denote by $\var(\varphi)$ the 
set of variables occurring in $\varphi$.
  
A {\SPARQL} \emph{pattern} is defined recursively as follows:
\begin{itemize}
\item
  A triple from $(\sI \cup \sV) \times (\sI \cup \sV) \times (\sI \cup \sL \cup \sV)$ is a pattern called a \emph{triple pattern}. We will assume that a triple pattern has at least one variable. 
\item
  If $P_1$ and $P_2$ are patterns then $( P_1 \AAND P_2 )$, $( P_1 \UNION P_2 )$, and $( P_1 \EXCEPT P_2 )$ are patterns.
\item
  If $P$ is a pattern and $\varphi$ is a filter condition then $(P \FILTER \varphi)$ is a pattern.
\item
  If $W$ is a set of variables and $P_1$ is a pattern then $(\SELECT W\, P_1)$ is a pattern.
\end{itemize}

\subsection{{\SPARQL} Semantics}
A \emph{solution mapping} (or just \emph{mapping} from now on) is a partial function $\mu : \sV \to \sT$ where the domain of $\mu$, denoted $\dom(\mu)$, is the subset of $\sV$ where $\mu$ is defined.  We write $\mu_\emptyset$ to denote the mapping with empty domain (i.e., $\dom(\mu_\emptyset)=\emptyset$).  Given $\varX \in \sV$ and $c \in \sT$, we write $\mu(\varX) = c$ to denote that $\mu$ maps the variable $\varX$ to the term $c$.  Given a finite set of variables $W$, the restriction of a mapping $\mu$ to $W$, denoted $\mu_{|W}$, is a mapping $\mu'$ that satisfies $\dom(\mu') = W \cap \dom(\mu)$ and $\mu'(\varX) = \mu(\varX)$ when $\varX \in \dom(\mu')$. 
Two solution mappings $\mu_1, \mu_2$ are \emph{compatible}, denoted $\mu_1 \sim \mu_2$, when for all $\varX \in \dom(\mu_1) \cap \dom(\mu_2)$ they satisfy $\mu_1(\varX)=\mu_2(\varX)$, that is, when $\mu_1 \cup \mu_2$ is also a mapping.  Note that two mappings with disjoint domains are always compatible.

Let $\Omega$ be a multiset of solution mappings.  The \emph{domain of variables} in $\Omega$, denoted $\dom(\Omega)$, is defined as the set union of the domains of the variables that occur in the solution mappings of $\Omega$.  Given a mapping $\mu$, the cardinality of $\mu$ in $\Omega$ will be denoted as $\card(\mu,\Omega)$.  If $\mu \notin \Omega$ then $\card(\mu,\Omega) = 0$.

The evaluation of a filter condition $\varphi$ under a mapping $\mu$, denoted $\mu(\varphi)$, is defined in a three-valued logic with values $\true$, $\false$ and $\error$.  We say that $\mu$ satisfies $\varphi$ when $\mu(\varphi) = \true$.  The semantics of $\mu(\varphi)$ is defined recursively as follows:

\begin{itemize}
\item
  If $\varphi$ is $\varX = c$ and $c \in \sT$, then: (a) If $\varX \in \dom(\mu)$ then $\mu(\varphi) = \true$ when $\mu(\varX) = c$ and $\mu(\varphi) = \false$ otherwise; (b) If $\varX \notin \dom(\mu)$ then $\mu(\varphi) = \error$.  
\item
  If $\varphi$ is ${\varX} = {\varY}$ and $\varX,\varY \in \dom(\mu)$, then $\mu(\varphi) = \true$ when $\mu(\varX) = \mu(\varY)$, and $\mu(\varphi) = \false$ otherwise. If $\varX \notin \dom(\mu)$ or $\varY \notin \dom(\mu)$ then $\mu(\varphi) = \error$.
\item
  If $\varphi$ is $\bound(\varX)$ and $\varX \in \dom(\mu)$ then $\mu(\varphi) = \true$; otherwise $\mu(\varphi) = \false$.
\item
  If $\varphi$ is a complex filter condition, then it is evaluated following the three valued logic shown in Table~\ref{table:3-valued}.
\end{itemize}

The evaluation of a pattern $P$ on a graph $G$ is defined as a function $\ev{P}{G}$, which returns a multiset of mappings.  Let $P_1,P_2$ be {\SPARQL} patterns, $\varphi$ be a filter condition and $W$ be a set of variables. For simplicity of reading, denote $M = \ev{P}{G}$, $M_1= \ev{P_1}{G}$, and $M_2 = \ev{P_2}{G}$.  The evaluation $\ev{P}{G}$ is defined recursively as follows:

\begin{itemize}
\item
  If $P$ is a triple pattern $t$ then $\set(M) = \{ \mu \mid \dom(\mu) = \var(t),\, \mu(t) \in G \}$, where $\mu(t)$ is the triple obtained by replacing the variables in $t$ according to $\mu$, and $\card(\mu,M) = 1$.
\item
  If $P$ is $(P_1 \AAND P_2)$ then
  $\set(M) = \{
    \mu_1 \cup \mu_2 \mid\; \mu_1 \in M_1,\, \mu_2 \in M_2,
    \text{and }\mu_1 \sim \mu_2 \}$ and
  $\card(\mu,M) =  \sum_{\mu = \mu_1 \cup \mu_2} \card(\mu_1,M_1)
  \times \card(\mu_2,M_2)$.
\item
  If $P$ is $(P_1 \UNION P_2)$ then
  $\set(M) = \{ \mu \mid \mu \in M_1 \lor \mu \in M_2 \}$ and
  $\card(\mu,M) = \card(\mu,M_1) + \card(\mu,M_2)$.
\item
  If $P$ is $(P_1 \EXCEPT P_2)$ then
  $\set(M) = \{ \mu \mid \mu \in M_1,\, \mu \notin M_2 \}$ and
  $\card(\mu,M) = \card(\mu,M_1)$.
\item
  If $P$ is $(P_1 \FILTER \varphi)$ then
  $\set(M) = \{ \mu \mid \mu \in M_1,\, \mu(\varphi)=\true\}$ and
  $\card(\mu,M) = \card(\mu,M_1)$.
\item
  If $P$ is $(\SELECT W\;P_1)$ then\\ 
  $\set(M) = \{\mu' \mid \mu' = \mu_{|W} \land \mu \in M_1 \}$ and\\
  $\card(\mu',M) = \sum_{\mu' = \mu_{|W}} \card(\mu,M_1)$.
\end{itemize}

\begin{table}[t]
  \caption{Evaluation of complex filter conditions
  \cite[\S 17.2]{10155}, where $\mu$ is a solution mapping, and $\varphi_1$,$\varphi_2$ are filter conditions.}
  \label{table:3-valued}
  \begin{tabular}{cccc}
    \toprule
    $\mu(\varphi_1)$
    & $\mu(\varphi_2)$
    & $\mu(\varphi_1) \land \mu(\varphi_2)$
    & $\mu(\varphi_1) \lor \mu(\varphi_2)$
    \\ \midrule

    true & true & true & true \\
    true & false & false & true \\
    true & error & error & true \\
    false & true & false & true \\
    false & false & false & false \\
    false & error & false & error \\
    error & true & error & true \\
    error & false & false & error \\
    error & error & error & error \\
    \bottomrule
  \end{tabular}
  \hspace{4em}
  \begin{tabular}{cc}
    \toprule
    $\mu(\varphi_1)$ & $\neg(\mu(\varphi_1))$ \\ \midrule
    true & false \\
    false & true \\
    error & error \\
    \bottomrule
  \end{tabular}
\end{table}

To facilitate the translation from {\SPARQL} to relational algebra and Datalog, we use the difference operator $\EXCEPT$ in {\SPARQL}, called SetMinus by Kontchakov et al.~\cite{91309}.  Kontchakov et al.~\cite{91309} proved that, over this fragment, the operator $\EXCEPT$ and the pair of standard operators $\{\MINUS,\operatorname{OPTIONAL}\}$ are mutually expressible.

\subsection{Normalization of SPARQL patterns}%
\label{sec:normalization-of-patterns}
The solution mappings of a {\SPARQL} pattern $P$ may have different domains. To translate {\SPARQL} to languages built upon relations, we require representing multisets of mappings as relations whose tuples have the same set of attributes.  This set of attributes has to contain all variables that can appear in the solution mappings of $P$.  The {\SPARQL} specification~\cite{90699} defines a finite set of variables, called {\em in-scope}, that include all variables of a {\SPARQL} pattern $P$ that can occur in the solution mappings of $P$.  To complete the relation, unbound values need to be denoted with a distinguished constant of the target languages.

\begin{example}%
  \label{ex:representing-mappings-as-relation}
  Assume a pattern $P$ with in-scope variables $\varX$, $\varY$, and $\varZ$ that returns the multiset of mappings $\Omega = \lBrace \{\varX \mapsto a\}, \{\varX \mapsto b, \varY \mapsto c\}, \{\varY \mapsto d\} \rBrace$.  Since all variables in the solution mappings are ensured to be in-scope variables of $P$, we can represent this multiset of mappings as the following relation ($\bot$ denotes the distinguished constant to denote unbound values):
  \begin{center}
  $
    \left[
    \begin{array}{ccc}
      \varX & \varY & \varZ \\ \midrule
      a & \bot & \bot \\
      b & c & \bot \\
      \bot & d & \bot \\
    \end{array}
    \right].
    $
  \end{center}
\end{example}

In-scope variables are defined as follows.  Let $P_1$, $P_2$ and $P_3$ be patterns, $\varphi$ be a filter condition, and $W$ be a set of variables.  The set of \emph{in-scope variables} of a pattern $P$, denoted $\inScope(P)$, is defined recursively as follows:
\begin{enumerate}
\item
  If $P$ is a triple pattern then 
$\inScope(P)$ is the set of variables occuring in $P$.
\item
  If $P$ is
  $(P_1 \AAND P_2)$ or $(P_1 \UNION P_2)$
  then $\inScope(P)=\inScope(P_1) \cup \inScope(P_2)$;
\item
  If $P$ is
  $(P_1 \FILTER \varphi)$ or $(P_1 \EXCEPT P_2)$
  then $\inScope(P)=\inScope(P_1)$;
\item
  If $P$ is $(\SELECT~W~P_1)$ then
  $\inScope(P) = W$.
\end{enumerate}

So far, we have described how to translate the results of {\SPARQL} queries to relations.  However, languages built upon relations have some restrictions that difficult a straightforward translation of the {\SPARQL} operations.  The relational selection operation requires all attributes in the selection formula being attributes of the relation; the relational union is done over relations of the same schema; and the relational difference requires all variables in the subtrahend be instanced in the minuend.  Conversely, {\SPARQL} does not have these restrictions.  We next present a normal form to simplify the translation from {\SPARQL} to relational languages by satisfying the constraints of the target languages.

\begin{definition}[{\SPARQL} normal form]
  A pattern $P$ is said to be in {\em normalized} or in {\em normal
  form} if the following conditions hold:
  \begin{enumerate}
  \item
    For every sub-pattern $(P_1 \FILTER \varphi)$ in $P$ it holds that
    $\var(\varphi)\subseteq\inScope(P_1)$;
  \item
    For every sub-pattern $(P_1 \UNION P_2)$ in $P$ it holds that
    $\inScope(P_1)=\inScope(P_2)$;
  \item
    For every sub-pattern $(P_1 \EXCEPT P_2)$ in $P$ it holds that
    $\inScope(P_1)=\inScope(P_2)$.
  \end{enumerate}
\end{definition}

\begin{lemma}
  Every {\SPARQL} query (in the fragment described in Section~\ref{sec:sparql-syntax}) can be rewritten as an equivalent normalized {\SPARQL} query.
\end{lemma}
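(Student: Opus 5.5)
We prove the lemma by structural induction on patterns, defining a rewriting $N(P)$ that is in normal form, has $\inScope(N(P))=\inScope(P)$, and satisfies $\ev{N(P)}{G}=\ev{P}{G}$ for every graph $G$ as multisets. The one basic fact we use repeatedly is a lemma proved first by induction: every mapping in $\ev{P}{G}$ has domain contained in $\inScope(P)$. Triple patterns are already normalized. For $\AAND$ and $\SELECT$ no condition is imposed, so we set $N(P_1\AAND P_2)=N(P_1)\AAND N(P_2)$ and $N(\SELECT W\,P_1)=\SELECT W\,N(P_1)$; correctness follows directly from the induction hypothesis.

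\textbf{Filters.} For $P=(P_1\FILTER\varphi)$, every variable $\varX\in\var(\varphi)\setminus\inScope(P_1)$ is unbound in every mapping of $\ev{P_1}{G}$ by the basic lemma. We therefore replace each atom mentioning such a variable by its constant truth value. The replacements are: $\bound(\varX)$ becomes $\false$, and $\varX=c$ and $\varX=\varY$ become $\error$. The condition language has no constants $\true/\false/\error$, so we encode them with an in-scope variable. If $\inScope(P_1)$ contains some $\varZ$, then $\neg\bound(\varZ)\land\bound(\varZ)$ always evaluates to $\false$, and $\false\lor\neg\false$ gives $\true$. An always-$\error$ condition is harder to express with in-scope variables alone, so instead we propagate constants through Table~\ref{table:3-valued}.

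This propagation needs care, because the three-valued logic does not let us simplify $\error$ naively under negation. For example, $\neg\error=\error$, and a filter keeps $\mu$ only if the value is $\true$. We proceed by pushing constant values upward: a subformula whose value is constant can be replaced by a representative formula. We then observe that the only outcome relevant at the top level is whether the value is $\true$. Hence a top-level constant $\error$ or $\false$ can be realised by any unsatisfiable pattern, for instance $P_1\EXCEPT P_1$.

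In mixed cases such as $\psi\lor\error$, Table~\ref{table:3-valued} shows that $\error$ and $\false$ differ only in whether the result is $\error$ or $\false$, which can matter under an enclosing $\neg$. To handle this we rewrite $\varphi$ into a pair of formulas computing ``$=\true$'' and ``$=\false$'' over the in-scope variables, built by mutual recursion. Each is finally expressed as a two-valued condition over $\inScope(P_1)$ using $\bound$ guards. If $\inScope(P_1)=\emptyset$, the only mapping possible is $\mu_\emptyset$, and the truth value of $\varphi$ is a fixed constant. In that case we output either $N(P_1)$ or $N(P_1)\EXCEPT N(P_1)$.

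\textbf{Union and except.} For $P=(P_1\UNION P_2)$ we pad each side to the common scope $V=\inScope(P_1)\cup\inScope(P_2)$. The padding must add variables to the in-scope set without binding them. We use a gadget $E_{\varX}$ with $\inScope(E_{\varX})=\{\varX\}$ that always evaluates to $\{\mu_\emptyset\}$ with multiplicity $1$. Such a gadget exists: $\SELECT\{\varX\}$ applied to a pattern returning exactly one empty mapping. We obtain that pattern from $\SELECT\emptyset\,(t\EXCEPT t)$ unioned with something nonempty, though the empty graph makes this delicate.

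A cleaner route works directly with $\SELECT$ on $N(P_i)$: take $\SELECT\,V\,N(P_i)$. Its in-scope set is $V$ by definition, and its evaluation is unchanged because domains are already inside $\inScope(P_i)\subseteq V$, so restriction is the identity and multiplicities are preserved. This is the key trick, and it avoids gadgets entirely. For $P=(P_1\EXCEPT P_2)$, the result's scope must remain $\inScope(P_1)$. We therefore use $N(P_1)\EXCEPT\SELECT\,\inScope(P_1)\,N(P_2)$, and this is where projecting $P_2$ could change semantics. $\EXCEPT$ compares whole mappings, and a mapping of $P_2$ binding variables outside $\inScope(P_1)$ can never equal a mapping of $P_1$. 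Projecting it could create a spurious match. We fix this by first filtering out those mappings: $\SELECT\,\inScope(P_1)\,(N(P_2)\FILTER\,\neg(\bound(\varY_1)\lor\cdots\lor\bound(\varY_k)))$, where the $\varY_j$ range over $\inScope(P_2)\setminus\inScope(P_1)$. This filter is normalized because the $\varY_j$ are in scope of $P_2$, and it uses only the two-valued $\bound$.

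\textbf{Main obstacle.} The main obstacle is the filter case, namely eliminating out-of-scope variables while faithfully preserving the three-valued $\error$ semantics under negation. I would settle it by the $\true$/$\false$ pair-translation described above.
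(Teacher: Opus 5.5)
Your proof is correct, but it takes a considerably more laborious route than the paper's. The paper uses one observation for all three normal-form conditions: for any finite set $X$ of variables, $P\equiv(\SELECT~(\inScope(P)\cup X)~P)$, because every solution of $P$ already has its domain inside $\inScope(P)$. This is the same basic lemma you prove first. The paper then rewrites $(P_1\FILTER\varphi)$ as $((\SELECT~(\inScope(P_1)\cup\var(\varphi))~P_1)\FILTER\varphi)$, and pads \emph{both} operands of $\UNION$ and of $\EXCEPT$ to the union of their in-scope sets.

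Padding changes neither the mappings nor their multiplicities, so the filter is evaluated on exactly the same mappings as before. The three-valued $\error$ behaviour that you single out as the main obstacle therefore never needs to be analysed. Likewise, padding the left operand of $\EXCEPT$, instead of projecting the right one, sidesteps the spurious-match problem that you correctly detected and then repaired with the $\bound$-filter.

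Both detours come from your self-imposed invariant $\inScope(N(P))=\inScope(P)$. The lemma does not require it: equivalence is defined purely on evaluations, and the paper lets in-scope sets grow. If you want the invariant anyway, an outer $\SELECT$ onto the original in-scope set restores it, harmlessly for the same reason.

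In exchange, your construction keeps in-scope sets exactly. Through the $\true$/$\false$ pair translation (negation swaps the pair, $\land$ and $\lor$ use the De Morgan-style clauses, equality atoms receive $\bound$ guards, and the case $\inScope(P_1)=\emptyset$ is decided outright), it produces filters that mention only variables of $\inScope(P_1)$ and can never evaluate to $\error$. That is a somewhat stronger normal form, but the paper gets the lemma as stated with a uniform one-line argument per case.
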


\begin{proof}
  The conditions that make a pattern normalized refer to restrictions to the in-scope variables of patterns. Patterns that are not normalized include at least one sub-pattern that has either the form $(P_1 \; \FILTER\, \varphi)$, $(P_2 \;\UNION P_3)$, or $(P_2 \;\EXCEPT P_3)$, where $\varphi$ contains a variable $\varX \notin \inScope(P_1)$, and $\inScope(P_2) \neq \inScope(P_3)$. We next present a method to normalize these patterns.

  Given a pattern $P$, and a finite set of variables $X$, $P \equiv (\SELECT~(\inScope(P) \cup X)~P)$. Indeed, a mapping $\mu$ is a solution of the pattern $(\SELECT~(\inScope(P) \cup X)~P)$ if and only there exists a solution mapping $\mu'$ of pattern $P$ such that $\mu = \mu'|_{\inScope(P) \cup X}$. By the definition of the in-scope variables, $\dom(\mu') \subseteq \inScope(P)$. Then, $\dom(\mu') \subseteq \inScope(P) \cup X$. Then, $\mu = \mu'$. Hence, $P \equiv (\SELECT~(\inScope(P) \cup X)~P)$.

  Let $P'_1$, $P'_2$, and $P'_3$ be the patterns defined as follows:
  \begin{align*}
    P'_1 &= (\SELECT~(\inScope(P_1) \cup \var(\varphi))~P_1),\\[-5pt]
    P'_2 &= (\SELECT~(\inScope(P_2) \cup \inScope(P_3))~P_2),\\[-5pt]
    P'_3 &= (\SELECT~(\inScope(P_2) \cup \inScope(P_3))~P_3).
  \end{align*}
  Since $P'_1 \equiv P_1$, $P'_2 \equiv P_2$, and $P'_3 \equiv P_3$,
  the following equivalences  hold:
  \begin{align*}
    & (P_1 \; \FILTER\, \varphi) \equiv (P'_1 \; \FILTER\, \varphi), \\[-5pt]
    & (P_2 \;\UNION P_3) \equiv\ (P'_2 \UNION P'_3), \\[-5pt]
    & (P_2 \;\EXCEPT P_3) \equiv\ (P'_2 \EXCEPT P'_3).
  \end{align*}
  Unlike the patterns on the left side of these equivalences, the patterns on the right side are normalized. Indeed, by the definition of the $\inScope$ function, $\var(\varphi) \subseteq \inScope(P'_1)$ and $\inScope(P'_2) = \inScope(P'_3)$. Hence, these equivalences can be used to normalize {\SPARQL} patterns.
\end{proof}

\begin{example}
Let $P$ be the pattern $(P_1 \UNION P_2)$ where $P_1$ is the triple pattern $(\varX,\text{\rm is},\text{\rm person})$ and $P_2$ is the triple pattern $(\varX, \text{\rm email},\varY)$, and $G$ be the RDF graph that includes the triples $(a,\text{\rm is}, \text{\rm person})$ and $(a,\text{\rm email}, \text{\rm a@ex.org})$.  The pattern $P$ is not in normal form because variable $\varY$ is in $\inScope(P_2)$, but not in $\inScope(P_1)$.  
The normal form of the pattern $P$ is a pattern $P'$ that results from replacing $P_1$ by the pattern $P_1' = (\SELECT\;\varX\,\varY\, (\varX,\text{\rm is},\text{\rm person}))$.  
The patterns $P$ and $P'$ are equivalent because the patterns $P_1$ and $P'_1$ return the same multiset of solution mappings $\Omega_1 = \lBrace \{\varX \mapsto a\}\rBrace$.  Note that variable $\varY$ is not in the solutions of $P_1$ nor $P'_1$. However, variable $\varY$ is in $\inScope(P'_1)$ but not in $\inScope(P_1)$. Using the in-scope variables of the patterns to translate the results of patterns $P_1$ and $P'_1$ as relations we get the respective relations
\[
  \left[
    \begin{array}{c}
      \varX \\ \midrule
      a \\
    \end{array}
  \right]
  \text{ and }
  \left[
    \begin{array}{cc}
      \varX & \varY \\ \midrule
      a & \bot \\
    \end{array}
  \right].
\]
  Although both relations represent the same multiset of mappings, just the second relation has the same attributes as the result of pattern $P_2$, and thus can be operated with the relational union.
\end{example}

\section{Non-Recursive Multiset Datalog with Safe Negation ({\NRMDatalogSN})}
\label{sec:datalog}

This section presents an extension of Datalog to support multiset semantics.  Based on the work of Mumick et al.~\cite{90820}, a database is defined to allow duplicate facts, and the evaluation of a fact is given by the number of different proofs for that fact. We extended Mumick's formalism in~\cite{91353} to provide a more complete formalism including negation, which we call {\MDatalogSN}. Furthermore, we follow the work of Bertossi et al.~\cite{91630} for the semantics of {\MDatalogSN}. We call \emph{Non-Recursive Multiset Datalog with Safe Negation} ({\NRMDatalogSN}) to the fragment of {\MDatalogSN} restricted to non-recursive queries.

\subsection{{\NRMDatalogSN} Syntax}
\label{def:syntax-datalog-programs}
Assume three disjoint sets: {\em variables}, {\em constants} and {\em predicate names}.  A \emph{term} is either a variable or a constant.  An \emph{atom} is an expression $p(t_1,\dots,t_n)$ where $p$ is a predicate name and each $t_i$ is a term.  An equality expression will be represented by an atom of the form $eq(t_1,t_2)$.  A \emph{literal} is either an atom (i.e. a \emph{positive literal} $A$) or the negation of an atom (i.e. a \emph{negative literal} $\neg A$).  Given a literal $L$, we use $\var(L)$ to denote the variables in $L$.  A Horn Clause, or simply clause, is an expression containing at most one positive literal. There are three types of clauses, named facts, rules and goals.

A \emph{fact} is a positive literal that does not contain any variables.  A \emph{{\MDatalogSN} Database} is a finite multiset of facts.  The {\em vocabulary} of a {\MDatalogSN} database $D$ is a pair $(P,\alpha)$ where $P$ is the set of predicate names occurring in the facts of $D$, and $\alpha$ is a function defining the arity of each predicate name in $P$, i.e. if $p(c_1,\dots,c_n) \in D$ then $\alpha(p)=n$.  The predicate names occurring in $D$ are called \emph{extensional}.

A \emph{program} $\Pi$ is a finite set of rules.
A \emph{rule} is an expression $L_{n+1} \gets L_1, \dots, L_n$ where $L_{n+1}$ is a positive literal with no constants called the head, and $L_1, \dots, L_n$ ($n \geq 1$) is a set of literals called the \emph{body}.  
The predicate names occurring in the head of the rules of $\Pi$ are called \emph{intensional}.

A variable $X$ occurs positively in a rule $R$ if and only if $X$ occurs in a positive literal in the body of $R$.  A rule $R$ is said to be \emph{safe} if all its variables occur positively.  Additionally, we will assume that every literal in the body of a rule has a variable at least.
A program is \emph{safe} if all its rules are safe. 
A \emph{{\MDatalogSN} program} is a safe program.

The \emph{dependency graph} of a program $\Pi$ is a digraph $(N,E)$ where the set of nodes $N$ is the set of predicates names that occur in the literals of $\Pi$, and there is an edge $(p_1,p_2)$ in $E$ if there is a rule in $\Pi$ whose body contains the predicate name $p_1$, and whose head contains the predicate name $p_2$.  A program is said to be \emph{non-recursive} if its dependency graph is acyclic. A {\NRMDatalogSN} program is a {\MDatalogSN} that is non-recursive.

A \emph{goal clause} is an atom without constants. A {\MDatalogSN} \emph{query} is a pair $(L,\Pi)$ where $L$ is a goal clause, and $\Pi$ is a {\MDatalogSN} program. A {\NRMDatalogSN} query is a {\MDatalogSN} query $(L,\Pi)$ such that $\Pi$ is non-recursive. A {\NRMDatalogSN} database is a {\MDatalogSN} database.
We assume that extensional and intensional predicate symbols are disjoint; in particular, a predicate cannot appear both as a fact in the database and as the head of a rule.

\begin{figure*}[t]
  \begin{adjustbox}{width=\linewidth}
    \begin{tikzpicture}[anchor=base]
      \matrix (ps) [matrix of nodes, column sep=0mm, row sep=5mm] {
      \node (Xr1) {$r(a)$};
      & \node (Xr2) {$r(a)$};
      & \node (Xpr1) {$p(a)$};
      & \node (Xpr2) {$p(a)$};
      & \node (q11a) {};
      & \node (q11b) {};
      & \node (q12a) {};
      & \node (q12b) {};
      & \node (q21a) {};
      & \node (q21b) {};
      & \node (q22a) {};
      & \node (q22b) {};
      \\
      \node (rX1) {$\colCopy{r(a)}{1}$};
      & \node (rX2) {$\colCopy{r(a)}{2}$};
      & \node (pXr1) {$r(a)$};
      & \node (pXr2) {$r(a)$};
      & \node (q11Xr1) {$r(a)$};
      & \node (q11Xpr1) {$p(a)$};
      & \node (q12Xr1) {$r(a)$};
      & \node (q12Xpr2) {$p(a)$};
      & \node (q21Xr2) {$r(a)$};
      & \node (q21Xpr1) {$p(a)$};
      & \node (q22Xr2) {$r(a)$};
      & \node (q22Xpr2) {$p(a)$};
      \\
      &
      & \node (prX1) {$\colCopy{r(a)}{1}$};
      & \node (prX2) {$\colCopy{r(a)}{2}$};
      & \node (q11rX1) {$\colCopy{r(a)}{1}$};
      & \node (q11pXr1) {$r(a)$};
      & \node (q12rX1) {$\colCopy{r(a)}{1}$};
      & \node (q12pXr2) {$r(a)$};
      & \node (q21rX2) {$\colCopy{r(a)}{2}$};
      & \node (q21pXr1) {$r(a)$};
      & \node (q22rX2) {$\colCopy{r(a)}{2}$};
      & \node (q22pXr2) {$r(a)$};
      \\
      &
      &
      &
      &
      & \node (q11prX1) {$\colCopy{r(a)}{1}$};
      &
      & \node (q12prX2) {$\colCopy{r(a)}{2}$};
      &
      & \node (q21prX1) {$\colCopy{r(a)}{1}$};
      &
      & \node (q22prX2) {$\colCopy{r(a)}{2}$};
      \\
      };
      \node(q11) at ($(q11a)!0.5!(q11b)$) {$q(a)$};
      \node(q12) at ($(q12a)!0.5!(q12b)$) {$q(a)$};
      \node(q21) at ($(q21a)!0.5!(q21b)$) {$q(a)$};
      \node(q22) at ($(q22a)!0.5!(q22b)$) {$q(a)$};
      \path[->] (Xr1) edge node[right] {$\scriptscriptstyle F$} (rX1);
      \path[->] (Xr2) edge node[right] {$\scriptscriptstyle F$} (rX2);
      \path[->] (Xpr1) edge node[right] {$\scriptscriptstyle S$} (pXr1);
      \path[->] (Xpr2) edge node[right] {$\scriptscriptstyle S$} (pXr2);
      \path[->] (pXr1) edge node[right] {$\scriptscriptstyle F$} (prX1);
      \path[->] (pXr2) edge node[right] {$\scriptscriptstyle F$} (prX2);
      \path[->] (q11) edge node[left] {$\scriptscriptstyle R$} (q11Xr1);
      \path[->] (q11) edge node[right] {$\scriptscriptstyle R$} (q11Xpr1);
      \path[->] (q12) edge node[left] {$\scriptscriptstyle R$} (q12Xr1);
      \path[->] (q12) edge node[right] {$\scriptscriptstyle R$} (q12Xpr2);
      \path[->] (q21) edge node[left] {$\scriptscriptstyle R$} (q21Xr2);
      \path[->] (q21) edge node[right] {$\scriptscriptstyle R$} (q21Xpr1);
      \path[->] (q22) edge node[left] {$\scriptscriptstyle R$} (q22Xr2);
      \path[->] (q22) edge node[right] {$\scriptscriptstyle R$} (q22Xpr2);
      \path[->] (q11Xr1) edge node[right] {$\scriptscriptstyle F$} (q11rX1);
      \path[->] (q12Xr1) edge node[right] {$\scriptscriptstyle F$} (q12rX1);
      \path[->] (q21Xr2) edge node[right] {$\scriptscriptstyle F$} (q21rX2);
      \path[->] (q22Xr2) edge node[right] {$\scriptscriptstyle F$} (q22rX2);
      \path[->] (q11Xpr1) edge node[right] {$\scriptscriptstyle S$} (q11pXr1);
      \path[->] (q12Xpr2) edge node[right] {$\scriptscriptstyle S$} (q12pXr2);
      \path[->] (q21Xpr1) edge node[right] {$\scriptscriptstyle S$} (q21pXr1);
      \path[->] (q22Xpr2) edge node[right] {$\scriptscriptstyle S$} (q22pXr2);
      \path[->] (q11pXr1) edge node[right] {$\scriptscriptstyle F$} (q11prX1);
      \path[->] (q12pXr2) edge node[right] {$\scriptscriptstyle F$} (q12prX2);
      \path[->] (q21pXr1) edge node[right] {$\scriptscriptstyle F$} (q21prX1);
      \path[->] (q22pXr2) edge node[right] {$\scriptscriptstyle F$} (q22prX2);
    \end{tikzpicture}
  \end{adjustbox}
\caption{Example of derivation trees. 
Let $D$ be a {\NRMDatalogSN} database, $F = r(a)$ be a fact in $D$ with $\card(F,D) = 2$, $\Pi = \{R,S\}$ be a {\NRMDatalogSN} program where $R$ is the rule $q(X) \gets r(X),p(X)$ and $S$ is the rule $p(X) \gets r(X)$.
This figure shows the derivation trees of $\Pi$ with respect to $D$.}
\label{fig:derivation-trees}
\end{figure*}

\subsection{{\NRMDatalogSN} Semantics}
\label{ss:datalogsemantics}
We follow the formalisms by Mumick et al.~\cite{90820} and Bertossi et al.~\cite{91630} that use a proof-theoretic semantics for {\NRMDatalogSN} programs.
The semantics is based on the notions of  ``substitution'' and ``derivation tree''.  

A \emph{substitution} is a partial function $\theta$ from variables to constants. Given a literal $L$ (positive or negative), and a substitution $\theta$, we write $\theta(L)$ to denote the literal $L'$ obtained by replacing all variables $x$ occurring in $L$ with $\theta(x)$. Informally, the answer for a query $(L,\Pi)$ where $\Pi$ is a {\NRMDatalogSN} program, over a database $D$, will be a multiset of substitutions with the same domain, each obtained from one proof showing that this substitution works.
 We will use 
the notion of ``colored set'' \cite{90820} to identify the different copies of an element, as introduce in 
Section 2.1. 


The notion of ``derivation tree'' \cite{90820} will be used to count the number of proofs for an atom.
Formally, a \emph{Derivation Tree} is a connected, undirected graph, with no cycles, represented as a tuple $\T = (\TN, \TE, \TL, \epsilon, \lambda)$ where $\TN$ is a set of nodes, $\TE$ is a set of edges, $\TL$ is a set of labels (for nodes and edges), $\epsilon : \TE \to \TN \times \TN$ is a total function that assigns a pair of nodes to each edge, and $\lambda : (\TN \cup \TE) \to \TL$ is a total function that assigns a label to each node and edge. 
The function $root(\T)$ will be used to obtain the root node of $\T$.

Let $R$ be a rule of the form $L_{n+1} \gets L_1,\dots,L_m, L_{m+1},\dots, L_n$ where $L_1,\dots,L_m$ are positive literals, and $L_{m+1},\dots, L_n$ are negative literals, $DT$ be a set of derivation trees, and $ST = ( \T_1, \dots, \T_m )$ be a sequence of derivation trees that satisfy that every derivation tree in $ST$ is also in $DT$.
We say that $DT$ matches $R$ with $ST$, denoted $DT \models^{ST} R$, if there is a substitution $\theta$ satisfying:
(i) for every positive literal $L_i \in R$ it applies that $\theta(L_i) = root(\T_i)$ where $\T_i \in ST$; and
(ii) for every negative literal $L_j \in $ it applies that $DT$ does not contain a derivation tree whose root node has the label $\theta(L_j)$.

Assume that $DT \models^{ST} R$ where $ST = ( \T_1, \dots, \T_m )$,  $\T_1 = (\TN_1, \TE_1, \TL_1, \epsilon_1, \lambda_1)$, 
$\dots$, and
$\T_m = (\TN_m, \TE_m, \TL_m, \epsilon_m, \lambda_m)$. 
The derivation tree $\T_R = (\TN_R, \TE_R, \TL_R, \epsilon_R, \lambda_R)$ for the rule $R$ is defined as follows: 
$\TN_R = \{ n_r \} \cup \TN_1 \cup \dots \cup \TN_m$, 
$\TE_R = \{ e_1, \dots, e_m \} \cup \TE_1 \cup \dots \cup \TE_m$,
$\TL_R = \{ R, \theta(L_{n+1}) \} \cup \TL_1 \cup \dots \cup \TL_m$,
every assignment in $\epsilon_i$ is also in $\epsilon_R$,
$\epsilon_R(e_1) = (n_r, root(\T_1))$,
$\dots$,
$\epsilon_R(e_m) = (n_r, root(\T_m))$,
every assignment in $\lambda_i$ is also in $\lambda_R$,
$\lambda_R(n_r) = \theta(L_{n+1})$, 
and $\lambda_R(e_1) = R$.

Let $D$ be a {\NRMDatalogSN} database and $\Pi$ a {\NRMDatalogSN} program. The set of derivation trees of $\Pi$ with respect to $D$, denoted $\dt(\Pi,D)$, is defined as follows:
\begin{enumerate}
\item 
For every fact $F \in D$ of the form $p(t_i,\dots,t_n)$, and for every colored copy $\colCopy{p(t_i,\dots,t_n)}{i}$, it applies that $\dt(\Pi,D)$ contains a derivation tree $\T^i_F = (\TN^i_F, \TE^i_F, \TL^i_F, \epsilon^i_F, \lambda^i_F)$ where 
$\TN^i_F = \{ n_1, n_2 \}$, 
$\TE^i_F = \{ e_1 \}$,
$\TL^i_F = \{ p(t_i,\dots,t_n), \colCopy{p(t_i,\dots,t_n)}{i}, F \}$,
$\epsilon^i_F(e_1) = (n_1,n_2)$,
$\lambda^i_F(n_1) = p(t_i,\dots,t_n)$, 
$\lambda^i_F(n_2) = \colCopy{p(t_i,\dots,t_n)}{i}$, 
and $\lambda^i_F(e_1) = F$.
\item
Assume that $DT$ is the set of derivation trees obtained for the facts in $D$ as defined above.
Given a rule $R$ in $\Pi$ and a sequence of derivation trees $ST$ satisfying $DT \models^{ST} R$, the derivation tree for $R$ is added to
$DT$. This process is repeated for every rule $R$ in $\Pi$, until no more derivation trees are generated. Finally, $\dt(\Pi,D) = DT$.  
\end{enumerate}

Let $\Pi$ be a {\NRMDatalogSN} program, $D$ be a {\NRMDatalogSN} database and $F$ be a fact. A derivation tree $\T \in \dt(\Pi,D)$ is said to be a \emph{proof} for the fact $F$ if the label of the root node is $F$. The multiset of \emph{atoms} of $\Pi$ in $D$, denoted $\ans(\Pi, D)$, is the multiset of facts $F$ such that there is a proof for $F$ in $\dt(\Pi, D)$, and the cardinality of $F$ in $\ans(\Pi, D)$ is the number of proofs of $F$.  Figure~\ref{fig:derivation-trees} shows the derivation trees that are proofs of the facts derived from an example {\NRMDatalogSN} program.
The facts $r(a)$, $p(a)$ and $q(a)$ belong to $\ans(\Pi, D)$ with cardinalities 2, 2, and 4.

The {\NRMDatalogSN} query language over a vocabulary $\tau$ is the query language $(\calQ, \calD, \calS, \ev{\cdot}{\cdot})$ where:
\begin{enumerate}
\item
  $\calQ$ is the set of {\NRMDatalogSN} queries over $\tau$;
\item
  $\calD$ is the set of {\NRMDatalogSN} databases over $\tau$;
\item
  $\calS$ is the set of {\NRMDatalogSN} query answers (i.e., pairs $(V, M)$ where $V$ is a set of variables and $M$ is a multiset of substitutions $\theta$ with $\dom(\theta)=V$); and
\item
  {\sloppy $\ev{\cdot}{}$ is the function that receives a {\NRMDatalogSN} query $(L,\Pi)$ and a {\NRMDatalogSN} database $D$, and returns a {\NRMDatalogSN} query answer $(V,M)$ where $V = \var(L)$, $\set(M) = \{ \theta \mid \theta(L)\in\ans(\Pi,D) \text{ and} \dom(\theta)=V\}$, and $\card(\theta,M) = \card(\theta(L),\ans(\Pi,D))$.\par}
\end{enumerate}

Observe that the domain of the query answer for a query $(L,\Pi)$ is $\var(L)$. Abusing notation, we will say that it is also the domain of the query $(L,\Pi)$, denoted $\dom((L,\Pi))=\var(L)$.

The Multiset Datalog query language presented here, {\NRMDatalogSN}, differs from the version proposed by Bertossi et al.~\cite{91630} in that we do not allow recursive programs nor constants in the head of rules. These restrictions permit to match the expressive power of the {\SPARQL} fragment studied here.

\subsection{Normalization of {\NRMDatalogSN} programs}
To simplify the translations from {\NRMDatalogSN} to SPARQL and $\MRA$, we assume that every {\NRMDatalogSN} query is \emph{normalized} into a query that contains only rules of the three following types:

\begin{tabbing}
  $L_0 \leftarrow L_1$,
  \hspace{4em}\= where $\var(L_0) \subseteq \var(L_1);$
  \hspace{14em}\= {\normalfont(projection rule)} \\
  $L_0 \leftarrow L_1, L_2$,
  \> where $\var(L_0) = \var(L_1) \cup \var(L_2)$;
  \> {\normalfont(join rule)} \\
  $L_0 \leftarrow L_1, \neg L_2$,
  \> where $\var(L_2) = \var(L_1)$ and $\var(L_0) = \var(L_1)$.
  \> {\normalfont(negation rule)}
\end{tabbing}
Next, we show the feasibility of this normalization.

\begin{lemma}\label{lemma:normalized-md}
  Every {\NRMDatalogSN} query is equivalent to a normalized {\NRMDatalogSN} query.
\end{lemma}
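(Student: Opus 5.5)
I will prove Lemma~\ref{lemma:normalized-md} by rewriting the program one rule at a time. Each rewriting step must preserve, for every intensional predicate that occurs in the original program, the exact number of proofs of every fact. The queried predicate of $(L,\Pi)$ is never renamed, so preserving proof counts for the original predicates gives the same answer $\ans(\Pi,D)$ restricted to them. Since the program is non-recursive, I process its rules in any order; every new predicate I introduce is fresh and defined only by the rules I add, so the dependency graph stays acyclic and the program stays safe.

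\textbf{Step 1: positive bodies.} Take a rule $R: L_0 \gets A_1,\dots,A_m,\neg B_1,\dots,\neg B_k$ with $m\ge 1$ by safety. If $m=1$, I set $J_1 := A_1$. Otherwise I introduce fresh predicates $j_2,\dots,j_m$ and a chain of join rules
\[ J_i \gets J_{i-1}, A_i, \qquad \var(J_i)=\var(J_{i-1})\cup\var(A_i). \]
Every fact $\theta(J_m)$ then has exactly one derivation tree per tuple of proofs $(\T_1,\dots,\T_m)$ of $\theta(A_1),\dots,\theta(A_m)$. I prove this by induction on $i$: a proof of $\theta(J_i)$ is a pair consisting of a proof of $\theta(J_{i-1})$ and a proof of $\theta(A_i)$, and $\theta$ is determined by $\var(J_i)$.

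Literals with repeated variables or constants, such as $p(X,X)$ or $p(X,a)$, need an extra device, since the target rule shapes only allow plain variable lists. I handle them with the equality atom $\equal$, which is treated as an extensional relation with one fact per pair $(c,c)$. A join with an equality atom then does not change multiplicities.

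\textbf{Step 2: negation.} For each $\neg B_\ell$, I note that $\var(B_\ell)\subseteq\var(J)$ by safety, where $J$ is the current positive atom. The problem is that the negation rule requires the negated atom to have the same variables as the positive one. So I introduce $B'_\ell$ with $\var(B'_\ell)=\var(J)$, defined by a join rule $B'_\ell \gets B_\ell, J^{\mathit{set}}$. Here $J^{\mathit{set}}$ may carry multiplicities, but that is harmless: negation only tests existence, so any cardinality of $B'_\ell$ works, provided $\set(B'_\ell)$ equals the restriction of $\set(B_\ell)$ to the tuples of $J$. I then chain negation rules
\[ J' \gets J, \neg B'_\ell. \]
Each such rule keeps the proof count of $\theta(J)$ exactly when $\theta(B'_\ell)$ has no proof, which matches condition~(ii) of $DT \models^{ST} R$.

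\textbf{Step 3: head.} Finally, $L_0 \gets J'$ is a projection rule, since $\var(L_0)\subseteq\var(J')$ by safety. A proof of $\theta(L_0)$ through the original rule $R$ corresponds to a substitution $\theta'$ extending $\theta$ to all body variables, together with a tuple of proofs of the positive atoms. Under the new rules, the projection rule produces one proof of $\theta(L_0)$ per proof of $\theta'(J')$, and therefore the same number of proofs. Different rules with the same head already contribute additively, so no changes are needed there.

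\textbf{Main obstacle.} I expect the main obstacle to be the bookkeeping of derivation trees. The proof of the counting bijection is an induction over the non-recursive stratification, and it relies on the fact that the auxiliary predicates are fresh, so they cannot acquire proofs from other rules. I will state and prove that bijection as a separate claim, and then apply it rule by rule.
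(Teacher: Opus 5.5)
Your proposal is correct and follows the paper's construction essentially step for step: a left-deep chain of join rules for the positive literals, auxiliary atoms $B'_j$ defined by joining $B_j$ with the current positive atom so that each negation rule has matching variable sets, a chain of negation rules, and a final projection rule, with multiplicities preserved by a bijection between derivation trees (your explicit $m=1$ case, $J_1:=A_1$, also covers a situation the paper leaves implicit, since its $q_1^A$ is never defined). The one thing to drop is the $\equal$ device: it is unnecessary, because the normal form constrains only variable sets, so body literals such as $p(X,a)$ or $p(X,X)$ are already admissible (or can be isolated with a projection rule like $p_a(X)\gets p(X,a)$); it is also unsound, because an arbitrary {\NRMDatalogSN} database need not contain $\equal$ facts, let alone exactly one per pair $(c,c)$, while equivalence must hold for every database.
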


\begin{proof}
  We provide a normalization algorithm that replaces every rule in the query by a set of rules that do not change the semantics of the query. Given a {\NRMDatalogSN} query $(L,\Pi)$, every rule $R \in \Pi$ has the form
  \[p(\bar{X}) \gets A_1,\dots,A_m, \neg B_{1},\dots,\neg B_n,\] 

  where $A_1,\dots,A_m$ are positive literals, and $\neg B_{1},\dots,\neg B_n$ are negative literals. For $1 \leq i \leq m$, let $\bar{Y}_i$ be the set of variables that consists of the variables occurring in the atoms $A_1,\dots,A_i$. Then, we replace rule $R$ by the minimal set of rules $\Pi_R$ that includes the following rules:  
  \begin{enumerate}
  \item
    Rules $R_i^A$, for $2 \leq i \leq m$, defined recursively as follows:
    \begin{enumerate}
    \item
      $R_2^A = q_2^A(\bar{Y}_2) \gets A_1,A_2$.
    \item
      $R_i^A = q_i^A(\bar{Y}_{i}) \gets q_{i-1}^A(\bar{Y}_{i-1}),A_i$.
    \end{enumerate}
  \item
    Rules $R_j^B$ and $R_j^{B'}$ for $1 \leq j \leq n$,
    defined recursively as follows:
    \begin{enumerate}
    \item
      $R_0^B=  r_0^B(\bar{Y}_m) \gets q_m^A(\bar{Y}_m)$,
    \item
      $R_j^B = r_j^B(\bar{Y}_m) \gets r_{j-1}^B(\bar{Y}_m), \neg B'_j(\bar{Y}_m) $,
    \item
      $R_j^{B'} =  B'_j(\bar{Y}_m) \gets r^B_{j-1}(\bar{Y}_m), B_j$ .
    \end{enumerate}
  \item
    A rule $R' = p(\bar{X}) \gets r_n^B(\bar{Y}_m).$
  \end{enumerate}
  Let $(L,\Pi')$ be the query resulting from replacing rule $R$ with the rules in $\Pi_R$. It is clear that the program is normal (recall that the original program is safe).
  Need to show that both programs are equivalent, that is, that the solutions of query $(p(\bar{X}),\Pi)$ after the replacement are the same and have the same cardinalities. These two conditions follow from Claim~\ref{claim:normalized-md} in the Appendix.
\end{proof}

\section{Multiset Relational Algebra ({\MRA})}
\label{sec:mra}

The multiset relational algebra used in this paper is based on the semantics defined by Dayal et al.~\cite{91030}.  This algebra considers the operations of {\em selection}, {\em projection}, {\em natural join} and {\em arithmetic union}.  Additionally, we include operators for {\em renaming} and {\em filter difference} (or ``except'').

\subsection{Multiset relations}

Assume that $\sN, \sA, \sC$ are disjoint infinite sets, where $\sN$ is the domain of relation names, $\sA$ is the domain of {\em attributes}, and $\sC$ is the domain of {\em constants} or {\em values}.

A \emph{relation schema} is given by a relation name $R \in \sN$ and a set of attributes $\{ A_1, \dots, A_n \}$ where $A_i \in \sA$ for $1 \leq i \leq n$. To simplify the notation, we will use the relation name $R$ to denote the relation schema, and $\widehat{R}$ to denote the attributes of $R$. A \emph{relational database schema} is a finite set of relation schemas.

A \emph{tuple} over a relation schema $R$ with attributes $\widehat{R} = \{ A_1, \dots, A_n \}$ is a total mapping $t$ from $\widehat{R}$ to $\sC$.  The value of tuple $t$ on an attribute $A_i \in \widehat{R}$ will be denoted as $t(A_i)$.  Given a set of attributes $U \subseteq \widehat{R}$ and a tuple $t$, we write $t[U]$ to denote the tuple $t'$ with attributes $U$ such that $t'(A) = t(A)$ for every attribute $A \in U$.

A {\em multiset relation} $r$ over a relation schema $R$ is a multiset of tuples over $\widehat{R}$.  We write $\hat{r}$ to denote the relation schema $R$ where the multiset relation $r$ is defined.  Given a tuple $t \in r$, we will use $\card(t,r)$ to denote the cardinality of tuple $t$ in $r$.

A \emph{relational database schema} is a set of relation schemas.  Given a relational database schema $T = \{R_1,\dots,R_n\}$, a \emph{multiset relational database} over $T$ is a set of multiset relations $\{r_1,\dots,r_n \}$ where each relation $r_i$ is defined over the schema $R_i$. Sometimes we will write {\MRA} database, emphasizing that the multiset relational database is in the context of {\MRA}.

Let $r_1, r_2$ be two multiset relations, and $t_1 \in r_1$ and $t_2 \in r_2$ be tuples. We say that $t_1$ and $t_2$ are compatible, denoted $t_1 \sim t_2$, if (i) for every attribute $A \in \hat{r}_1 \cap \hat{r}_2$ it holds that $t_1(A) = t_2(A)$, or (ii) $\hat{r}_1 \cap \hat{r}_2 = \emptyset$. If $t_1$ and $t_2$ are compatible, then the merge of them, denoted $t_1 \cup t_2$, is the tuple $t$ with attributes $\hat{r}_1 \cup \hat{r}_2$ where $t(A)=t_1(A)$ for each attribute $A \in \hat{r}_1$, and $t(B)=t_2(B)$ for each attribute $B \in \hat{r}_2 \setminus \hat{r}_1$.

\subsection{Syntax of {\MRA}}

The multiset relational algebra defined in this paper includes the operators of selection ($\sigma$), projection ($\pi$), renaming ($\rho$), join ($\Join$), union ($\cup$), and except ($\setminus$).  Next we describe the syntax of {\MRA} expressions containing the above operators.

A \emph{selection formula} $\psi$ is a Boolean combination of equality expressions of the form $x=y$ where $x,y \in \sA\cup\sC$.  We define a \emph{{\MRA} expression} $E$ over a relational database schema $T$, and the attributes of $E$, denoted $\widehat{E}$, by mutual recursion as follows:
\begin{itemize}
\item
  A relation name $R \in T$ is a {\MRA} expression $E$, and $\widehat{E}=\widehat{R}$.
\item
  If $E_1$ is a {\MRA} expression and $\psi$ is a selection formula where the attributes occurring in $\psi$ are included in $\widehat{E}_1$, then $\sigma_{\psi}(E_1)$ is a {\MRA} expression $E$, and $\widehat{E}=\widehat{E}_1$.
\item
  If $E_1$ is a {\MRA} expression and $S \subseteq \widehat{E}_1$ is a set of attributes, them $\pi_{S}(E_1)$ is a {\MRA} expression $E$, and $\widehat{E}=S$.
\item
  If $E_1$ is a {\MRA} expression, $A\in\widehat{E}_1$ and $B \in \sA$ are attributes, then $\rho_{A/B}(E_1)$ is a {\MRA} expression $E$, and $\widehat{E}=(\widehat{E}_1\setminus A) \cup B$.
\item
  If $E_1$ and $E_2$ are {\MRA} expressions, then $(E_1 \Join E_2)$ is an {\MRA} expression $E$, and $\widehat{E}=\widehat{E}_1\cup\widehat{E}_2$.
\item
  If $E_1$ and $E_2$ are {\MRA} expressions and $\widehat{E}_1=\widehat{E}_2$, then $(E_1 \setminus E_2)$ is a {\MRA} expression $E$, and $\widehat{E}=\widehat{E}_1$.
\item
  If $E_1$ and $E_2$ are {\MRA} expressions and $\widehat{E}_1=\widehat{E}_2$, then $(E_1 \cup E_2)$ is a {\MRA} expression $E$, and $\widehat{E}=\widehat{E}_1$.
\end{itemize}

Note that a selection operation $\sigma_\psi(E_1)$ requires that attributes in the selection formula $\psi$ be attributes of the {\MRA} expression $E_1$; the projection operation $\pi_S(E_1)$ requires that $S$ be a subset of the attributes of the {\MRA} expression $E_1$; and that the union $E_1 \cup E_2$ and difference $E_1 \setminus E_2$ expressions require that expressions $E_1$ and $E_2$ have the same set of attributes.

\subsection{Semantics of {\MRA}}

Given a selection formula $\psi$ and a tuple $t$ over a relation schema $R$, we will use $t \models \psi$ to denote that $t$ satisfies $\psi$, and its evaluation is given as follows:
\begin{enumerate}

\item
  if $\psi$ is $A = B$ where $A,B \in \widehat{R}$ are attributes, then $t \models \psi$ iff $t(A) = t(B)$;

\item
  if $\psi$ is $A = c$ where $A \in \widehat{R}$ is an attribute and $c \in \sC$ is a constant, then $t \models \psi$ iff $t(A) = c$;

\item
  if $\psi$ is $c_1 = c_2$ where $c_1, c_2 \in \sC$ are constants, then $t \models \psi$ iff $c_1$ is the same constant as $c_2$;

\item
  if $\psi$ is $\psi_1 \land \psi_2$, then $t \models \psi$ iff $t \models \psi_1$ and $t \models \psi_2$;

\item
  if $\psi$ is $\psi_1 \lor \psi_2$, then $t \models \psi$ iff $t \models \psi_1$ or $t \models \psi_2$;

\item
  if $\psi$ is $\neg \psi_1$, then $t \models \psi$ iff $t \models \psi_1$ does not hold.
\end{enumerate}

Now, the evaluation of a {\MRA} expression $E$ over a multiset relational database $D$ (of the same schema as $E$) is defined as a function $\Eval(E,D)$ that returns a multiset relation $r$ with the same schema as $E$.

Let $D$ be a {\MRA} database over a schema $T$ and $E, E_1, E_2$ be {\MRA} expressions over $T$.  The evaluation of $\Eval(E,D)$ is the multiset relation $r$ defined recursively as follows (assume that $\Eval(E_1,D) = r_1$, and $\Eval(E_2,D) = r_2$):
\begin{itemize}
\item
  If $E$ is a relation name $R_1 \in T$, then $r$ is the relation for the relation name $R_1$ in the database $D$.
\item
  If $E$ is $\sigma_{\psi}(E_1)$ then $\set(r) = \{ t \mid t \in r_1 \text{ and } t \models \psi \}$ and $\card(t, r) = \card(t, r_1)$.
\item
  If $E$ is $\pi_S(E_1)$ then $\set(r) = \{ t' \mid t' = t[ S] \text{ and } t \in r_1 \}$ and
  \[
  \card(t',r) = \sum_{t \mbox{ \footnotesize{with} } t[S] =t'} \card(t,r_1).
  \]
\item
  If $E$ is $\rho_{A/B} (E_1)$ then $r$ is the result from renaming in $r_1$ the attribute $A$ as $B$.
\item
  If $E$ is $(E_1 \Join E_2)$ then $\set(r) = \{ t_1 \cup t_2 \mid t_1 \in r_1,\; t_2 \in r_2, \text{ and } t_1 \sim t_2 \}$ and $\card(t_1\cup t_2,r) = \card(t_1,r_1) \times \card(t_2,r_2)$.
\item
  If $E$ is $(E_1 \cup E_2)$ then $\set(r) = \{ t \mid t \in r_1 \text{ or } t \in r_2 \}$ and $\card(t,r) = \card(t,r_1) + \card(t,r_2)$.
\item
  If $E$ is $(E_1 \setminus E_2)$ then $\set(r) = \{ t \mid t \in r_1 \text{ and } t \notin r_2 \}$ and $\card(t,r) = \card(t,r_1)$.
\end{itemize}

Hence, in {\MRA}, the set of queries is the set of {\MRA} expressions, the set of databases is the set of multiset relational databases, the set of results is the set of multiset relations, and the evaluation procedure is the aforementioned function $\Eval$.

\section{Equivalence between {\SPARQL} and {\NRMDatalogSN}}
\label{sec:sparql-nrmd}
This section presents the simulations that prove that {\SPARQL} and Non-Recursive Multiset Datalog with Safe Negation ({\NRMDatalogSN}) have the same expressive power. 
Specifically, we show that {\SPARQL} can be simulated by {\NRMDatalogSN} (Section \ref{sec:sparql2datalog}), and {\NRMDatalogSN} can be simulated by {\SPARQL} (Section \ref{sec:datalog2sparql}). 

\subsection{From {\SPARQL} to {\NRMDatalogSN}}%
\label{sec:sparql2datalog}
This section shows that {\SPARQL} can be simulated by Non-Recursive Multiset Datalog with Safe Negation ({\NRMDatalogSN}).
To support this, we describe the following translation functions:
\begin{itemize}
\item function $f_{12}$, that translates {\SPARQL} queries into {\NRMDatalogSN} queries; 
\item function $g_{12}$, that translates {\SPARQL} databases into {\NRMDatalogSN} databases; and
\item function $h_{12}$, that translates {\NRMDatalogSN} query answers into {\SPARQL} query answers.
\end{itemize}

\subsubsection{Translating databases from {\SPARQL} to {\NRMDatalogSN}}
Recall that a {\SPARQL} database is a set of RDF triples and a {\NRMDatalogSN} database is a multiset of facts.
Given this, the basic idea is to translate each RDF triple into a Datalog atom. Additionally, we create an atom to encode all RDF terms, and an atom to encode the unbound value.

\begin{definition}[Function $g_{12}$]%
\label{def:graphs-to-facts}
Let $\bot$ be an RDF term which will exclusively used to encode the unbounded value.
Let $(P,\alpha)$ be a Datalog vocabulary where $P = \{ \term, \equal, \comp, \triple, \unull \}$ and  
$\alpha(\term)=1$, $\alpha(\equal)=2$, $\alpha(\comp)=3$, $\alpha(\triple)=3$, and $\alpha(\unull)=1$.
Given an RDF graph $G$, the function $g_{12}(G)$ returns a {\NRMDatalogSN} database $D$ with vocabulary $(P,\alpha)$, where the multiset of facts in $D$ is defined as follows:
\begin{itemize}
\item
for each RDF term $t$ in $\terms(G)$ (i.e. the set of terms in $G$) , $D$ contains the facts $\term(t)$, $\equal(t,t)$, 
$\comp(t, t, t)$, $\comp(t, \bot, t)$, $\comp(\bot, t, t)$;
\item
for each RDF triple $(s,p,o) \in G$, $D$ contains the fact $\triple(s,p,o)$; 
\item 
$D$ contains the fact $\comp(\bot, \bot, \bot)$;
\item
$D$ contains the fact $\unull(\bot) \in D$.
\end{itemize}
\end{definition}

Intuitively: 
a fact of the form $\term(t)$ is used to represent that $t$ is an RDF term;
a fact of the form $\equal(t,t)$ is used to represent the equality of the RDF term $t$;
for each term $t$ in $G$, the notion of compatible is represented by the facts $\comp(t, t, t)$, $\comp(t, \bot, t)$ and $\comp(\bot, t, t)$; 
a fact of the form $\triple(s,p,o)$ is used to represent an RDF triple $(s,p,o)$.



\begin{example}
Let $G$ be the RDF graph defined as follows
  \[
    G = \{
    \begin{aligned}[t]
      &(\iAlice, \pLivesIn, \iSantiago), (\iAlice, \pKnows, \iBob),
      (\iBob, \pLivesIn, \iSantiago), (\iBob, \pKnows, \iCarol),\\[-4pt]
      &(\iCarol, \pLivesIn, \iLima) \}.
    \end{aligned}
  \]
The translation of the above RDF triples into Datalog facts is given as follows:
\[
g_{12}(G) = \lBrace
\begin{aligned}[t]
 &\term(\iAlice), \equal(\iAlice, \iAlice),\\[-4pt]
 &\comp(\iAlice, \iAlice, \iAlice), \comp(\iAlice, \bot, \iAlice), \comp(\bot, \iAlice, \iAlice),\\[-4pt] 
 & \quad \vdots \\[-4pt]      
 &\term(\iLima), \equal(\iLima, \iLima),\\[-4pt]
 &\comp(\iLima, \iLima, \iLima), 
  \comp(\iLima, \bot, \iLima),
  \comp(\bot, \iLima, \iLima),\\[-4pt] 
 &\triple(\iAlice, \pLivesIn, \iSantiago),
  \triple(\iAlice, \pKnows, \iBob),\\[-4pt]
 &\triple(\iBob, \pLivesIn, \iSantiago),
  \triple(\iBob, \pKnows, \iCarol),\\[-4pt]
 &\triple(\iCarol, \pLivesIn, \iLima),\\[-4pt]
 &\comp(\bot, \bot, \bot),\\[-4pt]
 &\unull(\bot) \\[-4pt]
 \rBrace.
 \end{aligned}
\]
\end{example}

\subsubsection{Translating queries from {\SPARQL} to {\NRMDatalogSN}}
\label{sec:sparql-to-md}
In general terms, any {\SPARQL} graph pattern can be translated into a set of {\NRMDatalogSN} rules. However, there are some subtleties that need to be discussed before presenting the general translation rules. 

An initial issue is the translation of a filter graph pattern $ P = (P_1 \FILTER \varphi)$ where $\varphi$ is a complex filter condition.  In order to simplify the translation to Datalog, we need to transform $P$ into a collection of filter graph patterns where every filter condition is an atomic filter condition.

Consider the following equivalences:
\begin{align}
  &(P_1\FILTER \varphi_1 \land \varphi_2) \equiv
    ((P_1 \FILTER \varphi_1) \FILTER \varphi_2).
    \label{formula:conj}
  \\
  &(P_1\FILTER \varphi_1 \lor \varphi_2) \equiv
    (P_1 \FILTER \varphi_1) \UNION
    (P_1 \FILTER \varphi_2).
    \label{formula:disj1}
  \\
  &(P_1 \FILTER \neg (\varphi_1)) \equiv 
    (P_1 \EXCEPT (P_1 \FILTER \varphi_1) ).
    \label{formula:neg}
\end{align}
Intuitively, these equivalences seem to be true, since similar equivalences are valid in set relational algebra, namely $\sigma_{\varphi_1 \land \varphi_2}(R) = \sigma_{\varphi_1}(\sigma_{\varphi_2}(R))$, $\sigma_{\varphi_1 \lor \varphi_2}(R) = \sigma_{\varphi_1}(R) \cup \sigma_{\varphi_2}(R)$, and $\sigma_{\neg\varphi_1}(R) = R \setminus \sigma_{\varphi_1}(R)$. Under set semantics, these three equivalences are valid. However, under bag semantics, just equivalence (\ref{formula:conj}) is valid, and equivalences (\ref{formula:disj1}) and (\ref{formula:neg}) present problems.  Let us analyze them and provide valid equivalences.

\begin{itemize}
\item
  To see why equivalence (\ref{formula:disj1}) is not valid, consider the case where for a solution $\mu$ of the pattern $P_1$ the evaluation of formulas $\varphi_1$ and $\varphi_2$ are true. Then, $\mu$ is a solution of the queries in both sides of equivalence (\ref{formula:disj1}). However, the cardinality differs. Indeed, the cardinality of $\mu$ for the query on the right side is twice the cardinality for the query on the left side. Hence, equivalence (\ref{formula:disj1}) is valid for set semantics but not for bag semantics.
\item
To see why equivalence (\ref{formula:neg}) is not valid, consider the case where for a solution mapping $\mu$ of the pattern $P_1$, formula $\varphi_1$ produces an error. Then, formula $\neg\varphi_1$ also produces an error, and hence $\mu$ is not a solution to the query on the left side. On the other hand, since $\mu$ is a solution mapping for $P_1$ but not a solution to the pattern $(P_1 \FILTER \varphi_1)$, $\mu$ is a solution mapping to the query on the right side. Hence, this equivalency is not valid because the queries do not have the same solution mappings.
\end{itemize}
Intuitively, equivalence (\ref{formula:disj1}) is no longer valid when we change from set semantics to bag semantics, whereas equivalence (\ref{formula:neg}) is no longer valid when we change from 2-valued logic to 3-valued logic.  In the following, we show how to solve these problems.

\begin{lemma}[Rewriting of disjoint filter conditions]%
  \label{disjunction-rewriting-simple}
  We say that two filter conditions $\varphi_1$ and $\varphi_2$ are {\em disjoint}, if for every mapping $\mu$ it does not hold that $\mu(\varphi_1)$ and $\mu(\varphi_2)$ are simultaneously $\true$.  Equivalence (\ref{formula:disj1}) is true when $\varphi_1$ and $\varphi_2$ are disjoint.
\end{lemma}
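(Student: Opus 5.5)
We fix an RDF graph $G$ and write $M_1=\ev{P_1}{G}$. We compare the left-hand side $M=\ev{(P_1 \FILTER \varphi_1\lor\varphi_2)}{G}$ with the right-hand side $M'=\ev{(P_1\FILTER\varphi_1)\UNION(P_1\FILTER\varphi_2)}{G}$, mapping by mapping. Two multisets are equal when every mapping has the same cardinality in both (and hence both have the same underlying set). So it suffices to show $\card(\mu,M)=\card(\mu,M')$ for every mapping $\mu$.

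By the semantics of $\FILTER$, $\card(\mu,M)=\card(\mu,M_1)$ if $\mu(\varphi_1\lor\varphi_2)=\true$, and $0$ otherwise. By the semantics of $\UNION$ and $\FILTER$,
\[
\card(\mu,M')=[\mu(\varphi_1)=\true]\cdot\card(\mu,M_1)+[\mu(\varphi_2)=\true]\cdot\card(\mu,M_1),
\]
where $[\cdot]$ is $1$ if the condition holds and $0$ otherwise.

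The key step reads off Table~\ref{table:3-valued}: $\mu(\varphi_1\lor\varphi_2)=\true$ holds exactly when at least one of $\mu(\varphi_1)$, $\mu(\varphi_2)$ is $\true$, whatever the other value is ($\true$, $\false$ or $\error$). This gives a case split on $\mu$.

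\begin{itemize}
\item If neither $\varphi_1$ nor $\varphi_2$ evaluates to $\true$ under $\mu$, both sides are $0$.
\item If exactly one of them evaluates to $\true$, both sides equal $\card(\mu,M_1)$.
\item The case where both evaluate to $\true$ is excluded by disjointness. This is the only case where the right-hand side would double-count, as observed before the statement.
\end{itemize}

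There is no real obstacle. The only point needing care is the three-valued logic: an $\error$ value in the other disjunct must not affect the outcome. The table confirms that $\true\lor\error=\error\lor\true=\true$, so it does not.
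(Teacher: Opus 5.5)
Your proof is correct and follows the same route as the paper: a pointwise comparison of cardinalities, using that $\mu(\varphi_1\lor\varphi_2)=\true$ exactly when some disjunct is $\true$, with disjointness ruling out double counting in the $\UNION$. Your version is slightly cleaner in one respect: it makes explicit that this ``iff'' comes from the three-valued truth table alone, and that disjointness is needed only for the cardinality equality. The paper's wording instead attributes the ``iff'' to disjointness.
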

\begin{proof}
  Given that $\varphi_1$ and $\varphi_2$ are disjoint, it applies that $\mu(\varphi_1)$ is $\true$ when $\mu(\varphi_2)$ is not $\true$ (and vice versa). So, it holds that $\mu(\varphi_1 \lor \varphi_2) = \true$ if and only if $\mu(\varphi_1)=\true$ or $\mu(\varphi_2)=\true$, and the cardinality of $\mu$ on the left hand side is the sum of the cardinalities of $\mu$ in each of the terms of the right hand side.
\end{proof}

Now, consider the following equivalence:
\begin{align}
  &\begin{aligned}[t]
    (P_1\FILTER \varphi_1 \lor \varphi_2) \equiv 
    &(P_1 \FILTER \varphi_1 \land \neg \varphi_2) \UNION \\[-5pt]
    &(P_1 \FILTER \neg \varphi_1 \land  \varphi_2) \UNION \\[-5pt]
    &(P_1 \FILTER \varphi_1 \land \varphi_2).
     \end{aligned}
  \label{formula:disj2}
\end{align}

Equivalence (\ref{formula:disj2}) solves one of the problems of equivalence (\ref{formula:disj1}), but it still has problems in evaluating formulas with errors. 
In order to solve them, we introduce the notion of ``error filter condition.''

\begin{definition}[Error filter condition]%
\label{def:error-formula}
Let $\varphi$, $\varphi_1$ and $\varphi_2$ be filter conditions.
The function $\Error(\varphi)$ returns a filter condition defined recursively as follows:
\begin{itemize}
\item if $\varphi$ is $\bound(?X)$ then $\Error(\varphi) = \false;$
\item if $\varphi$ is $?X=a$ then $\Error(\varphi) = \neg\bound(?X)$;
\item if $\varphi$ is $?X={?Y}$ then\\ $\Error(\varphi)$ =
 $(\neg\bound(?X) \land \bound(?Y))$ $\lor$
 $(\bound(?X) \land \neg\bound(?Y))$ $\lor$
 $(\neg\bound(?X) \land \neg\bound(?Y))$;
\item if $\varphi$ is $\varphi_1 \land \varphi_2$ then\\
 $\Error(\varphi)$ =
 $(\varphi_1 \land \Error(\varphi_2))$  $\lor$
 $(\Error(\varphi_1) \land \varphi_2)$ $\lor$
 $(\Error(\varphi_1) \land \Error(\varphi_2))$;
\item if $\varphi$ is $\varphi_1 \lor \varphi_2$ then\\ 
 $\Error(\varphi)$ =
 $(\neg\varphi_1 \land \Error(\varphi_2))$ $\lor$
 $(\Error(\varphi_1) \land \neg\varphi_2)$ $\lor$
 $(\Error(\varphi_1) \land \Error(\varphi_2))$;
\item if $\varphi$ is $\neg \varphi_1$ then $\Error(\varphi)$ = $\Error(\varphi_1)$.
\end{itemize}
\end{definition}

\begin{lemma}%
  \label{lemma:about-error-formula}
  For every filter condition $\varphi$ and mapping $\mu$ it holds that $\mu(\varphi)=\error$ if and only if $\mu(\Error(\varphi))=\true$.
\end{lemma}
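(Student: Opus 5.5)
The proof goes by structural induction on the filter condition $\varphi$. We prove, for every mapping $\mu$, that $\mu(\Error(\varphi))=\true$ exactly when $\mu(\varphi)=\error$. One point must be settled first: what $\false$ means when it appears as a filter condition in the base case for $\bound$. We read it as a condition that is never true, for instance $\bound(\varX)\land\neg\bound(\varX)$. We also use a simple but essential observation: by the truth tables in Table~\ref{table:3-valued}, conditions built only from $\bound$ atoms never evaluate to $\error$. So $\mu(\Error(\varphi))$ is always $\true$ or $\false$ once we know how $\bound$ behaves. Beyond that, we only need the semantics of $\land$, $\lor$ and $\neg$ on values that are $\true$ or $\false$.

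\textbf{Base cases.} For $\bound(\varX)$, $\mu(\varphi)$ is never $\error$, and $\Error(\varphi)$ is never true. For $\varX=a$, $\mu(\varphi)=\error$ iff $\varX\notin\dom(\mu)$ iff $\mu(\neg\bound(\varX))=\true$. For $\varX=\varY$, $\mu(\varphi)=\error$ iff at least one of the two variables is unbound. The three disjuncts of $\Error(\varphi)$ enumerate exactly the three combinations in which this happens, and since every subformula is a two-valued $\bound$ combination, the disjunction is true iff one of these combinations holds.

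\textbf{Inductive cases.} Negation is immediate, because $\neg$ maps $\error$ to $\error$ and only $\error$ to $\error$. For $\varphi_1\land\varphi_2$, reading Table~\ref{table:3-valued} shows that the conjunction is $\error$ in exactly three situations: (true, error), (error, true) and (error, error). Pairs involving $\false$ give $\false$. The disjunction $\lor$ is true iff at least one disjunct is true, so $\mu(\Error(\varphi))=\true$ iff one of the three disjuncts is true. By the induction hypothesis, $\mu(\Error(\varphi_i))=\true$ iff $\mu(\varphi_i)=\error$. Also, $\mu(\varphi_1\land\Error(\varphi_2))=\true$ iff both conjuncts are true, that is, iff $\mu(\varphi_1)=\true$ and $\mu(\varphi_2)=\error$. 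The other disjuncts are handled the same way, which matches the table exactly. For $\varphi_1\lor\varphi_2$ the table shows that $\error$ arises exactly for (false, error), (error, false) and (error, error). Here we use that $\mu(\neg\varphi_1)=\true$ iff $\mu(\varphi_1)=\false$, again by the negation table.

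\textbf{Main subtlety.} $\Error(\varphi)$ contains the original subformulas $\varphi_i$, and these can themselves evaluate to $\error$. So $\Error(\varphi)$ is not a purely two-valued formula, and we cannot assume it evaluates classically. The argument above avoids this problem because it only ever needs the \emph{true} direction: a conjunction is $\true$ iff both conjuncts are $\true$, a disjunction is $\true$ iff some disjunct is $\true$, and $\neg\psi$ is $\true$ iff $\psi$ is $\false$. All three facts hold verbatim in the three-valued tables. With this in hand, each inductive step reduces to comparing rows of Table~\ref{table:3-valued}.
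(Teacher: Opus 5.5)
Your proof is correct. It has the same skeleton as the paper's (structural induction over the same base and inductive cases), but you organize the inductive step differently.

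\textbf{How the two routes differ.} The paper proves a stronger appendix claim: $\mu(\varphi)=\error$ iff exactly one disjunct of $\Error(\varphi)$ is true. It checks the $\land$ and $\lor$ steps by filling in full nine-row truth tables. There the inductive hypothesis is used only in the weak form ``$\mu(\Error(\varphi_i))$ is $\true$, or else $\false$ or $\error$'', which forces sub-cases, e.g.\ splitting on whether $\Error(\varphi_2)$ is $\false$ or $\error$ in case ET. You instead observe that, in these strong-Kleene tables, $\land$ is true iff both arguments are true, $\lor$ is true iff some argument is true, and $\neg\psi$ is true iff $\psi$ is false. This removes all the bookkeeping: you never need to know whether a non-true value is $\false$ or $\error$, so each disjunct of $\Error(\varphi)$ is settled in one line. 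What you give up is the uniqueness part of the paper's claim. The lemma does not need it, and the disjointness invoked in the disjunction-rewriting lemma already follows from the lemma itself.

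\textbf{One sentence to fix.} In your first paragraph you conclude that $\mu(\Error(\varphi))$ is always $\true$ or $\false$, because conditions built only from $\bound$ atoms never evaluate to $\error$. For compound $\varphi$ that premise does not apply: $\Error(\varphi)$ contains $\varphi_1$ and $\varphi_2$ themselves, as your own final paragraph points out.

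The conclusion happens to be true, but only via an additional induction. Assuming $\Error(\varphi_2)$ is two-valued, $\varphi_1\land\Error(\varphi_2)$ can be $\error$ only when both $\varphi_i$ are $\error$, and then $\Error(\varphi_1)\land\Error(\varphi_2)$ is $\true$. The $\neg\varphi_1\land\Error(\varphi_2)$ disjunct in the $\lor$ case is handled the same way.

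Your argument never uses this claim. Even the $\varX=\varY$ base case goes through with the true-direction facts alone. So either drop the sentence or restrict it to the atomic cases.
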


\begin{proof}
  This lemma is proved by induction on the structure of the filter condition (see Claim~\ref{claim:error-formula} in the appendix).
\end{proof}

\begin{example}
  Let $\varphi$ be the filter condition $L \lor \neg L$ where $L$ is the equality $?X = a$.  According to Definition~\ref{def:error-formula}, $\Error(\varphi)$ will be the filter condition $(\neg L \land \Error(\neg L)) \lor (\Error(L) \land \neg \neg L) \lor (\Error(L) \land \Error(\neg L))$.  Since $\neg\neg L$ is equivalent to $L$ and $\Error(\neg L)$ is equivalent to $\Error(L)$, then $\Error(\varphi)$ is equivalent to $(\neg L \land \Error(L)) \lor (\Error(L) \land L) \lor (\Error(L))$, which is equivalent to $(\varphi \land \Error(L)) \lor (\Error(L))$, and then, equivalent to $\Error(L)$.  According to Definition~\ref{def:error-formula}, we conclude that $\Error(\varphi)$ is equivalent to $\neg\bound(?X)$.

  There are three possible values for variable $?X$ in a mapping $\mu$, namely $\mu(?X)=a$, $\mu(?X)=b$ (for a term $b\neq a$), and variable $?X$ is unbound in $\mu$ (denoted $\mu(?X)=\bot$).  The following table shows the values for $\mu(\varphi)$ and $\mu(\Error(\varphi))$ for these three cases.
  \begin{center}
    \begin{tabular}{ccc}
      \toprule
      $\mu(?X)$ & $\mu(\varphi)$ & $\mu(\Error(\varphi))$ \\ \midrule
      $a$ & $\true$ & $\false$ \\
      $b$ & $\true$ & $\false$ \\
      $\bot$ & $\error$ & $\true$ \\
      \bottomrule
    \end{tabular}
  \end{center}
  As defined by Lemma~\ref{lemma:about-error-formula}, the filter condition $\varphi$ produces error for mappings $\mu$ where $\mu(\Error(\varphi))=\true$, and $\mu(\Error(\varphi))$ is either $\true$ or $\false$.
\end{example}

Now we present an equivalence for the disjunction that works in all cases.

\begin{lemma}[Disjunction rewriting]%
  \label{disjunction-rewriting-fixed}
  Given two filter conditions $\varphi_1$ and $\varphi_2$, and a pattern $P$, the following equivalence holds for bag semantics:
  \begin{align}
    &\begin{aligned}[t]
      (P \FILTER \varphi_1 \lor \varphi_2) \equiv\;
      &(P \FILTER \varphi_1 \land \varphi_2) \UNION \\[-5pt]
      &(P \FILTER \varphi_1 \land \neg \varphi_2) \UNION \\[-5pt]
      &(P \FILTER \neg \varphi_1 \land \varphi_2) \UNION \\[-5pt]
      &(P \FILTER \varphi_1 \land \Error(\varphi_2)) \UNION \\[-5pt]
      &(P \FILTER \Error(\varphi_1) \land \varphi_2) . \\
    \end{aligned}
    \label{formula:disj-fixed}
  \end{align}
\end{lemma}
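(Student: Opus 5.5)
Both sides of~(\ref{formula:disj-fixed}) are built from $P$ only through $\FILTER$ and $\UNION$. So the solution set of each side is contained in $\set(\ev{P}{G})$, and every surviving mapping keeps (a multiple of) its cardinality in $\ev{P}{G}$. Hence it suffices to fix a graph $G$ and a mapping $\mu \in \ev{P}{G}$ with $\card(\mu,\ev{P}{G}) = n$, and to show one thing. If $\mu(\varphi_1 \lor \varphi_2) = \true$, then $\mu$ satisfies exactly one of the five filter conditions on the right-hand side, so its cardinality there is $n$. Otherwise $\mu$ satisfies none of them, so its cardinality there is $0$. The left-hand side gives $n$ in the first case and $0$ in the second, directly by the semantics of $\FILTER$. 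The right-hand side adds up cardinalities by the semantics of $\UNION$, so it contributes $n$ times the number of disjuncts that $\mu$ satisfies.

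The core of the argument is a case analysis on the pair $(\mu(\varphi_1), \mu(\varphi_2)) \in \{\true,\false,\error\}^2$, nine cases in total. First I convert each conjunct into a statement about these values. Since $\mu(\neg\varphi_i)=\true$ iff $\mu(\varphi_i)=\false$ (Table~\ref{table:3-valued}), the conjunct $\neg\varphi_i$ holds iff $\mu(\varphi_i)=\false$. By Lemma~\ref{lemma:about-error-formula}, the conjunct $\Error(\varphi_i)$ holds iff $\mu(\varphi_i)=\error$. A conjunction is $\true$ iff both of its conjuncts are $\true$. So the five disjuncts on the right select exactly the pairs $(\true,\true)$, $(\true,\false)$, $(\false,\true)$, $(\true,\error)$ and $(\error,\true)$, respectively. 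These five pairs are pairwise distinct, so the five filter conditions are pairwise disjoint in the sense of Lemma~\ref{disjunction-rewriting-simple}. Consequently $\mu$ satisfies at most one of them.

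Next I read off the disjunction column of Table~\ref{table:3-valued}: $\mu(\varphi_1\lor\varphi_2)=\true$ holds exactly for the pairs $(\true,\true)$, $(\true,\false)$, $(\true,\error)$, $(\false,\true)$ and $(\error,\true)$. These are precisely the five pairs listed above. The remaining four pairs, $(\false,\false)$, $(\false,\error)$, $(\error,\false)$ and $(\error,\error)$, give $\false$ or $\error$, and none of them is selected by any disjunct. This gives the required cardinality match, $n$ or $0$, in every case, which proves the equivalence. One could alternatively invoke Lemma~\ref{disjunction-rewriting-simple} repeatedly after noting pairwise disjointness, but the direct count is cleaner.

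I expect the only delicate step to be the treatment of the $\Error(\varphi_i)$ conjuncts. It relies on Lemma~\ref{lemma:about-error-formula}, together with the fact that $\Error(\varphi_i)$ itself only ever evaluates to $\true$ or $\false$, so it never introduces an error into the conjunction. Without that fact, a conjunction like $\varphi_1 \land \Error(\varphi_2)$ could evaluate to $\error$ where we need $\false$. That would not change satisfaction, though, since only the value $\true$ matters. I therefore only need the ``iff $\true$'' direction, which the lemma supplies.
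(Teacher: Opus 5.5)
Your proof is correct, but it takes a different route from the paper's.

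The paper argues at the level of filter conditions. It uses that $\varphi \lor \neg\varphi \lor \Error(\varphi)$ is always $\true$, then implicitly relies on distributivity and idempotence of the three-valued connectives to rewrite $\varphi_1 \lor \varphi_2$ as a logically equivalent five-term disjunction. It then simply asserts that the five disjuncts are pairwise disjoint and invokes Lemma~\ref{disjunction-rewriting-simple} to turn that disjunction into a $\UNION$.

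You bypass the intermediate formula. Your nine-case analysis on $(\mu(\varphi_1),\mu(\varphi_2))$ counts cardinalities directly through $\FILTER$ and $\UNION$, and establishes coverage and pairwise disjointness in one step.

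The paper's version is shorter and mirrors the set-theoretic decomposition. However, it leaves two things unverified. The first is the disjointness claim itself. The second is the extension of the two-condition Lemma~\ref{disjunction-rewriting-simple} to five disjuncts, which needs the observation that a disjunction is $\true$ iff some disjunct is.

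Your version is self-contained and makes the partition explicit. It also shows exactly why the two $\Error$ terms are needed: they cover the pairs $(\true,\error)$ and $(\error,\true)$ that equivalence~(\ref{formula:disj2}) misses.

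Your closing side remark, that $\Error(\varphi_i)$ never evaluates to $\error$, is true by an easy induction on $\varphi_i$, although the paper's appendix tables conservatively allow $\error$ there. As you say, your argument does not need it, since only the value $\true$ matters for $\FILTER$.
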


\begin{proof}
  Since $\varphi \lor \neg\varphi \lor \Error(\varphi)$ is a tautology for every filter condition $\varphi$, the following equivalences hold:
  \begin{align*}
    \varphi_1 & \equiv \varphi_1 \land
                (\varphi_2 \lor \neg\varphi_2 \lor \Error(\varphi_2)) 
                \equiv (\varphi_1 \land \varphi_2) \lor
                (\varphi_1 \land \neg\varphi_2) \lor
                (\varphi_1 \land \Error(\varphi_2)), \\
    \varphi_2 & \equiv \varphi_2 \land
                (\varphi_1 \lor \neg\varphi_1 \lor \Error(\varphi_1))
                \equiv (\varphi_2 \land \varphi_1) \lor
                (\varphi_2 \land \neg\varphi_1) \lor
                (\varphi_2 \land \Error(\varphi_1)).
  \end{align*}
  Hence, the following equivalence holds:
  \begin{align*}
    \varphi_1 \lor \varphi_2
    & \equiv
      (\varphi_1 \land \varphi_2) \lor
      (\varphi_1 \land \neg\varphi_2) \lor
      (\neg\varphi_1 \land \varphi_2) \lor
      (\varphi_1 \land \Error(\varphi_2)) \lor
      (\Error(\varphi_1) \land \varphi_2).
  \end{align*}
  Since all filter conditions in the disjunction of the right side of this equivalence are disjoint, by Lemma~\ref{disjunction-rewriting-simple}, we got equivalence~(\ref{formula:disj-fixed}).
\end{proof}

Finally, we provide a translation for filter graph patterns which have a negation.  Under two-valued logic, the evaluation of a pattern $P$ of the form $(P_1 \FILTER \neg \varphi)$ may be understood as ``all solutions $\mu$ of $P_1$ except those where $\mu(\varphi)$ is true.''  Under 3-valued logic, the evaluation of $P$ means ``all solutions $\mu$ of $P$ except those where $\mu(\varphi)$ is true or $\mu(\varphi)$ is error.''  Thus according to the latter meaning we have:

\begin{lemma}[Negation rewriting]%
  \label{lemma:negation-rewriting-fixed}
  Given a filter condition $\varphi$, and a pattern $P_1$, the following equivalence holds:
  \begin{align}
    (P_1\FILTER \neg \varphi) \equiv \;
    ((P_1 
    \EXCEPT\, (P_1 \FILTER \varphi))
    \EXCEPT\, (P_1 \FILTER \Error(\varphi))).
    \label{formula:neg-fixed}
  \end{align}
\end{lemma}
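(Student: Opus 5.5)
We prove the equivalence pointwise: fix a graph $G$ and a mapping $\mu$, write $M_1=\ev{P_1}{G}$, and show that $\mu$ has the same cardinality in the evaluations of both sides. The argument uses only the semantics of $\FILTER$ and $\EXCEPT$, together with Lemma~\ref{lemma:about-error-formula} to turn the condition ``$\mu(\varphi)=\error$'' into membership in a filtered pattern.

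First we compute the left-hand side. By the semantics of $\FILTER$, $\card(\mu,\ev{P_1\FILTER\neg\varphi}{G})=\card(\mu,M_1)$ if $\mu(\neg\varphi)=\true$, and $0$ otherwise. By the three-valued negation table (Table~\ref{table:3-valued}), $\mu(\neg\varphi)=\true$ holds exactly when $\mu(\varphi)=\false$. So the left side keeps every $\mu\in M_1$ with $\mu(\varphi)=\false$, with its original cardinality, and nothing else.

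Next we compute the right-hand side from the inside out. Let $N_1=\ev{P_1\FILTER\varphi}{G}$ and $N_2=\ev{P_1\FILTER\Error(\varphi)}{G}$. Then $\mu\in N_1$ iff $\mu\in M_1$ and $\mu(\varphi)=\true$. By Lemma~\ref{lemma:about-error-formula}, $\mu\in N_2$ iff $\mu\in M_1$ and $\mu(\varphi)=\error$. The $\EXCEPT$ operator keeps an element of its left operand with unchanged cardinality exactly when that element is absent from its right operand. Applying it twice, $\mu$ occurs in the right side iff $\mu\in M_1$, $\mu\notin N_1$ and $\mu\notin N_2$, and in that case its cardinality is $\card(\mu,M_1)$. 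For $\mu\in M_1$, the conditions $\mu\notin N_1$ and $\mu\notin N_2$ say that $\mu(\varphi)\neq\true$ and $\mu(\varphi)\neq\error$. Since the logic has only three truth values, this is equivalent to $\mu(\varphi)=\false$. Hence both sides have the same underlying set, and the cardinalities agree because both equal $\card(\mu,M_1)$.

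The one subtle point is that membership of $\mu$ in $N_1$ or $N_2$ is tested on the same mapping $\mu$ that comes from $M_1$. This is legitimate because $\FILTER$ does not change mappings, so $\set(N_1)$ and $\set(N_2)$ are subsets of $\set(M_1)$ and the tests compare like with like. Note also that $\EXCEPT$ looks only at whether $\mu$ is present in the subtrahend, not at its multiplicity there. This is why duplicates in $M_1$ cause no trouble, unlike in equivalence~(\ref{formula:disj1}). The real work is done by Lemma~\ref{lemma:about-error-formula}, which we take as given.
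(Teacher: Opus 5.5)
Your proof is correct and follows the paper's own argument, a direct pointwise check of the $\FILTER$ and $\EXCEPT$ semantics with Lemma~\ref{lemma:about-error-formula} identifying $(P_1 \FILTER \Error(\varphi))$ with the error cases: both sides keep exactly the solutions $\mu$ of $P_1$ with $\mu(\varphi)=\false$, each with cardinality $\card(\mu,\ev{P_1}{G})$. Your wording of the key fact, that the discarded solutions are those with $\mu(\varphi)$ equal to $\true$ or $\error$, is the accurate one; the paper's one-line proof says ``$\false$ or $\error$'', which is only right if read as a statement about $\mu(\neg\varphi)$.
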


\begin{proof}
  The equivalence follows from the fact that the filter discards from the solutions of $P$ those solutions $\mu$ such that $\mu(\varphi)$ is $\false$ or $\error$.
\end{proof}

Now we are ready to present the effectiveness of rewriting that allows for the reduction of complex filter conditions.

\begin{definition}[Reduction of complex filter conditions]%
  \label{def:formula-reduction}
  Given a pattern $(P_1 \FILTER \varphi)$, the filter-reduced pattern of it is the pattern that results of applying recursively the equivalences (\ref{formula:conj}), (\ref{formula:disj-fixed}), and (\ref{formula:neg-fixed}) until in the resulting patterns only occur atomic formulas (i.e. no logical connectives).
\end{definition}

\begin{lemma}%
  \label{formula-reduction}
  Given a pattern $(P_1 \FILTER \varphi)$, the procedure to reduce complex filter conditions described in Definition~\ref{def:formula-reduction} produces a pattern equivalent to the original and with no logical connectives in filter conditions.
\end{lemma}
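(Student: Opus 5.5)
Two things must be shown: each rewriting step preserves the semantics, and the recursive procedure terminates with only atomic conditions in filters.

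\emph{Soundness.} Each step replaces a sub-pattern $(P_1 \FILTER \varphi)$ by the right-hand side of equivalence (\ref{formula:conj}), (\ref{formula:disj-fixed}) or (\ref{formula:neg-fixed}). These are valid under bag semantics: (\ref{formula:conj}) directly from the definition of $\FILTER$, and the other two by Lemma~\ref{disjunction-rewriting-fixed} and Lemma~\ref{lemma:negation-rewriting-fixed}.

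To apply them to sub-patterns, I will first prove a routine compositionality lemma. If $Q \equiv Q'$, then replacing an occurrence of $Q$ by $Q'$ in any pattern yields an equivalent pattern. The proof is by induction on the context. The clauses for $\AAND$, $\UNION$, $\EXCEPT$, $\FILTER$ and $\SELECT$ define $\ev{P}{G}$ (both the underlying set and the cardinalities) solely from the multisets of the immediate sub-patterns. Since equivalence means equality of these multisets for every $G$, the result follows. Chaining the steps, the final pattern is equivalent to the original.

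\emph{Termination.} This is the main obstacle, because (\ref{formula:disj-fixed}) and (\ref{formula:neg-fixed}) introduce new conditions $\Error(\varphi_i)$. These are themselves complex: they contain $\neg$, $\land$, $\lor$ and even $\false$, so the number of connectives need not decrease naively. I would handle this in two steps.

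First, I would eliminate $\false$ by treating $(P \FILTER \false)$, and any conjunct containing it, as an empty pattern. Alternatively, I would replace $\false$ with $\bound(?X) \land \neg\bound(?X)$ for some $?X$ in scope; this is never $\true$.

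Second, I would define a well-founded measure on filter conditions that strictly decreases in every generated subproblem. The natural candidate orders conditions by their \emph{original connective depth}, with $\Error$ treated specially. Observe that $\Error(\psi)$ is built from $\bound$ atoms and subformulas of $\psi$, combined only with connectives at the top. So I would prove, by induction via Definition~\ref{def:error-formula}, a normal form: $\Error(\psi)$ is a Boolean combination of $\bound$-literals and \emph{proper} subformulas of $\psi$ and their $\Error$s.

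With this normal form, I would define a measure $m(\varphi)$ as the multiset of the nesting depths of the maximal non-$\bound$ subformulas occurring in $\varphi$, compared with the Dershowitz--Manna multiset ordering. Applying (\ref{formula:conj}), (\ref{formula:disj-fixed}) or (\ref{formula:neg-fixed}) to a condition of depth $d$ replaces it by finitely many conditions whose non-$\bound$ components have depth $<d$. The $\bound$-literal parts are handled within depth bounded by a constant, since their negations are reduced by (\ref{formula:neg-fixed}) with $\Error(\bound(?X))=\false$.

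Hence $m$ decreases in each step, and the procedure terminates. At termination every filter condition is atomic, because any remaining connective would permit a further rewriting step.

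If the multiset measure proves awkward, my fallback is a direct structural induction. I would define the reduction function on $\varphi$ by recursion, calling it on $\varphi_1$, $\varphi_2$, $\Error(\varphi_1)$ and $\Error(\varphi_2)$. I would then show by a simultaneous induction on $\varphi$ that both $\varphi$ and $\Error(\varphi)$ are reducible, using that $\Error(\varphi)$ only contains subformulas of $\varphi$ and atoms.
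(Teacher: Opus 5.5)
Your soundness half is fine: the three equivalences hold, and compositionality is routine. The termination half has a genuine gap, and a better measure cannot close it, because the procedure of Definition~\ref{def:formula-reduction} admits infinite rewriting chains even after your elimination of $\false$.

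Take $\varphi=\neg(\varX=a\land\varY=b)$, and read the three-fold disjunctions of Definition~\ref{def:error-formula} as $(\alpha\lor\beta)\lor\gamma$ (the paper does not fix this). Put $A_0=(\varX=a)\land\neg\bound(\varY)$, $B_0=\neg\bound(\varX)\land(\varY=b)$ and $C_0=\neg\bound(\varX)\land\neg\bound(\varY)$.

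Rule (\ref{formula:neg-fixed}) requires reducing the condition $\Error(\varX=a\land\varY=b)=(A_0\lor B_0)\lor C_0$. Rule (\ref{formula:disj-fixed}) applied to it has the branch $\Error(A_0\lor B_0)\land C_0$. Absorbing $\false$ (which comes from $\Error(\neg\bound(\cdot))=\false$), a direct computation shows that for all $k$ the conditions $\Error(A_k)$ and $\Error(B_k)$ simplify to $C_k$, and $\Error(C_k)$ simplifies to $\false$. Hence $\Error(A_k\lor B_k)=(A_{k+1}\lor B_{k+1})\lor C_{k+1}$, with $A_{k+1}=\neg A_k\land C_k$, $B_{k+1}=C_k\land\neg B_k$ and $C_{k+1}=C_k\land C_k$.

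This gives an infinite sequence of pairwise distinct, strictly growing conditions, each of which must be reduced in order to reduce its predecessor. Pruning empty filters does not help: when both variables are unbound, every $A_k$ and $B_k$ evaluates to $\error$ and every $C_k$ to $\true$. Without any simplification it is worse. By definition, $\Error$ of a disjunction is always a three-fold disjunction, so (\ref{formula:disj-fixed}) regenerates one forever under either parenthesisation, already for $\neg(\varX=a\lor\varY=b)$.

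So the sentence your measure rests on is false. You claim each step yields conditions whose non-$\bound$ components are shallower, but the connectives that $\Error$ itself introduces get rewritten, and (\ref{formula:disj-fixed}) then applies $\Error$ to them.

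Your fallback fails for the same reason. It is essentially the paper's Claim~\ref{claim:reduction-complex-formula}, whose hypothesis (``$\psi$ and $\Error(\psi)$ are reducible for proper subformulas $\psi$'') never reaches conditions like $\Error(A_0\lor B_0)$. The appendix closes its induction only by taking $\Error(\psi_1\land\psi_2)$ to be $\Error(\psi_1)\lor\Error(\psi_2)$, $\Error(\psi_1\lor\psi_2)$ to be $\Error(\psi_1)\land\Error(\psi_2)$, and $\Error(\Error(\psi))$ to be $\false$. None of these is what Definition~\ref{def:error-formula} produces.

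Your second way of removing $\false$ also loops, since reducing $\neg\bound(\varX)$ reintroduces $\Error(\bound(\varX))=\false$. The first way, $(P\FILTER\false)\equiv(P\EXCEPT P)$, is fine.

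The missing idea is to stop reducing $\Error$-formulas syntactically. Use $(P\FILTER\Error(\psi))\equiv((P\EXCEPT(P\FILTER\psi))\EXCEPT(P\FILTER\neg\psi))$. Then build connective-free patterns for $(P\FILTER\psi)$ and $(P\FILTER\neg\psi)$ by simultaneous structural recursion on $\psi$, splitting into disjoint cases according to the three-valued tables. For example, $(P\FILTER\psi_1\lor\psi_2)\equiv(P\FILTER\psi_1)\UNION((P\EXCEPT(P\FILTER\psi_1))\FILTER\psi_2)$ and $(P\FILTER\neg\neg\psi)\equiv(P\FILTER\psi)$. This terminates by plain induction on $\varphi$. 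Note, however, that it establishes the existence of an equivalent connective-free pattern, not termination of the procedure literally as defined.
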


\begin{proof}
  This lemma is proved by induction on the structure of the filter condition in the pattern. The base case consists in a filter condition $\varphi$ without logical connectives.  The case where $\varphi$ is $\varphi_1 \land \varphi_2$ is straightforward.  The pattern $(P \FILTER \varphi)$ can be reduced to the pattern $((P \FILTER \varphi_1) \FILTER \varphi_2)$, and the inductive hypothesis can be applied on $\varphi_1$ and $\varphi_2$.  The cases where $\varphi$ is $\varphi_1 \lor \varphi_2$ or $\neg\varphi_1$ are more involved because the application of the respective equivalences eliminates a logical connective from $\varphi$ but adds new logical connectives to the resulting filter conditions.  The proof for the cases involving disjunction or negation follows from Claim~\ref{claim:reduction-complex-formula} in the appendix.
\end{proof}

\begin{table*}[hbt]
  \caption{Definition of function $\delta(P)$ which allows to translate a {\SPARQL} graph pattern $P$ into a set of {\NRMDatalogSN} rules.
  Given a graph pattern $P$, the function $\bar{P}$ returns the variables of $P$ in lexicographical order. Note that $p_i$ is a fresh predicate name used to codify the graph pattern $P_i$.}
  \centering
  \begin{adjustbox}{max width=\textwidth}
    \begin{tabular}{p{3.2cm}p{6cm}p{6cm}}
      \toprule

      \rs Graph pattern $P$
      & $\delta(P)$ 
      & where ...
      \\ \midrule \midrule

      \rs $(x_1,x_2,x_3)$
      & \begin{tabular}{l} 
        $p( \bar{P} ) \gets triple(x_1,x_2,x_3)$
        \end{tabular}
      & $\bar{P}$ contains the variables in the set $\{x_1, x_2, x_3\}$.
      \\ \midrule

      \rs $(P_1 \AAND P_2)$   
      & \begin{tabular}{l}
        $p(\bar{P})$ $\gets$ $\nu_1(p_1(\bar{P}_1))$,
        $\nu_2(p_2(\bar{P}_2))$,\\ 
        \hspace{0.9cm} 
        $\{\comp(\nu_1(X),\nu_2(X),X) \mid X \in \bar{P}_1 \cap \bar{P}_2\}$\\
        $\delta(P_1)$\\
        $\delta(P_2)$
        \end{tabular}
      & Assume that $\nu_1$ and $\nu_2$ are functions with the same domain $\bar{P}_1 \cap \bar{P}_2$ and with different range. Given a literal $L$, the function $\nu_i(L)$ returns a copy of $L$ where the variables have been renamed according to $v_i$. 
      \\ \midrule

      \rs $(P_1 \UNION P_2)$ 
      & \begin{tabular}{l}
        $p(\bar{P}) \gets p_1(\bar{P}_1)$\\
        $p(\bar{P}) \gets p_2(\bar{P}_2)$\\ 
        $\delta(P_1)$\\
        $\delta(P_2)$
        \end{tabular}
      & $\bar{P}$, $\bar{P}_1$ and $\bar{P}_2$ contain the same variables.
      \\ \midrule

      \rs $(P_1 \EXCEPT P_2)$ 
      & \begin{tabular}{l}
        $p(\bar{P}) \gets p_1(\bar{P}_1), \neg p_2(\bar{P}_2)$\\ 
        $\delta(P_1)$\\
        $\delta(P_2)$
        \end{tabular}
      & $\bar{P}$ and $\bar{P}_1$ contain the same variables.
      \\ \midrule

      \rs $(P_1 \FILTER x_1 = x_2)$ 
      & \begin{tabular}{l}
         $p( \bar{P}) \gets p_1(\bar{P}_1),\equal(x_1,x_2)$\\
         $\delta(P_1)$
        \end{tabular}
      & $\bar{P}$ and $\bar{P}_1$ contain the same variables.
      \\ \midrule

      \rs $(P_1 \FILTER \bound(?X))$ 
      & \begin{tabular}{l}
        $p( \bar{P}) \gets p_1(\bar{P}_1), \term(?X)$;\\
        $\delta(P_1)$; 
        \end{tabular}      
      & $\bar{P}$ and $\bar{P_1}$ contain the same variables.
      \\ \midrule

      \rs $(\SELECT~W~P_1)$ 
      & \begin{tabular}{l}
        $p(\bar{P}) \gets p_1(\bar{P}_1), \unull(x_1),\dots,\unull(x_n)$\\
        $\delta(P_1)$
        \end{tabular}
      & $\bar{P} = W$, and $x_1,\dots,x_n$ are the variables that are in $W$ but not in $\inScope(P_1)$.
      \\ \bottomrule
    \end{tabular}
  \end{adjustbox}
  \label{table:pattern2rules}
\end{table*}

\paragraph{Translation of SPARQL graph patterns into Datalog rules.}
The translation essentially follows the idea presented by Polleres~\cite{10150}, adapted to multisets by Angles and Gutierrez~\cite{91353}, and improved by Hernández~\cite{phd-hernandez}.
Specifically, we cover the following issues:
\begin{enumerate}
\item
  It considers the cases where a filter condition is evaluated as error. Some solutions are lost when these cases are not considered.
\item
  It considers that the equality $X=Y$ must be evaluated as true only if $X$ and $Y$ are bound.  The translation is fixed by using the literal $\equal(X,Y)$ instead of a built-in equality $X=Y$.  Since, atom $\equal(X,Y)$ is true only if $X$ and $Y$ are terms in the database, the translation of the filter-condition $X=Y$ is not evaluated as true when $X$ and $Y$ are unbound.
\end{enumerate}

Given a SPARQL graph pattern $P$, the function $\delta(P)$ is introduced to transform $P$ into a set of {\NRMDatalogSN} rules.
The definition of function $\delta$ is given by the translation rules presented in Table~\ref{table:pattern2rules}.
Note that this function implies a recursive translation of graph patterns into Datalog rules, as well as the generation of fresh predicates.
To do this, the application of function $\delta$ over a graph $P$ will be based on exploring the parse tree of $P$.

A \emph{parse tree} for a graph pattern $P$ is an ordered rooted tree whose internal nodes represent the graph pattern operators in $P$ (i.e., AND, UNION, EXCEPT, FILTER, SELECT), and whose leaves represent the triple patterns of $P$. 
Additionally, we will assume that the nodes will be enumerated using a post-order traversal of the tree, that is, for any node $n$, first visit the left subtree, then visit the right subtree, and finally visit the node $n$ itself.

Given a graph pattern $P$ and its parse tree $T$, the execution of the function $\delta$ consists of applying the rules shown in Table~\ref{table:pattern2rules} to each node in $T$, following their enumeration. Next we present an example of this procedure. 


\begin{example}%
  \label{ex:sparql-to-md-query}
  Let $Q$ be the following {\SPARQL} query asking for all people, the place where they live, and optionally the people they know:
  \[
    \begin{aligned}[t]
      ( & ((\Variable{person}, \pLivesIn, \Variable{somewhere}) \AAND\,
          (\Variable{person}, \pKnows, \Variable{somebody})) \\[-4pt]
        & \UNION \\[-4pt]
        & 
          \begin{aligned}[t]
            ( & (\Variable{person}, \pLivesIn, \Variable{somewhere}) \EXCEPT \\[-4pt]
              &
                \begin{aligned}[t]
                  ( & \SELECT \Variable{person}\; \Variable{somewhere} \\[-4pt]
                    & \WHERE\; ((\Variable{person}, \pLivesIn, \Variable{somewhere}) \AAND\,
                      (\Variable{person}, \pKnows, \Variable{somebody}))))).
                \end{aligned}
          \end{aligned}
    \end{aligned}
  \]
This {\SPARQL} query is not normalized because both sides of the $\UNION$ operator have different variables. 
To normalize this query, we replace the right-hand operand of the $\UNION$ pattern by a $\SELECT$ clause, and include the projection variables $\Variable{person}$, $\Variable{somewhere}$ and $\Variable{somebody}$.
The resulting normalized query is the following:
  \[
    \begin{aligned}[t]
      ( & ((\Variable{person}, \pLivesIn, \Variable{somewhere}) \AAND\,
          (\Variable{person}, \pKnows, \Variable{somebody})) \\[-4pt]
        & \UNION \\[-4pt]
        &
          \begin{aligned}[t]
            ( & \SELECT \Variable{person}\; \Variable{somewhere}\; \Variable{somebody} \\[-4pt]
              & \WHERE
                \begin{aligned}[t]
                ( & (\Variable{person}, \pLivesIn, \Variable{somewhere}) \EXCEPT \\[-4pt]
                  & \begin{aligned}[t]
                      ( & \SELECT \Variable{person}\; \Variable{somewhere} \\[-4pt]
                        & \WHERE\; ((\Variable{person}, \pLivesIn, \Variable{somewhere}) \AAND\,
                          (\Variable{person}, \pKnows, \Variable{somebody})))))).
                    \end{aligned}
              \end{aligned}
          \end{aligned}
    \end{aligned}
  \]

\begin{figure}[t!]
\begin{forest}
[$P_{11}(\UNION)$
    [$P_{3}(\AAND)$
        [$P_1$]
        [$P_2$]
    ]
    [$P_{10}(\SELECT)$
        [$P_9(\EXCEPT)$
            [$P_4$]
            [$P_8(\SELECT)$
                [$P_7(\AAND)$
                    [$P_{5}$]
                    [$P_{6}$]
                ]
            ]
        ]
    ]
]
\end{forest}
\caption{Parse tree of the normalized SPARQL graph pattern presented in Example \ref{ex:sparql-to-md-query}. Each node represents a graph pattern, and each edge represents a composition relationship between graph patterns. The nodes were enumerated using a post-order traversal of the tree. 
In this figure:
$P_1$ is the triple pattern $(\Variable{person}, \pLivesIn, \Variable{somewhere})$,
$P_2$ is $(\Variable{person}, \pKnows, \Variable{somebody})$,
$P_3$ is $(\Variable{person}, \pLivesIn, \Variable{somewhere})$,
$P_4$ is $(\Variable{person}, \pLivesIn, \Variable{somewhere})$, 
and
$P_5$ is $(\Variable{person}, \pKnows, \Variable{somebody})$.
}
\label{fig:query-tree}
\end{figure}

Assume that $P$ is the normalized SPARQL graph pattern presented above, and $T$ is the parse tree of $P$ shown in Figure \ref{fig:query-tree}.
Hence, the application of the function $\delta(P)$ results in the following Datalog rules:
\begin{verse}
$p_1(?person,?somewhere) \gets \triple(?person, \pLivesIn, ?somewhere)$\\
$p_2(?person,?somebody) \gets \triple(?person, \pKnows, ?somebody)$\\
$p_3(?person, ?somewhere, ?somebody) \gets p_1(?person_1,?somewhere),$\\
\hspace{6cm}$p_2(?person_2,?somebody),$\\
\hspace{6cm}$\comp(?person_1,?person_2,?person)$\\
$p_4(?person,?somewhere) \gets \triple(?person, \pLivesIn, ?somewhere)$\\
$p_5(?person,?somewhere) \gets \triple(?person, \pLivesIn, somewhere)$\\
$p_6(?person,?somebody) \gets \triple(?person, \pKnows, ?somebody)$\\
$p_7(?person, ?somewhere, ?somebody) \gets p_5(?person_1,?somewhere),$\\
\hspace{6cm}$p_6(?person_2,?somebody),$\\
\hspace{6cm}$\comp(?person_1,?person_2,?person)$\\
$p_8(?person, ?somewhere) \gets p_7(?person, ?somewhere, ?somebody)$\\
$p_9(?person, ?somewhere) \gets p_4(?person,?somewhere), \neg p_8(?person,?somewhere)$\\
$p_{10}(?person, ?somewhere, ?somebody) \gets p_9(?person, ?somewhere),$\\
\hspace{6cm}$\unull(?somebody)$\\
$p_{11}(?person, ?somewhere, ?somebody) \gets p_3(?person, ?somewhere, ?somebody)$\\
$p_{11}(?person, ?somewhere, ?somebody) \gets p_{10}(?person, ?somewhere, ?somebody)$\\
\end{verse}

Note that the predicate names are enumerated according to the parse tree shown in Figure \ref{fig:query-tree}.
In this sense:
the rule $p_1(\dots) \gets \dots$ encodes the triple pattern $P_1$;
the same applies for the predicated names $p_2$, $p_4$, $p_5$ and $p_6$;
the rule $p_3(\dots) \gets \dots$ encodes the pattern $(P_1 \AAND P_2)$;
the rule $p_7(\dots) \gets \dots$ encodes the pattern $(P_5 \AAND P_6)$;
the rule $p_8(\dots) \gets \dots$ encodes the pattern $(\SELECT \Variable{person},$ $\Variable{somewhere} \WHERE P_7)$;
the rule $p_9(\dots) \gets \dots$ encodes the pattern $(P_4 \EXCEPT P_8)$;
the rule $p_{10}(\dots) \gets \dots$ encodes the pattern $(\SELECT \Variable{person},$ $\Variable{somewhere},$ $\Variable{somebody} \WHERE P_9)$;
finally, the two rules with predicate name $p_{11}$ encode the graph pattern $(P_3 \UNION P_{10})$.

\end{example}

Based on the function $\delta$, we define a general method to transform {\SPARQL} queries into {\NRMDatalogSN} queries.

\begin{definition}[Function $f_{12}$]%
\label{def:patterns-to-datalog}
Given a {\SPARQL} query $Q$, the function $f_{12}(P)$ returns a {\NRMDatalogSN} query $(L,\Pi)$ where $L$ is the goal atom $p(\bar{P})$ and $\Pi$ is a Datalog program containing the rules produced by $\delta(Q)$ (note that $Q$ is a graph pattern).
\end{definition}

\subsubsection{Translating query answers from {\NRMDatalogSN} to {\SPARQL}}
Recall that a {\NRMDatalogSN} query answer is a pair $(V,M)$ where $V$ is a set of variables and $M$ is a multiset of substitutions. Additionally, a {\SPARQL} query answer is a multiset of solution mappings $\Omega$.

The main difference between a multiset of substitutions $M$ and a multiset of solution mappings $\Omega$ is the representation of null values.
Given a substitution $\theta$ and a variable $?X \in \dom(\theta)$, a null value is represented with the assignment $\theta(?X) = \bot$. On the other hand, for a solution mapping $\mu$, a null value for $?X$ is represented by not including $?X$ in the domain of $\mu$.
Given a Datalog substitution $\theta$, the function $\NotNull(\theta)$ returns a SPARQL solution mapping $\mu$ defined as follows: (i) $\dom(\mu) = \{ x \in \dom(\theta) \mid \theta(x) \neq \bot \}$, that is, the variables of $\mu$ are the variables in $\theta$ whose assignment is not $\bot$; and, (ii) for every variable $?X \in \dom(\theta)$, if $\theta(?X) \neq \bot$ then $\mu(?X) = \theta(?X)$.  
Hence, every value different from $\bot$ is translated into an unbound variable in the corresponding solution mapping.

\begin{definition}[Function $h_{12}$]%
\label{def:substitutions-to-mappings}
Given a multiset of Datalog substitutions $\Theta$, the function $h_{12}(\Theta)$ returns a multiset of SPARQL solution mappings $\Omega$ defined as follows: 
(i)
$\set(\Omega) = \{ \NotNull(\theta) \mid \theta \in \set(\Theta) \}$;  and, 
for each substitution $\theta \in \set(\Theta)$, it applies that  
$\card(\NotNull(\theta), \Omega) = \card(\theta, \Theta)$.
In other words, the cardinality of each solution mapping $\NotNull(\theta)$ in $\Omega$ is defined as the number of colored copies $\colCopy{\theta}{i}$ occurring in $\col(\Theta)$.
\end{definition}

\begin{lemma}%
  \label{lemma:sparql2datalog}
  {\SPARQL} can be simulated by {\NRMDatalogSN}.
\end{lemma}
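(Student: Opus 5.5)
We show that the triple $(f_{12},g_{12},h_{12})$ of Definitions~\ref{def:patterns-to-datalog}, \ref{def:graphs-to-facts} and~\ref{def:substitutions-to-mappings} is a simulation in the sense of Definition~\ref{def:expressive-power}. That is, for every {\SPARQL} pattern $Q$ and RDF graph $G$ we need $\ev{Q}{G} = h_{12}(\ev{f_{12}(Q)}{g_{12}(G)})$.

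By the normalization lemma of Section~\ref{sec:normalization-of-patterns} and Lemma~\ref{formula-reduction}, we may assume without loss of generality that $Q$ is normalized and that every filter condition is atomic. The reduction of Lemma~\ref{formula-reduction} introduces $\UNION$ and $\EXCEPT$ with equal in-scope variables, so normalization is preserved. We also assume that every $\SELECT$ inserted by normalization is translated by the $\SELECT$ row of Table~\ref{table:pattern2rules}.

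We then prove a stronger invariant by structural induction on the parse tree. For every subpattern $P$ with predicate $p$ and variable list $\bar P=\inScope(P)$, and for every substitution $\theta$ with domain $\bar P$ and range in $\terms(G)\cup\{\bot\}$, we claim
\[
\card(p(\theta(\bar P)),\ans(\Pi,g_{12}(G))) = \card(\NotNull(\theta),\ev{P}{G}).
\]
In addition, no proof of a $p$-atom uses a value outside $\terms(G)\cup\{\bot\}$. Because the program is non-recursive and fresh predicates are used, the proofs of $p$-atoms are exactly derivation trees built with the rules of $\delta(P)$ over proofs of subpattern atoms and database facts. So the count for $p$ decomposes as a sum over rule applications of products of counts of body atoms. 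The map $\theta\mapsto\NotNull(\theta)$ is a bijection between substitutions over $\bar P$ and mappings with domain contained in $\bar P$. Hence the invariant, applied to the root, gives exactly the equality required by $h_{12}$.

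The inductive cases proceed as follows.
\begin{itemize}
\item \emph{Triple pattern.} Each fact $\triple(s,p,o)$ occurs once, so each matching $\mu$ has exactly one proof. No $\bot$ appears, since $\dom(\mu)=\var(t)$.
\item \emph{$\UNION$.} The two rules contribute additively, matching $\card(\mu,M_1)+\card(\mu,M_2)$.
\item \emph{$\EXCEPT$.} The single rule with a negated literal yields exactly the proofs of $p_1$ when no proof of $p_2$ exists. Safety and equal schemas guarantee that the negated atom is ground.
\item \emph{$\FILTER$.} The facts $\equal(t,t)$ and $\term(t)$ each have cardinality one and exist only for non-$\bot$ terms of $G$. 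So a proof of $p$ survives exactly when the corresponding $\mu$ satisfies the filter (error and false both give zero), and the cardinality is unchanged. The $X=c$ case is handled analogously by the constant in $\equal$.
\item \emph{$\SELECT$.} There are two sub-cases.
\begin{itemize}
\item Variables of $W\setminus\inScope(P_1)$ are forced to $\bot$ by the single fact $\unull(\bot)$, so cardinalities are not multiplied.
\item Variables of $\inScope(P_1)\setminus W$ are projected away. Distinct $\theta_1$ with the same restriction give distinct proofs, so the count is $\sum_{\theta_1|_W=\theta}$. This matches $\sum_{\mu|_W=\mu'}\card(\mu,M_1)$ under $\NotNull$, because $\NotNull$ commutes with restriction.
\end{itemize}
\end{itemize}

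The main obstacle is the $\AAND$ case. There we must check that the $\comp$ facts encode compatibility and merge bijectively. For each shared variable $X$ and pair of values $(a,b)$ there is at most one fact $\comp(a,b,c)$, and it exists iff $a=b$, or one of them is $\bot$, or both are $\bot$; moreover $c$ is the merged value. Hence for fixed $\theta$ the proofs of $p(\theta(\bar P))$ are in bijection with triples consisting of a pair $(\theta_1,\theta_2)$ of compatible-after-$\NotNull$ substitutions whose merge is $\theta$, a proof of $p_1(\theta_1)$, and a proof of $p_2(\theta_2)$. Each $\comp$ fact contributes a factor of one. Summing gives $\sum_{\mu=\mu_1\cup\mu_2}\card(\mu_1,M_1)\card(\mu_2,M_2)$.

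Care is needed on two points. First, we must argue that the renamings $\nu_1,\nu_2$ make the body safe and determine $\theta_1,\theta_2$ uniquely from $\theta$ together with the intermediate values. Second, we need $\NotNull(\theta_1)\cup\NotNull(\theta_2)=\NotNull(\theta)$ exactly when the $\comp$ conditions hold. This is where the invariant about ranges lying in $\terms(G)\cup\{\bot\}$ is used.
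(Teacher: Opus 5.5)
Your proposal is correct and follows essentially the same route as the paper's proof (Claim~\ref{claim:sparql2md}). Both argue by structural induction over the normalized, filter-reduced pattern, matching membership and cardinality of $\theta$ and $\NotNull(\theta)$ operator by operator, with the $\AAND$ case resting on the cardinality-one $\comp$ facts. Your explicit range invariant ($\terms(G)\cup\{\bot\}$) and the bijection argument for $\AAND$ make precise what the paper's appendix leaves implicit.
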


\begin{proof}
We need to show that, using the functions defined above, $(f_{12},g_{12},h_{12})$ is a simulation of {\SPARQL} in {\NRMDatalogSN}.
The proof is in the Claim~\ref{claim:sparql2md} of the Appendix.
\end{proof}

\subsection{From {\NRMDatalogSN} to {\SPARQL}}
\label{sec:datalog2sparql}
This section shows that Non-Recursive Multiset Datalog with Safe Negation ({\NRMDatalogSN}) can be simulated by {\SPARQL}.
To support this, we describe the following translation functions:
\begin{itemize}
\item $f_{21}$ is the function that translates {\NRMDatalogSN} queries into {\SPARQL} queries; 
\item $g_{21}$ is the function that translates {\NRMDatalogSN} databases into {\SPARQL} databases; and
\item $h_{21}$ is the function that translates {\SPARQL} query answers into {\NRMDatalogSN} query answers.
\end{itemize}

\subsubsection{Translating databases from {\NRMDatalogSN} to {\SPARQL}}
\label{sec:md-to-sparql-database}

In general terms, a fact $p(c_1,\dots,c_n)$ can be translated into a set of triples of the form $(u,\alpha_i,c_i)$ where $u$ is a fresh IRI that identifies the fact, and $\alpha_i$ is a reserved IRI which allows to describe that constant $c_i$ is in the position $i$ of the fact\footnote{An option can be the use of properties \texttt{rdf:\_1}, \texttt{rdf:\_2}, \texttt{rdf:\_3}, $\dots$, defined in the RDF Schema 1.1 vocabulary.}. Also recall that the semantics of {\NRMDatalogSN} relies on the notion of colored set of a multiset (see Section~\ref{ss:datalogsemantics}), which is the set containing the colored copies of the element of the multiset. This idea is formalized next.

In what follows, we will assume that $A=\{\alpha_0,\alpha_1\dots\}$ is an enumerable set of special IRIs used to codify positions in Datalog atoms, 
$\NULL$ is a special IRI, and any Datalog constant $c$ and name of predicate $p$ has an equivalent  {RDF} IRI term (excluding the aforementioned special IRIs) that we will denote with the same symbol $c$ and $p$.

\begin{definition}[Function $g_{21}$]
\label{def:g-datalog-sparql}
Assume that
the {\NRMDatalogSN} database $D$ contains $n$ copies of a fact $F$ (namely $p(c_1,\dots,c_n)$), and the $\col(D)$ contains the colored copies $\colCopy{F}{1}$, \dots, $\colCopy{F}{n}$ of fact $F$. For each colored copy $\colCopy{F}{i}$ of $F$, we assume the existence of a fresh IRI $u_{\colCopy{F}{i}}$, which we use to identify the colored copy.
Similarly the subscript $_{\colCopy{F}{i}}$ is used to identify the colored copy of a constant or a predicate.

Then  the function $g_{21}$ applied to the multiset of {\NRMDatalogSN} facts $D$, returns a set of RDF triples (i.e. an RDF graph) defined as
\begin{multline*}
    g_{21}(D) = \{(\NULL,\NULL,\NULL)\} \bigcup_{\colCopy{F}{i} \in \col(D)} \\
    \{(u_{\colCopy{F}{i}},\alpha_0,p_{\colCopy{F}{i}}),(u_{\colCopy{F}{i}},\alpha_1,(c_1)_{\colCopy{F}{i}}),\dots,(u_{\colCopy{F}{i}},\alpha_n,(c_n)_{\colCopy{F}{i}})\}.
\end{multline*}

\end{definition}

\begin{example}
    Let $D$ be the following {\NRMDatalogSN} database:
    \[
      D =
      \lBrace p(a, b), p(a, b), p(a, c), q(b, d, a), q(b, e, a) \rBrace.
    \]
Note that the fact $p(a, b)$ occurs twice in the multiset $D$, so we need to generate the colored copies for this fact, namely $\colCopy{p(a, b)}{1}$ and $\colCopy{p(a, b)}{2}$. In contrast, the fact $p(a, b)$ occurs once, so we just generate a simple colored copy, that is, $\colCopy{p(a, c)}{1}$. 
    Then the data is translated for SPARQL as follows:
    \[
      g_{21}(D) =
      \begin{aligned}[t]
        \{ & (\NULL,\NULL,\NULL), \\[-4pt]
           & (u_{\colCopy{p(a, b)}{1}}, \alpha_0, p),
             (u_{\colCopy{p(a, b)}{1}}, \alpha_1, a),
             (u_{\colCopy{p(a, b)}{1}}, \alpha_2, b), \\[-4pt]
           & (u_{\colCopy{p(a, b)}{2}}, \alpha_0, p),
             (u_{\colCopy{p(a, b)}{2}}, \alpha_1, a),
             (u_{\colCopy{p(a, b)}{2}}, \alpha_2, b), \\[-4pt]
           & (u_{\colCopy{p(a, c)}{1}}, \alpha_0, p),
             (u_{\colCopy{p(a, c)}{1}}, \alpha_1, a),
             (u_{\colCopy{p(a, c)}{1}}, \alpha_2, c), \\[-4pt]
           & (u_{\colCopy{p(b, d, a)}{1}}, \alpha_0, q),
             (u_{\colCopy{p(b, d, a)}{1}}, \alpha_1, b),
             (u_{\colCopy{p(b, d, a)}{1}}, \alpha_2, d),
             (u_{\colCopy{p(b, d, a)}{1}}, \alpha_3, a), \\[-4pt]
           & (u_{\colCopy{p(b, e, a)}{1}}, \alpha_0, q),
             (u_{\colCopy{p(b, e, a)}{1}}, \alpha_1, b),
             (u_{\colCopy{p(b, e, a)}{1}}, \alpha_2, d),
             (u_{\colCopy{p(b, e, a)}{1}}, \alpha_3, a)
             \} .
      \end{aligned}
    \]
  \end{example}

Intuitively, the {\SPARQL} database corresponding to the {\NRMDatalogSN} database $D$ consists of a set of triples that describe each of the facts, and the inclusion of triple $(\NULL,\NULL, \NULL)$ allows to ensure that the {\SPARQL} database is not empty. The need of this additional triple is explained next when describing the translation from {\NRMDatalogSN} queries to {\SPARQL}.

\subsubsection{Translating queries from {\NRMDatalogSN} to {\SPARQL}}

A notable difference between {\NRMDatalogSN} and {\SPARQL} is the way both languages define the scope of variables. In {\NRMDatalogSN}, all variables in a rule are universally quantified, and they are not in the scope of the query. On the other hand, variables in a {\SPARQL} query are divided into in-scope and non-in-scope (see Subsection~\ref{sec:normalization-of-patterns}). To see this difference, consider the {\NRMDatalogSN} query $(q(X,Y), \Pi)$ where the program $\Pi$ consists of the single rule $R = q(Y,Z) \gets p(X,Z,Y)$. Notice that the variables in the goal of the query do not correspond to the variables in the head of the rule $R$. To simplify the translation, we rename variables in rules according to the goal of the query.

In this case, we rewrite $R$ as the rule $R' = q(X,Y) \gets p(Z,Y,X)$. 
Formally, given a literal $L=q(X_1,\dots,X_n)$ and a rule $R$ whose head is $q(Y_1,\dots,Y_n)$, the \emph{renamed rule of $R$ with respect to $L$}, denoted $\operatorname{vr}(R,L)$, is the rule $R'$ 
that results from $R$ by consistently renaming each variable $Y_i$ as $X_i$, for $1 \leq i \leq n$.

Let $L$ be a positive literal $p(X_1,\dots,X_n)$ and $\Pi$ be a normalized {\NRMDatalogSN} program. We define the function $\gpattern(L,\Pi)$ which translates $L$ into a {\SPARQL} graph pattern.  The function $\gpattern$ is defined recursively as follows:
\begin{enumerate}
\item 
  If predicate name $p$ does not occur in the head of any rule of $\Pi$ (i.e., $p$ is extensional), then $\gpattern(L,\Pi)$ returns
  \begin{align*}
    \SELECT \overline{X}\;
    ((?Y, \alpha_0 ,p) \AAND\, (?Y, \alpha_1 ,?X_1) \AAND \cdots \AAND\, (?Y, \alpha_n, ?X_n)),    
  \end{align*}
  where $\overline{X} = \var(L)$ and $?Y$ is a fresh variable.
\item 
  Otherwise, if $p$ occurs in the head of the rules  $\{R_1,\dots, R_n\}$ in $\Pi$  (i.e., $p$ is intensional), then $\gpattern(L,{\Pi})$ returns:
  \[ (T(\operatorname{vr}(R_1,L))
    \UNION \cdots \UNION
    T(\operatorname{vr}(R_n,L))),\] 
  where the operator $T(R)$ is defined as follows:
  \begin{itemize}
  \item 
    If $R$ is a projection rule $L \leftarrow L_1$ then $T(R)$ is $(\SELECT \overline{X}\; P_1 )$ where $\overline{X} = \var(L)$ and $P_1 = \gpattern(L_1,\Pi)$;
  \item 
    If $R$ is a join rule $L \leftarrow L_1, L_2$ then $T(R)$ is $( P_1 \AAND P_2)$ where $P_1 = \gpattern(L_1,\Pi)$ and $P_2 = \gpattern(L_2,\Pi)$;
  \item 
    If $R$ is a negation rule $L \leftarrow L_1, \neg L_2$ then $T(R)$ is $( P_1 \EXCEPT P_2 )$ where $P_1 = \gpattern(L_1,\Pi)$ and $P_2 = \gpattern(L_2,\Pi)$.
  \end{itemize}
  Note that, if there is just one rule $R_1$ then $\gpattern(L,\Pi)$ can be reduced to $T(R_1)$ (no need to rename variables).
\end{enumerate}

\begin{example}%
  \label{ex:empty-MD-border-case}
  Consider the {\NRMDatalogSN} query $(q(X), \Pi)$ where program $\Pi$ consists of the rule $q(X) \gets p(X,Y)$. Then, $\gpattern(q(X),\Pi)$ is the {\SPARQL} query
  \[ \SELECT { } ?X \;((?U, \alpha_0, p) \AAND\, (?U, \alpha_1, ?X) \AAND\, (?U, \alpha_2, ?Y)).\]
\end{example}

The function $\gpattern$ is not enough to translate {\NRMDatalogSN} queries to {\SPARQL} queries. Recall that a {\NRMDatalogSN} query answer is a pair $(V,M)$ where $V$ is a set of {\NRMDatalogSN} variables and $M$ is a set of {\NRMDatalogSN} substitutions, and a {\SPARQL} query answer is a multiset $\Omega$ of {\SPARQL} mappings. To conclude the translation, we need to define a function that, given a {\SPARQL} query answer $\Omega$, returns a {\NRMDatalogSN} query answer $(V,M)$. The issue is that we cannot compute the set $V$ when the multiset $\Omega$ is empty. For example, an empty {\NRMDatalogSN} database $D$ is translated as the {\SPARQL} database consisting of the set of triples $\{ (\NULL,\NULL,\NULL) \}$ (see Subsection~\ref{sec:md-to-sparql-database}). The evaluation of the query $\gpattern(q(X),\Pi)$ in Example~\ref{ex:empty-MD-border-case} returns an empty multiset of mappings, $\Omega$, where the query answer to the {\NRMDatalogSN} query $(q(X),\Pi)$ is a pair $(\{X\}, M)$ such that $M$ is an empty multiset of solutions. Hence, the {\SPARQL} query answer $\Omega$ does not contain the information needed to generate the set of variables $\{X\}$ in the answer of the {\NRMDatalogSN} query.

To solve the aforementioned issue of having an empty {\SPARQL} query answer, we can extend the function $\gpattern$ with a query that introduces the variables of the query. This is done using the additional triple $(\NULL,\NULL,\NULL)$ we introduced in the translation. Given a set of {\NRMDatalogSN} variables $V=\{X_1,\dots,X_n\}$ we write $\operatorname{VarQuery(V)}$ to denote the {\SPARQL} pattern
\[
(\NULL, \NULL, ?X_1)\; \AAND\, \cdots\, \AAND\; (\NULL, \NULL, ?X_n).
\]
The translation of a {\NRMDatalogSN} query is then the union of the graph patterns computed by the functions $\gpattern$ and $\operatorname{VarQuery}$.

\begin{definition}[Function $f_{21}$]
Given a {\NRMDatalogSN} query $Q = (L, \Pi)$, the function $f_{21}(Q)$ returns a {\SPARQL} graph pattern $(\gpattern(L,\Pi) \UNION \operatorname{VarQuery}(\var(L)))$.
\end{definition}

\begin{example}
  Consider the {\NRMDatalogSN} query $(q(X),\Pi)$ in Example~\ref{ex:empty-MD-border-case}. Then, $f_{21}((q(X),\Pi))$ is the following {\SPARQL} graph pattern:
  \[(\SELECT { } ?X \;((?U, \alpha_0, p) \AAND\, (?U, \alpha_1, ?X) \AAND\, (?U, \alpha_2, ?X)))
  \UNION\, (\NULL, \NULL, ?X).
  \]
  The result of evaluating the {\NRMDatalogSN} query on an empty set of facts $D$ is the pair $(\{X\}, M)$ where $M$ is an empty multiset of {\NRMDatalogSN} substitutions, whereas the result of evaluating the graph pattern $f_{21}((q(X),\Pi))$ on the {\SPARQL} database $g_{21}(D)=\{(\NULL,\NULL,\NULL)\}$ is the {\SPARQL} query answer $\Omega=\{\{?X \mapsto \NULL \}\}$. Intuitively, the mapping $\{?X \mapsto \NULL\}$ does not codify a {\NRMDatalogSN} substitution, but the variables in the domain of {\NRMDatalogSN} substitutions.
\end{example}

\subsubsection{Translating query answers from {\SPARQL} to {\NRMDatalogSN}}
Recall that a {\SPARQL} query answer is a multiset of solution mappings, and a {\NRMDatalogSN} query answer is a pair $(V,M)$ (where $V$ is a set of variables and $M$ is a multiset of substitutions).
Since a {\SPARQL} solution mapping can be seen as a {\NRMDatalogSN} substitution, the translation from {\SPARQL} mappings to {\NRMDatalogSN} substitutions does not require modifications, except for the mapping $\{?X_1 \mapsto \NULL, \dots, ?X_n \mapsto \NULL\}$ which is used to codify the solution variables.

\begin{definition}[Function $h_{21}$]
  Let $\Omega$ be a multiset of {\SPARQL} solution mappings that includes a mapping $\mu_{V\mapsto \NULL}$ with cardinality 1 where $\dom(\mu_{V \mapsto \NULL})=V$ and $\mu(?X) = \NULL$ for every variable $?X \in \dom(\mu_{V \mapsto \NULL})$, and for every mapping $\mu' \in \Omega$ it holds that $\dom(\mu') = V$.  The {\NRMDatalogSN} solution for $\Omega$, denoted $h_{21}(\Omega)$, is the pair $(V,M)$ where $M$ is the multiset of substitutions $\theta$ defined as follows:
  \begin{enumerate}
  \item
    Given an {\SPARQL} mapping $\mu = \{?X_1 \mapsto c_1,\dots, ?X_n \mapsto c_n\}$ the corresponding {\NRMDatalogSN} substitution for mapping $\mu$ is the substitution $\theta_{\mu} = 
    \{ X_1 \mapsto c_1,\dots, X_n \mapsto c_n\}$
    where the {\NRMDatalogSN} variable $X_i$ corresponds to the {\SPARQL} variable $?X_i$.
  \item
    $\set(M) = \{ \theta_\mu \mid \mu \in \Omega \setminus
    \{\mu_{V \mapsto \NULL}\}\}$.
  \item
    $\card(\theta_\mu, M) = \card(\mu, \Omega)$.
  \end{enumerate}
\end{definition}

\begin{lemma}%
  \label{lemma:datalog2sparql}
  {\NRMDatalogSN} can be simulated by {\SPARQL}.
\end{lemma}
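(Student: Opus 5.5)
We show that $(f_{21},g_{21},h_{21})$ is a simulation of {\NRMDatalogSN} in {\SPARQL}. By Lemma~\ref{lemma:normalized-md} we may assume the query $(L,\Pi)$ is normalized. Since $\Pi$ is non-recursive, its dependency graph is acyclic, so the predicates can be stratified by height: extensional predicates have height $0$, and an intensional predicate has height one more than the maximum height of the predicates in the bodies of its rules. The core invariant is proved by induction on this height. For every positive literal $L'=p(X_1,\dots,X_k)$ with pairwise distinct variables, and every database $D$, we claim
\[
\card(\mu,\ev{\gpattern(L',\Pi)}{g_{21}(D)}) = \card(\theta_\mu(L'),\ans(\Pi,D))
\]
for every mapping $\mu$ with $\dom(\mu)=\var(L')$, and no other mappings occur. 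Repeated variables or constants in body atoms would be handled by a preliminary renaming together with equality filters, or excluded by the normalization.

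For the base case, $p$ is extensional. The solutions of the AND of triple patterns over $g_{21}(D)$ correspond one-to-one to pairs consisting of a fresh IRI $u_{\colCopy{F}{i}}$ and the induced binding of $?X_1,\dots,?X_k$. Each colored copy has its own IRI, and each triple pattern has cardinality 1, so the AND yields cardinality 1 per colored copy. The $\SELECT$ then sums over the $?Y$ values, which gives exactly $\card(F,D)$. This matches the number of derivation trees for $F$, which are one per colored copy. One detail needs care: the IRIs $(c_j)_{\colCopy{F}{i}}$ must be read as the plain constant $c_j$, as in the example. Also, the triple $(\NULL,\NULL,\NULL)$ never matches, because $\alpha_0\neq\NULL$.

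For the inductive step, $p$ is intensional. We first check that $\operatorname{vr}$ preserves derivation trees up to variable renaming. Cardinalities then add over the rules heading $p$, matching the $\UNION$ semantics. Each rule type is handled separately.
\begin{itemize}
\item For a projection rule, derivation trees for $\theta(L)$ are in bijection with trees for the body instances $\theta'(L_1)$ with $\theta'$ extending $\theta$. Their number is therefore the $\SELECT$ sum.
\item For a join rule, the trees are pairs of subtrees. All solutions are total on their in-scope variables, since there are no unbound values. Compatibility is therefore exactly agreement on the shared variables, and the product formula of $\AAND$ matches.
\item For a negation rule, the variables of $L_2$ equal those of $L_1$. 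A tree exists iff there is a tree for $\theta(L_1)$ and none for $\theta(L_2)$, which by induction is exactly the $\EXCEPT$ condition; the cardinality is inherited from $P_1$.
\end{itemize}

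Finally, we apply the invariant to the goal $L$ and add $\operatorname{VarQuery}(\var(L))$. On $g_{21}(D)$, this pattern contributes exactly one mapping $\mu_{V\mapsto\NULL}$ with cardinality 1. This uses that $(\NULL,\NULL,\NULL)$ is the only triple with subject $\NULL$, and that several variables all bound to $\NULL$ give a single compatible join. This mapping cannot coincide with a $\gpattern$ solution, because $\NULL$ is a reserved IRI that never appears as a Datalog constant. Hence $h_{21}$ removes it and returns $(\var(L),M)$ with the correct cardinalities, which is what Definition~\ref{def:expressive-power} requires.

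The main obstacle I expect is the proof-tree counting in the projection and join cases. Derivation trees are defined somewhat loosely in the paper, and we need a clean bijection between trees for a head instance and tuples of subtrees. This must hold even when distinct substitutions of body-only variables yield the same head, since these are then summed. I would first fix a canonical reading of $\dt(\Pi,D)$ in which a rule-tree is determined by the rule, $\theta$ restricted to the rule variables, and the ordered tuple of child trees. I would then prove that counting lemma once and reuse it in all three rule cases.
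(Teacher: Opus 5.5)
Your proposal is correct and follows essentially the paper's own argument (Claim~\ref{claim:md2sparql}). Both proceed by induction along the acyclic dependency structure of the normalized program: the extensional case uses the one-fresh-IRI-per-colored-copy encoding, the intensional case sums over the rules for $p$ via $\UNION$ and matches projection, join and negation rules with $\SELECT$, $\AAND$ and $\EXCEPT$, and $h_{21}$ then strips the single $\operatorname{VarQuery}$ mapping. The points you make explicit are exactly where the paper is loose: reading $(c_j)_{\colCopy{F}{i}}$ as $c_j$, checking that $(\NULL,\NULL,\NULL)$ never interferes with $\gpattern$, and fixing a canonical, stratified reading of $\dt(\Pi,D)$ before counting trees; they tighten the same proof rather than change it.
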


\begin{proof}
This is a long but straightforward induction on Datalog queries using as hypothesis that $(f_{21},g_{21},h_{21})$ is a simulation of {\NRMDatalogSN} in {\SPARQL}. The details of this proof are in the appendix (Claim~\ref{claim:md2sparql}).
\end{proof}

\subsection{{\SPARQL} and {\NRMDatalogSN} have the same expressive power}
Putting together the simulations among {\SPARQL} and {\NRMDatalogSN} stated in this section, we get the following theorem:

\begin{theorem}
  {\SPARQL} and {\NRMDatalogSN} have the same expressive power.
\end{theorem}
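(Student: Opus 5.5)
The theorem follows by unfolding the definition of $\cong$ in Definition~\ref{def:expressive-power}. Two languages have the same expressive power exactly when each is contained in the other, that is, when there is a simulation in each direction. So the plan is to exhibit the two simulations separately and then combine them.

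\textbf{Containment of {\SPARQL} in {\NRMDatalogSN}.} This is Lemma~\ref{lemma:sparql2datalog}, with the simulation $(f_{12},g_{12},h_{12})$. Applying it needs a preliminary step. Given an arbitrary {\SPARQL} query, we first normalize it using the normalization lemma of Section~\ref{sec:normalization-of-patterns}. We then remove complex filter conditions using Lemma~\ref{formula-reduction}, which rests on Lemmas~\ref{disjunction-rewriting-fixed} and~\ref{lemma:negation-rewriting-fixed}. Both steps preserve query equivalence, meaning multiset answers on every graph. Hence $f_{12}$ can be defined on all queries as ``normalize, reduce, then apply $\delta$'', and the simulation equation continues to hold for the original query. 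The equation itself is proved by structural induction on the pattern. The key invariant is that for every subpattern $P_i$ and every substitution $\theta$ over $\inScope(P_i)$, the number of proofs of $p_i(\theta)$ equals $\card(\NotNull(\theta),\ev{P_i}{G})$.

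\textbf{Containment of {\NRMDatalogSN} in {\SPARQL}.} This is Lemma~\ref{lemma:datalog2sparql}, with $(f_{21},g_{21},h_{21})$. Here too we first apply Lemma~\ref{lemma:normalized-md}, so that every query is replaced by an equivalent normalized one before $\gpattern$ is applied. The proof is an induction along a topological order of the acyclic dependency graph. It matches the number of derivation trees for a fact with the cardinality of the corresponding mapping. The auxiliary $\operatorname{VarQuery}$ contributes exactly one extra mapping, namely $\mu_{V\mapsto\NULL}$ with cardinality $1$, and $h_{21}$ then strips it off.

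Combining the two containments yields $\calL_{\SPARQL}\cong\calL_{\NRMDatalogSN}$. The main obstacle is the well-definedness issue hidden in the normalization preprocessing. The definitions of $f_{12}$ and $f_{21}$ assume normalized inputs, so we must check that composing them with the normalization maps still gives total functions on all queries. It does: equivalence means equal evaluation on every database, so the simulation equation transfers unchanged. Everything else is a direct citation of the two lemmas.
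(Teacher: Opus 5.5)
Your proposal is correct and follows the paper's proof. The paper likewise obtains the theorem directly from Definition~\ref{def:expressive-power} by combining the simulation of {\SPARQL} in {\NRMDatalogSN} (Lemma~\ref{lemma:sparql2datalog}) with the simulation of {\NRMDatalogSN} in {\SPARQL} (Lemma~\ref{lemma:datalog2sparql}); your point that $f_{12}$ and $f_{21}$ must be read as composed with the equivalence-preserving normalization and filter-reduction steps makes explicit something the paper leaves implicit, without changing the approach.
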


\begin{proof}
The claim is based on the simulation of {\SPARQL} with {\NRMDatalogSN} (Lemma \ref{lemma:sparql2datalog}) and the simulation of {\NRMDatalogSN} with {\MRA} (Lemma  \ref{lemma:datalog2sparql}).
\end{proof}

\section{Equivalence between {\MRA} and {\NRMDatalogSN}}
\label{sec:mra-nrmd}
This section presents the simulations that prove that Multiset Relational Algebra ({\MRA}) and Non-Recursive Multiset Datalog with Safe Negation ({\NRMDatalogSN}) have the same expressive power. 
Specifically, we show that {\MRA} can be simulated by {\NRMDatalogSN} (Section \ref{sec:mra2datalog}), and {\NRMDatalogSN} can be simulated by {\MRA} (Section \ref{sec:datalog2mra}). 

\subsection{From {\MRA} to {\NRMDatalogSN}}
\label{sec:mra2datalog}
This section shows that Multiset Relational Algebra ({\MRA}) can be simulated by Non-Recursive Multiset Datalog with Safe Negation ({\NRMDatalogSN}).
To support this, we describe the following translation functions:
\begin{itemize}
\item $f_{32}$ is the function that translates {\MRA} queries into {\NRMDatalogSN} queries; 
\item $g_{32}$ is the function that translates {\MRA} databases into {\NRMDatalogSN} databases; and
\item $h_{32}$ is the function that translates {\NRMDatalogSN} query answers into {\MRA} query answers.
\end{itemize}

\subsubsection{Translating databases from {\MRA} to {\NRMDatalogSN}}
Recall that a {\MRA} database is a set of relations (where each relation is a multiset of tuples), and a {\NRMDatalogSN} database is a multiset of facts. First, we define a method to translate a {\MRA} relation into a multiset of facts. Then, we define a method to translate a set of {\MRA} relations into a multiset of {\NRMDatalogSN} facts.

Assume the existence of functions that map: {\MRA} relation names to {\NRMDatalogSN} predicate names, {\MRA} attributes to {\NRMDatalogSN} variables, and {\MRA} constants to {\NRMDatalogSN} constants.
Given a relation schema $R$, we write $\vec{R}$ to denote a tuple containing the attributes of $R$ in lexicographical order.

Given a multiset relation $r$, defined over a relation schema $R$, with $\vec{R}=(A_1,\dots,A_n)$, the function $\Sigma(r)$ returns a multiset of Datalog facts defined as follows: For each tuple $t$ in $r$, $\Sigma(r)$ contains a fact $f$ of the form $p(c_1,\dots,c_n)$ where $p$ is the image of $R$, every $c_i$ is $t(A_i)$, and the cardinality of $f$ in $\Sigma(r)$ is given by the cardinality of $t$ in $r$. 

\begin{definition}[Function $g_{32}$]%
  \label{def:mra2md-database}
  Given a {\MRA} database $D$, the function $g_{32}$ returns a multiset of {\NRMDatalogSN} facts $D'$ defined as follows:
  \begin{enumerate}
  \item For each {\MRA} relation $r$ in $D$, $D'$ contains the facts returned by $\Sigma(r)$;  
  \item For each constant $c$ in $D$, $D'$ contains a fact $\equal(c,c)$.
  \end{enumerate}
\end{definition}

\begin{example}
  Let $D$ be an {\MRA} dataset consisting in two relations $r$ and $s$ with respective relation schemas $R$ and $S$ with $\vec{R} = (A_1,A_2)$ and $\vec{S} = (A_1,A_3)$, and defined as follows:
  \begin{align*}
    r &= \lBrace
        \{A_1 \mapsto a_1, A_2 \mapsto a_2\},
        \{A_1 \mapsto a_1, A_2 \mapsto a_2\},
        \{A_1 \mapsto a_1, A_2 \mapsto a_3\} \rBrace,\\
    s &= \lBrace \{ A_1 \mapsto a_1, A_3 \mapsto a_4\} \rBrace.
  \end{align*}
  Then, the corresponding {\NRMDatalogSN} dataset is the following:
  \[
    g_{32}(D) =
    \lBrace
    p_R(a_1, a_2), p_R(a_1, a_2), p_R(a_1, a_3),
    p_S(a_1, a_4),
    \equal(a_1, a_1),
    \equal(a_2, a_2),
    \equal(a_3, a_3),
    \equal(a_4, a_4)
    \rBrace,
  \]
  where predicates $p_R$ and $p_S$ are the corresponding images for the relation schemas $R$ and $S$.
\end{example}

Note that the multiset of Datalog facts $D'$ is defined over the vocabulary that includes as predicate names all the relation names in $D$, and as arity of the predicate name $R$ the number of attributes of the relation name $R$.

\subsubsection{Translating queries from {\MRA} to {\NRMDatalogSN}}
Recall that a {\MRA} query is a relational algebra expression, and a {\NRMDatalogSN} query is a set of rules.

First, we need to provide a recursive method to reduce {\MRA} selection formulas into atomic formulas.
Such method is based on the following equivalences where $E$ is an {\MRA} expression, and $\psi$, $\psi_1$, and $\psi_2$ are selection formulas:
\begin{align}
  \sigma_{\psi_1 \land \psi_2}(E)
  &\equiv
    \sigma_{\psi_2}(\sigma_{\psi_1}(E)),
    \label{formula:conj-mra} \\
  \sigma_{\psi_1 \lor \psi_2}(E)
  &\equiv
    \sigma_{\psi_1 \land \neg\psi_2}(E)\cup
    \sigma_{\neg\psi_1 \land \psi_2}(E)\cup
    \sigma_{\psi_1 \land \psi_2}(E),
    \label{formula:disj-mra}
  \\
  \sigma_{\neg\psi}(E)
  &\equiv
    E \setminus \sigma_{\psi}(E).
  \label{formula:diff-mra}
\end{align}
The proof of the validity of the above equivalences follows directly from the semantics of the selection operator. In particular, Equivalence~\ref{formula:disj-mra} is rather involved because separates the disjunction in a union of three disjoint multiset relations in order to preserve the cardinality of each solution.  Using the above equivalence, we get the following lemma.

\begin{lemma}
  For every {\MRA} expression $E$, there exists an equivalent {\MRA} expression $E'$ satisfying that all selection formulas in $E'$ are atomic.
\end{lemma}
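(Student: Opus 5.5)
The plan is to define a rewriting procedure that repeatedly applies equivalences (\ref{formula:conj-mra}), (\ref{formula:disj-mra}) and (\ref{formula:diff-mra}) to selection subexpressions, and to prove by induction that it terminates in an equivalent expression whose selection formulas are all atomic. First I will justify the three equivalences under multiset semantics. In {\MRA} selection is two-valued, since every attribute in $\psi$ belongs to $\widehat{E}$ and tuples are total. For (\ref{formula:conj-mra}), both sides keep exactly the tuples satisfying both formulas, with the cardinality they have in $\Eval(E,D)$. For (\ref{formula:disj-mra}), the three formulas $\psi_1\land\neg\psi_2$, $\neg\psi_1\land\psi_2$ and $\psi_1\land\psi_2$ are pairwise disjoint and their disjunction is $\psi_1\lor\psi_2$. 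So each tuple satisfying $\psi_1\lor\psi_2$ lies in exactly one summand of the arithmetic union and keeps its cardinality, which is the {\MRA} analogue of Lemma~\ref{disjunction-rewriting-simple}. For (\ref{formula:diff-mra}), a tuple $t$ of $\Eval(E,D)$ survives the filter difference iff $t\notin\sigma_\psi(E)$, i.e.\ iff $t\not\models\psi$, and it keeps cardinality $\card(t,\Eval(E,D))$. Because we have two-valued logic, no $\Error$ term is needed, unlike Lemma~\ref{lemma:negation-rewriting-fixed}. All three rewrites preserve the schema constraints, since the attributes of the new selections are still in $\widehat{E}$ and both operands of $\cup$ and $\setminus$ have schema $\widehat{E}$. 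Finally, {\MRA} operators are compositional, so replacing a subexpression by an equivalent one yields an equivalent expression.

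Next I handle the $\neg$ case, which needs care. Applying (\ref{formula:diff-mra}) to $\sigma_{\neg\psi}(E)$ gives $E\setminus\sigma_\psi(E)$, in which $\psi$ is smaller. Disjunction, however, reintroduces negations such as $\psi_1\land\neg\psi_2$. I will therefore not argue by naive induction on formula size. Instead I define a procedure $\mathrm{At}(\sigma_\psi(E))$ by recursion on $\psi$, applied innermost-first so that $E$ has already been made atomic:
\begin{itemize}
\item $\mathrm{At}(\sigma_{\psi_1\land\psi_2}(E))=\mathrm{At}(\sigma_{\psi_2}(\mathrm{At}(\sigma_{\psi_1}(E))))$.
\item $\mathrm{At}(\sigma_{\neg\psi}(E))=E\setminus\mathrm{At}(\sigma_\psi(E))$.
\item $\mathrm{At}(\sigma_{\psi_1\lor\psi_2}(E))=\mathrm{At}(\sigma_{\psi_1\land\neg\psi_2}(E))\cup\mathrm{At}(\sigma_{\neg\psi_1\land\psi_2}(E))\cup\mathrm{At}(\sigma_{\psi_1\land\psi_2}(E))$.
\end{itemize}

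The main obstacle is well-foundedness of the disjunction clause, since $\psi_1\land\neg\psi_2$ is not syntactically smaller than $\psi_1\lor\psi_2$. I will resolve it by unfolding one more step: by the $\land$ and $\neg$ clauses, $\mathrm{At}(\sigma_{\psi_1\land\neg\psi_2}(E))$ only calls $\mathrm{At}$ on $\sigma_{\psi_1}(\cdot)$ and $\sigma_{\psi_2}(\cdot)$, applied to expressions already in atomic form. So every recursive call is on a proper subformula of $\psi_1\lor\psi_2$. Formally, the measure is the number of connectives $\land,\lor,\neg$ in the formula, where a call on $\psi_1\land\neg\psi_2$ is charged through its constituents $\psi_1$ and $\psi_2$. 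Alternatively, I can first rewrite $\psi_1\lor\psi_2$ as $\neg(\neg\psi_1\land\neg\psi_2)$, which is valid in two-valued logic; then only $\land$ and $\neg$ remain, and the clauses for those strictly decrease formula size. This De Morgan route is what I would try first, since it removes the termination subtlety entirely.

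Finally, an outer structural induction on $E$ applies $\mathrm{At}$ at every selection node, and yields $E'$ equivalent to $E$ with only atomic selection formulas.
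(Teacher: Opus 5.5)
Your proof is correct. It uses the same three equivalences as the paper but a different termination argument, and yours is the one that actually works.

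The paper argues by induction on the total number $k$ of Boolean connectives in all selection formulas of the expression. It asserts that one application of (\ref{formula:conj-mra}) or (\ref{formula:diff-mra}), or of the chain (\ref{formula:disj-mra}), (\ref{formula:conj-mra}), (\ref{formula:diff-mra}), lowers $k$ by one. That is shorter, but the decrement claim ignores that the rewrites copy material:
\begin{itemize}
\item (\ref{formula:diff-mra}) duplicates $E$, so $k$ drops only when $E$ already has atomic selections, which the paper does not require.
\item The disjunction chain turns $\sigma_{\psi_1\lor\psi_2}(R)$ into $(\sigma_{\psi_1}(R)\setminus\sigma_{\psi_2}(\sigma_{\psi_1}(R)))\cup\sigma_{\psi_2}(R\setminus\sigma_{\psi_1}(R))\cup\sigma_{\psi_2}(\sigma_{\psi_1}(R))$. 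This contains four copies of $\psi_1$ and three of $\psi_2$. For $\psi_1=(A=a\land B=b)$ and $\psi_2=(C=c)$, $k$ goes from $2$ to $4$.
\end{itemize}
So the paper's induction needs a different well-founded measure, and that is what you supply.

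Your recursion on the formula, applied innermost-first with the argument expression arbitrary (but already atomic), avoids the problem. After unfolding the $\land$ and $\neg$ clauses, every recursive call is on $\psi_1$ or $\psi_2$, regardless of how large they are.

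To make it fully rigorous, replace the informal ``charging'' measure with an explicit statement. Prove by structural induction on $\psi$ that, for every expression $E$ with only atomic selections whose attributes include those of $\psi$, $\sigma_\psi(E)$ has an equivalent with only atomic selections. The quantification over all such $E$ is essential. In the $\land$ and $\lor$ cases the hypothesis for $\psi_2$ is applied to $\mathrm{At}(\sigma_{\psi_1}(E))$ (or to $E\setminus\mathrm{At}(\sigma_{\psi_1}(E))$), not to $E$ itself.

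Your De Morgan variant is also sound, since {\MRA} selection is two-valued. It is the simplest of all: eliminate $\lor$ globally first, then recurse on $\land$ and $\neg$ only. This never needs the disjoint three-way union from (\ref{formula:disj-mra}).
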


\begin{proof}
  The proof follows from induction in the number $k$ of Boolean connectives in selection formulas occurring in an {\MRA} expression $E$.  The base case is $k = 0$ and thus all selection formulas are atomic.  If $k > 0$, then the expression includes a selection expression whose formula has either the form $\psi_1 \land \psi_2$, $\neg \psi$, or $\psi_1 \lor \psi_2$.  In the first two cases, equivalences \ref{formula:conj-mra} and \ref{formula:diff-mra} reduce by one of the Boolean connectives of the expression.  In the third case, the consecutive application of equivalences \ref{formula:disj-mra}, \ref{formula:conj-mra}, and \ref{formula:diff-mra} (in that order) reduces by one the number of Boolean connectives.  Hence, we produce an equivalent query with $k - 1$ Boolean connectives.
\end{proof}

\begin{definition}[Function $f_{32}$]%
  \label{def:mra2datalog-query}
  Let $Q$ be a {\MRA} query (i.e. an {\MRA} expression), where selection formulas are atomic (i.e., have no Boolean connectives).  The function $f_{32}(Q)$ returns a {\NRMDatalogSN} query $(L,\Pi)$ where $L$ is a goal clause of the form $q(\vec{Q})$ where $q$ is a predicate name corresponding to $Q$, $\vec{Q}$ are the variables corresponding to the attributes in the schema $\widehat{Q}$ (sorted in lexicographical order), and $\Pi$ is a set of {\NRMDatalogSN} rules (i.e. a {\NRMDatalogSN} program) created by applying recursively the rules shown in Table~\ref{table:mra2rules}.
\end{definition}

Note that function $f_{32}$ assigns a fresh predicate name to each operation in query $Q$ by following a non-deterministic approach. Although it is not difficult to define deterministic ways (like in the translation from {\SPARQL} to {\NRMDatalogSN}), we omit in how intensional predicate names are assigned.

\begin{table*}[htb]
  \caption{Definition of function $\Gamma$ which translates an {\MRA} expression into a set of Datalog rules. Given a {\MRA} expression $E$, the recursive function $\Gamma(E)$ returns a set of {\NRMDatalogSN} rules where: $q_i(\bar{A})$ is a positive literal related to the {\MRA} expression $E_i$, $q_i$ is a fresh predicate name, $\bar{A_i}$ denotes a set of variables, $\vec{R}$ denotes the attributes in schema $\widehat{R}$, sorted in lexicographical order.}
  \centering
  \begin{adjustbox}{max width=\textwidth}
    \begin{tabular}{p{3cm}p{8cm}p{4cm}}
      \toprule

      \rs {\MRA} expression $E_0$
      & $\Gamma(E_0)$
      & where ...
      \\ \midrule

      \rs $R$
      & $q_0( \bar{A}_0 ) \gets R( \vec{R} )$
      & $\bar{A}_0 = \vec{R}$
      \\ \midrule

      \rs $(E_1 \Join E_2)$   
      & $q_0( \bar{A}_0) \gets q_1(\bar{A}_1),q_2(\bar{A}_2)$;
        $\Gamma(E_{1})$; $\Gamma(E_{2})$
      & $\bar{A}_0 = \bar{A}_1 \cup \bar{A}_2$  
      \\ \midrule

      \rs $(E_1 \cup E_2)$
      & $q_0(\bar{A}_0) \gets q_1(\bar{A}_1)$; 
        $q_0(\bar{A}_0) \gets q_2(\bar{A}_2)$;
        $\Gamma(E_{1})$; $\Gamma(E_{2})$
      & $\bar{A}_ 0 = \bar{A}_1 = \bar{A}_2$  
      \\ \midrule

      \rs $(E_1 \setminus E_2)$
      & $q_0(\bar{A}_0) \gets q_1(\bar{A}_1), \neg q_2(\bar{A}_2)$;
        $\Gamma(E_{1})$; $\Gamma(E_{2})$
      & $\bar{A}_0 = \bar{A}_1$
      \\ \midrule

      \rs $\pi_{S}(E_1)$
      & $q_0(\bar{A}_0) \gets q_1(\bar{A}_1)$;
        $\Gamma(E_{1})$
      & $\bar{A}_0 = S$
      \\ \midrule

      \rs $\rho_{A/B}(E_1)$
      & $q_0(\bar{A}_0) \gets q_1(\bar{A}_1), \equal(A,B)$; 
        $\Gamma(E_{1})$
      & $\bar{A}_0 = (\bar{A}_1 \setminus \{A\} ) \cup \{ B \}$
      \\ \midrule
      
      \rs $\sigma_{A = B}(E_1)$
      & $q_0(\bar{A}_0) \gets q_1(\bar{A}_1), \equal(A,B)$;
        $\Gamma(E_{1})$
      & $\bar{A_0} = \bar{A}_1$
      \\
      \bottomrule
    \end{tabular}
  \end{adjustbox}
  \label{table:mra2rules}
\end{table*}
  
\subsubsection{Translating query answers from {\NRMDatalogSN} to {\MRA}}
Recall that a {\NRMDatalogSN} query answer is a pair $(V,M)$ where $V$ is a set of variables, and $M$ is a multiset of {\NRMDatalogSN} substitutions. On the other hand, an {\MRA} query answer is a multiset relation. Next, we define a function $h_{32}$ which translates a {\NRMDatalogSN} query answer into a {\MRA} query answer.

\begin{definition}[Function $h_{32}$]%
  \label{def:mra2datalog-answer}
  Given a {\NRMDatalogSN} query answer $A = (V,M)$, the function $h_{32}(A)$ returns a multiset relation $r$ where: the schema of $r$ is given by the set of attributes $V$ (assume a simple transformation of variables to attribute names); for each substitution $\theta$ in $M$, there is a tuple $t$ in $r$ satisfying that $t(X) = \theta(X)$ for every attribute $X \in \widehat{R}$, and $\card(t, r) = \card(\theta, M)$.
\end{definition}

\begin{lemma}%
  \label{lemma:mra2datalog}
  {\MRA} can be simulated by {\NRMDatalogSN}.
\end{lemma}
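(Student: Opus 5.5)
We show that $(f_{32},g_{32},h_{32})$ is a simulation of {\MRA} in {\NRMDatalogSN}. By the lemma on atomic selection formulas, we may assume every selection formula in the input expression $Q$ is atomic. We then prove a stronger invariant by structural induction on $E$, for every subexpression $E$ of $Q$. Let $q_E$ be the fresh predicate that $\Gamma$ assigns to $E$, and let $\Pi$ be the program produced for $Q$. The invariant is: for every tuple $t$ over $\widehat{E}$,
\[
\card\bigl(t,\Eval(E,D)\bigr) \;=\; \card\bigl(q_E(t(\vec{E})),\ans(\Pi,g_{32}(D))\bigr),
\]
and every fact over $q_E$ in $\ans(\Pi,g_{32}(D))$ arises in this way. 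Given the invariant for $E=Q$, the definition of the query answer $(\var(q(\vec Q)),M)$ and of $h_{32}$ immediately yield $\Eval(Q,D)=h_{32}(\Eval(f_{32}(Q),g_{32}(D)))$.

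The key observation is how proofs are counted. A derivation tree for a head fact built from a rule consists of the rule together with a substitution $\theta$ and one chosen proof per positive body literal. Hence the number of proofs of $q_E(\bar c)$ is the sum, over rules with head $q_E$ and over substitutions $\theta$ grounding the head to $\bar c$ and satisfying the negative literals, of the product of the proof counts of the instantiated positive body atoms. Non-recursiveness makes this count well defined and lets the induction follow the dependency order. Using this formula, the cases go as follows.

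\textbf{Base relation $R$:} each fact $R(\bar c)$ has exactly $\card(t,r)$ proofs, one per colored copy, and the single projection-free rule passes this count on.

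\textbf{Join:} the substitutions grounding $q_1,q_2$ consistently correspond exactly to compatible pairs $t_1\sim t_2$, so the counts multiply as in $\Join$. The body variables are exactly $\bar A_1\cup\bar A_2$, so the substitution is determined by the head.

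\textbf{Union:} the two rules give a sum of counts.

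\textbf{Except:} $\neg q_2$ holds iff no derivation tree for $q_2(\bar c)$ exists, which by induction holds iff $t\notin r_2$. The count is inherited from $q_1$.

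\textbf{Projection:} the body variables not in $S$ range over all extensions $t$ with $t[S]=t'$, giving $\sum\card(t,r_1)$.

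\textbf{Selection $\sigma_{A=B}$:} a fact $\equal(c,c)$ exists with multiplicity one exactly for the constants of $D$. So the extra literal contributes factor $1$ when $t(A)=t(B)$ and $0$ otherwise.

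\textbf{Selection $\sigma_{A=c}$:} this form is not listed in Table~\ref{table:mra2rules}. I would handle it either by an auxiliary rule $q_0(\bar A)\gets q_1(\bar A),\equal(A,A)$ with $A$ bound to $c$ through a unary fact, or by noting that the paper intends $A=B$ with $B$ possibly a constant. I would point out this gap.

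\textbf{Renaming:} this is the delicate case, and I expect it to be the main obstacle. The rule $q_0(\bar A_0)\gets q_1(\bar A_1),\equal(A,B)$ equates $A$ and $B$ with factor $1$, because $\equal$ facts have multiplicity one and every value of $t(A)$ is a constant of $D$ (values in derived relations come only from $D$). The variable $A$ does not occur in the head, so the projection over it is injective given $B$. Hence the count is preserved.

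Finally, I would check safety: every head variable appears in a positive body literal, which holds in all table rows. I would also check non-recursiveness, which holds because predicates are fresh per subexpression. The remaining subtle point is that a value of $Q$ could be a constant not occurring in $D$, for instance one introduced via a selection constant. In that case $\equal(c,c)$ would be missing, and I would either exclude this case or extend $g_{32}$ with the constants of $Q$.
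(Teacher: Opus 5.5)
Your proof is correct and takes essentially the same route as the paper's (Claim~\ref{claim:mra2md} in the appendix): a structural induction over the cases of $\Gamma$ that matches membership and proof counts case by case. Your explicit use of the multiplicity-one $\equal(c,c)$ facts in the renaming and selection cases is more careful than the paper's own argument. Your final worry is moot, however: no {\MRA} operator introduces new values, so every value in an intermediate relation is a constant of $D$, and $\sigma_{A=c}$ can simply be translated with the body literal $\equal(A,c)$, since constants are allowed in rule bodies. Extending $g_{32}$ with the constants of $Q$ would not be legitimate in any case, because the database translation must not depend on the query.
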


\begin{proof}
  Let $f_{32}$, $g_{32}$, $h_{32}$ be the functions described in Definition~\ref{def:mra2md-database}, Definition~\ref{def:mra2datalog-answer} and Definition~\ref{def:mra2datalog-query} respectively. 
  The proof of this theorem follows from the claim that $(f_{32}$, $g_{32}$, $h_{32})$ simulates {\MRA} in {\NRMDatalogSN} by using induction in the structure of queries. The proof of this claim is in the appendix (Claim~\ref{claim:mra2md}).
\end{proof}

\subsection{From {\NRMDatalogSN} to {\MRA}}
\label{sec:datalog2mra}
This section shows that Non-Recursive Multiset Datalog with Safe Negation ({\NRMDatalogSN}) can be simulated by Multiset Relational Algebra ({\MRA}).
To support this, we describe the following translation functions:
\begin{itemize}
\item $f_{23}$ is the function that translates {\NRMDatalogSN} queries into {\MRA} queries; 
\item $g_{23}$ is the function that translates {\NRMDatalogSN} databases into {\MRA} databases; and
\item $h_{23}$ is the function that translates {\MRA} query answers into {\NRMDatalogSN} query answers.
\end{itemize}

\subsubsection{Translating databases from {\NRMDatalogSN} to {\MRA}}
Recall that a database in {\NRMDatalogSN} is a multiset of facts, and a database in {\MRA} is a set of relations (where each relation is a multiset of tuples).  First, we define a method to translate a multiset of facts with the same predicate name into a relation $r$.  Let $M$ be a multiset of {\NRMDatalogSN} facts having the same predicate name, i.e. every fact in $M$ has the form $p(t_1,\dots,t_n)$.  The function $\psi(M)$ returns a {\MRA} relation $r$ where: the relation schema $\widehat{r}$ of $r$ is given by the relation name $p$ and the set of attributes $\{ A_1, \dots, A_n \}$, where each attribute name has the form $att\_i$ with $1 \leq i \leq n$; for each fact $p(t_1,\dots,t_n)$ in $M$ there is a tuples $t$ in $r$ satisfying that $t(A_i) = t_i$.

Next, we define function $g_{23}$ which allows translating a multiset of facts into a set of relations. 

\begin{definition}[Function $g_{23}$]
  \label{def:md2mra-database}
  Let $M$ be a multiset of {\NRMDatalogSN} facts $M$ (i.e. an {\NRMDatalogSN} database), and $\{ p_1, \dots, p_n \}$ are the predicate names in $M$.  The function $g_{23}(M)$ returns a set of relations (i.e. a {\MRA} database) $\{ r_1, \dots, r_n \}$ where $r_i = \psi(M_i)$ such that $M_i$ is the subset of {\NRMDatalogSN} facts of $M$ having the predicate name $p_i$.
\end{definition}

  \begin{example}
    Let $M$ be the multiset of {\NRMDatalogSN} facts defined as follows:
    \[
      M =
      \lBrace
      p_1(c_1, c_2), p_1(c_1, c_2), p_1(c_1, c_3),
      p_2(c_1, c_4)
      \rBrace,
    \]
Then, the corresponding {\MRA} dataset $g_{23}(M)$ consists of the following relations $r_1$ and $r_2$, over relation schemas $R_1$ and $R_2$, and with attributes 
$\widehat{{R}_1} = \{ A_1, A_2 \}$ and $\widehat{{R}_2} = \{ B_1, B_2 \}$:
\begin{align*}
r_1 &= \lBrace
          \{A_1 \mapsto c_1, A_2 \mapsto c_2\},
          \{A_1 \mapsto c_1, A_2 \mapsto c_2\},
          \{A_1 \mapsto c_1, A_2 \mapsto c_3\} \rBrace,\\
      r_2 &= \lBrace \{ B_1 \mapsto c_1, B_2 \mapsto c_4\} \rBrace.
\end{align*}
\end{example}

\subsubsection{Translating queries from {\NRMDatalogSN} to {\MRA}}
Recall that a {\NRMDatalogSN} query is a set of rules, and a {\MRA} query is a relational algebra expression.

Let $\Pi$ be a normalized {\NRMDatalogSN} program. We define, by mutual recursion, functions $\delta_1(L,\Pi)$ and $\delta_2(r,\Pi)$ to translate (respectively) literals and rules into {\MRA} expressions.

Given a literal $L$ in $\Pi$ of the form $p(X_1,\dots,X_n)$, the function $\delta_1(L,\Pi)$ is defined as follows:
\begin{enumerate}
\item
  If predicate name $p$ does not occur in the head of any rule of $\Pi$, then $\delta_1(L,\Pi)$ returns the {\MRA} expression $\rho_{A_1/X_1}(\cdots\rho_{A_n/X_n}(R)\cdots)$ where $R$ is the relation name associated to $p$; 
\item
  Otherwise, if $p$ occurs in the head of the rules $\{ r_1,\dots,r_m \}$ in $\Pi$, then $\delta_1(L,\Pi)$ returns the {\MRA} expression $(E_1 \cup ( E_2 \cup ( \cdots E_m)\cdots)$ where each $E_i$ is a {\MRA} expression returned by $\delta_2(r_i,\Pi)$. 
\end{enumerate}

Given a rule $r$ in $\Pi$, the function $\delta_2(r,\Pi)$ is defined as follows:
\begin{itemize}
\item
  If $r$ is a projection rule $L_0 \leftarrow L_1$ then $\delta_2(r,\Pi)$ returns the {\MRA} expression $\pi_S(E)$ where $S$ is the set of variables $\var(L_0)$ and $E$ is the {\MRA} expression returned by $\delta_1(L_1,\Pi)$;
\item
  If $r$ is a join rule $L_0 \leftarrow L_1, L_2$ then $\delta_2(r,\Pi)$ returns the {\MRA} expression $(E_1 \Join E_2)$ where $E_1$ and $E_2$ are the {\MRA} expressions returned by $\delta_1(L_1,\Pi)$ and $\delta_1(L_2,\Pi)$ respectively;
\item
  If $r$ is a negation rule $L_0 \leftarrow L_1, \neg L_2$ then $\delta_2(r,\Pi)$ returns the {\MRA} expression $(E_1 \setminus E_2)$ where $E_1$ and $E_2$ are the {\MRA} expressions returned by $\delta_1(L_1,\Pi)$ and $\delta_1(L_2,\Pi)$ respectively.
\end{itemize}

\begin{definition}[Function $f_{23}$]
  Given a normalized {\NRMDatalogSN} query $Q=(L,\Pi)$ where $L$ is the goal clause, and $\Pi$ a {\NRMDatalogSN} program, the function $f_{23}(Q)$ returns a {\MRA} query defined by $\delta_1(L,\Pi)$.
\end{definition}

\subsubsection{Translating query answers from {\MRA} to {\NRMDatalogSN}}
Recall that a {\MRA} query answer is a multiset relation, and a {\NRMDatalogSN} query answer is a pair $(V,M)$ where $V$ is a set of variables, and $M$ is a multiset of substitutions.  
Since a {\MRA} tuple can be seen (interpreted) also as a Datalog substitution, the translation from {\MRA} tuples to Datalog substitutions requires essentially no modifications.
Next, we define a function $h_{23}$ which transforms a {\MRA} query answer into a {\NRMDatalogSN} query answer.

\begin{definition}[Function $h_{23}$]%
  Given a {\MRA} relation $R$ with schema $\widehat{R} = \{ A_1, \dots, A_n \}$, the function $h_{23}(R)$ returns a {\NRMDatalogSN} query answer $A = (V,M)$ where: $V$ is a set of variables $\{ X_1, \dots, X_n \}$ where variable $X_i$ corresponds to attribute $A_i$ (assume a simple transformation of attribute names to variable names); and, for each tuple $t$ in $R$, there is a substitution $\theta$ in $M$ satisfying that $\theta(X_i) = t(A_i)$ for every attribute $A_i \in \widehat{R}$, and $\card(\theta, M) = \card(t, R)$.
\end{definition}

\begin{lemma}
  \label{lemma:datalog2mra}
  {\NRMDatalogSN} can be simulated by {\MRA}.
\end{lemma}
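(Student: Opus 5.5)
We show that $(f_{23},g_{23},h_{23})$ is a simulation of {\NRMDatalogSN} in {\MRA}. By Lemma~\ref{lemma:normalized-md} we may assume the query $Q=(L,\Pi)$ is normalized, so every rule is a projection, join, or negation rule. Since $\Pi$ is non-recursive, its dependency graph is acyclic. Hence the mutual recursion $\delta_1/\delta_2$ terminates, and we can argue by induction on the \emph{height} of a predicate in the dependency graph. Extensional predicates have height $0$, and an intensional predicate has height one plus the maximum height of the predicates in the bodies of its rules.

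The invariant we prove is the following. For every literal $L'=p(X_1,\dots,X_n)$ occurring in $\Pi$ and every database $D$, the relation $\Eval(\delta_1(L',\Pi),g_{23}(D))$ has attributes $\var(L')$. Moreover, for every total substitution $\theta$ on $\var(L')$, the tuple $t_\theta$ satisfies $\card(t_\theta,\Eval(\delta_1(L',\Pi),g_{23}(D)))=\card(\theta(L'),\ans(\Pi,D))$. Applying $h_{23}$ to the goal $L$ then gives exactly $\ev{Q}{D}=(\var(L),M)$, which is the required equation.

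\emph{Base case.} Here $p$ is extensional. The renamings $\rho_{A_i/X_i}$ turn $g_{23}(D)$'s relation $\psi(D_p)$ into a relation over $\var(L')$. The proofs of $\theta(L')$ in $\dt(\Pi,D)$ are exactly the leaf trees $\T^i_F$, one per colored copy, so the count of proofs equals $\card(F,D)$, which is the tuple cardinality. A subtlety is a literal with a repeated variable, e.g.\ $p(X,X)$. Pure renaming cannot express this, so I would first extend the normalization so that extensional literals have distinct variables, introducing equality selections $\sigma_{A_i=A_j}$ where needed. Distinct variables are also needed so that $\rho$ is well typed.

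\emph{Inductive step.} The key observation is that a proof of $\theta(L')$ is a choice of rule $R_i$ with head predicate $p$, together with a substitution and the sub-proofs of its positive body atoms. Distinct rules give disjoint sets of trees, because the edge labels record the rule. The counts therefore add up, matching the arithmetic union $\cup$ in $\delta_1$. For each rule type:
\begin{itemize}
\item \emph{Join rules.} Since $\var(L_0)=\var(L_1)\cup\var(L_2)$, the substitution is determined by $\theta$. The number of proofs is the product of the proof counts for $\theta(L_1)$ and $\theta(L_2)$, which matches $\card(t_1\cup t_2)$ in $\Join$ by the induction hypothesis.
\item \emph{Projection rules.} Proofs correspond to all extensions $\theta'$ of $\theta$ to $\var(L_1)$, each contributing the count of $\theta'(L_1)$. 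This is exactly the sum in the semantics of $\pi_S$.
\item \emph{Negation rules.} Because $\var(L_2)=\var(L_1)$, the condition ``no tree rooted at $\theta(L_2)$'' is equivalent to $t_\theta\notin\Eval(\delta_1(L_2,\Pi),\cdot)$. The count is inherited from $L_1$, which matches $\setminus$.
\end{itemize}

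I expect the main obstacle to be bookkeeping rather than mathematics. First, the variable renaming between a rule head and the literal at which $\delta_1$ is called must be handled, using the same $\operatorname{vr}$ renaming as in the translation to {\SPARQL}, so that attribute sets line up for $\cup$. Second, one must check that the derivation-tree construction (the fixpoint in Section~\ref{ss:datalogsemantics}) really counts each combination of sub-proofs exactly once. For the latter I would state and prove, by the same induction on height, a small lemma: the derivation trees rooted at $\theta(L_0)$ via rule $R$ are in bijection with tuples consisting of a matching substitution and sub-trees for the positive body atoms.
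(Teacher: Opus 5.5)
Your plan is the paper's proof (Claim~\ref{claim:md2mra}). You normalize, translate an extensional literal by renaming, and translate an intensional one as a $\cup$ over its rules, using $\pi$, $\Join$ and $\setminus$ for projection, join and negation rules, then compare membership and multiplicities. The height induction, the selections for repeated variables in extensional literals, the $\operatorname{vr}$ renaming and the disjointness of proofs across rules are refinements the paper leaves implicit.

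There is, however, a genuine gap, shared with the paper's simulation as defined: rule heads with repeated variables. The syntax forbids only constants in heads, normalization keeps the original head, and $q(Y,Y)\gets r(Y)$ is a legal projection rule. For the literal $q(X,Z)$ the renaming $\operatorname{vr}$ is undefined, since $Y$ would have to become both $X$ and $Z$. What you need is to \emph{duplicate} an attribute, and no \MRA{} expression over $g_{23}(D)$ can do that while preserving multiplicities. Concretely, let $\card(r(a),D)=2$ and take the goal $q(X,Z)$; its answer contains $\{X\mapsto a, Z\mapsto a\}$ with multiplicity $2$. By induction on expressions, every \MRA{} expression over $g_{23}(D)$ yields only even multiplicities, and multiplicities divisible by $4$ once its schema has at least two attributes. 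The reason is that only $\Join$ enlarges schemas and it multiplies two even numbers, while $\pi$ and $\cup$ add and $\sigma$, $\rho$, $\setminus$ keep or delete. So no query translation works with $g_{23}$ and $h_{23}$ unchanged, and your invariant cannot hold in this case.

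The fix belongs in the database translation. Add to $g_{23}(D)$ a relation $\mathit{Eq}$ with attributes $B_1,B_2$ containing $(c,c)$ with multiplicity $1$ for every constant $c$ of $D$; this is the analogue of the $\equal$ facts in $g_{12}$ and $g_{32}$. Then $\rho_{Y/X}(E)\Join\rho_{B_1/X}(\rho_{B_2/Z}(\mathit{Eq}))$ copies $Y$ into $X$ and $Z$ with multiplicities unchanged. This suffices because heads are constant-free and rules are safe, so every value in a derived fact occurs in $D$.

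With this in place, translate each intensional predicate at a canonical literal with pairwise distinct variables. Obtain every other literal from it by $\sigma$, then an injective $\pi$, then $\rho$. The same step also handles constants in body literals, which the syntax allows and which pure renaming does not cover.

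Finally, state your bijection lemma for $\dt(\Pi,D)$ built stratum by stratum, by height. The paper's ``repeat until no new trees'' construction is order-sensitive under negation: a negation rule fired too early produces trees that are never removed.
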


\begin{proof}
  This is a long but straightforward induction on Datalog queries using as hypothesis that $(f_{23},g_{23},h_{23})$ is a simulation of {\NRMDatalogSN} in {\MRA}. The details of this proof are in the appendix (Claim~\ref{claim:md2mra}).
\end{proof}

\subsection{{\MRA} and {\NRMDatalogSN} have the same expressive power}
Putting together the simulations among {\MRA} and {\NRMDatalogSN} stated in this section, we get the following theorem:

\begin{theorem}
  {\MRA} and {\NRMDatalogSN} have the same expressive power.
\end{theorem}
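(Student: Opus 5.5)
The theorem follows by combining the two simulations established in this section with Definition~\ref{def:expressive-power}. By that definition, $\calL_1 \cong \calL_2$ holds exactly when each language is contained in the other. So I will simply exhibit the two simulation triples and invoke the definition.

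First, Lemma~\ref{lemma:mra2datalog} provides the triple $(f_{32},g_{32},h_{32})$. It gives, for every {\MRA} expression $E$ (after rewriting selection formulas into atomic ones by the preceding lemma) and every {\MRA} database $D$,
\[
\Eval(E,D) = h_{32}(\ev{f_{32}(E)}{g_{32}(D)}).
\]
This shows that {\MRA} is contained in {\NRMDatalogSN}. One point to note: $f_{32}$ is defined only on expressions whose selection formulas are atomic. To make it a total function on all {\MRA} queries, I compose it with the normalization map supplied by the lemma on atomic selection formulas. Since that map preserves equivalence, the simulation identity is unaffected.

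Second, Lemma~\ref{lemma:datalog2mra} provides $(f_{23},g_{23},h_{23})$, showing that {\NRMDatalogSN} is contained in {\MRA}. Here $f_{23}$ is stated for normalized programs, so I again precompose with the normalization of Lemma~\ref{lemma:normalized-md}. That lemma gives an equivalent normalized query, so $\ev{(L,\Pi)}{D}=\ev{(L,\Pi')}{D}$ for all $D$, and the identity of Definition~\ref{def:expressive-power} carries over.

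With containment in both directions, the definition yields {\MRA} $\cong$ {\NRMDatalogSN}. The only delicate step is this bookkeeping, which makes the query translations total functions on the full query sets as Definition~\ref{def:expressive-power} requires. It is discharged by composing with the two equivalence-preserving normalizations. All substantive work lies in the appendix claims behind the two lemmas, which I take as given.
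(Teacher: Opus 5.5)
Your proposal is correct and matches the paper's proof, which simply combines Lemma~\ref{lemma:mra2datalog} and Lemma~\ref{lemma:datalog2mra} under Definition~\ref{def:expressive-power}. Your extra step of precomposing $f_{32}$ and $f_{23}$ with the equivalence-preserving normalizations (atomic selection formulas, and Lemma~\ref{lemma:normalized-md}) is sound bookkeeping that the paper leaves implicit; it is what makes the query translations total on the full query sets, as the definition requires.
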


\begin{proof}
The claim is based on the simulation of {\MRA} with {\NRMDatalogSN} (Lemma \ref{lemma:mra2datalog}) and the simulation of {\NRMDatalogSN} with {\MRA} (Lemma  \ref{lemma:datalog2mra}).
\end{proof}

\section{Equivalence between {\MRA} and {\SPARQL}}%
\label{sec:mra-sparql}
This section presents the simulations that prove that Multiset Relational Algebra ({\MRA}) and {\SPARQL} have the same expressive power. Specifically, we show that {\MRA} can be simulated by {\SPARQL} (Section \ref{sec:mra2sparql}), and {\SPARQL} can be simulated by {\MRA} (Section \ref{sec:sparql2mra}). 

\subsection{From {\MRA} to {\SPARQL}}
\label{sec:mra2sparql}
This section shows that Multiset Relational Algebra ({\MRA}) can be simulated by {\SPARQL}.
To support this, we describe the following translation functions:
\begin{itemize}
\item $f_{31}$ is the function that translates {\MRA} queries into {\SPARQL} queries;
\item $g_{31}$ is the function that translates {\MRA} databases into {\SPARQL} databases; and
\item $h_{31}$ is the function that translates {\SPARQL} query answers into {\MRA} query answers.
\end{itemize}

\subsubsection{Translating databases from {\MRA} to {\SPARQL}}
Recall that a {\MRA} database is a set of relations (where each relation is a multiset of tuples), and a {\SPARQL} database is a set of triples.

Assume the existence of functions that map relation names to IRIs, relation attributes to IRIs, and tuples to IRIs. 

Let $r$ be a multiset relation, $t$ be a tuple in $r$ and $\{ t^1, \dots, t^n \}$ be the set of colored copies of $t$ where $n = \card(t,r)$. The function $\beta(t,r)$ returns a set of RDF triples defined as follows: for each copy $t^i$ of $t$, $\beta(t,r)$ contains a triple $(\iriT{i}, \iriB, iri\_r)$ where $\iriT{i}$ is an IRI which identifies the tuple $t^i$, $iri\_r$ is an IRI which identifies the relation $r$, and $\iriB$ is an IRI which describes that $\iriT{i}$ is a tuple of $iri\_r$; and, for each attribute $A$ in $\widehat{r}$, $\beta(t,r)$ contains a triple of the form $(\iriT{i},\iriA,\litA)$ where $\iriA$ is an IRI which identifies the attribute $A$, and $\litA$ is a literal equivalent to the value $t(A)$.
Hence, for each copy of a tuple $t$ we create a set of RDF triples.


\begin{definition}[Function $g_{31}$]%
  \label{def:mra2sparql-data}
  Given a {\MRA} database $D$, the function $g_{31}(D)$ returns a set of RDF triples $D'$ defined as follows:
  \begin{itemize}
  \item
    For each multiset relation $r$ in $D$, and for each tuple $t$ in $r$, $D'$ contains the RDF triples returned by $\beta(t,r)$;
  \item
    $D'$ contains a triple $(\NULL,\NULL,\NULL)$ where $\NULL$ is a special IRI. Like in the simulation of {\NRMDatalogSN} with {\SPARQL}, the simulation of {\MRA} with {\SPARQL} uses this special triple to retrieve the variables that are attributes of the {\MRA} query answer.
  \end{itemize}
\end{definition}

\begin{example}
  Let $D$ be an {\MRA} dataset consisting in two relations $r$ and $s$ with respective relation schemas $R$ and $S$ with $\vec{R} = (A_1,A_2)$ and $\vec{S} = (A_1,A_3)$, and defined as follows:
  \begin{align*}
    r &= \lBrace
        \{A_1 \mapsto a_1, A_2 \mapsto a_2\},
        \{A_1 \mapsto a_1, A_2 \mapsto a_2\},
        \{A_1 \mapsto a_1, A_2 \mapsto a_3\} \rBrace,\\
    s &= \lBrace \{ A_1 \mapsto a_1, A_3 \mapsto a_4\} \rBrace.
  \end{align*}
  Then, the corresponding {\SPARQL} dataset is the following:
  \[
    g_{31} =
    \begin{aligned}[t]
      \{
      & (u_1^1, \iriB, \iriFor{r}),
        (u_1^1, \iriA_1, \litA_{a_1}),
        (u_1^1, \iriA_2, \litA_{a_2}),
      \\[-4pt]
      & (u_1^2, \iriB, \iriFor{r}),
        (u_1^2, \iriA_1, \litA_{a_1}),
        (u_1^2, \iriA_2, \litA_{a_2}),
      \\[-4pt]
      & (u_2^1, \iriB, \iriFor{r}),
        (u_2^1, \iriA_1, \litA_{a_1}),
        (u_2^1, \iriA_2, \litA_{a_3}),
      \\[-4pt]
      & (v_1^1, \iriB, \iriFor{s}),
        (v_1^1, \iriA_1, \litA_{a_1}),
        (v_1^1, \iriA_3, \litA_{a_4}),
      \\[-4pt]
      & (\NULL, \NULL, \NULL)
      \},
    \end{aligned}
  \]
  where $u_1^1$, $u_1^2$, and  $u_2^1$ correspond to the IRIs for the tuples in the multiset relation $r$ and $v_1^1$ correspond to the IRI of the tuple in the multiset relation $s$.
\end{example}

\subsubsection{Translating queries from {\MRA} to {\SPARQL}}
Recall that a {\MRA} query is a relational algebra expression and a {\SPARQL} query is a graph pattern.

First, consider the following issue. 
A query answer in {\MRA} is a multiset relation $r$ over a set of attributes $\widehat{r}$, whereas a query answer in {\SPARQL} does not specify a set of variables for which solutions are defined. 
For example, the evaluation of the triple pattern $(?X,?Y,?Z)$ over an empty RDF graph results in an empty multiset $\Omega$. The reference to the variables is not carried in the {\SPARQL} answer. Like with the simulation of {\NRMDatalogSN} with {\SPARQL}, we need to define a {\SPARQL} pattern to retrieve the answer variables.

Given an {\MRA} expression $E$, with attributes $\widehat{E}=\{X_1,\dots,X_n\}$, we write $\operatorname{AttrQuery}(E)$ to denote the {\SPARQL} pattern $ (\NULL, \NULL, ?X_1)\; \AAND\, \cdots\, \AAND\; (\NULL, \NULL, ?X_n), $ where, for $1 \leq i \leq n$, variable $?X_i$ is the corresponding {\SPARQL} variable for the {\MRA} attribute $X_i$.

\begin{example}
  Consider the {\MRA} expression $r \Join s$ where $\widehat{r}=\{X,Y\}$ and $\widehat{s}=\{Y,Z\}$. Then, $\operatorname{AttrQuery}(E)= (\NULL, \NULL, ?X)\; \AAND\, (\NULL, \NULL, ?Y)\; \AAND\, (\NULL, \NULL, ?Z)$, where $?X$, $?Y$, and $?Z$ are the corresponding {\SPARQL} variables for attributes $X$, $Y$, and $Z$.
\end{example}

Recall that a {\MRA} query is an {\MRA} expression, and a {\SPARQL} query is a {\SPARQL} graph pattern.  We will show that every type of {\MRA} expression can be translated to a specific type of {\SPARQL} graph pattern.  Table~\ref{table:mra2sparql} shows the translation rules which are the basis for the following definition.

\begin{definition}[Function $f_{31}$]%
  \label{def:mra2sparql-query}
  Given an {\MRA} expression $E$, the function $f_{31}$ returns a {\SPARQL} graph pattern defined by $(\Upsilon(E) \UNION \operatorname{AttrQuery(E)})$.
\end{definition}

\begin{table*}[htb]
  \caption{Definition of function $\Upsilon$ which translates an {\MRA} expression into a {\SPARQL} pattern.}
  \centering
  \begin{adjustbox}{max width=\textwidth}
    \begin{tabular}{p{2.3cm}p{9.3cm}p{6cm}}
      \toprule

      \rs {\MRA} expression $E$
      & {\SPARQL} pattern $\Upsilon(E)$
      & where ...
      \\ \midrule

      \rs $R$
      & ($\SELECT$ ${?X_1 \cdots ?X_n}$ 
      & $iri_r$ is the IRI that identifies $R$, $?Y$ is a variable used to
      \\

      & $((?Y,\mathit{iri\_b},\mathit{iri\_r}) \AAND$\newline $\quad ((?Y,\mathit{\iri\_A}_1,?X_1) \AAND (\cdots \AAND (?Y,\mathit{\iri\_A}_2,?X_n) \cdots)$
      & match every tuple of $R$, and $?X_i$ is a variable that corresponds to the attribute $A_i$ in schema $\widehat{R}$.
      \\ \midrule

      \rs $(E_1 \Join E_2)$
      & $( P_1 \AAND P_2 )$
      & $P_1 = \Upsilon(E_1)$ and $P_2 = \Upsilon(E_2)$.
      \\ \midrule

      \rs $(E_1 \cup E_2)$
      & $(P_1 \UNION P_2)$
      & $P_1 = \Upsilon(E_1)$ and $P_2 = \Upsilon(E_2)$.
      \\ \midrule

      \rs $(E_1 \setminus E_2)$
      & $(P_1 \EXCEPT P_2)$
      & $P_1 = \Upsilon(E_1)$ and $P_2 = \Upsilon(E_2)$.
      \\ \midrule

      \rs $\pi_S(E_1)$
      & $(\SELECT {W} {P_1} )$
      & $P_1 = \Upsilon(E_1)$ and $W$ is the set of variables corresponding to the attributes in $S$.  
      \\ \midrule
      
      \rs $\rho_{A/B}(E_1)$
      & $\subs_{?X/?Y}(P_1)$
      & $P_1 = \Upsilon(E_1)$, $\varX$ is the variable that corresponds to attribute $A$, $?Y$ is the variable that corresponds to attribute $B$, and $\subs_{?X/?Y}(P_1)$ denotes the renaming of variable $\varX$ with variable $\varY$ in the {\SPARQL} query $P_1$ (see Appendix~\ref{sec:sparql-variable-renaming}).
      \\ \midrule    

      \rs $\sigma_{\psi}(E_1)$
      & $(P_1 \FILTER \varphi)$
      & $P_1 = \Upsilon(E_1)$, and $\varphi$ is a filter condition equivalent to the selection condition $\psi$.     
      \\      
      \bottomrule
    \end{tabular}
  \end{adjustbox}
  \label{table:mra2sparql}
\end{table*}

\begin{example}
    Consider the {\MRA} expression $E = R \Join S$ where $\widehat{R} = \{A,B\}$ and $\widehat{S} = \{B,C\}$. Then the corresponding {\SPARQL} query $f_{31}(E)$ is the following:
    \[
    f_{31}(E) =
    \begin{aligned}[t]
        (&
        \begin{aligned}[t]
            (& (\SELECT\;\{\varX_A, \varX_B\}\WHERE\;
            (\varY_1,\mathit{\iri\_A}, \varX_A) \AAND
            (\varY_1,\mathit{\iri\_B}, \varX_B)) \; \AAND \\[-4pt]
            & (\SELECT\;\{\varX_B, \varX_C\}\WHERE\;
            (\varY_1,\mathit{\iri\_B}, \varX_B) \AAND
            (\varY_1,\mathit{\iri\_C}, \varX_C)))\; \UNION
        \end{aligned}\\[-4pt]
        & ((\NULL,\NULL, \varX_A) \AAND\, (\NULL,\NULL, \varX_A) \AAND\, (\NULL,\NULL, \varX_A))).
    \end{aligned}
    \]
    If $R = \lBrace \{A \mapsto a, B \mapsto b \},  \{A \mapsto a, B \mapsto b \} \rBrace$ and $S = \lBrace \{B \mapsto b, C \mapsto c \} \rBrace$, the answer to the {\SPARQL} query over the corresponding translation of the {\MRA} database $D$ to an RDF graph is the following multiset:
    \[
    \ev{f_{31}{E}}{g_{31}(D)} = \lBrace
    \begin{aligned}[t]
        & \{ \varX_A \mapsto a, \varX_B \mapsto b, \varX_C \mapsto c \},\\[-4pt]
        & \{ \varX_A \mapsto a, \varX_B \mapsto b, \varX_C \mapsto c \},\\[-4pt]
        & \{ \varX_A \mapsto \NULL, \varX_B \mapsto \NULL, \varX_C \mapsto \NULL \} \rBrace .
    \end{aligned}
    \]
    Otherwise, if $R$ is empty then:
    \[
    \ev{f_{31}{E}}{g_{31}(D)} = \lBrace
    \begin{aligned}[t]
        & \{ \varX_A \mapsto \NULL, \varX_B \mapsto \NULL, \varX_C \mapsto \NULL \} \rBrace .
    \end{aligned}
    \]
    Whereas the first two {\SPARQL} mappings $\{ \varX_A \mapsto a, \varX_B \mapsto b, \varX_C \mapsto c \}$ correspond are duplicates of the same {\MRA} answer, $\{ A \mapsto a, B \mapsto b, C \mapsto c \}$, the last mapping does not correspond to an answer, but encodes the attributes of the {\MRA} query. By encoding the attributes of the {\MRA} query, we can reconstruct the result {\MRA} relation even in the case it is empty.
\end{example}

\subsubsection{Translating query answers from {\SPARQL} to {\MRA}}
Recall that a query answer in {\SPARQL} is a multiset of mappings, and a query answer in {\MRA} is a multiset relation (i.e. a multiset of tuples). Intuitively, a multiset of mappings $\Omega$ can be transformed into a {\MRA} relation $r$ where the attributes in $\widehat{r}$ are the variables in the domain of $\Omega$. This notion is defined next.

\begin{definition}[Function $h_{31}$]%
  \label{def:h31}
  Let $\Omega$ be a multiset of mappings with $\dom(\mu)=V$ for every mapping $\mu\in\Omega$, and that includes the mapping $\mu_{V \mapsto \NULL}$ with $\dom(\mu_{V \mapsto \NULL})=V$, $\mu_{V \mapsto \NULL}(\varX)=\NULL$ for every variable $?X \in V$, . The function $h_{31}(\Omega)$ returns a multiset relation $r$ where:
  \begin{itemize}
  \item
    For each variable $\varX \in V$, the schema $\widehat{r}$ includes the {\MRA} attribute $A$ corresponding to variable $\varX$.
  \item
    The tuple $t_\mu$ corresponding to a mapping $\mu$ with $\dom(\mu)=V$ is the tuple with attributes $\widehat{r}$ such that $t(A)=\mu(\varX)$, for each {\MRA} attribute $A\in\widehat{r}$ corresponding to a variable $\varX \in V$.
  \item
    $\set(r) = \{ t_\mu \mid \mu \in \set(\Omega) \setminus \{ \mu_{V \mapsto \NULL}\}\}$.
  \item
    $\card(t_\mu, r) = \card(\mu, \Omega)$
  \end{itemize}
\end{definition}

\begin{lemma}%
  \label{lemma:mra2sparql}
  {\MRA} can be simulated in {\SPARQL}.
\end{lemma}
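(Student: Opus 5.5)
The plan is to show that $(f_{31},g_{31},h_{31})$ is a simulation of {\MRA} in {\SPARQL}, i.e.\ that for every {\MRA} expression $E$ and database $D$ we have $\Eval(E,D)=h_{31}(\ev{f_{31}(E)}{g_{31}(D)})$. First we may assume, by the lemma on atomic selection formulas, that every selection in $E$ is atomic. We then prove an invariant for $\Upsilon$ by structural induction on $E$: every mapping $\mu$ in $\ev{\Upsilon(E)}{g_{31}(D)}$ has domain exactly the variables corresponding to $\widehat{E}$, and $\card(\mu,\ev{\Upsilon(E)}{g_{31}(D)})=\card(t_\mu,\Eval(E,D))$, where $t_\mu$ is the tuple of Definition~\ref{def:h31}. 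In addition, $\inScope(\Upsilon(E))$ equals the variables of $\widehat{E}$, so the union and except patterns produced are normalized.

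The cases then go as follows.

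\textbf{Base case $R$.} Each colored copy $t^i$ of a tuple $t$ contributes exactly one IRI $\iriT{i}$ matched by $?Y$. Because attributes are functional per tuple IRI, the conjunction of triple patterns yields exactly one mapping per copy. Projecting away $?Y$ therefore gives cardinality $\card(t,r)$. The triple $(\NULL,\NULL,\NULL)$ never matches, since its predicate is not $\iriB$.

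\textbf{Join.} Domains are total on the attribute variables, so SPARQL compatibility coincides with tuple compatibility. Each merged mapping arises from a unique pair, and the AND cardinality formula reduces to the {\MRA} product.

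\textbf{Union, except, select.} These follow directly from the matching definitions, using equal domains for union and except.

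\textbf{Filter.} The filter case for atomic $A=B$ or $A=c$ uses totality of the domains, so no $\error$ arises and the SPARQL and {\MRA} satisfaction relations agree.

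\textbf{Renaming.} This case relies on the appendix property of $\subs$: renaming a variable consistently commutes with evaluation. Since $B\notin\widehat{E}_1$, no capture occurs.

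Finally, consider $f_{31}(E)=\Upsilon(E)\UNION\operatorname{AttrQuery}(E)$. $\operatorname{AttrQuery}(E)$ evaluates to exactly one mapping $\mu_{V\mapsto\NULL}$ with cardinality $1$, because only the single triple $(\NULL,\NULL,\NULL)$ has subject $\NULL$. Moreover, $\mu_{V\mapsto\NULL}$ does not occur in $\ev{\Upsilon(E)}{g_{31}(D)}$, because $\NULL$ is a special IRI distinct from every literal $\litA$ encoding a value. Hence the hypotheses of $h_{31}$ hold, and removing $\mu_{V\mapsto\NULL}$ recovers exactly $\Eval(E,D)$, including the empty-relation and empty-schema cases.

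I expect the main obstacle to be the base case and renaming. For the base case, one must argue carefully that the AND of attribute triples cannot produce spurious combinations across copies and that cardinalities come out as $1$ per copy. This relies on $g_{31}$ assigning one value per attribute per tuple IRI. For renaming, the correctness of $\subs$ must be invoked, which I would cite from the appendix rather than reprove. A minor subtlety is distinguishing values from $\NULL$. I would handle it by stipulating that encodings $\litA$ are literals, hence never equal to the IRI $\NULL$.
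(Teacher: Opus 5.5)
Your plan follows the paper's own argument: structural induction on $E$ through $\Upsilon$, one case per operator. It is more careful than the appendix proof on total domains, on the unique decomposition of merged mappings in the join, and on the $\operatorname{AttrQuery}$/$\NULL$ bookkeeping, none of which the paper checks. The step that does not go through is renaming. There is no property of $\subs$ in the appendix to cite: Appendix~\ref{sec:sparql-variable-renaming} only \emph{defines} $\subs$, and the paper's proof of this case just says ``by construction''. The definition as written is also wrong in its $\SELECT$ clause. The new body is $P_1$ or $\subs_{?Y/?Z}(P_1)$, so $?X$ is never renamed inside it. Already for $\widehat{R}=\{A\}$, $\subs_{?A/?B}(\Upsilon(R))$ is $(\SELECT\ \{?B\}\ P_1)$ with $P_1$ binding $?A$ but not $?B$, so every answer collapses to $\mu_\emptyset$. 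Your remark that no capture occurs because $B\notin\widehat{E}_1$ is not enough either. It only says $?B$ is not in scope at the top, but $?B$ can be bound inside a projected subpattern. For $E_1=\pi_{\{A\}}(R)$ with $\widehat{R}=\{A,B\}$, renaming $?A$ to $?B$ throughout $\Upsilon(E_1)$ would force the two columns of $R$ to be equal.

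To close the gap, define a capture-avoiding $\subs_{?X/?Y}$ with two rules. First, it is the identity on every subpattern in which $?X$ is not in scope. Otherwise, once bodies are renamed, the clause $W\setminus\{?X\}\cup\{?Y\}$ exports a bound $?Y$ from the left conjunct of $\Upsilon(\pi_{\{C\}}(R)\Join S)$ with $A\in\widehat{R}\cap\widehat{S}$, creating a spurious join. Second, at $(\SELECT\ W\ P_1)$ with $?X\in W$, it first renames $?Y$ to a fresh $?Z$ in $P_1$ when $?Y\in\inScope(P_1)$, and then renames $?X$ to $?Y$ in both $W$ and $P_1$. Then prove, by induction on patterns, that $\ev{\subs_{?X/?Y}(P)}{G}$ is the image of $\ev{P}{G}$ under renaming $?X$ to $?Y$ in the domains of the mappings, with cardinalities preserved. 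The renaming is injective on these answers because $\dom(\mu)\subseteq\inScope(P)$ and $?Y\notin\inScope(P)$. The paper shares this gap, but your plan does not close it.

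A smaller point: your claim that the empty-schema case is covered is false. If $\widehat{E}=\emptyset$ (e.g.\ $\pi_\emptyset(E_1)$), then $\operatorname{AttrQuery}(E)$ is an empty conjunction, which is not a pattern. Also $\mu_{V\mapsto\NULL}=\mu_\emptyset$ coincides with the answers of $\Upsilon(E)$, so the ``$\NULL$ is not a literal'' argument is vacuous and $h_{31}$ deletes genuine answers. Either exclude empty schemas, or use $(\SELECT\ \emptyset\ (\NULL,\NULL,?Z))$ there and let $h_{31}$ lower the cardinality of $\mu_{V\mapsto\NULL}$ by one instead of removing it.
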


\begin{proof}
  Let $f_{31},g_{31},h_{31}$ denote respectively the functions stated in definitions~and \ref{def:mra2sparql-query}, \ref{def:mra2sparql-data}, and \ref{def:h31}. The proof of this theorem follows from the claim that $(f_{31},g_{31},h_{31})$ simulates {\MRA} in {\SPARQL} by using induction in the structure of queries. The proof of this claim is in the appendix (Claim~\ref{claim:mra2sparql}).
\end{proof}

\subsection{From {\SPARQL} to {\MRA}}
\label{sec:sparql2mra}
This section shows that {\SPARQL} can be simulated by Multiset Relational Algebra ({\MRA}).
To support this, we describe the following translation functions:
\begin{itemize}
\item $f_{13}$ is the function that translates {\SPARQL} queries into {\MRA} queries;
\item $g_{13}$ is the function that translates {\SPARQL} databases into {\MRA} databases; and
\item $h_{13}$ is the function that translates {\MRA} query answers into {\SPARQL} query answers.
\end{itemize}

The translation presented here is inspired by the one presented by Cyganiak~\cite{10140}. However, unlike Cyganiak, we do not use null values with the SQL semantics. Instead, we use a special constant, denoted $\bot$, used to codify unbound values.

\subsubsection{Translating databases from {\SPARQL} to {\MRA}}
Recall that a {\SPARQL} database is a set of RDF triples, and a {\MRA} database is a set of multiset relations. The translation of a set of RDF triples $G$ will produce three multiset relations (without duplicates): $\TripRel$, which codifies the RDF triples in $G$; $\NullRel$, introduced to manage the unbound values of {\SPARQL}; and $\CompRel$, introduced to simulate the notion of compatibility between mappings.

 \begin{definition}[Function $g_{13}$]%
\label{def:sparql2mra-data}
Let $\bot$ be a special constant.
Given a set of RDF triples $G$, the function $g_{13}(G)$ returns a multiset relational database $D'$ containing the multiset relations $\TripRel$, $\NullRel$, and $\CompRel$ defined as follows:
\begin{enumerate}
\item
  $\widehat{\TripRel}=\{S,P,O\}$, $\set(\TripRel) = \{\{S\mapsto s, P\mapsto p, O\mapsto o\} \mid (s,p,o) \in G\}$, and $\card(t, \TripRel)=1$ for every tuple $t \in \set(\TripRel)$.
\item
  $\widehat{\NullRel}=\{N\}$, $\set(\NullRel) =\{\{N \mapsto \bot\}\}$, and $\card(\{N \mapsto \bot\}, \NullRel)=1$.
\item
  $\widehat{\CompRel}=\{A_1,A_2,A\}$, $\set(\CompRel)$ includes the tuple $\{A_1 \mapsto \bot, A_2 \mapsto \bot, A \mapsto \bot\}$ and all tuples of the form $\{A_1 \mapsto a, A_2 \mapsto a, A \mapsto a\}$, $\{A_1 \mapsto \bot, A_2 \mapsto a, A \mapsto a\}$, and $\{A_1 \mapsto a, A_2 \mapsto \bot, A \mapsto a\}$ where $a$ is an RDF term in $G$, and $\card(t, \CompRel)=1$ for every tuple $t \in \set(\CompRel)$.
\end{enumerate}
\end{definition} 

\begin{example}
  Let $G$ be the RDF graph defined as follows
  \[
    G = \{
    \begin{aligned}[t]
      &(\iAlice, \pLivesIn, \iSantiago), (\iAlice, \pKnows, \iBob),\\[-4pt]
      &(\iBob, \pLivesIn, \iSantiago), (\iBob, \pKnows, \iCarol),\\[-4pt]
      &(\iCarol, \pLivesIn, \iLima) \}.
    \end{aligned}
  \]
  Then the data is translated for {\MRA} as the database $g_{13}(G)$ with the multiset relations $\TripRel$, $\NullRel$, and $\CompRel$ defined as follows:
  \begin{align*}
    \TripRel &= 
               \begin{aligned}[t]
                 \lBrace
                 & \{S \mapsto \iAlice, P \mapsto \pLivesIn, O \mapsto \iSantiago\},
                   \{S \mapsto \iAlice, P \mapsto \pKnows,   O \mapsto \iBob\}\\[-4pt]
                 & \{S \mapsto \iBob,   P \mapsto \pLivesIn, O \mapsto \iSantiago\},
                   \{S \mapsto \iBob,   P \mapsto \pKnows,   O \mapsto \iCarol\},\\[-4pt]
                 & \{S \mapsto \iCarol, P \mapsto \pLivesIn, O \mapsto \iLima\}
                   \rBrace \\
               \end{aligned} \\
    \NullRel &= \lBrace \{ N \mapsto \bot \} \rBrace \\
    \CompRel &=
               \begin{aligned}[t]
                 \lBrace
                 & \{ A_1 \mapsto \bot, A_2 \mapsto \bot, A_3 \mapsto \bot\},\\[-4pt]
                 & \{ A_1 \mapsto \iAlice, A_2 \mapsto \iAlice, A_3 \mapsto \iAlice\},\\[-4pt]
                 & \{ A_1 \mapsto \iAlice, A_2 \mapsto \bot, A_3 \mapsto \iAlice\},\\[-4pt]
                 & \{ A_1 \mapsto \bot, A_2 \mapsto \iAlice, A_3 \mapsto \iAlice\},\\[-4pt]
                 & \{ A_1 \mapsto \pLivesIn, A_2 \mapsto \pLivesIn, A_3 \mapsto \pLivesIn\},\\[-4pt]
                 & \{ A_1 \mapsto \pLivesIn, A_2 \mapsto \bot, A_3 \mapsto \pLivesIn\},\\[-4pt]
                 & \{ A_1 \mapsto \bot, A_2 \mapsto \pLivesIn, A_3 \mapsto \pLivesIn\},\\[-4pt]
                 & \quad \vdots \\[-4pt]
                 & \{ A_1 \mapsto \iLima, A_2 \mapsto \iLima, A_3 \mapsto \iLima\},\\[-4pt]
                 & \{ A_1 \mapsto \iLima, A_2 \mapsto \bot, A_3 \mapsto \iLima\},\\[-4pt]
                 & \{ A_1 \mapsto \bot, A_2 \mapsto \iLima, A_3 \mapsto \iLima\}
                   \rBrace.
               \end{aligned}
  \end{align*}
\end{example}

\subsubsection{Translating queries from {\SPARQL} to {\MRA}}
Recall that a {\SPARQL} query is a graph pattern, and a {\MRA} query is a relational algebra expression.
First, we define the function $\Lambda$ which allows translating an RDF triple pattern into a {\MRA} expression.  

Assume that $a$, $b$, $c$ are RDF terms, and $?X$, $?Y$, $?Z$ are variables. 
Recall that $\TripRel$ is a multiset relation that is obtained from a set of RDF triples, where $\widehat{\TripRel} = \{S, P, O\}$ is the schema of $\TripRel$.
Given a triple pattern $T$, the function $\Lambda(T)$ returns a {\MRA} expression defined as follows\footnote{These rules are based on Cyganiak's translation~\cite{10140}.}:
\begin{itemize}
\item 
if $T$ is $(\varX,b,c)$  then $\Lambda(T)$ returns
$\pi_{\varX}(\rho_{S/\varX}(\sigma_{P=b \land O=c}(\TripRel)))$;
\item 
if $T$ is $(a,\varY,c)$  then $\Lambda(T)$ returns
$\pi_{\varY}(\rho_{P/\varY}(\sigma_{S=a \land O=c}(\TripRel)))$;
\item 
if $T$ is $(a,b,\varZ)$  then $\Lambda(T)$ returns
$\pi_{\varZ}(\rho_{O/\varZ}(\sigma_{S=a \land P=b}(\TripRel)))$;
\item 
if $T$ is $(\varX,\varY,c)$  then $\Lambda(T)$ returns
$\pi_{\varX,\varY}(\rho_{P/\varY}(\rho_{S/\varX}(\sigma_{O=c}(\TripRel))))$;
\item 
if $T$ is $(\varX,b,\varZ)$ then $\Lambda(T)$ returns
$\pi_{\varX,\varZ}(\rho_{O/\varZ}(\rho_{S/\varX}(\sigma_{P=b}(\TripRel))))$;
\item 
if $T$ is $(a,\varY,\varZ)$  then $\Lambda(T)$ returns
$\pi_{\varY,\varZ}(\rho_{O/\varZ}(\rho_{P/\varY}(\sigma_{S=a}(\TripRel))))$;
\item 
if $T$ is $(\varX,\varY,\varZ)$ then $\Lambda(T)$ returns
$\pi_{\varX,\varY,\varZ}(\rho_{O/\varZ}(\rho_{P/\varY}(\rho_{S/\varX}(\TripRel))))$;
\item 
if $T$ is $(\varX,\varX,c)$ then $\Lambda(T)$ returns
$\pi_{\varX}(\rho_{S/\varX}(\sigma_{S = P \land O=c}(\TripRel)))$;
\item 
if $T$ is $(\varX,b,\varX)$ then $\Lambda(T)$ returns 
$\pi_{\varX}(\rho_{S/\varX}(\sigma_{P=b\land S=O}(\TripRel)))$.
\item 
if $T$ is $(a,\varX,\varX)$  then $\Lambda(T)$ returns
$\pi_{\varX}(\rho_{P/\varX}(\sigma_{S = a \land P = O }(\TripRel)))$;
\item 
if $T$ is $(\varX,\varX,\varX)$ then $\Lambda(T)$ returns
$\pi_{\varX}(\rho_{S/\varX}(\sigma_{S = P \land P = O }(\TripRel)))$;
\end{itemize}

Second, we define a function $\gamma$ that allows translating a {\SPARQL} filter condition into a {\MRA} selection condition.  Like in the translation from {\SPARQL} to {\NRMDatalogSN}, it is not necessary to translate complex filter conditions ({\SPARQL}) to complex selection formulas ({\MRA}) because {\SPARQL} queries can be normalized to avoid logical connectives.

Given an atomic filter condition $\varphi$, the function $\gamma(\varphi)$ is defined recursively as follows:
\begin{itemize}
\item
  If $\varphi$ is $\varX = c$ then $\gamma(\varphi)$ is $(\neg(X = \bot) \land X = c)$ where $X$ is the attribute name corresponding to variable $\varX$;
\item
  If $\varphi$ is $\varX = \varY$ then $\gamma(\varphi)$ is $((\neg(X = \bot) \land \neg(Y = \bot)) \land X = Y)$ where $X$ and $Y$ are the attribute names corresponding to variables $\varX$ and $\varY$, respectively;
\item
  If $\varphi$ is $\bound(X)$ then $\gamma(\varphi)$ is $\neg(X = \bot)$ where $X$ is the attribute name corresponding to variable $\varX$.
\end{itemize}

In Definition \ref{def:sparql2mra-data}, we introduced the relation named $\CompRel$ to simulate the compatibility between mappings. For example, to simulate the {\SPARQL} query $Q = (P_1 \AAND P_2)$ we need to ensure that check if two pairs of mappings $\mu_1 \in \ev{P_1}{G}$ and $\mu_2 \in \ev{P_2}{G}$ are compatible, and if they are compatible, return the mapping $\mu=\mu_1 \cup \mu_2$ resulting from joining them. To explain how this operation is simulated with {\MRA}, let $\inScope(P_1)\cap\inScope(P_2)=\{\varX\}$ and tuples $t_1$ and $t_2$ correspond to mappings $\mu_1$ and $\mu_2$. To be compatible, either both mappings map variable $\varX$ to the same value, or at least for one of the mappings, variable $\varX$ is unbound. For tuples, an unbound variable $\varX$ is represented with an attribute value $\bot$ (e.g., $t(X)=\bot$). Then, to check if tuples $t_1$ and $t_2$ are \emph{compatible}, we need to rename the attribute name $X$ corresponding to variable $\varX$ as two attributes, namely $X_1$ and $X_2$ and check if there exists a tuple $t_3$ in the result of query $\rho_{A_1/X_1}(\rho_{A_2/X_2}(\CompRel))$ that agrees with tuples $t_1$ and $t_2$ (i.e., $t_3(X_1)=t_1(X)$ and $t_3(X_2)=t_2(X)$) or agrees with either $t_1$ or $t_2$ whereas for the other tuple the value is $\bot$ (e.g., $t_3(X_1)=t_1(X)$ and $t_2(X_2)=\bot$). We recover the renamed attribute $X$ for the attribute $A$ in the relation named $\CompRel$. That is, for the compatibility we use the {\MRA} expression $\rho_{A/X}(\rho_{A_1/X_1}(\rho_{A_2/X_2}(\CompRel)))$ which is generalized as follows for multiple common variables in the scope of patterns $P_1$ and $P_3$.

Let $\mathcal{X}$ be a finite set of attribute names, and $\nu_1$ and $\nu_2$ be two bijective functions that map each attribute $X \in \mathcal{X}$ to two different sets of attributes (i.e., the ranges of $\nu_1$ and $\nu_2$ are disjoint). Then, we write $\CompRel(\nu_1,\nu_2,\mathcal{X})$ to denote the join of {\MRA} expressions of the form
\[
\rho_{A/X}(\rho_{A_1/\nu_1(X)}(\rho_{A_2/\nu_2(X)}(\CompRel))),
\]
for every attribute name $X \in \mathcal{X}$.

Let $E$ be a {\MRA} expression, and $\mathcal{X} = \{X_1,\dots,X_n\}$ be a subset of the attribute names in $\widehat{E}$, and $\nu$ a bijective function that maps each attribute name in $\mathcal{X}$ to a fresh attribute name (i.e., $\nu(X) \notin \widehat{E}$). We call $\nu(E)$ to the {\MRA} expression that renames each attribute name $X \in \mathcal{X}$ with $\nu(X)$. That is, $\nu(E) = \rho_{X_1/\nu(X_1)}(\cdots \rho_{X_n/\nu(X_n)}(E)\cdots)$.

Given two {\MRA} expressions $E_1$ and $E_2$, assume two bijective functions $\nu_1$ and $\nu_2$ that map each attribute $X \in \widehat{E}_1\cap \widehat{E}_2$ to two fresh attributes (i.e., $\nu_1(X),\nu_2(X)\notin\widehat{E}_1\cup\widehat{E}_2$), and satisfy $\range(\nu_1)\cap \range(\nu_2)=\emptyset$. Then, we define the {\MRA} operation $E_1 * E_2$ in terms of existing {\MRA} operators as follows:
\[ E_1 * E_2 = \pi_{\widehat{E}_1 \cup \widehat{E}_2}(
  \CompRel(\nu_1,\nu_2,\widehat{E}_1 \cap \widehat{E}_2)
  \Join \nu_1(E_1) \Join \nu_2(E_2)).\]
Notice that the attribute names in the ranges of functions $\mu_1$ and $\mu_2$ in the definition of expression $E_1 * E_2$ do not matter because are not in the schema of the multiset that results from expression $E_1 * E_2$.

To translate {\SPARQL} queries $Q$ of the form $(\SELECT\;\mathcal{X}\;P)$ where the set of variables $\mathcal{X}$ include a variable that is not in the scope of $P$, we need to generate values $\bot$ to fill the tuples returned by the translated query. For example, if $\inScope(P)=\{\varX\}$ and $\mathcal{X}=\{\varX, \varY\}$, then the {\MRA} expression $E$ that corresponds to the {\SPARQL} pattern $P$ can be extended with an attribute name $Y$ by joining $E$ with the {\MRA} relation $\rho_{N/Y}(\NullRel)$. Given a set $\mathcal{Y} = \{Y_1,\dots,Y_n\}$ of attribute names, we define the {\MRA} expression $\Delta(\mathcal{Y})$ as $\rho_{N/Y_1}(\NullRel) \Join \cdots \Join \rho_{N/Y_n}(\NullRel)$.

Next, we present the translation of {\SPARQL} queries to {\MRA} queries.

\begin{definition}[Function $f_{13}$]
\label{def:sparql2mra-queries}
The translation rules in Table~\ref{table:sparql2mra} define the function $f_{13}$ from normalized graph patterns whose filter conditions have no Boolean connectives to {\MRA} queries.
\end{definition}

\begin{table*}[htb]
  \caption{Definition of the function $f_{13}$, which takes a normalized {\SPARQL} pattern $P$ as input (without logical connectives in filter conditions) and returns an {\MRA} query.}
  \centering
  \begin{adjustbox}{max width=\textwidth}
    \begin{tabular}{p{4cm}p{3cm}p{9cm}}
      \toprule

      \rs {\SPARQL} pattern $P$
      & {\MRA} query $f_{13}(P)$
      & where...
      \\ \midrule

      
      \rs $(s,p,o)$
      & $\Lambda(s,p,o)$
      \\ \midrule
      
      \rs $(P_1 \AAND P_2)$
      & $(f_{13}(P_1) * f_{13}(P_2))$
      \\ \midrule

      \rs $(P_1 \UNION P_2)$
      & $(f_{13}(P_1) \cup f_{13}(P_2))$
      \\ \midrule

      \rs $(P_1 \EXCEPT P_2)$
      & $(f_{13}(P_1) \setminus f_{13}(P_2))$
      \\ \midrule

      \rs $(\SELECT \inScope(P)\; P_1)$ 
      & $\pi_{\mathcal{A}}(f_{13}(P_1) \Join \Delta_{\mathcal{B}})$
      & $\mathcal{A}$ is the set of attribute names corresponding to the variables in set $\inScope(P)$ and $\mathcal{B}$ is the set of attribute names that correspond to variables that are in set $\inScope(P) \setminus \inScope(P_1)$.
      \\ \midrule

      \rs $(P_1 \FILTER~\varphi)$
      & $\sigma_{\gamma(\varphi)}(f_{13}(P_1))$
      \\
      \bottomrule
    \end{tabular}
  \end{adjustbox}
  \label{table:sparql2mra}
\end{table*}

\begin{example}
  Let $Q$ be the following {\SPARQL} query asking for all people, the place where they live, and optionally the people their know (notice that this query is already normalized as we discussed in Example~\ref{ex:sparql-to-md-query}).
  \[
    \begin{aligned}[t]
      ( & ((\Variable{person}, \pLivesIn, \Variable{somewhere}) \AAND\,
          (\Variable{person}, \pKnows, \Variable{somebody})) \\[-4pt]
        & \UNION \\[-4pt]
        &
          \begin{aligned}[t]
            ( & \SELECT \Variable{person}\; \Variable{somewhere}\; \Variable{somebody} \\[-4pt]
              & \WHERE
                \begin{aligned}[t]
                ( & (\Variable{person}, \pLivesIn, \Variable{somewhere}) \EXCEPT \\[-4pt]
                  & \begin{aligned}[t]
                      ( & \SELECT \Variable{person}\; \Variable{somewhere} \\[-4pt]
                        & \WHERE\; ((\Variable{person}, \pLivesIn, \Variable{somewhere}) \AAND\,
                          (\Variable{person}, \pKnows, \Variable{somebody})))))).
                    \end{aligned}
              \end{aligned}
          \end{aligned}
    \end{aligned}
  \]
  Then, the corresponding query $f_{13}(Q)$ is the query $(q(X), \Pi)$ where $\Pi$ is defined as follows:
  \[
    \begin{aligned}[t]
      & (\Lambda(\Variable{person}, \pLivesIn, \Variable{somewhere}) *
        \Lambda(\Variable{person}, \pKnows, \Variable{somebody})) \; \cup \\[-4pt]
      &
        \begin{aligned}[t]
          ( &
              \begin{aligned}[t]
                ( & \pi_{\mathit{Person}}(\Lambda(\Variable{person}, \pLivesIn, \Variable{somewhere})) \; \setminus \\[-4pt]
                  & \pi_{\mathit{Person}}(\Lambda(\Variable{person}, \pLivesIn, \Variable{somewhere}) *
                    \Lambda(\Variable{person}, \pKnows, \Variable{somebody}))) \; \Join 
              \end{aligned}
          \\[-4pt]
            & \rho_{N/\mathit{Somebody}}(\NullRel)) \,,
        \end{aligned}
    \end{aligned}
  \]
  where the {\MRA} attributes $\mathit{Person}$ and $\mathit{Somebody}$ correspond to the {\SPARQL} variables $\Variable{person}$ and $\Variable{somebody}$.
\end{example}

\subsubsection{Translating query answers from {\MRA}  to {\SPARQL}}
Recall that a {\MRA} query answer is a multiset of tuples, and a {\SPARQL} query answer is a multiset of solution mappings.  
Next, we define the function $h_{31}$ that transforms {\MRA} query answers into {\NRMDatalogSN} query answers.

Intuitively, the translation of a {\MRA} tuple $t$ as a {\SPARQL} solution mapping $\mu$ consists of removing from tuple $t$ every attribute whose value is $\bot$, and viewing the result tuple as a {\SPARQL} mapping $\mu$. For example, the result of translating a tuple $t$ with $\hat{t} = \{X,Y\}$, $t(X)=a$, and $t(Y)=\bot$, is the {\SPARQL} mapping $\mu = \{\varX \mapsto a\}$. Recall that we write $\varX$ to denote the corresponding {\SPARQL} variable for a {\MRA} attribute $X$.

\begin{definition}[Function $h_{31}$]
\label{def:sparql2mra-answer}
Given a {\MRA} tuple $t$, we write $f_{31}(t)$ to denote the {\SPARQL} mapping $\mu$ such that: (1) $\mu(\varX) = t(X)$ if $X \in \hat{t}$ and $t(X)\neq \bot$, and (2) variable $\varY$ is not in $\dom(\mu)$ if $Y \notin \hat{t}$ or $t(Y)=\bot$. Abusing notation, $f_{31}(r)$ is also the function that receives a {\MRA} relation $r$ and returns the multiset $\Omega$ of {\SPARQL} mappings where $\set(\Omega) = \{ \mu \mid \text{ there exist } t \in r \text{ such that } f_{31}(t) = \mu \}$ and the cardinality of mapping $f_{31}(t)$ in $\Omega$ is the cardinality of tuple $t$ in $r$.
\end{definition}

\begin{lemma}
  \label{lemma:sparql2mra}
  {\SPARQL} can be simulated by {\MRA}.
\end{lemma}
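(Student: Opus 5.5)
We show that $(f_{13},g_{13},h_{13})$ is a simulation of {\SPARQL} in {\MRA}, where $h_{13}$ is the tuple-to-mapping translation of Definition~\ref{def:sparql2mra-answer}, which drops attributes with value $\bot$. First we reduce to the case covered by Definition~\ref{def:sparql2mra-queries}. By the normalization lemma of Section~\ref{sec:normalization-of-patterns} and Lemma~\ref{formula-reduction}, every pattern is equivalent to a normalized pattern without Boolean connectives in its filter conditions. Every $\SELECT\,W\,P_1$ can also be read as $\SELECT\,\inScope(P)\,P_1$, since $\inScope(P)=W$. So $f_{13}$ is defined on a representative of every query, and it suffices to prove the following invariant by structural induction on such patterns $P$.

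The invariant: $f_{13}(P)$ has schema exactly the attributes of $\inScope(P)$, and for every graph $G$,
\[
\ev{P}{G} = h_{13}(\Eval(f_{13}(P),g_{13}(G))).
\]
In fact we need the stronger statement that $\mu\mapsto t_\mu$ is a cardinality-preserving bijection. Here $t_\mu$ is the tuple over $\inScope(P)$ that agrees with $\mu$ on $\dom(\mu)$ and is $\bot$ elsewhere. This uses that $\dom(\mu)\subseteq\inScope(P)$ and that $\bot$ never occurs in $G$. Injectivity of $\mu\mapsto t_\mu$ on a fixed schema is what makes cardinalities transfer.

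The inductive cases are as follows.
\begin{itemize}
\item \emph{Triple patterns.} $\Lambda$ selects the matching tuples of $\TripRel$ (cardinality $1$) and projects. Since the tuple is determined by the variable values, the projection keeps cardinality $1$.
\item \emph{$\UNION$ and $\EXCEPT$.} Normalization gives equal schemas, so the bijection transfers sums and membership directly.
\item \emph{$\FILTER$.} For atomic $\varphi$, $\mu(\varphi)=\true$ iff $t_\mu\models\gamma(\varphi)$, which we check case by case on $\bot$ versus a value.
\item \emph{$\SELECT$.} Joining with $\Delta_{\mathcal B}$ multiplies by $1$ and pads the new attributes with $\bot$. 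The projection then sums over tuples agreeing on $\mathcal A$, and these are exactly the $\mu$ with the same restriction $\mu_{|W}$.
\end{itemize}

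The main obstacle is the $\AAND$ case, i.e.\ showing that $E_1*E_2$ simulates compatible merge with the right multiplicities. We first prove that for each shared attribute $X$ and values $v_1,v_2$, the relation $\CompRel$ contains exactly one tuple with $A_1=v_1$, $A_2=v_2$ whenever $\mu_1,\mu_2$ are compatible on $\varX$, and none otherwise. Its $A$-value is the merged value: $v$ if either side is $v$, and $\bot$ if both are $\bot$. Consequently, for the renamed copies $\nu_1(t_{\mu_1})$ and $\nu_2(t_{\mu_2})$, the join with $\CompRel(\nu_1,\nu_2,\mathcal X)$ yields exactly one tuple iff $\mu_1\sim\mu_2$, and that tuple restricted to $\widehat E_1\cup\widehat E_2$ is $t_{\mu_1\cup\mu_2}$. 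Its cardinality is $\card(\mu_1)\cdot\card(\mu_2)$.

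The final projection sums over all pairs $(\mu_1,\mu_2)$ producing the same merge. This matches the {\SPARQL} definition $\sum_{\mu=\mu_1\cup\mu_2}\card(\mu_1,M_1)\card(\mu_2,M_2)$. Care is needed to check that distinct pairs give distinct pre-projection tuples: they do, since the $\nu$-renamed attributes retain both originals.
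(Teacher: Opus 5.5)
Your proposal is correct and follows the same route as the paper's proof (Claim~\ref{claim:sparql2mra} in the appendix): a structural induction over normalized patterns with atomic filters, using the same case split, with the $\AAND$ case resting on the fact that $\CompRel$ contains exactly one tuple per compatible pair of values. Your explicit invariant (schema equal to $\inScope(P)$, the cardinality-preserving injection $\mu\mapsto t_\mu$, and injectivity of pairs into pre-projection tuples) makes rigorous the multiplicity bookkeeping that the paper's proof only sketches.
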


\begin{proof}
This is a long but straightforward induction on the structure of {\SPARQL} queries using as hypothesis that $(f_{13},g_{13},h_{13})$ is a simulation of {\SPARQL} by {\MRA}. The details of this proof are in the appendix (Claim~\ref{claim:sparql2mra}).
\end{proof}

\subsection{{\MRA} and {\SPARQL} have the same expressive power}
Putting together the simulations among {\MRA} and {\SPARQL} stated in this section, we get the following theorem:

\begin{theorem}
  {\MRA} and {\SPARQL} have the same expressive power.
\end{theorem}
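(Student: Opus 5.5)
The proof follows the pattern of the two previous equivalence theorems. By Definition~\ref{def:expressive-power}, $\MRA \cong \SPARQL$ holds exactly when each language is contained in the other, so the statement reduces to two containments: {\MRA} in {\SPARQL}, and {\SPARQL} in {\MRA}. The first is Lemma~\ref{lemma:mra2sparql}, witnessed by the simulation $(f_{31},g_{31},h_{31})$. The second is Lemma~\ref{lemma:sparql2mra}, witnessed by $(f_{13},g_{13},h_{13})$. Combining the two lemmas gives the theorem at once.

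A second, independent route is available as a consistency check. The previous sections already established $\SPARQL \cong \NRMDatalogSN$ and $\MRA \cong \NRMDatalogSN$, and containment is transitive (Figure~\ref{fig:transitivity-containment}). Composing simulations therefore yields further witnesses. For {\SPARQL} in {\MRA}, take $f_{12}$ then $f_{23}$ on queries, $g_{12}$ then $g_{23}$ on databases, and $h_{23}$ then $h_{12}$ on answers. For {\MRA} in {\SPARQL}, compose the simulations $T_{32}$ and $T_{21}$ in the same way. One subtlety must be checked: the query translations $f_{21}$ and $f_{23}$ assume normalized {\NRMDatalogSN} queries. Lemma~\ref{lemma:normalized-md} therefore has to be inserted between the two stages, and this is harmless because equivalent queries define the same query function.

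The main obstacle is not this final assembly, which is immediate. It is the correctness of the direct simulation in Lemma~\ref{lemma:sparql2mra}, in particular the $\AAND$ case. There one must verify that $E_1 * E_2$, built from $\CompRel$ and the renamings $\nu_1,\nu_2$, produces each merged tuple with multiplicity exactly $\sum_{\mu=\mu_1\cup\mu_2}\card(\mu_1,M_1)\card(\mu_2,M_2)$. This requires that, for a fixed compatible pair of values, exactly one $\CompRel$ tuple matches, so that no spurious duplicates arise. Since that lemma is taken as given, the theorem itself follows directly from the two containments, with transitivity as a backup argument.
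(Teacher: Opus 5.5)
Your proof is correct and is exactly the paper's argument: the theorem follows by combining the containment of {\MRA} in {\SPARQL} (Lemma~\ref{lemma:mra2sparql}) with the containment of {\SPARQL} in {\MRA} (Lemma~\ref{lemma:sparql2mra}), as Definition~\ref{def:expressive-power} requires. Your transitivity route through {\NRMDatalogSN} is a valid optional cross-check that the paper does not use; for it to work, the normalization step must be inserted before $f_{23}$ and $f_{21}$, as you note.
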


\begin{proof}
The claim is based on the simulation of {\MRA} with {\SPARQL} (Lemma \ref{lemma:mra2sparql}) and the simulation of {\SPARQL} with {\MRA} (Lemma  \ref{lemma:sparql2mra}).
\end{proof}

\section{Conclusions}
\label{sec:conclusions}
We studied the algebraic and logic structure of the multiset semantics of the core {\SPARQL} patterns, and compared it to the classical and well-studied formalisms of multiset relational algebra and multiset Datalog. Our motivation was to shed light on the underlying theoretical structure of the multiset features of {\SPARQL} that could
help improve future designs and implementations.
In this regard, the main discoveries of this research are: (1) the core fragment of {\SPARQL} patterns matches precisely the multiset semantics of Datalog as defined by Mumick et al. \cite{90820}; and (2) this logical structure corresponds to a simple multiset algebra, namely the Multiset Relational Algebra ({\MRA}). 
These correspondences, besides showing a nice parallel to the one exhibited by classical set relational algebra and relational calculus, and thus transferring theoretical guarantees from these well-studied formalisms, could
help to give new insights on possible optimizations and future extensions of {\SPARQL}.

We think there are a couple of lessons learned in the investigation of the multiset features of {\SPARQL}.
First, contrary to the rather chaotic variety of multiset operators in SQL, it is interesting to observe that the {\SPARQL} design comprises a more coherent body of multiset operators. 
We suggest that this asset should be considered and curated by designers in order to try to keep this clean design in future extensions of {\SPARQL}.
Second, there is a challenging goal for query language designers that work with multisets: existing a diversity of multiset extensions for each of the classical set operators, it is not evident at all from a theoretical perspective how to develop a logically coherent formalism that could integrate all or most of them. 

Our study shows that there are fragments that behave coherently, but that operators that do not fit in this schema, when available (not always), have to be accessed in a very ad-hoc manner. 
Last but not least, this study shows (and adds evidence of) the complexities and challenges that the introduction of multisets brings to query languages, exemplified here in the case of {\SPARQL}. Much more use cases are needed in order to match the theoretical restrictions and recommendations (e.g. as studied in this paper), and real-life use cases that to the best of our knowledge do not have yet a good systematization.

\begin{acks}
R. Angles was supported by ANID FONDECYT Chile through grant 1221727.
D. Hernández was partially supported by the German Research Foundation, \emph{Deutsche Forschungsgemeinschaft (DFG)}, grant SFB-1574-471687386.
This work was partly funded by ANID - Millennium Science Initiative Program - Code ICN17\_002.
\end{acks}




\bibliographystyle{ACM-Reference-Format}
\bibliography{bibliography}        

\appendix

\section{Variable renaming in {\SPARQL}}%
\label{sec:sparql-variable-renaming}
This appendix section defines function $\subs(\cdot,\cdot)$, which renames {\SPARQL} variables. This function is used to simulate the {\MRA} operator renaming $\rho_{A/B}$ (see Table~\ref{table:mra2sparql}). Note that function $\subs(\cdot,\cdot)$ is not an additional algebraic operation but an operation over expressions (i.e., a query rewriting). Intuitively, given a {\MRA} query $Q$, a {\SPARQL} pattern $P$ that simulates $Q$, a renaming of {\MRA} attributes $A/B$ and a renaming of variables $\varX/\varY$ where $\varX$ and $\varY$ are the corresponding variables for attributes $A$ and $B$, the query rewriting $\subs_{\varX/\varY}(P)$ simulates the {\MRA} query $\rho_{A,B}(Q)$. To this end, {\SPARQL} variables are renamed in the pattern, instead of renaming query result attributes as {\MRA} does.

\begin{definition}[{\SPARQL} Variable Renaming]
  Given two {\SPARQL} variables $\varX$ and $\varY$, we define the function $\nu_{\varX/\varY}: \sI \cup \sL \cup \sV \to \sI \cup \sL \cup \sV$ as the function such that $\nu_{\varX/\varY}(\varX)=\varY$ and $\nu_{\varX/\varY}(s)=s$, for every $s \in (\sI \cup \sL \cup \sV) \setminus \{\varX\}$. Given a {\SPARQL} pattern $P$ and two {\SPARQL} variables $\varX \in \inScope(P)$ and $\varY \notin \inScope(P)$, we write $\subs_{\varX/\varY}(P)$ to denote the pattern defined recursively as follows:
  \begin{enumerate}
  \item If $P$ is a triple pattern $(s,p,o)$ then $\subs_{\varX/\varY}(P)=(\nu_{\varX/\varY}(s), \nu_{\varX/\varY}(p), \nu_{\varX/\varY}(o))$.
  \item {\sloppy If $P$ has the form $(P_1 \AAND P_2)$ then $\subs_{\varX/\varY}(P)=\subs_{\varX/\varY}(P_1) \AAND \subs_{\varX/\varY}(P_2)$.\par}
  \item If $P$ has the form $(P_1 \UNION P_2)$ then $\subs_{\varX/\varY}(P)=\subs_{\varX/\varY}(P_1) \UNION \subs_{\varX/\varY}(P_2)$.
  \item {\sloppy If $P$ has the form $(P_1 \EXCEPT P_2)$ then $\subs_{\varX/\varY}(P)=\subs_{\varX/\varY}(P_1) \EXCEPT \subs_{\varX/\varY}(P_2)$.\par}
  \item If $P$ has the form $(P_1 \FILTER \varphi)$ then $\subs_{\varX/\varY}(P)=(\subs_{\varX/\varY}(P_1) \FILTER \nu_{\varX/\varY}(\varphi))$ where, abusing of notation, $\nu_{\varX/\varY}(\varphi)$ is the selection formula defined recursively as follows:
    \begin{enumerate}
    \item If $\varphi$ has the form $a = b$, where $a,b \in \sV \cup \sI \cup \sI$, then $\nu_{\varX/\varY}(\varphi)=\nu_{\varX/\varY}(a)=\nu_{\varX/\varY}(b)$.
    \item If $\varphi$ has the form $\bound(?x)$ then $\nu_{\varX/\varY}(\varphi)=\bound(\nu_{\varX/\varY}(?x))$.
    \item If $\varphi$ has the form $\psi_1 \land \psi_2$ then $\nu_{\varX/\varY}(\varphi)=\nu_{\varX/\varY}(\psi_1) \land \nu_{\varX/\varY}(\psi_2)$.
    \item If $\varphi$ has the form $\psi_1 \lor \psi_2$ then $\nu_{\varX/\varY}(\varphi)=\nu_{\varX/\varY}(\psi_1) \lor \nu_{\varX/\varY}(\psi_2)$.
    \item If $\varphi$ has the form $\neg\psi$ then $\nu_{\varX/\varY}(\varphi)=\neg\nu_{\varX/\varY}(\psi)$. 
    \end{enumerate}
  \item If $P$ has the form $(\SELECT W \WHERE P_1)$ then:
    \begin{enumerate}
    \item If $\varY \notin \inScope(P_1)$, then
      $\subs_{\varX/\varY}(P)=(\SELECT~(W\setminus\{\varX\}\cup\{\varY\})
      \WHERE~P_1)$.
    \item Otherwise,
      $\subs_{\varX/\varY}(P)=(\SELECT~(W\setminus\{\varX\}\cup\{\varY\})
      \WHERE~\subs_{\varY/Z}(P_1))$, where 
      $\varZ$ is a fresh variable. We rename variable $\varY$ as $\varZ$ when is not in-scope of $P$ to avoid a variable name clash.
    \end{enumerate}
  \end{enumerate}
\end{definition}

\section{Proof of claims}

\subsection{Error filter condition}

\newcommand{\hlcell}{\cellcolor[gray]{0.8}}

\begin{table*}
  \caption{Truth values for the error formula of a conjunction.  According to Definition~\ref{def:error-formula}, given a formula $\varphi$ of the form $\varphi_1 \land \varphi_2$, the formula $\Error(\varphi)$ is the formula $\psi_1 \lor \psi_2 \lor \psi_3$ where $\psi_1$ is the formula $(\varphi_1 \land \Error(\varphi_2))$, $\psi_2$ is the formula $(\Error(\varphi_1) \land \varphi_2)$, and $\psi_3$ is the formula $(\Error(\varphi_1) \land \Error(\varphi_2))$.  Given an arbitrary mapping $\mu$, this table shows the possible truth values for formulas $\varphi$, $\Error(\varphi)$, and its components.}
  \label{table:truth-values-error-conjunction}
  \centering
  \begin{adjustbox}{max width=\textwidth}
    \begin{tabular}{ccc@{\hskip 3em}cc@{\hskip 3em}ccc@{\hskip 3em}c}
      \toprule
      $\mu(\varphi_1)$
      & $\mu(\varphi_2)$
      & $\mu(\varphi)$
      & $\mu(\Error(\varphi_1))$
      & $\mu(\Error(\varphi_2))$
      & $\mu(\psi_1)$
      & $\mu(\psi_2)$
      & $\mu(\psi_3)$
      & $\mu(\Error(\varphi))$ \\
      \midrule
      $\true$  & $\true$  & $\true$         & $\false$ or $\error$ & $\false$ or $\error$ & $\false$ or $\error$ & $\false$ or $\error$ & $\false$ or $\error$ & $\false$ or $\error$ \\
      $\true$  & $\false$ & $\false$        & $\false$ or $\error$ & $\false$ or $\error$ & $\false$ or $\error$ & $\false$             & $\false$ or $\error$ & $\false$ or $\error$ \\
      $\true$  & $\error$ & \hlcell$\error$ & $\false$ or $\error$ & $\true$              & \hlcell$\true$       & $\false$ or $\error$ & $\false$ or $\error$ & \hlcell$\true$ \\
      $\false$ & $\true$  & $\false$        & $\false$ or $\error$ & $\false$ or $\error$ & $\false$             & $\false$ or $\error$ & $\false$ or $\error$ & $\false$ or $\error$ \\
      $\false$ & $\false$ & $\false$        & $\false$ or $\error$ & $\false$ or $\error$ & $\false$             & $\false$             & $\false$ or $\error$ & $\false$ or $\error$ \\
      $\false$ & $\error$ & $\false$        & $\false$ or $\error$ & $\true$              & $\false$             & $\false$ or $\error$ & $\false$ or $\error$ & $\false$ or $\error$ \\
      $\error$ & $\true$  & \hlcell$\error$ & $\true$              & $\false$ or $\error$ & $\false$ or $\error$ & \hlcell$\true$       & $\false$ or $\error$ & \hlcell$\true$ \\
      $\error$ & $\false$ & $\false$        & $\true$              & $\false$ or $\error$ & $\false$ or $\error$ & $\false$             & $\false$ or $\error$ & $\false$ or $\error$ \\
      $\error$ & $\error$ & \hlcell$\error$ & $\true$              & $\true$              & $\error$             & $\error$             & \hlcell$\true$       & \hlcell$\true$ \\
      \bottomrule
    \end{tabular}
  \end{adjustbox}
\end{table*}

\begin{table*}
  \caption{Truth values for the error formula of a disjunction.  According to Definition~\ref{def:error-formula}, given a formula $\varphi$ of the form $\varphi_1 \lor \varphi_2$, the formula $\Error(\varphi)$ is the formula $\psi_1 \lor \psi_2 \lor \psi_3$ where $\psi_1$ is the formula $(\neg\varphi_1 \land \Error(\varphi_2))$, $\psi_2$ is the formula $(\Error(\varphi_1) \land \neg\varphi_2)$, and $\psi_3$ is the formula $(\Error(\varphi_1) \land \Error(\varphi_2))$.  Given an arbitrary mapping $\mu$, this table shows the possible truth values for formulas $\varphi$, $\Error(\varphi)$, and its components.}
  \label{table:truth-values-error-disjunction}
  \centering
  \begin{adjustbox}{max width=\textwidth}
    \begin{tabular}{ccc@{\hskip 3em}cc@{\hskip 3em}ccc@{\hskip 3em}c}
      \toprule
      $\mu(\varphi_1)$
      & $\mu(\varphi_2)$
      & $\mu(\varphi)$
      & $\mu(\Error(\varphi_1))$
      & $\mu(\Error(\varphi_2))$
      & $\mu(\psi_1)$
      & $\mu(\psi_2)$
      & $\mu(\psi_3)$
      & $\mu(\Error(\varphi))$ \\
      \midrule
      $\true$  & $\true$  & $\true$         & $\false$ or $\error$ & $\false$ or $\error$ & $\false$             & $\false$             & $\false$ or $\error$ & $\false$ or $\error$ \\
      $\true$  & $\false$ & $\true$         & $\false$ or $\error$ & $\false$ or $\error$ & $\false$             & $\false$ or $\error$ & $\false$ or $\error$ & $\false$ or $\error$ \\
      $\true$  & $\error$ & $\true$         & $\false$ or $\error$ & $\true$              & $\false$             & $\false$ or $\error$ & $\false$ or $\error$ & $\false$ or $\error$ \\
      $\false$ & $\true$  & $\true$         & $\false$ or $\error$ & $\false$ or $\error$ & $\false$ or $\error$ & $\false$             & $\false$ or $\error$ & $\false$ or $\error$ \\
      $\false$ & $\false$ & $\false$        & $\false$ or $\error$ & $\false$ or $\error$ & $\false$ or $\error$ & $\false$ or $\error$ & $\false$ or $\error$ & $\false$ or $\error$ \\
      $\false$ & $\error$ & \hlcell$\error$ & $\false$ or $\error$ & $\true$              & \hlcell$\true$       & $\false$ or $\error$ & $\false$ or $\error$ & \hlcell$\true$       \\
      $\error$ & $\true$  & $\true$         & $\true$              & $\false$ or $\error$ & $\false$ or $\error$ & $\false$             & $\false$ or $\error$ & $\false$ or $\error$ \\
      $\error$ & $\false$ & \hlcell$\error$ & $\true$              & $\false$ or $\error$ & $\false$ or $\error$ & \hlcell$\true$       & $\false$ or $\error$ & \hlcell$\true$       \\
      $\error$ & $\error$ & \hlcell$\error$ & $\true$              & $\true$              & $\error$             & $\error$             & \hlcell$\true$       & \hlcell$\true$       \\
      \bottomrule
    \end{tabular}
  \end{adjustbox}
\end{table*}

\begin{claim}
  \label{claim:error-formula}
  For every {\SPARQL} formula $\varphi$, the formula $\Error(\varphi)$ can be expressed as a formula of the form $\bigvee_{\psi \in C} \psi$ where $C$ is a non-empty set of conjunctions of formulas belonging to one of the following types:
  \begin{enumerate}
  \item
    positive or negative literals (i.e., formulas of the form $\false$, $\varX = a$, $\neg(\varX = a)$, $\neg(\varX = {\varY})$, $\bound(\varX)$, or $\neg\bound(\varX)$),
  \item
    formulas $\varphi'$, $\neg\varphi'$, or $\Error(\varphi')$ such that $\varphi'$ occurs in $\varphi$ and $\varphi'$ is strictly smaller than $\varphi$;
  \end{enumerate}
  and for every mapping $\mu$, $\mu(\varphi)=\error$ if and only if there exists a unique formula $\psi \in C$ for which $\mu(\psi)=\true$.
\end{claim}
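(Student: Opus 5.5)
Structural induction on $\varphi$, with the set $C$ read off directly from Definition~\ref{def:error-formula}: in every case $\Error(\varphi)$ is already a disjunction of conjunctions. We then check two things. First, each conjunct has one of the two admitted types. Second, the disjuncts are pairwise \emph{disjoint}, meaning no mapping makes two of them $\true$, and $\mu(\varphi)=\error$ iff one of them is $\true$. Disjointness turns ``some disjunct is true'' into ``a unique disjunct is true''. It is also exactly what the claim is used for later, through Lemma~\ref{disjunction-rewriting-simple}.

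\textbf{Base cases.} For $\bound(\varX)$, take $C=\{\false\}$. No mapping makes $\false$ true, and $\mu(\bound(\varX))$ is never $\error$, so both sides of the equivalence are false. For $\varX=a$, take $C=\{\neg\bound(\varX)\}$; by the semantics, $\mu(\varX=a)=\error$ iff $\varX\notin\dom(\mu)$. For $\varX=\varY$, $C$ consists of the three conjunctions $\neg\bound(\varX)\land\bound(\varY)$, $\bound(\varX)\land\neg\bound(\varY)$, and $\neg\bound(\varX)\land\neg\bound(\varY)$. Since $\bound$ is two-valued, these are the three mutually exclusive ways for at least one of the variables to be unbound, which is precisely when the equality evaluates to $\error$.

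\textbf{Inductive cases.} For $\neg\varphi_1$, take $C$ to be the singleton $\{\Error(\varphi_1)\}$. Its member is of type (2), and $\mu(\neg\varphi_1)=\error$ iff $\mu(\varphi_1)=\error$, which by the induction hypothesis (or directly) holds iff $\mu(\Error(\varphi_1))=\true$. For $\varphi_1\land\varphi_2$, take $C=\{\psi_1,\psi_2,\psi_3\}$ as in Table~\ref{table:truth-values-error-conjunction}. Each $\psi_i$ is a conjunction of subformulas of type (2). For $\varphi_1\lor\varphi_2$, take the $\psi_i$ of Table~\ref{table:truth-values-error-disjunction}; these conjoin $\neg\varphi_1$, $\neg\varphi_2$, $\Error(\varphi_1)$ and $\Error(\varphi_2)$, again all of type (2).

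The semantic part for $\land$ and $\lor$ is a case analysis over the nine combinations of $(\mu(\varphi_1),\mu(\varphi_2))$. The key fact, available inductively, is that $\mu(\Error(\varphi_i))=\true$ iff $\mu(\varphi_i)=\error$.
\begin{itemize}
\item For $\land$, $\psi_1$ is true exactly when the pair is $(\true,\error)$, $\psi_2$ exactly when it is $(\error,\true)$, and $\psi_3$ exactly when it is $(\error,\error)$.
\item For $\lor$, the corresponding pairs are $(\false,\error)$, $(\error,\false)$ and $(\error,\error)$.
\end{itemize}
These sets of pairs are pairwise disjoint, and their union is precisely the set of pairs for which Table~\ref{table:3-valued} yields $\error$. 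This gives both the equivalence and the uniqueness.

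\textbf{Main obstacle.} The delicate point is the three-valued status of the subformulas appearing inside each $\psi_i$. For instance, $\Error(\varphi_i)$ need not be two-valued: it is built from $\varphi_i$ itself, which may evaluate to $\error$. So we cannot argue as if the conjuncts were Boolean. We will instead prove, by the same induction and simultaneously, the strengthened statement that $\mu(\Error(\varphi))=\true$ iff $\mu(\varphi)=\error$. That is the only property of $\Error(\cdot)$ the argument needs. A conjunction is $\true$ iff all its conjuncts are $\true$, and that reading of conjunction is valid in the three-valued table. With it, the ``false or error'' entries of the two tables never make any $\psi_i$ true, and the case analysis goes through.
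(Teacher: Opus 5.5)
Your proof is correct and is essentially the paper's argument: structural induction with $C$ read off from Definition~\ref{def:error-formula}, the inductive fact $\mu(\Error(\varphi_i))=\true$ iff $\mu(\varphi_i)=\error$, and a nine-case analysis for $\land$ and $\lor$, which is what Tables~\ref{table:truth-values-error-conjunction} and~\ref{table:truth-values-error-disjunction} record. Pinning each $\psi_i$ to a single value pair is a cleaner way of reading those tables, and it states explicitly the disjointness the paper leaves implicit; note, though, that later sections use the equivalence itself (Lemma~\ref{lemma:about-error-formula}), not uniqueness within $C$.
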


\begin{proof}
  We next show this result by induction on the structure of the query.
  \begin{enumerate}
  \item
    If $\varphi$ has the form $\bound(\varX)$ then $\Error(\varphi)$ is the formula $\false$.  Formula $\varphi$ satisfies the claim.  Indeed, $C = \{\false\}$ and $\mu(\Error(\varphi))=\false$ for every mapping $\mu$ because formula $\varphi$ does not produce error.
    
  \item
    If $\varphi$ has the form $\varX = a$ then $\Error(\varphi)$ is the formula $\neg\bound(\varX)$.  Formula $\varphi$ satisfies the claim.  Indeed, $C = \{\neg\bound(\varX)\}$ and $\mu(\Error(\varphi))=\true$ if and only if variable $\varX$ is unbound in $\mu$, that is the unique case when formula $\varphi$ produces error.
    
  \item
    If $\varphi$ has the form $\varX = {\varY}$ then $\Error(\varphi)$ is the formula $\psi_1 \lor \psi_2 \lor \psi_3$ where $\psi_1$ is the formula $(\neg\bound(\varX) \land \bound(\varY))$, $\psi_2$ is the formula $(\bound(\varX) \land \neg\bound(\varY))$, and $\psi_3$ is the formula $(\neg\bound(\varX) \land \neg\bound(\varY))$.  Formula $\varphi$ satisfies the claim.  Indeed, $C = \{\psi_1, \psi_2, \psi_3\}$, and by construction, only one formula in $C$ can be true, and $\mu(\Error(\varphi))=\true$ if and only if $\mu(\varphi)=\error$.
    
  \item
    If $\varphi$ has the form $\neg\varphi_1$ then $\Error(\varphi)$ is the formula $\Error(\varphi_1)$.  In this case $C = \{\Error(\varphi_1)\}$.  By the induction hypothesis, $\mu(\Error(\varphi_1))=\true$ if and only if $\mu(\varphi_1)=\error$.  Because $\neg\error$ is $\error$, we conclude that $\mu(\Error(\varphi))=\true$ if and only if $\mu(\varphi)=\error$.  Hence, formula $\varphi$ satisfies the claim.
    
  \item
    If $\varphi$ has the form $\varphi_1 \land \varphi_2$ then $\Error(\varphi)$ is the formula $\psi_1 \lor \psi_2 \lor \psi_3$ where $\psi_1$ is the formula $(\varphi_1 \land \Error(\varphi_2))$, $\psi_2$ is the formula $(\Error(\varphi_1) \land \varphi_2)$, and $\psi_3$ is the formula $(\Error(\varphi_1) \land \Error(\varphi_2))$.  The validity of the claim for formula $\varphi$ is shown in Table~\ref{table:truth-values-error-conjunction}.  There are three cases where $\mu(\varphi)=\error$:
    \begin{enumerate}
    \item[Case ET:] $\mu(\varphi_1)=\error$ and $\mu(\varphi_2)=\true$, 
    \item[Case TE:] $\mu(\varphi_1)=\true$ and $\mu(\varphi_2)=\error$,
    \item[Case EE:] $\mu(\varphi_1)=\error$ and $\mu(\varphi_2)=\error$.
    \end{enumerate}
    
    Note that if $\mu(\varphi_1)=\error$ and $\mu(\varphi_2)=\false$ then $\mu(\varphi)=\false$ (because $\error \land \false$ is $\false$).

    The values in the remaining columns can be computed using the inductive hypothesis.  We next present case ET as an example.  The other cases follow the same reasoning.  In case ET, $\mu(\varphi_1)=\error$ and $\mu(\varphi_2)=\true$.  By the induction hypothesis, $\mu(\Error(\varphi_1))=\true$ and $\mu(\Error(\varphi_2))$ is either $\false$ or $\error$.
    \begin{enumerate}
    \item If $\mu(\Error(\varphi_2))=\false$ then:
      \begin{align*}
        \mu(\psi_1) &= \mu(\varphi_1) \land \mu(Error(\varphi_2)) \\[-5pt]
                    &= \error \land \false \\[-5pt]
                    &= \false. \\
        \mu(\psi_2) &= \mu(Error(\varphi_1)) \land \mu(\varphi_2) \\[-5pt]
                    &= \true \land \true \\[-5pt]
                    &= \true. \\
        \mu(\psi_3) &= \mu(Error(\varphi_1)) \land \mu(\Error(\varphi_2)) \\[-5pt]
                    &= \true \land \false \\[-5pt]
                    &= \false.
      \end{align*}
      Hence,
      $\mu(\Error(\varphi)) = \mu(\psi_1\lor\psi_2\lor\psi_3) = \false
      \lor \true \lor \false =\true$.

    \item If $\mu(\Error(\varphi_2))=\error$ then:
      \begin{align*}
        \mu(\psi_1) &= \mu(\varphi_1) \land \mu(Error(\varphi_2)) \\[-5pt]
                    &= \error \land \error \\[-5pt]
                    &= \error. \\
        \mu(\psi_2) &= \mu(Error(\varphi_1)) \land \mu(\varphi_2) \\[-5pt]
                    &= \true \land \true \\[-5pt]
                    &= \true. \\
        \mu(\psi_3) &= \mu(Error(\varphi_1)) \land \mu(\Error(\varphi_2)) \\[-5pt]
                    &= \true \land \error \\[-5pt]
                    &= \error.
      \end{align*}
      Hence,
      $\mu(\Error(\varphi)) = \mu(\psi_1\lor\psi_2\lor\psi_3) = \error
      \lor \true \lor \error =\true$.
    \end{enumerate}
    
  \item
    If $\varphi$ has the form $\varphi_1 \lor \varphi_2$ then the validity of the claim for formula $\varphi$ is shown in Table~\ref{table:truth-values-error-disjunction}, following the same reasoning as for the previous case where $\varphi$ is a conjunction $\varphi_1 \land \varphi_2$.
  \end{enumerate}  
\end{proof}

\newcommand{\fcomp}{\operatorname{comp}}

\subsection{Reduction of complex filter conditions}

\noindent
To prove the following claims, we introduce the notion of \emph{reduction} and \emph{reducible filter condition}.  Section~\ref{sec:sparql-to-md} presents three equivalences to transform a pattern with complex filter conditions into a pattern where all filter conditions are atomic. In this appendix, we show that these equivalences can be used to this end. For each equivalence $(P \FILTER \varphi) \equiv P'$, we define a function that maps the filter condition $\varphi$ to the set $\Sigma_\varphi$ of filter conditions in pattern~$P'$.

Consider the following equivalences:
\begin{align*}
  &(P \FILTER \psi_1 \land \psi_2) \equiv
    ((P \FILTER \psi_1) \FILTER \psi_2),\\
  &\begin{aligned}[t]
     (P \FILTER \psi_1 \lor \psi_2) \equiv\;
     &(P \FILTER \psi_1 \land \psi_2) \UNION \\[-5pt]
     &(P \FILTER \psi_1 \land \neg \psi_2) \UNION \\[-5pt]
     &(P \FILTER \neg \psi_1 \land \psi_2) \UNION \\[-5pt]
     &(P \FILTER \psi_1 \land \Error(\psi_2)) \UNION \\[-5pt]
     &(P \FILTER \Error(\psi_1) \land \psi_2) , \\
   \end{aligned}\\
  &(P \FILTER \neg\psi)
    \equiv ((P \EXCEPT\, (P \FILTER \psi)) \EXCEPT\, (P \FILTER
    \Error(\psi))).
\end{align*}
These three equivalences define the following functions, called \emph{reduction rules}:
\begin{align*}
  f_{\land}(\varphi)
  &= \left\{
    \begin{array}{ll}
      \{\psi_1,\,\psi_2\}
      &\quad\text{if }\varphi\text{ has the form }\psi_1 \land \psi_2,\\
      \{\varphi\}
      &\quad\text{otherwise};
    \end{array}
    \right.\\
  f_{\lor}(\varphi)
  &= \left\{
    \begin{array}{ll}
      \begin{aligned}[t]
        \{ &\psi_1\land\psi_2,\,
        \psi_1\land\neg\psi_2,\,
        \psi_1\land\Error(\psi_2),\\
        &\neg\psi_1\land\psi_2,\,
        \Error(\psi_1)\land\psi_2 \}
      \end{aligned}
      &\quad\text{if }\varphi\text{ has the form }\psi_1 \lor \psi_2,\\
      \{\varphi\}
      &\quad\text{otherwise};
    \end{array}
    \right.\\
  f_{\neg}(\varphi)
  &= \left\{
    \begin{array}{ll}
      \{\psi,\,\Error(\psi)\}
      &\quad\text{if }\varphi\text{ has the form }\neg \psi,\\
      \{\varphi\}
      &\quad\text{otherwise};
    \end{array}
    \right.
\end{align*}
Note that if the filter condition $\varphi$ does not have the form of filter condition on the left side of the identity, we return the set $\{\varphi\}$. This captures the fact that the equivalence cannot be applied to reduce filter condition $\varphi$.

For convenience, we also define the reduction function that eliminates atomic formulas $f_{\circ}$ and a reduction that composes $f_{\lor}$ with $f_{\land}$, called $f_{\lor\land}$.
\begin{align*}
  f_{\circ}(\varphi)
  &= \left\{
    \begin{array}{cl}
      \{\varphi\} & \quad\text{if }\varphi\text{ is a complex filter condition},\\
      \emptyset & \quad\text{if }\varphi\text{ is an atomic filter condition};
    \end{array}
  \right.\\
  f_{\lor\land}(\varphi)
  &= \left\{
    \begin{array}{ll}
      \{
      \psi_1,\psi_2,\,
      \neg\psi_1,\,\neg\psi_2,\,
      \Error(\psi_1),\,\Error(\psi_2)
      \}
      &\quad\text{if }\varphi\text{ has the form }\psi_1 \lor \psi_2,\\
      \{\varphi\}
      &\quad\text{otherwise}.
    \end{array}
    \right.
\end{align*}
For $r \in \{\land,\lor,\neg,\circ,\lor\land\}$, let $F_r$ be the function that receives a set of filter conditions $\Sigma$ and returns the set of filter conditions $F_r(\Sigma)=\bigcup_{\varphi \in \Sigma}f_r(\varphi)$. Given two sets of filter conditions $\Sigma_1$ and $\Sigma_2$ we write $\Sigma_1 \xrightarrow{r} \Sigma_2$ if $F_r(\Sigma_1)=\Sigma_2$. In this case, we say that $\Sigma_1 \xrightarrow{r} \Sigma_2$ is a \emph{reduction}. We said that a filter condition $\varphi$ is reducible if there is a finite sequence of reductions $\{\varphi\}\xrightarrow{r_1} \Sigma_1\xrightarrow{r_2} \cdots \xrightarrow{r_n} \emptyset$. Intuitively, reductions are applied until all complex filter conditions are eliminated.  It is not difficult to see that we can apply the aforementioned equivalences to transform every pattern $P_1$ to a pattern $P_2$ with no complex formulas if and only if every filter condition $\varphi$ is reducible.

We next prove that every filter condition is reducible by induction on the structure of the filter condition. For this induction, we define the components of a filter condition $\varphi$, denoted $\fcomp(\varphi)$, to be the set of filter conditions defined as follows: If $\varphi$ is atomic, then $\fcomp(\varphi)=\emptyset$; if $\varphi=\psi_1 \lor \psi_2$ or $\varphi=\psi_1 \land \psi_2$, then $\fcomp(\varphi)=\{\psi_1,\psi_2\}\cup\fcomp(\psi_1)\cup\fcomp(\psi_2)$; and if $\varphi=\neg\psi$ then $\fcomp(\varphi)=\{\psi\}\cup\fcomp(\psi)$.

\begin{claim}
  \label{claim:reduction-complex-formula}
  Every filter condition $\varphi$ is reducible.
\end{claim}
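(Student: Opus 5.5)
We prove a stronger statement by structural induction on $\varphi$. The statement we aim for is: every filter condition $\varphi$ is reducible, and so is every filter condition $\Error(\varphi)$. The second half is needed because the reductions $f_{\lor}$, $f_{\lor\land}$ and $f_{\neg}$ introduce formulas of the form $\Error(\psi)$, and these are not structurally smaller than $\varphi$ in the naive sense.

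First we record two easy facts about reducibility.
\begin{enumerate}
\item Since each $F_r$ acts pointwise on a set, a finite union of reducible filter conditions is reducible. Take a reduction sequence for each member and interleave them; any $f_r$ that does not apply returns its argument unchanged, so the extra steps are harmless. To align lengths, we pad shorter sequences with $\xrightarrow{\circ}$ steps once they reach $\emptyset$.
\item $\{\varphi\}\xrightarrow{\circ}\emptyset$ for atomic $\varphi$, and also for $\false$ (treated as atomic).
\end{enumerate}
Together these reduce the claim, for each $\varphi$, to showing that one reduction step sends $\{\varphi\}$ to a set all of whose members are reducible by the induction hypothesis.

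The inductive cases go as follows.
\begin{itemize}
\item If $\varphi=\psi_1\land\psi_2$, then $f_{\land}$ yields $\{\psi_1,\psi_2\}$, which is reducible by the induction hypothesis.
\item If $\varphi=\neg\psi$, then $f_{\neg}$ yields $\{\psi,\Error(\psi)\}$. Both are reducible by the strengthened induction hypothesis on $\psi$.
\item If $\varphi=\psi_1\lor\psi_2$, we apply $f_{\lor\land}$. Conceptually this is $f_{\lor}$ followed by $f_{\land}$ and $f_{\neg}$ on the resulting conjunctions and negations. It yields $\{\psi_1,\psi_2,\neg\psi_1,\neg\psi_2,\Error(\psi_1),\Error(\psi_2)\}$. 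Each $\neg\psi_i$ reduces in one $f_{\neg}$ step to $\{\psi_i,\Error(\psi_i)\}$, so everything is again covered by the induction hypothesis on $\psi_1$ and $\psi_2$.
\end{itemize}

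It remains to prove the second half: $\Error(\varphi)$ is reducible whenever the components of $\varphi$ and their error formulas are reducible. Here we use Definition~\ref{def:error-formula} directly.
\begin{itemize}
\item For atomic $\varphi$, $\Error(\varphi)$ is a Boolean combination of $\bound$ atoms. Applying $f_{\lor\land}$, $f_{\land}$ and $f_{\neg}$ a bounded number of times reaches atoms, and $\Error$ of a $\bound$ atom is $\false$.
\item For $\varphi=\neg\psi$, we have $\Error(\varphi)=\Error(\psi)$, which is reducible by the induction hypothesis.
\item For $\varphi=\psi_1\land\psi_2$ or $\psi_1\lor\psi_2$, $\Error(\varphi)$ is a disjunction of three conjunctions. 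Its literals are drawn from $\psi_i$, $\neg\psi_i$ and $\Error(\psi_i)$. Repeated $f_{\lor\land}$ and $f_{\land}$ steps break it into a set containing these literals and their negations and error formulas: $\neg\Error(\psi_i)$, $\Error(\neg\psi_i)=\Error(\psi_i)$, and $\Error(\Error(\psi_i))$.
\end{itemize}

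The main obstacle is this last point. The terms $\Error(\Error(\psi_i))$ and $\Error$ of the intermediate conjunctions are generated, and these are not obviously smaller. We plan to handle them by proving an auxiliary lemma: for every $\varphi$, $\Error(\varphi)$ never evaluates to $\error$. This follows from Claim~\ref{claim:error-formula}, or directly by induction using the truth tables.

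Given the lemma, we may replace $\Error(\Error(\psi))$ by $\false$ without changing any pattern's semantics. The error formulas produced during the reduction of $\Error(\varphi)$ then collapse to $\false$ or to Boolean combinations over $\fcomp(\varphi)$. We therefore run the whole induction on the well-founded measure $|\fcomp(\varphi)|$ together with formula size, rather than on raw syntax. The induction hypothesis then covers every formula built from strict components of $\varphi$ and their error formulas, and this closes the argument.
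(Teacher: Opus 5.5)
Your strengthening (prove that both $\varphi$ and $\Error(\varphi)$ are reducible) and your cases for $\varphi$ itself match the paper's proof. The gap is exactly where you flagged it, and choosing a different measure will not close it. With the literal Definition~\ref{def:error-formula}, reducing $\Error(\varphi)$ never terminates, even after replacing $\Error(\Error(\psi))$ by $\false$. The cause is an identity that holds for every $\chi$ once $\Error(\false)=\false$ (the only sensible value, and the one your replacement gives when $\false$ arises as $\Error(\bound(\cdot))$):
\[ \Error(\chi\land\false) = (\chi\land\false)\lor(\Error(\chi)\land\false)\lor(\Error(\chi)\land\false). \]
Its first disjunct is again $\chi\land\false$. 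A disjunction can only be removed by $f_\lor$ or $f_{\lor\land}$, and both eventually add the error formula of each disjunct to the set. So $\Error(\chi\land\false)$ keeps regenerating itself, and no finite sequence reaches $\emptyset$.

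Such formulas appear at once. In your ``atomic'' case, $\Error(\varX=\varY)$ has the disjunct $\neg\bound(\varX)\land\bound(\varY)$. Its error formula is $(\neg\bound(\varX)\land\false)\lor(\false\land\bound(\varY))\lor(\false\land\false)$, which contains $\neg\bound(\varX)\land\false$. So under your conventions $\neg(\varX=\varY)$ is not reducible. In the compound cases, the disjunct $\neg\psi_1\land\Error(\psi_2)$ of $\Error(\psi_1\lor\psi_2)$ leads, after your replacement, to $\Error(\neg\psi_1\land\false)$. Likewise $\psi_1\land\Error(\psi_2)$ in $\Error(\psi_1\land\psi_2)$ leads to $\Error(\psi_1\land\false)$.

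This is why your sentence ``the induction hypothesis then covers every formula built from strict components of $\varphi$ and their error formulas'' fails. For such a Boolean combination $\chi$ you also need $\Error(\chi)$. That is a larger combination of the same kind, and it can coincide with a formula you are already reducing.

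The paper avoids this differently. In its proof it uses $\neg\bound(\varX)\lor\neg\bound(\varY)$, $\Error(\psi_1)\lor\Error(\psi_2)$ and $\Error(\psi_1)\land\Error(\psi_2)$ in place of $\Error(\varX=\varY)$, $\Error(\psi_1\land\psi_2)$ and $\Error(\psi_1\lor\psi_2)$. The only new formulas are then $\Error(\psi_i)$, $\neg\Error(\psi_i)$ and $\Error(\Error(\psi_i))$ (set to $\false$), so the strengthened hypothesis closes immediately. You cannot simply copy those shapes. None of them is literally what Definition~\ref{def:error-formula} produces, and the last two are not even equivalent to it: take $\mu(\psi_1)=\false$ and $\mu(\psi_2)=\error$.

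Making your plan work needs a much stronger simplification than replacing $\Error(\Error(\psi))$ by $\false$. For example, replace $\Error(\chi)$ by $\false$ for every $\chi$ that can never evaluate to $\error$. Your two-valuedness lemma makes this cover $\Error(\psi)$; it also covers $\chi\land\false$ and $\neg\bound(\cdot)$. You would still need a real termination argument, and the result is reducibility modulo semantic rewriting, not the literal notion.

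Alternatively, for what the claim is used for (Lemma~\ref{formula-reduction}), you can drop $\Error$ entirely. Use a simultaneous induction that expresses, with atomic filters only, both the solutions of $P$ where $\varphi$ is $\true$ and those where it is $\false$. This uses two facts:
\begin{itemize}
\item $\varphi_1\lor\varphi_2$ is $\true$ iff some disjunct is; this is realised disjointly as $P_1\UNION(P_2\EXCEPT P_1)$, where $P_i$ is the $\true$-part for $\varphi_i$.
\item $\neg\varphi$ is $\true$ iff $\varphi$ is $\false$.
\end{itemize}

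Separately, your interleaving argument for unions of reducible sets is too quick, because $F_r$ rewrites every member at once. That point is repairable.
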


\begin{proof}
  We prove this by induction using the following hypothesis: if $\varphi$ is a filter condition where for each filter condition $\psi \in \fcomp(\varphi)$, $\psi$ and $\Error(\psi)$ are reducible, then the filter conditions $\varphi$ and $\Error(\varphi)$ are reducible.  
  \begin{enumerate}
  \item If $\varphi$ is $\bound(\varX)$ then
    \begin{align*}
      &\{\varphi\} \xrightarrow{\circ} \emptyset,\\
      &\{\Error(\varphi)\}
      = \{\false\}
      \xrightarrow{\circ}\emptyset.
    \end{align*}
  \item If $\varphi$ is $\varX=a$ then
    \begin{align*}
      &\{\varphi\} \xrightarrow{\circ} \emptyset,\\
      &\{\Error(\varphi)\}
      = \{\neg\bound(\varX)\}
      \xrightarrow{\neg} \{\bound(\varX),\;\Error(\neg\bound(\varX))\}
      = \{\bound(\varX),\;\false\}
      \xrightarrow{\circ} \emptyset.
    \end{align*}
  \item If $\varphi$ is $\varX=\varY$ then
    \begin{align*}
      &\{\varphi\} \xrightarrow{\circ} \emptyset,\\
      &\{\Error(\varphi)\}
        \begin{aligned}[t]
          &= \{\neg\bound(\varX) \lor \neg\bound(\varY)\}\\
          &\xrightarrow{\lor\land}
            \begin{aligned}[t]
              \{&\bound(\varX),\;\bound(\varY),\;\neg\bound(\varX),\;\neg\bound(\varY),\\
                &\Error(\neg\bound(\varX)),\;\Error(\neg\bound(\varY))\}\\
            \end{aligned}\\
          &=
            \{\bound(\varX),\;\bound(\varY),\;\neg\bound(\varX),\;\neg\bound(\varY),\;false\}\\
          &\xrightarrow{\circ}
            \{\neg\bound(\varX),\;\neg\bound(\varY)\}\\
          &\xrightarrow{\neg}
            \{\bound(\varX),\;\Error(\bound(\varX)),\;\bound(\varY),\;\Error(\bound(\varY))\}\\
          &=
            \{\bound(\varX),\;\false,\;\bound(\varY),\;\false\}\\
          &\xrightarrow{\circ} \emptyset.
        \end{aligned}
    \end{align*}
  \item If $\varphi$ is $\neg\psi_1$ then
    \begin{align*}
      &\{\varphi\} \xrightarrow{\neg} \{\psi,\;\Error(\psi)\},\\
      &\{\Error(\varphi)\} = \{\Error(\psi)\}.
    \end{align*}
    Since $\psi \in \fcomp(\varphi)$ and by inductive hypothesis, the filter conditions $\psi$ and $\Error(\psi)$ are reducible. Hence, the filter conditions $\varphi$ and $\Error(\varphi)$ are reducible.
  \item If $\varphi$ is $\psi_1\land\psi_2$ then
    \begin{align*}
      &\{\varphi\} \xrightarrow{\land} \{\psi_1,\;\psi_2\},\\
      &\{\Error(\varphi)\}
        \begin{aligned}[t]
          &= \{\Error(\psi) \lor \Error(\psi_2)\}\\
          &\xrightarrow{\lor\land} \{
            \begin{aligned}[t]
              &\Error(\psi_1),\; \Error(\psi_2),\\
              &\neg\Error(\psi_1),\;\neg\Error(\psi_2),\\
              &\Error(\Error(\psi_1)),\; \Error(\Error(\psi_2))\}
            \end{aligned}\\
          &= \{
            \Error(\psi_1) ,\; \Error(\psi_2),\; \neg\Error(\psi_1),\;\neg\Error(\psi_2),\;\false\}.
        \end{aligned}
    \end{align*}
    Since $\psi_1,\psi_2 \in \fcomp(\varphi)$ and by the induction hypothesis, the filter conditions $\Error(psi_1)$ and $\Error(\psi_2)$ are reducible. To show that $\varphi$ is reducible, we have to show that $\neg\Error(\psi_1)$ and $\neg\Error(\psi_2)$ are reducible.
    \begin{align*}
      &\{\neg\Error(\psi_1)\}
          \xrightarrow{\neg} \{\Error(\psi),\; \Error(\Error(\psi_2))\}
          = \{\Error(\psi),\; false\}.
    \end{align*}
    By the induction hypothesis, $\Error(\psi)$ is reducible. Hence, $\neg\Error(\psi_1)$ is reducible. Similarly, $\neg\Error(\psi_1)$ is reducible. Then, $\Error(\varphi)$ is reducible.

  \item
    Let $\varphi$ be $\psi_1 \lor \psi_2$. First, we show that $\varphi$ is reducible.
    \begin{align*}
      &\{\varphi\} \xrightarrow{\lor\land}
        \{\psi_1,\psi_2,\neg\psi_1,\neg\psi_2,\Error(\psi_1),\Error(\psi_2)\}
    \end{align*}
    By the induction hypothesis on $\psi_1$ and $\psi_2$, $\psi_1$, $\psi_2$, $\Error(\psi_1)$, and $\Error(\psi_2)$ are reducible. To prove that $\varphi$ is reducible, suffices to prove that $\neg\psi_1$ and $\neg\psi_2$ are reducible.
    \begin{align*}
      \{\neg\psi_1\} \xrightarrow{\neg} \{\psi_1, \Error(\psi_1)\}.
    \end{align*}
    By the induction hypothesis in $\psi_1$, $\psi_1$ and $\Error(\psi_1)$ are reducible. Hence, $\neg\psi_1$ is reducible. Similarly, $\neg\psi_2$ is reducible. Hence, $\varphi$ is reducible.

    Second, we show that $\Error(\varphi)$ is reducible.
    \begin{align*}
      \{\Error(\varphi)\} = \{\Error(\psi_1) \land \Error(\psi_2)\}
      \xrightarrow{\land} \{\Error(\psi_1),\; \Error(\psi_2)\}.
    \end{align*}
    By the induction hypothesis in $\psi_1$ and $\psi_2$, $\Error(\psi_1)$ and $\Error(\psi_2)$ are reducible. Hence, $\Error(\varphi)$ is reducible.
  \end{enumerate}
  Hence, for every filter condition $\varphi$, the filter conditions $\varphi$ and $\Error(\varphi)$ are reducible.
\end{proof}

\subsection{Normalization of {\NRMDatalogSN} queries}

\tikzset{Rule/.style={fill=white, font=\scriptsize}}

\begin{claim}[Normalized {\NRMDatalogSN}]
  \label{claim:normalized-md}
  Let $(p(\bar{X}),\Pi)$ be a {\NRMDatalogSN} query, and $R$ be a rule in $\Pi$ with form \[p(\bar{X}) \gets A_1,\dots,A_m, \neg B_1,\dots,\neg B_n,\] where $A_1,\dots,A_m$ are positive literals, and $\neg B_{1},\dots,\neg B_n$ are negative literals. For $1 \leq i \leq m$, let $\bar{Y}_i$ be the set of variables that consists of the variables atoms $A_1,\dots,A_i$. Consider the minimal set of rules $\Pi_R$ that includes the following rules:
  \begin{enumerate}
  \item
    Rules $R_i^A$, for $2 \leq i \leq m$, defined recursively as follows:
    \begin{enumerate}
    \item
      $R_2^A = q_2^A(\bar{Y}_2) \gets A_1,A_2$.
    \item
      $R_i^A = q_i^A(\bar{Y}_{i}) \gets q_{i-1}^A(\bar{Y}_{i-1}),A_i$.
    \end{enumerate}
  \item
    Rules $R_j^B$  for $1 \leq j \leq n$,
    defined recursively as follows:
    \begin{enumerate}
    \item
      $R_0^B=  r_0^B(\bar{Y}_m) \gets q_m^A(\bar{Y}_m)$,
    \item $R_j^B = r_j^B(\bar{Y}_m) \gets r_{j-1}^B(\bar{Y}_m), \neg B'_j(\bar{Y}_m)$,
    \item $R_j^{B'} =  B'_j(\bar{Y}_m) \gets r^B_{j-1}(\bar{Y}_m), B_j$ . 
    \end{enumerate}
  \item
    A rule $R' = p(\bar{X}) \gets r_n^B(\bar{Y}_m).$
  \end{enumerate}
  The {\NRMDatalogSN} query $(p(\bar{X}),\Pi')$ that results from replacing rule $R$ in query $(p(\bar{X}),\Pi)$ with the rules in $\Pi_R$ is equivalent to query $(p(\bar{X}),\Pi)$.
\end{claim}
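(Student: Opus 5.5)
The argument relates proofs in $\dt(\Pi,D)$ to proofs in $\dt(\Pi',D)$ by an explicit bijection that preserves the root label for every predicate of the original program. Since the predicates $q_i^A$, $r_j^B$, $B'_j$ are fresh and occur nowhere in $\Pi\setminus\{R\}$, the only rules producing $p$-facts that change are $R$ versus $R'$. I would proceed by induction along a topological order of the (acyclic) dependency graph of $\Pi$. The claim to establish is: for every predicate $s$ of $\Pi$ and every ground fact $F$ of $s$, the number of proofs of $F$ in $\dt(\Pi,D)$ equals the number in $\dt(\Pi',D)$. Extensional facts and rules other than $R$ are handled directly by the induction hypothesis, because each derivation tree of such a rule is determined by a tuple of subproofs of body atoms together with a check of negated atoms. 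Both the counts and the existence checks agree by the hypothesis. In particular, $\neg B$ holds in one program iff it holds in the other.

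For $R$, I would compute proof counts for the auxiliary predicates, substitution by substitution.

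\textbf{Positive chain.} By induction on $i$, for a ground tuple $\bar{c}$ over $\bar{Y}_i$, the number of proofs of $q_i^A(\bar{c})$ equals $\sum_{\theta}\prod_{k\le i}\#\mathrm{proofs}(\theta(A_k))$, where $\theta$ ranges over substitutions with domain $\bar{Y}_i$ and $\theta(\bar{Y}_i)=\bar{c}$. The key point is that $\bar{Y}_i$ contains all variables of $A_1,\dots,A_i$, so the head records the full substitution and no summation collapses distinct substitutions. Each derivation tree of $R_i^A$ is a pair consisting of a proof of $q_{i-1}^A(\theta(\bar{Y}_{i-1}))$ and a proof of $\theta(A_i)$.

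\textbf{Negation chain.} I would show that for each ground $\bar{c}$, the count of $r_j^B(\bar{c})$ equals the count of $q_m^A(\bar{c})$ if none of $\theta(B_1),\dots,\theta(B_j)$ has a proof, and is $0$ otherwise. Here safety of $R$ is used: $\var(B_k)\subseteq\bar{Y}_m$, so $\theta(B_j)$ is ground. Then $B'_j(\bar{c})$ has a proof iff $r_{j-1}^B(\bar{c})$ has one and $\theta(B_j)$ has one. Consequently $\neg B'_j(\bar{c})$ filters exactly the surviving substitutions, and the cardinality is inherited unchanged because negative literals contribute no subtrees.

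\textbf{Final projection.} $R'$ then gives proofs of $p(\bar{d})$ numbering $\sum_{\bar{c}\,:\,\bar{c}|_{\bar{X}}=\bar{d}}\#r_n^B(\bar{c})$. This is exactly the number of $R$-derivation trees with head $p(\bar{d})$, namely the sum over substitutions $\theta$ on $\bar{Y}_m$ satisfying the negations of $\prod_k\#\mathrm{proofs}(\theta(A_k))$.

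The degenerate cases $m=1$ (where $q_1^A$ should be read as $A_1$, or obtained via a projection rule) and $n=0$ I would treat separately.

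I expect the main obstacle to be making the counting rigorous inside the derivation-tree formalism, that is, turning ``number of derivation trees'' into an honest bijection between tuples of subtrees. One delicacy is that the same substitution yields distinct trees only through distinct subproofs. Another is the case where a positive atom $A_k$ repeats, which must still give a product count. I would handle these by proving a small lemma: for a rule with positive body $L_1,\dots,L_m$, the derivation trees with head $F$ correspond bijectively to pairs $(\theta,(\T_1,\dots,\T_m))$, where $\theta$ satisfies the negations and $\T_k$ proves $\theta(L_k)$. Given that lemma, all the steps above reduce to sum and product manipulations.
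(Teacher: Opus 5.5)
Your argument is correct, but it is organized differently from the paper's proof.

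\textbf{The paper's route.} It argues by induction on the pair $(m,n)$. For the positive atoms, it folds $A_1,A_2$ into $q_2^A$ and applies the hypothesis to the shorter rule $p(\bar X)\gets q_2^A(\bar Y_2),A_3,\dots,A_m$. For the negative atoms, it collapses everything before the last one into a single rule for an auxiliary predicate $t$ and keeps only the final negation step. In each case it then compares two small programs in isolation, $\{R\}$ against $\Pi_R$, using pictures of derivation trees. The cardinality step is a single sentence: the count is ``the product of the cardinalities of $\theta(A_i)$''.

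\textbf{Your route.} You compute closed-form proof counts for every auxiliary predicate ($q_i^A$, $r_j^B$, $B'_j$). You then lift the local equivalence to the whole program by induction along the dependency graph of $\Pi$.

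\textbf{What your route buys.} It makes explicit several points the paper leaves implicit or omits:
\begin{itemize}
\item The summation in $R'$ over substitutions that agree on $\bar X$. The paper's product is only the count for one full substitution.
\item The fact that negative literals add no subtrees, so $r_j^B$ simply inherits the count of $q_m^A$ or drops to $0$.
\item The case $m=1$, $n>0$. The paper's case split skips it, and $q_1^A$ is undefined there.
\item The context of $\Pi$. Replacing $R$ inside a larger program must not disturb other rules that use $p$ positively or negatively. Your induction also makes explicit that negation is checked against the complete set of proofs.
\item The one formalism-dependent fact, that trees of a rule correspond bijectively to tuples of subproofs. You isolate it as a lemma, whereas the paper uses it tacitly.
\end{itemize}

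\textbf{What the paper's route buys.} It is more modular: it reuses the claim itself as the induction hypothesis instead of computing invariants for each chain.
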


\begin{proof}
  We next prove this claim by induction on the numbers $m$, of positive literals, and $n$, of negative literals, in a rule $R$. The hypothesis of induction states that the query $(q(\bar{X}), \{R\})$ and its normalized query $(q(\bar{X}), \Pi_R)$ are equivalent. Since we assumed that every literal in the body of a rule must have at least one variable (see Section~\ref{sec:datalog}), to guarantee safeness, the body of the rule cannot include a negative literal without having at least a positive literal.

  \begin{enumerate}
  \item
    If $m = 1$ and $n = 0$, rule $R$ is already normalized because it is the projection rule $p(\bar{X}) \gets A_1$.
  \item
    If $m > 1$ and $n = 0$ then the normalization of rule $R$ consists of a set $\Pi_R$ of rules $R_i^A$, for $2 \leq i \leq m$, defined recursively as follows:
    \begin{align*}
      R_2^A &= q_2^A(\bar{Y}_2) \gets A_1,A_2, \\
      R_i^A &= q_i^A(\bar{Y}_{i}) \gets q_{i-1}^A(\bar{Y}_{i-1}),A_i &\text{for } 2 \leq i \leq m,\\
      R_0^B &= r_0(\bar{Y}_m) \gets q_n^A(\bar{Y}_m),\\
      R' &= p(\bar{X}) \gets r_0^B(\bar{Y}_m).
    \end{align*}
    By the induction hypothesis, the query $(p(\bar{X}), \{R_3^A,\dots,R_m^A, R_0^B, R'\})$ is equivalent to the query $(p(\bar{X}), \{R''\})$ where $R''$ is the rule $p(\bar{X}) \gets q_{2}^A(\bar{Y}_{2}), A_3, \dots, A_m$. Hence, the query $(p(\bar{X}), \Pi_R)$ is equivalent to the query $(p(\bar{X}), \{R_2^A, R''\})$.
    To show that these queries are equivalent to query $(p(\bar{X}), \{R\})$, we need to show that they have the same answers, and each answer has the same cardinality.

    Assume that a substitution $\theta$ is an answer to query $(p(\bar{X}), \{R\})$. Then, program $\{R\}$ has a derivation tree whose root is labeled with the ground literal $\theta(p(\bar{X}))$, has $m$ children labeled with the ground literals $\theta(A_i)$, for $1 \leq i \leq m$, and the edges from the root to the children are labeled with rule $R$, as is shown in Figure~\ref{fig:normalization-proof-positive} (on the left). Then, for $1 \leq i \leq m$, there is a derivation three whose root is labeled with the ground literal $\theta(A_i)$. The existence of the ground literals $\theta(A_i)$ as labels of derivation tree roots proves that the ground literal $\theta(p(\bar{X}))$ is inferred using the rules $R_2^A$ and $R''$, as is shown in the Figure~\ref{fig:normalization-proof-positive} (on the right). Then, if $\theta$ is an answer to query $(p(\bar{X}), \{R\})$ then $\theta$ is an answer to query $(p(\bar{X}), \{R_2^A, R''\})$. The same argument can be used in the contrary direction to prove that if $\theta$ is an answer to query $(p(\bar{X}), \{R_2^A, R''\})$ then $\theta$ is an answer to query $(p(\bar{X}), \{R\})$. Finally, the cardinality of $\theta(p(\bar{X}))$ is, for both queries, the product of the cardinalities of $\theta(A_i)$, for $1 \leq i \leq m$. Hence, both queries are equivalent.

  \begin{figure}[t]
    \centering
    \begin{tikzpicture}[sibling distance=1.3cm, level distance=1.5cm,
      edge from parent path={(\tikzparentnode) -- (\tikzchildnode.north)}]
      \node {$\theta(p(\bar{X}))$}
      child {
        node {$\theta(A_1)$}
        edge from parent node[Rule] {$R$}
      }
      child {
        node {$\theta(A_2)$}
        edge from parent node[Rule] {$R$}
      }
      child {
        node {$\theta(A_3)$}
        edge from parent node[Rule] {$R$}
      }
      child [draw=white] {
        node {$\dots$}
      }
      child {
        node {$\theta(A_m)$}
        edge from parent node[Rule] {$R$}
      }
      ;
    \end{tikzpicture}
    \hspace{2em}
    \begin{tikzpicture}[sibling distance=1.3cm, level distance=1.5cm,
      edge from parent path={(\tikzparentnode) -- (\tikzchildnode.north)}]
      \node {$\theta(p(\bar{X}))$}
      child {
        node {$\theta(q_2^A(\bar{Y}_2))$}
        child {
          node {$\theta(A_1)$}
          edge from parent node[Rule] {$R_2^A$}
        }
        child {
          node {$\theta(A_2)$}
          edge from parent node[Rule] {$R_2^A$}
        }
        edge from parent node[Rule] {$R''$}
      }
      child {
        node {$\theta(A_3)$}
        edge from parent node[Rule] {$R''$}
      }
      child [draw=white] {
        node {$\dots$}
      }
      child {
        node {$\theta(A_m)$}
        edge from parent node[Rule] {$R''$}
      }
      ;
    \end{tikzpicture}
    \caption{Derivation trees for the ground literal $\theta(p(\bar{X}))$ regarding query $(p(\bar{X}), \{R\})$ (on the left), and query $(p(\bar{X}), \{R_2^A, R''\})$ (on the right). The children of the nodes labeled with the positive ground literals $\theta(A_i)$ are omitted.}
    \label{fig:normalization-proof-positive}
  \end{figure}
    
  \item
    If $m > 1$ and $n > 0$ then the normalization of rule $R$ consists of a set $\Pi_R$ of rules $R_i^A$, for $2 \leq i \leq m$, defined recursively as follows:
    \begin{align*}
      R_2^A &= q_2^A(\bar{Y}_2) \gets A_1,A_2, \\
      R_i^A &= q_i^A(\bar{Y}_{i}) \gets q_{i-1}^A(\bar{Y}_{i-1}),A_i
            &\text{for } 2 \leq i \leq m,\\
      R_0^B &=  r_0^B(\bar{Y}_m) \gets q_m^A(\bar{Y}_m), \\
      R_j^B &= r_j^B(\bar{Y}_m) \gets r_{j-1}^B(\bar{Y}_m), \neg {B'}_j(\bar{Y}_m)
            &\text{for } 1 \leq j \leq n, \\
 R_j^{B'} &=  B'_j(\bar{Y}_m) \gets r^B_{j-1}(\bar{Y}_m), B_j,  \\    
      R' &= p(\bar{X}) \gets r_n^B(\bar{Y}_m).
    \end{align*}
    The rules above are equivalent to the following rules:
    \begin{align*}
      R_2^A &= q_2^A(\bar{Y}_2) \gets A_1,A_2, \\
      R_i^A &= q_i^A(\bar{Y}_{i}) \gets q_{i-1}^A(\bar{Y}_{i-1}),A_i
            &\text{for } 2 \leq i \leq m,\\
      R_0^B &=  r_0^B(\bar{Y}_m) \gets q_m^A(\bar{Y}_m), \\
      R_j^B &= r_j^B(\bar{Y}_m) \gets r_{j-1}^B(\bar{Y}_m), \neg B'_j(\bar{Y}_m)
            &\text{for } 1 \leq j \leq n - 1, \\
            R_j^{B'} &=  B'_j(\bar{Y}_m) \gets r^B_{j-1}(\bar{Y}_m), B_j,  \\  
      R_{\alpha} &= t(\bar{Y}_m) \gets r_{n-1}^B(\bar{Y}_m),\\
      R_{\beta} &= r_n^B(\bar{Y}_m) \gets t(\bar{Y}_m), \neg B_n,\\
      R'' &= p(\bar{X}) \gets r_n^B(\bar{Y}_m).
    \end{align*}
    By the induction hypothesis, the query $(t(\bar{Y}_m), (\Pi_R \cup \{R_{\alpha}\}) \setminus \{ R_n^B, R'\})$ is equivalent to the query $(t(\bar{Y}_m), \{R_{\gamma}\})$ where $R_\gamma$ is the rule $t(\bar{Y}_m) \gets A_1,\dots,A_m,B_1,\dots,B_{n-1}$. Hence, the query $(p(\bar{X}),\Pi_R)$ is equivalent to the query $(p(\bar{X}), \{R_{\gamma}, R_{\beta}, R''\})$ To show that these queries are equivalent to query $(p(\bar{X}), \{R\})$, we need to show that they have the same answers, and each answer has the same cardinalities.

    Assume that a substitution $\theta$ is an answer to query $(p(\bar{X}), \{R\})$. Then, program $\{R\}$ has a derivation tree whose root is labeled with the ground literal $\theta(p(\bar{X}))$, has $m$ children labeled with the ground literals $\theta(A_i)$, and $n$ children labeled with literals $\neg\theta(B_j)$, for $1 \leq i \leq m$ and $1 \leq j \leq n$, and the edges from the root to the children are labeled with rule $R$, as is shown in Figure~\ref{fig:normalization-proof-negative} (on the left). Then, for $1 \leq i \leq m$, there is a derivation three whose root is labeled with the ground literal $\theta(A_i)$, and for $1 \leq j \leq n$, there is no derivation three whose root is labeled with the ground literal $\theta(B_j)$.  The existence of the ground literals $\theta(A_i)$ and the non-existence of the ground literals $\theta(B_j)$ as labels of derivation tree roots prove that the ground literal $\theta(p(\bar{X}))$ is inferred using the rules $R_\gamma$, $R_\beta$, and $R''$, as is shown in the Figure~\ref{fig:normalization-proof-negative} (on the right). Then, if $\theta$ is an answer to query $(p(\bar{X}), \{R\})$ then $\theta$ is an answer to query $(p(\bar{X}), \{R_\gamma, R_\beta, R''\})$. The same argument can be used in the contrary direction to prove that if $\theta$ is an answer to query $(p(\bar{X}), \{R_\gamma, R_\beta, R''\})$ then $\theta$ is an answer to query $(p(\bar{X}), \{R\})$. Finally, the cardinality of $\theta(p(\bar{X}))$ is, for both queries, the product of the cardinalities of $\theta(A_i)$, for $1 \leq i \leq m$. Hence, both queries are equivalent.
  \end{enumerate}
  Hence, we have proved that the normalization method produces an equivalent {\NRMDatalogSN} query.
\end{proof}

  \begin{figure}[t]
    \centering
    \begin{tikzpicture}[sibling distance=1.2cm, level distance=1.5cm,
      edge from parent path={(\tikzparentnode) -- (\tikzchildnode.north)}]
      \node {$\theta(p(\bar{X}))$}
      child {
        node {$\theta(A_1)$}
        edge from parent node[Rule] {$R$}
      }
      child [draw=white] {
        node {$\dots$}
      }
      child {
        node {$\theta(A_m)$}
        edge from parent node[Rule] {$R$}
      }
      child {
        node {$\neg\theta(B_1)$}
        edge from parent node[Rule] {$R$}
      }
      child [draw=white] {
        node {$\dots$}
      }
      child {
        node {$\neg\theta(B_n)$}
        edge from parent node[Rule] {$R$}
      }
      ;
    \end{tikzpicture}
    \hfill
    \begin{tikzpicture}[sibling distance=1.2cm, level distance=1.5cm]
      \node {$\theta(p(\bar{X}))$}
      child {
        node {$\theta(r_n^B(\bar{Y}_m))$}
        child {
          node {$\theta(t(\bar{Y}_m))$}
          child {
            node {$\theta(A_1)$}
            edge from parent node[Rule] {$R_\gamma$}
          }
          child [draw=white] {
            node {$\dots$}
          }
          child {
            node {$\theta(A_m)$}
            edge from parent node[Rule] {$R_\gamma$}
          }
          child {
            node {$\neg\theta(B_1)$}
            edge from parent node[Rule] {$R_\gamma$}
          }
          child [draw=white] {
            node {$\dots$}
          }
          child {
            node {$\neg\theta(B_{n-1})$}
            edge from parent node[Rule] {$R_\gamma$}
          }
          edge from parent node[Rule] {$R_\beta$}
        }
        child[missing]
        child {
          node {$\neg\theta(B'_n(\bar{Y}_m))$}
          edge from parent node[Rule] {$R_\beta$}
        }
        edge from parent node[Rule] {$R''$}
      }
      ;
    \end{tikzpicture}
    \caption{Derivation trees for the ground atom $\theta(p(\bar{X}))$ regarding query $(p(\bar{X}), \{R\})$ (on the left), and query $(p(\bar{X}), \{R_2^A, R''\})$ (on the right). The children of the nodes labeled with the positive ground literals $\theta(A_i)$ are omitted. Nodes label with the negative ground literals $\neg\theta(B_j)$ have no children and do no derivation tree include the positive literal $\theta(B_j)$ as the root label.}
    \label{fig:normalization-proof-negative}
  \end{figure}

\subsection{Simulations between query languages}

\begin{claim}[{\SPARQL} to {\NRMDatalogSN}]
  \label{claim:sparql2md}
  The triple $(f_{12},g_{12},h_{21})$ is a simulation of {\SPARQL} in {\NRMDatalogSN}.
\end{claim}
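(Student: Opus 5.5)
We prove a stronger, structural invariant by induction on normalized patterns. By the normalization lemma and Lemma~\ref{formula-reduction}, we may assume $P$ is normalized and all filter conditions are atomic. Fix a graph $G$ and $D=g_{12}(G)$. For a mapping $\mu$ with $\dom(\mu)\subseteq\inScope(P)$, let $\theta_\mu$ be the substitution on $\bar{P}=\inScope(P)$ that agrees with $\mu$ on $\dom(\mu)$ and sends every other in-scope variable to $\bot$; thus $\NotNull(\theta_\mu)=\mu$. The invariant is:
\begin{itemize}
\item every fact $p(\bar c)$ derivable for the predicate $p$ introduced by $\delta(P)$ is of the form $\theta_\mu(p(\bar P))$ for some $\mu$;
\item for all such $\mu$, the number of proofs satisfies $\card(\theta_\mu(p(\bar P)),\ans(\delta(P),D))=\card(\mu,\ev{P}{G})$.
\end{itemize}
Since $\NotNull$ is injective on substitutions over the fixed domain $\bar P$, applying $h_{12}$ to the answer of $f_{12}(P)$ then yields exactly $\ev{P}{G}$. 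This is what the claim requires.

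Before the induction we record an auxiliary fact. Because predicates are fresh, a fact for $p$ is proved only by derivation trees whose root rule is one of the rules for $p$ in Table~\ref{table:pattern2rules}. Moreover, the fresh predicates of different subpatterns do not interact. It follows that the number of proofs of a head fact under a rule equals the sum, over body substitutions $\theta$ producing it, of the product of the proof counts of the positive body atoms. This holds provided the negated atoms are underivable. The extensional atoms $\triple$, $\equal$, $\term$, $\unull$ and $\comp$ each occur once in $D$, so they contribute factor $1$.

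The cases then go as follows.
\begin{itemize}
\item \textbf{Triple pattern.} The rule body is a single $\triple$ atom, whose facts are in bijection with the matching triples. No $\bot$ can appear, since $\bot\notin\terms(G)$.
\item \textbf{$\UNION$.} Normalization gives $\bar P_1=\bar P_2=\bar P$, so the proof counts add.
\item \textbf{$\EXCEPT$.} Normalization makes the negated atom range over the same variables, so $\theta_\mu(p_2(\bar P))$ is underivable iff $\mu\notin\ev{P_2}{G}$.
\item \textbf{$\FILTER$ with $x_1=x_2$ or $\bound$.} The facts $\equal(t,t)$ and $\term(t)$ exist only for genuine terms $t\neq\bot$. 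Hence these atoms hold exactly when $\mu$ satisfies the condition; in particular the result is $\error$/$\false$ precisely on $\bot$ positions. The case $?X=c$ is handled with the constant inside $\equal$.
\item \textbf{$\SELECT$.} The $\unull$ atoms pad the new variables with $\bot$. The projection sums the counts over all $\theta_{\mu}$ with the same restriction, matching the sum in the $\SELECT$ semantics. For this we use that $\theta_\mu$ restricted to $W\cap\inScope(P_1)$ equals $\theta_{\mu_{|W}}$ there.
\end{itemize}

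The main obstacle is the $\AAND$ case, which requires a bijection argument for the $\comp$ atoms. For each shared variable $X$, the facts $\comp(a,b,c)$ in $D$ are exactly the triples with $a=b=c\neq\bot$, or $a=\bot$ and $b=c$, or $b=\bot$ and $a=c$, or $a=b=c=\bot$. Therefore a pair $(\theta_{\mu_1},\theta_{\mu_2})$ extends to a body substitution iff $\mu_1\sim\mu_2$. That extension is unique, and it assigns to $X$ the value of $\theta_{\mu_1\cup\mu_2}$. For non-shared variables the value is taken directly from one side, padded with $\bot$ if unbound.

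Two points need care here:
\begin{itemize}
\item Distinct compatible pairs may produce the same $\mu$. Summing $\card(\mu_1)\times\card(\mu_2)$ over these pairs, with each $\comp$ fact contributing multiplicity $1$, reproduces the $\AAND$ cardinality formula.
\item Every term in $\mu_1,\mu_2$ must lie in $\terms(G)$ for the needed $\comp$ facts to exist. We add this to the invariant: all non-$\bot$ values of derivable facts come from $\terms(G)$. This is immediate by induction, since every value originates from a $\triple$ atom or from a constant in a filter that has already been matched.
\end{itemize}
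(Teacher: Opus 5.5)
Your proposal is correct and follows the same route as the paper. Both use a structural induction over normalized patterns with atomic filters, checking membership and cardinality case by case: triple pattern, $\AAND$ via the $\comp$ atoms, $\UNION$, $\EXCEPT$, $\FILTER$, and $\SELECT$. Your $\bot$-padded invariant $\theta_\mu$, the injectivity of $\NotNull$, and the side invariant that all non-$\bot$ values lie in $\terms(G)$ make explicit what the paper's proof uses implicitly. Your use of the extensional, multiplicity-one $\comp$, $\equal$ and $\term$ facts of $g_{12}$ is also more faithful to the construction than the appendix proof's ad hoc rules deriving $\comp$.
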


\begin{proof}
  To prove this claim, we show that, for every {\SPARQL} query $Q$ and RDF graph $G$, it holds that $\ev{Q}{G} = h_{21}(\ev{f_{12}(Q)}{g_{12}})$ by induction on the structure of a normalized {\SPARQL} query $Q$. In this proof, we assume that $\theta$ is a {\NRMDatalogSN} substitution for the variables of the {\NRMDatalogSN} query $f_{12}(Q)$, and $\mu$ is the {\SPARQL} mapping $h_{21}(\theta)$. To show then that $\ev{Q}{G} = h_{21}(\ev{f_{12}(Q)}{g_{12}})$, we have to prove that $\mu \in \ev{Q}{G}$ if and only if $\theta \in h_{21}(\ev{f_{12}(Q)}{g_{12}})$, and $\card(\mu, \ev{Q}{G}) = \card(\theta, h_{21}(\ev{f_{12}(Q)}{g_{12}}))$.
  \begin{enumerate}
  \item
    Let $Q$ be a triple pattern $(\varX, p, \varY)$. In this case, there is a corresponding version of the triple pattern as a {\NRMDatalogSN} literal $\triple(X, p, Y)$, where $X$ and $Y$ are the corresponding variables for $\varX$ and $\varY$. The {\NRMDatalogSN} query $f_{12}(Q)$ is then $(q(X,Y),\Pi)$ where $\Pi$ is the program with a rule $q(X,Y) \gets \triple(X, p, Y)$. Let $\theta$ be the {\NRMDatalogSN} substitution $\theta=(X/s,\,Y/o)$ and $\mu$ be the {\SPARQL} mapping $h_{21}(\theta) = \{\varX \mapsto s,\, \varY \mapsto o\}$.
    \begin{enumerate}
    \item
      According to the {\NRMDatalogSN} semantics, $\theta \in \ev{f_{12}(Q)}{g_{12}(G)}$ if and only if $\triple(s,p,o) \in g_{12}(G)$. By the definition function $g_{12}$, $\triple(s,p,o) \in g_{12}(G)$ if and only if $(s,p,o) \in G$. By the {\SPARQL} semantics, the {\SPARQL} mapping $\mu$ is in $\ev{Q}{G}$ if and only if $(s,p,o) \in G$. Hence, $\theta \in \ev{f_{12}(Q)}{g_{12}(G)}$ if and only if $\mu \in \ev{Q}{G}$.
    \item
      By construction, $\card(\theta, \ev{f_{12}(Q)}{g_{12}(G)})=1$ and $\card(\mu, \ev{Q}{G})=1$. Hence, $\card(\theta, \ev{f_{12}(Q)}{g_{12}(G)}) = \card(\mu, \ev{Q}{G})$.
    \end{enumerate}
    We have shown that we can simulate triple patterns of the form $(X,p,Y)$ with {\NRMDatalogSN} queries. However, it is not difficult to apply the same argument for the other forms of triple patterns (e.g., $(X,p,o)$ or $(s,X,Y)$). Hence, {\SPARQL} triple patterns are simulable with {\NRMDatalogSN}.
  \item
    Let $Q$ be a query $(P_1 \AAND P_2)$. Assume that $\inScope(P_1)=\{\varX,\varY\}$ and $\inScope(P_2)=\{\varX,\varZ\}$. The {\NRMDatalogSN} query $f_{12}(Q)$ is then $(q(X,Y,Z),\Pi)$ where $\Pi$ is the program that consists of the rules in the programs of queries $(p_1(X,Y),\Pi_1)=f_{12}(P_1)$ and $(p_2(X,Z),\Pi_2)=f_{12}(P_2)$, the rule
    \begin{align*}
      &q(X,Y,Z) \gets p_1(X_1,Y), p_2(X_2,Z), \comp(X_1,X_2,X)
    \end{align*}
    and the rules that define the compatibility between values (which may also included in $\Pi_1$ and $\Pi_2$)
    \begin{align*}
      &\comp(X,X,X) \gets \term(X)\\
      &\comp(X,Y,X) \gets \term(X),\unull(Y)\\
      &\comp(Y,X,X) \gets \term(X),\unull(Y)\\
      &\comp(Y,Y,Y) \gets \unull(Y).
    \end{align*}
    \begin{enumerate}
    \item
      If $\theta \in \ev{f_{12}(Q)}{g_{12}(G)}$ then, by the semantics of {\NRMDatalogSN}, there exists the {\NRMDatalogSN} solutions $\theta_1=\{X_1/a_1,Y/b\}$, $\theta_2=\{X_2/a_2,Z/c\}$, and
      $\theta_3=\{X_1/a_1,X_2/a_2,X/a\}$ such that
      \begin{align*}
        &\{X_1/a_1,Y/b\} \in \ev{(p_1(X_1,Y),\Pi_1)}{g_{12}(G)},\\
        &\{X_2/a_2,Z/c\} \in \ev{(p_2(X_2,Z),\Pi_2)}{g_{12}(G)},\\
        &\{X_1/a_1,X_2/a_2,X/a\} \in \ev{(\comp(X_1,X_2,X),\Pi)}{g_{12}(G)}.
      \end{align*}
      By the induction hypothesis in $P_1$ and $P_2$, $\theta_1 \in \{X_1/a_1,Y/b\} \in \ev{(p_1(X_1,Y),\Pi_1)}{g_{12}(G)}$ and $\theta_2 \in \{X_1/a_1,Y/b\} \in \ev{(p_2(X_1,Y),\Pi_2)}{g_{12}(G)}$ if and only if mappings $\mu_1 = h_{21}(\theta_1)$ and $\mu_2 = h_{21}(\theta_2)$ hold
      $\mu_1 \in \ev{P_1}{G}$ and $\mu_2 \in \ev{P_2}{G}$. By the
      rules defining $\comp$, it holds that $\mu_1\sim\mu_2$ and
      $\mu_1 \cup \mu_2 = \mu$. By the semantics of the {\SPARQL}
      operator $\AAND$, this it holds that $\mu \in \ev{Q}{G}$. Hence, $\theta \in \ev{f_{12}(Q)}{g_{12}(G)}$ if and only if $\mu \in \ev{Q}{G}$.
    \item
      By definition,
      \[\card(\mu, \ev{Q}{G})=
        \sum_{
          \substack{
            \mu_1 \in \ev{P_1}{G}\\
            \mu_2 \in \ev{P_2}{G}\\
            \mu_1 \sim \mu_2\\
            \mu=\mu_1 \cup \mu_2
          }
        }
        \card(\mu_1,\ev{P_1}{G}) \times
        \card(\mu_2,\ev{P_2}{G}).
      \]
      By the induction hypothesis,
      \[\card(\mu, \ev{Q}{G})=
        \hspace{-4em}
        \sum_{
          \substack{
            \{X_1/a_1,Y/b\} \in \ev{(p_1(X_1,Y),\Pi_1)}{g_{12}(G)}\\
            \{X_2/a_2,Z/c\} \in \ev{(p_2(X_2,Y),\Pi_2)}{g_{12}(G)}\\
            \{X_1/a_1,X_2/a_2,X/a\} \in \ev{(\comp(X_1,X_2,X),\Pi)}{g_{12}(G)}
          }
        } \hspace{-7.5em}
        \card(\{X_1/a_1,Y/b\},\ev{f_{12}(P_1)}{g_{12}(G)}) \times
        \card(\{X_2/a_2,Z/c\},\ev{f_{12}(P_1)}{g_{12}(G)}).
      \]
      By the semantics of {\NRMDatalogSN}, we conclude that
      $\card(\mu, \ev{Q}{G})=\card(\theta, \ev{f_{12}(Q)}{g_{12}(G)})$.
    \end{enumerate}
    We have shown that we can simulate queries of the form $(P_1 \AAND P_2)$, where $\inScope(P_1)=\{\varX,\varY\}$ and $\inScope(P_2)=\{\varX,\varZ\}$, with {\NRMDatalogSN} queries. However, it is not difficult to apply the same argument for queries where $P_1$ and $P_2$ have different sets of in-scope variables. Hence, {\SPARQL} queries of the form $(P_1 \AAND P_2)$ are simulable with {\NRMDatalogSN}.
  \item
    Let $Q$ be a query $(P_1 \EXCEPT P_2)$, and $?\bar{X}$ be the list of {\SPARQL} variables in set $\inScope(Q)$. The {\NRMDatalogSN} query $f_{12}(Q)$ is then $(q(\bar{X}), \Pi)$ where $\Pi$ is the program that consists of the rules in programs of queries $(p_1(\bar{X}),\Pi_1)=f_{12}(P_1)$ and $(p_2(\bar{X}),\Pi_2)=f_{12}(P_2)$, and the rule $q(\bar{X}) \gets p_1(\bar{X}), \neg p_2(\bar{X})$.
    \begin{enumerate}
    \item
      By the semantics of {\NRMDatalogSN}, $\theta \in \ev{f_{12}(Q)}{g_{12}(G)}$ if and only if $\theta \in \ev{f_{12}(P_1)}{g_{12}(G)}$ and $\theta \notin \ev{f_{12}(P_2)}{g_{12}(G)}$. By the induction hypothesis, the last condition is equivalent to $\mu \in \ev{P_1}{G}$ and $\mu \notin \ev{P_2}{G}$. By the {\SPARQL} semantics, this is equivalent to $\mu \in \ev{Q}{G}$.
    \item
      By definition, $\card(\mu, \ev{Q}{G})=\card(\mu, \ev{P_1}{G})$ and
      \[\card(\theta, \ev{f_{12}(Q)}{g_{12}(G)})=\card(\theta, \ev{f_{12}(P_1)}{g_{12}(G)}).\]
      By the induction hypothesis, $\card(\mu, \ev{P_1}{G})=\card(\theta, \ev{f_{12}(P_1)}{g_{12}(G)})$. Hence,
      \[\card(\mu, \ev{Q}{G})=\card(\theta, \ev{f_{12}(Q)}{g_{12}(G)}).\]
    \end{enumerate}
    Hence, {\SPARQL} queries of the form $(P_1 \EXCEPT P_2)$ are simulable with {\NRMDatalogSN}.
  \item
    Let $Q$ be a {\SPARQL} query $(P_1 \UNION P_2)$. The {\NRMDatalogSN} query $f_{12}(Q)$ is then $(q(\bar{X}),\Pi)$ where $\bar{X}$ is the list with the variables in set $\inScope(Q)$, and $\Pi$ is the program that consists of the rules in program of queries $(p_1(\bar{X}),\Pi_1)=f_{12}(P_1)$ and $(p_2(\bar{X}),\Pi_2)=f_{12}(P_2)$, and the rules that correspond the operation $\UNION$, namely $q(\bar{X}) \gets p_1(\bar{X})$ and $q(\bar{X}) \gets p_2(\bar{X})$.
    \begin{enumerate}
    \item
      By the {\NRMDatalogSN} semantics, $\theta$ is a solution of $(q(\bar{X}), \Pi)$ if and only if $\theta \in \ev{(p_1(\bar{X}), \Pi_1)}{g_{12}(G)}$ or $\theta \in \ev{(p_2(\bar{X}), \Pi_2)}{g_{12}(G)}$. By the induction hypothesis, this is equivalent to that $\mu \in \ev{P_1}{G}$ or $\mu \in \ev{P_1}{G}$. By the {\SPARQL} semantics, this is equivalent to $\mu \in \ev{Q}{G}$.
    \item
      By the {\NRMDatalogSN} semantics, $\card(\theta, \ev{f_{12}(Q)}{g_{12}(G)}) = \card(\theta, \ev{f_{12}(P_1)}{g_{12}(G)}) + \card(\theta, \ev{f_{12}(P_2)}{g_{12}(G)})$ and $\card(\theta, \ev{f_{12}(Q)}{g_{12}(G)}) = \card(\mu, \ev{P_1}{G}) + \card(\mu, \ev{P_2}{G})$.
      
      By the induction hypothesis, $\card(\theta, \ev{f_{12}(P_1)}{g_{12}(G)}) = \card(\mu, \ev{P_1}{G})$ and $\card(\theta, \ev{f_{12}(P_2)}{g_{12}(G)}) = \card(\mu, \ev{P_2}{G})$. Hence $\card(\theta, \ev{f_{12}(Q)}{g_{12}(G)}) = \card(\mu, \ev{Q}{G})$.
    \end{enumerate}
    Hence, {\SPARQL} queries of the form $(P_1 \UNION P_2)$ are simulable with {\NRMDatalogSN}.
  \item
    Let $Q$ be the {\SPARQL} query $(P \FILTER \varphi)$ where is an atomic filter condition (i.e., a filter condition of the form $\varX=c$, $\varX=\varY$, or $\bound(\varX)$), and $L_\varphi$ be a set of {\NRMDatalogSN} literals defined as follows:
    \[
      L_\varphi = \left\{
        \begin{array}{ll}
          X = c, \bound(X) & \quad\text{if }\varphi \text{ is } \varX=c, \\
          X = Y, \bound(X), \bound(Y) & \quad\text{if }\varphi \text{ is } \varX=\varY, \\
          \bound(X) & \quad\text{if } \varphi \text{ is } \bound(\varX).
        \end{array}
      \right.
    \]
    The {\NRMDatalogSN} query $f_{12}(Q)$ is then $(q(\bar{X}),\Pi)$ where $\bar{X}$ is the list with the variables in set $\inScope(Q)$, and $\Pi$ is the program that consists of the rules in program of query $(p(\bar{X}),\Pi')=f_{12}(P)$, and the rule that corresponds the operation $\FILTER$, namely rule $q(\bar{X}) \gets p(\bar{X}), L_\varphi$.
    \begin{enumerate}
    \item
      By the {\NRMDatalogSN} semantics, $\theta$ is a solution of $(q(\bar{X}), \Pi)$ if and only if $\theta \in \ev{(p(\bar{X}), \Pi')}{g_{12}(G)}$, and $\theta(L_\varphi) \subseteq g_{12}(G)$. By the induction hypothesis, $\theta \in \ev{(p(\bar{X}), \Pi')}{g_{12}(G)}$ is equivalent to $\mu \in \ev{P}{G}$. By construction, $\theta(L_\varphi) \subseteq \ans(\Pi', g_{12}(G))$ if and only if $\mu(\varphi)=\true$. By the {\SPARQL} semantics, this is equivalent to $\mu \in \ev{Q}{G}$.
    \item
      By construction, every fact in $\theta(L_\varphi)$ occurs once in $g_{12}(G)$. For each fact in $F \in \theta(L_\varphi)$ there is then only one proof that $F \in \ans(\Pi, g_{12}(G))$. Hence, $\card(\theta, \ev{f_{12}(Q)}{g_{12}(G)}) = \card(\theta, \ev{f_{12}(P)}{g_{12}(G)})$. By the induction hypothesis, $\card(\theta, \ev{f_{12}(P)}{g_{12}(G)}) = \card(\mu, \ev{P}{G})$. According to the {\SPARQL} semantics, $\card(\mu, \ev{P}{G}) = \card(\mu, \ev{Q}{G})$. Hence, $\card(\theta, \ev{f_{12}(Q)}{g_{12}(G)}) = \card(\mu, \ev{Q}{G})$.
    \end{enumerate}
    Hence, {\SPARQL} queries of the form $(P \FILTER \varphi)$ are simulable with {\NRMDatalogSN}.
  \item
    Let $Q$ be the {\SPARQL} query $(\SELECT \bar{X} \; P)$. The {\NRMDatalogSN} query $f_{12}(Q)$ is then $(q(\bar{X}), \Pi)$, where $\Pi$ is the program that consists of the rules in the program of query $(p(\bar{Y}, \Pi')=f_{12}(P)$, and the rule that corresponds to the operation projects, namely rule $q(\bar{X}) \gets p(\bar{Y}),\unull(x_1),\dots,\unull(x_n)$, where $x_1,\dots,x_n$ are the variables that are in $W$ but not in $\inScope(P_1)$.
    \begin{enumerate}
    \item
      By the {\NRMDatalogSN} semantics, $\theta$ is a solution of $(q(\bar{X}), \Pi)$ if and only if there exists a solution $\theta' \in \ev{(p(\bar{Y}, \Pi')}{g_{12}(G)}$ such that $\theta(x)=\theta'(x)$ if $x \in \inScope(Q) \cap \inScope(P)$. Let $\mu = h_{21}(\theta)$ and $\mu' = h_{21}(\theta')$. By construction $\mu = \mu'|_{\inScope(Q)}$. By the induction hypothesis, $\mu' \in \ev{P}{G}$. Hence, $\mu \in \ev{Q}{G}$.
    \item
      By construction, \[\card(\theta, \ev{f_{12}(Q)}{g_{12}(G)})=\sum_{\substack{\theta'|_{\inScope(Q)}=\theta \\ \theta' \in \ev{f_{12}(P)}{g_{12}(G)}}}\card(\theta', \ev{f_{12}(P)}{g_{12}(G)}).\]
      By the induction hypothesis,
      \[\card(\theta, \ev{f_{e1,2}(Q)}{g_{12}(G)})=\sum_{\substack{
            \theta'|_{\inScope(Q)}=\theta \\
            \mu' = h_{21}(\theta') \\
            \mu' \in \ev{P}{G}}}
        \card(\mu', \ev{P}{G}).\]
      By construction,
      \[\card(\theta, \ev{f_{e1,2}(Q)}{g_{12}(G)})=\sum_{\substack{
            \mu'|_{\inScope(Q)}=h_{21}(\theta) \\
            \mu' \in \ev{P}{G}}}
        \card(\mu', \ev{P}{G}).\]
      Hence, $\card(\theta, \ev{f_{12}(Q)}{g_{12}(G)}) = \card(h_{21}(\theta), \ev{Q}{G})$.
    \end{enumerate}
    Hence, {\SPARQL} queries of the form $(\SELECT~\bar{X}~P)$ are simulable with {\NRMDatalogSN}.
  \end{enumerate}
  Hence, the triple $(f_{12} , g_{12} , h_{21})$ is a simulation of {\SPARQL} in {\NRMDatalogSN}.
\end{proof}

\begin{claim}[{\NRMDatalogSN} to {\SPARQL}]
  \label{claim:md2sparql}
  The triple $(f_{21},g_{21},h_{12})$ is a simulation of {\NRMDatalogSN} in {\SPARQL}.
\end{claim}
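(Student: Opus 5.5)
We prove, for every normalized {\NRMDatalogSN} query $Q=(L,\Pi)$ and database $D$, that $\ev{Q}{D}=h_{21}(\ev{f_{21}(Q)}{g_{21}(D)})$. (The claim cites $h_{12}$, but the answer translation that fits this simulation is $h_{21}$.) By Lemma~\ref{lemma:normalized-md} we may assume $\Pi$ contains only projection, join and negation rules. The proof has two parts. First we show the $\operatorname{VarQuery}$ disjunct contributes exactly the marker. Then we prove the core statement by induction.

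\textbf{The marker.} On $g_{21}(D)$, the pattern $\operatorname{VarQuery}(\var(L))$ has exactly one solution, $\mu_{V\mapsto\NULL}$, with cardinality $1$. This holds because $(\NULL,\NULL,\NULL)$ is the only triple with subject $\NULL$. We also check that no solution of $\gpattern(L,\Pi)$ maps a variable to $\NULL$. The reason is that every value bound by $\gpattern$ is the object of a triple $(u,\alpha_i,c)$ with $i\ge 1$, and these objects are never $\NULL$. Hence the $\UNION$ adds exactly the one marker mapping. Since $h_{21}$ removes $\mu_{V\mapsto\NULL}$ and reads off $V$, it is enough to prove the following core statement by induction along the acyclic dependency graph of $\Pi$: for every positive literal $p(\bar X)$ and every substitution $\theta$ with domain $\var(p(\bar X))$,
\[
\card(\theta(p(\bar X)),\ans(\Pi,D))=\card(\mu_\theta,\ev{\gpattern(p(\bar X),\Pi)}{g_{21}(D)}),
\]
where $\mu_\theta$ is $\theta$ read as a mapping. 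We also show that every solution of $\gpattern$ has domain exactly $\var(p(\bar X))$.

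\textbf{Base case: $p$ extensional.} For each colored copy $\colCopy{F}{i}$ of $F=p(c_1,\dots,c_n)$ there is a distinct IRI $u_{\colCopy{F}{i}}$. So the $\AAND$-chain has exactly one solution per copy, namely the one binding $?Y$ to that IRI. Projecting away $?Y$ with $\SELECT$ sums these, giving cardinality $\card(F,D)$. This equals the number of derivation trees for $F$. One point needs care: the constants are written with colored subscripts $(c_j)_{\colCopy{F}{i}}$. We read these as the plain IRI for $c_j$, since otherwise joins across facts would fail. Repeated variables in $p(\bar X)$ are handled automatically by the shared variables in the triple patterns.

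\textbf{Inductive step: $p$ intensional.} We first show renaming is harmless: $\operatorname{vr}(R_k,L)$ is a variable renaming, and proofs of ground atoms do not depend on variable names. Distinct rules with head $p$ give disjoint sets of derivation trees. So the number of proofs of $\theta(L)$ is the sum over those rules, which matches the arithmetic $\UNION$. We then treat each rule type.

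\emph{Projection rule} $L\gets L_1$. The proofs of $\theta(L)$ correspond bijectively to pairs $(\theta_1,\text{proof of }\theta_1(L_1))$ with $\theta_1$ extending $\theta$. This matches the $\SELECT$ sum, using the induction hypothesis on $L_1$.

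\emph{Join rule.} Proofs correspond to pairs of proofs of $\theta(L_1)$ and $\theta(L_2)$, so counts multiply. We must check that $\AAND$ compatibility agrees with shared variables. This holds because, by the induction hypothesis, all solutions are total on their literal's variables, so there are no unbound values.

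\emph{Negation rule.} Here $\var(L_2)=\var(L_1)$, and $\theta(L_2)$ has no proof exactly when $\mu_\theta\notin\ev{P_2}{g_{21}(D)}$. This is $\EXCEPT$ semantics, which keeps the cardinality from $P_1$.

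\textbf{Expected main obstacle.} The delicate steps are the bookkeeping for renaming and sharing. First, the inductive argument must treat $\gpattern$ as defined on arbitrary literals, including literals with repeated variables. Second, the derivation-tree construction (which adds trees "until no more are generated") must be matched with a stratified count. I would first settle that the counting is well defined by a rank induction over the acyclic dependency graph, and only then run the per-rule cases above.
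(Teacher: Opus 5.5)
Your proposal is correct and follows essentially the same route as the paper. Both argue by induction over the normalized program along the dependency order. The extensional case uses one fresh IRI per colored copy, and the intensional case sums over the rules for the predicate, with projection, join and negation rules mapped to $\SELECT$, $\AAND$ and $\EXCEPT$. Your additions make precise several points the paper's proof leaves implicit: you treat the $\operatorname{VarQuery}$ marker explicitly, you use totality of solution domains to justify $\AAND$ compatibility and $\EXCEPT$ membership, you use $h_{21}$ in place of $h_{12}$, and you address stratified well-definedness of the proof counts.
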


\begin{proof}
  To prove this claim, we consider only normalized {\NRMDatalogSN} queries $Q$, that is, queries where rules consist of projection rules, join rules and negation rules (see Section~\ref{sec:datalog2sparql}). This proof follows from induction on the structure of query $Q=(q(\bar{X}),\Pi)$ with inductive hypothesis $\theta \in \ev{Q}{D}$ if and only if $h_{12}(\theta) \in \ev{f_{21}(Q)}{g_{21}(D)}$ and $\card(\theta, \ev{Q}{D}) = \card(h_{12}(\theta), \ev{f_{21}(Q)}{g_{21}(D)})$.
  \begin{enumerate}
  \item
    If $q$ is an extensional predicate, then $f_{21}(Q)$ is the {\SPARQL} query
    \begin{align*}
      (\SELECT\; \varX_1 \dots \varX_n\;
      ((\varY, \alpha_0 ,p) \AAND\, (\varY, \alpha_1 ,\varX_1) \AAND \cdots \AAND\, (\varY, \alpha_n, \varX_n)),    
    \end{align*}
    where the {\SPARQL} variables $\varX_1 \dots \varX_n$ correspond to the $n$ {\NRMDatalogSN} variables in $\bar{X}$.
    \begin{enumerate}
    \item
      By construction, $\theta \in \ev{Q}{D}$ if and only if $h_{12}(\theta) \in \ev{f_{21}(Q)}{g_{21}(D)}$. Indeed, each colored fact $\langle q(a_1,\dots,a_n), i \rangle$ in $\col(D)$ corresponds to a subgraph $\{(u_i, \alpha_0 ,p), (u_i, \alpha_1 ,a_1), \cdots, (u_i, \alpha_n, a_n)\}$ where $u_i$ is a fresh IRI to identify the colored fact, and $a_i=\theta(x_i)$, for the $i$-th variable $x_i \in \bar{X}$.
    \item
      $\card(\theta, \ev{Q}{D})$ is the cardinality of $q(a_1,\dots,a_n)$ in multiset $D$. By construction, this is the number of subsgraphs of the form $\{(u_i, \alpha_0 ,p), (u_i, \alpha_1 ,a_1), \cdots, (u_i, \alpha_n, a_n)\}$ of $g_{21}(D)$. Hence, $\card(\theta, \ev{Q}{D}) = \card(h_{12}(\theta), \ev{f_{21}(Q)}{g_{21}(D)})$.
    \end{enumerate}
  \item
    If $q$ is an intensional predicate, then there are several rules in $\Pi$ with head $q(\bar{X})$, each one matching one of the following forms:
    \begin{itemize}
    \item $q(\bar{X}) \gets p(\bar{Y})$,
    \item $q(\bar{X}) \gets p_1(\bar{Y}_1), p_2(\bar{Y}_2)$,
    \item $q(\bar{X}) \gets p_3(\bar{Y}_3), \neg p_4(\bar{Y}_4)$.
    \end{itemize}
    The function $f_{21}(Q)$ maps each of these rules to one of the following {\SPARQL} queries:
    \begin{itemize}
    \item $(\SELECT\;\bar{X}\;f_{21}((p(\bar{Y}),\Pi)))$,
    \item $(f_{21}((p_1(\bar{X}),\Pi)) \AAND f_{21}((p_2(\bar{X}),\Pi)))$,
    \item $(f_{21}((p_3(\bar{X}),\Pi)) \EXCEPT f_{21}((p_4(\bar{X}),\Pi))$.
    \end{itemize}
    If $\{R_1,\dots,R_n\}$ is the set rules in $\Pi$ with predicate $q$ in the head, then the {\SPARQL} query $f_{21}(Q)$ has the form $(P_1 \UNION \cdots \UNION P_n)$, where $P_i$ is the corresponding {\SPARQL} query for the rule $R_i$, for $1 \leq i \leq n$.
    \begin{enumerate}
    \item
      First we will prove that the {\SPARQL} query and the {\NRMDatalogSN} query have the same answers. A substitution $\theta$ is an answer of query $Q$ if and only if at least one of the following conditions holds:
      \begin{itemize}
      \item
        For a rule $R_i$ of the form $q(\bar{X}) \gets p(\bar{Y})$, there exists a solution $\theta'$ of query $(p(\bar{Y}),\Pi)$ such that $\theta(x)=\theta'(x)$ for every variable $x \in \bar{X}$. Then, by the inductive hypothesis, there exists a solution $\mu' \in \ev{f_{21}((p(\bar{Y}),\Pi))}{g_{21}(D)}$ such that $h_{12}(\mu') = \theta'$. Let $\mu$ be the solution mapping $\mu'|_{\bar{X}}$. By construction, $\mu \in \ev{f_{21}(Q)}{g_{21}(D)}$ and $h_{12}(\mu)=\theta$.
      \item
        For a rule $R_i$ of the form $q(\bar{X}) \gets p_1(\bar{Y}_1), p_2(\bar{Y}_2)$, substitutions $\theta_1=\theta|_{\bar{Y}_1}$ and $\theta_2 =\theta|_{\bar{Y}_2}$ are solutions of queries $(p_1(\bar{Y}_1),\Pi)$ and $(p_2(\bar{Y}_2),\Pi)$. By the inductive hypothesis, there exist two solutions $\mu_1 \in \ev{f_{21}((p_1(\bar{Y}_1), \Pi))}{g_{21}(D)}$ and $\mu_2 \in \ev{f_{21}((p_2(\bar{Y}_2), \Pi))}{g_{21}(D)}$ such that $h_{12}(\mu_1)=\theta_1$ and $h_{12}(\mu_2)=\theta_2$. Let $\mu$ be the solution mapping $\mu_1 \cup \mu_2$. By construction, $\mu \in \ev{f_{21}(Q)}{g_{21}(D)}$ and $h_{12}(\mu)=\theta$.
      \item
        For a rule $R_i$ of the form $q(\bar{X}) \gets p_3(\bar{Y}_3), \neg p_4(\bar{Y}_4)$, substitution $\theta$ is a solution of query $(p_3(\bar{Y}_3),\Pi)$ and $\theta$ is not a solution of query $(p_4(\bar{Y}_3),\Pi)$. By the inductive hypothesis, there exists a solution $\mu \in \ev{f_{21}((p_3(\bar{Y}_3), \Pi))}{g_{21}(D)}$ such that $\mu \notin \ev{f_{21}((p_4(\bar{Y}_4), \Pi))}{g_{21}(D)}$, and $h_{12}(\mu)=\theta$. By construction, $\mu \in \ev{f_{21}(Q)}{g_{21}(D)}$.
      \end{itemize}
      Hence, $\theta \in \ev{Q}{D}$ if and only if there exists $\mu$ such that $f_{12}(\mu)=\theta$ and $\mu \in \ev{f_{21}(Q)}{g_{21}(D)}$.
    \item
      We next prove that the answers have the same cardinality in {\SPARQL} and {\NRMDatalogSN}. By definition,
      \[\card(\theta, \ev{Q}{D})=
        \begin{aligned}[t]
          &\sum_{\theta'|_{\bar{X}}=\theta}
            \card(\theta', \ev{(p(\bar{Y}),\Pi)}{D})\; +\\
          &\card(\theta_1, \ev{(p_1(\bar{Y}_1),\Pi)}{D})\times
            \card(\theta_2, \ev{(p_2(\bar{Y}_2),\Pi)}{D})\; +\\
          &\card(\theta, \ev{(p_3(\bar{Y}_3),\Pi)}{D}).
        \end{aligned}
      \]
      By the inductive hypothesis,
      \begin{align*}
        \card(\theta, \ev{Q}{D})
        &=
          \begin{aligned}[t]
            &\sum_{\substack{\theta'|_{\bar{X}}=\theta \\ h_{12}(\mu')=\theta'}}
            \card(\mu', \ev{f_{21}((p(\bar{Y}),\Pi))}{g_{21}(D)})\; +\\
            &\card(\mu_1, \ev{f_{21}((p_1(\bar{Y}_1),\Pi))}{g_{21}(D)})\times
              \card(\mu_2, \ev{f_{21}((p_2(\bar{Y}_2),\Pi))}{g_{21}(D)})\; +\\
            &\card(\mu, \ev{f_{21}((p_3(\bar{Y}_3),\Pi))}{g_{21}(D)})
          \end{aligned}\\
        &= \card(\mu, \ev{f_{21}(Q)}{g_{21}(D)}).
      \end{align*}
    \end{enumerate}
  \end{enumerate}
  Hence, the triple $(f_{21}, g_{21}, h_{12})$ is a simulation of {\NRMDatalogSN} in {\SPARQL}.
\end{proof}

\begin{claim}[{\MRA} to {\NRMDatalogSN}]
  \label{claim:mra2md}
  The triple $(f_{32},g_{32},h_{2,3})$ is a simulation of {\MRA} in {\NRMDatalogSN}.
\end{claim}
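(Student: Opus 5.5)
We prove the claim by structural induction on the {\MRA} expression $E$, assuming (by the lemma on atomic selection formulas) that every selection formula in $E$ is atomic. The invariant is the following. Let $f_{32}(E)=(q(\vec{E}),\Pi)$ and let $D$ be an {\MRA} database. Then for every tuple $t$ over $\widehat{E}$, the fact $q(t)$ (arguments in lexicographic order of attributes) lies in $\ans(\Pi,g_{32}(D))$ exactly when $t\in\Eval(E,D)$. Moreover, the number of proofs of $q(t)$ equals $\card(t,\Eval(E,D))$. Since $h_{32}$ merely reads substitutions $\theta$ with $\dom(\theta)=\var(q(\vec{E}))=\widehat{E}$ back as tuples with the same cardinality, this invariant immediately gives $\Eval(E,D)=h_{32}(\ev{f_{32}(E)}{g_{32}(D)})$. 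The schema is also preserved when the answer is empty, because $V=\var(L)$ always.

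The key tool is a counting principle for proofs, read off from the construction of $\dt(\Pi,D)$. A proof of $\theta(L_0)$ by a rule $L_0\gets L_1,\dots,L_k,\neg B$ is a choice of substitution $\theta$ together with one proof tree for each $\theta(L_i)$, subject to $\theta(B)$ having no proof. The number of proofs of a head fact is therefore the sum, over all rules and all substitutions $\theta$ extending it, of the products of the children's proof counts. Because the programs for distinct subexpressions use fresh predicate names, the induction hypothesis applies unchanged to each $q_i$ inside the larger program.

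The cases, following Table~\ref{table:mra2rules}, go as follows.

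\textbf{Base case $R$.} The facts $p_R(c_1,\dots,c_n)$ in $g_{32}(D)$ have multiplicity $\card(t,r)$, and each colored copy yields one proof of $q_0(t)$.

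\textbf{Union.} The counts of the two rules add.

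\textbf{Join.} A body substitution is determined by the head tuple $t_1\cup t_2$, so the count is $\card(t_1,r_1)\cdot\card(t_2,r_2)$.

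\textbf{Projection.} The head variables $S$ omit some body variables, so the proofs sum over all $t$ with $t[S]=t'$, matching $\pi_S$.

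\textbf{Except.} Safe negation keeps the count of $q_1(t)$ exactly when $q_2(t)$ has no proof. By the induction hypothesis this happens iff $t\notin r_2$.

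\textbf{Selection $\sigma_{A=B}$ and renaming.} The extra literal $\equal(A,B)$ matches only facts $\equal(c,c)$. Each such fact occurs exactly once in $g_{32}(D)$ and has a single proof, so it multiplies the count by $1$ when $t(A)=t(B)$ and by $0$ otherwise. For renaming, $\equal(A,B)$ copies the value of $A$ into $B$ with multiplicity $1$. Since the head omits $A$, the sum over values of $A$ has exactly one nonzero term.

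The main obstacle is the atomic selections that Table~\ref{table:mra2rules} does not list explicitly. For $\sigma_{A=c}$ I would use the rule $q_0(\bar A)\gets q_1(\bar A_1),\equal(A,c)$, which counts exactly as in the $\sigma_{A=B}$ case. The goal-clause restriction forbids constants only in heads, so this is allowed. For $c_1=c_2$ the result is either the whole expression (a copy rule) or empty. The empty case I would express as $E\setminus E$, which is already covered. Negated atoms arise from the normalization via equivalence~(\ref{formula:diff-mra}), so they become except operations and need no separate case.

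A subsidiary point needs care. Every constant that appears in a tuple of $D$ must have its $\equal(c,c)$ fact, and Definition~\ref{def:mra2md-database} guarantees this. Values produced by subexpressions are always constants of $D$, since the rules never introduce new constants, so the $\equal$ lookups are complete. I would also check that the multiplicity of each $\equal$ fact is exactly $1$, because otherwise selection and renaming would inflate the counts.
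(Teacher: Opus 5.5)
Your proof is correct and follows the same route as the paper: a structural induction on the selection-normalized {\MRA} expression, checking for each rule of Table~\ref{table:mra2rules} that the head fact $q(t)$ is derivable exactly when $t \in \Eval(E,D)$ and that its number of proofs equals $\card(t,\Eval(E,D))$ (union adds counts, join multiplies, projection sums, except filters, and the single-copy $\equal(c,c)$ facts contribute a factor of $1$). Your explicit handling of $\sigma_{A=c}$ and $\sigma_{c_1=c_2}$, which the table and the paper's proof omit, and your observation that each $\equal$ fact must have multiplicity exactly one are useful refinements of the same argument, not a different approach.
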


\begin{proof}
  We prove this claim for normalized {\MRA} expressions where the condition of a selection formula is always an equality atom (e.g., $\sigma_{A=B}(R)$). This proof follows by induction on the structure of a {\MRA} expression $E$, assuming that given a {\MRA} database $D$, for every subquery $E'$ of $E$ it holds that $t' \in \ev{E'}{D}$ if and only if there exists a {\NRMDatalogSN} solution $\theta' \in \ev{f_{32}(E)}{g_{32}(D)}$ such that $h_{2,3}(\theta')=t'$.
  \begin{enumerate}
  \item
    If $E$ is a relation name $R$ then $f_{32}(E)$ is the {\NRMDatalogSN} query $(r(\widehat{E}), \emptyset)$ where $r$ is an extensional predicate.
    \begin{enumerate}
    \item
      By definition, $t \in \ev{E}{D}$ if and only if $t$ belongs to the multiset relation corresponding to the relation name $R$ in the database $D$. By construction, $t \in \ev{E}{D}$ is thus equivalent to $\theta \in \ev{f_{32}(E)}{g_{32}(D)}$, where $h_{2,3}(\theta)=t$. Indeed, $t \in R^I$ if and only if $r(a_1,\dots,a_n) \in g_{32}(D)$ and $t=(a_1,\dots,a_n)$.
    \item
      The fact that $\card(t, \ev{E}{D})=\card(\theta, \ev{f_{32}(E)}{g_{32}(D)})$ follows by construction; the cardinality of $t$ in the multiset relation corresponding to the relation name $R$ is the same as the cardinality of fact $r(a_1,\dots,a_n)$ in multiset $g_{32}(D)$.
    \end{enumerate}
  \item
    If $E$ is a query $E_1 \cup E_2$, then $\widehat{E}_1=\widehat{E}$ and $\widehat{E}_2=\widehat{E}$, and $f_{32}(E)$ is a {\NRMDatalogSN} query $(q(\widehat{E}),\Pi)$ such that $f_{32}(E_1)=(q_1(\widehat{E}),\Pi)$ and $f_{32}(E_2)=(q_2(\widehat{E}),\Pi)$, and program $\Pi$ includes the rules $q(\widehat{E}) \gets q_1(\widehat{E})$ and $q(\widehat{E}) \gets q_2(\widehat{E})$.
    \begin{enumerate}
    \item
      By definition, $t \in \ev{E}{D}$ if and only if $t \in \ev{E_1}{D}$ or $t \in \ev{E_2}{D}$. By the induction hypothesis, $t \in \ev{E_2}{D}$ is equivalent to say that there exists $\theta$ such that $h_{2,3}(\theta)=t$ and $\theta \in \ev{f_{32}(E_1)}{g_{32}(D)}$ or $\theta \in \ev{f_{32}(E_2)}{g_{32}(D)}$. That is, $\theta \in \ev{f_{32}(E)}{g_{32}(D)}$.
    \item
      Assume the respective answers $t$ and $\theta$ described in (a).
      By definition,
      \begin{align*}
        &\card(t, \ev{E}{D}) = \card(t, \ev{E_1}{D}) + \card(t, \ev{E_2}{D}),\\
        &\card(\theta, \ev{f_{32}(E)}{g_{32}(D)}) =
          \card(\theta, \ev{f_{32}(E_1)}{g_{32}(D)}) +
          \card(\theta, \ev{f_{32}(E_2)}{g_{32}(D)}).
      \end{align*}
      By the inductive hypothesis, these two cardinalities are equal.
    \end{enumerate}
  \item
    If $E$ is a query $E_1 \Join E_2$  then $\widehat{E}_1 \cup \widehat{E}_2 = \widehat{E}$, $f_{32}(E)=(q(\widehat{E}),\Pi)$, $f_{32}(E_1)=(q_1(\widehat{E}_1),\Pi)$, $f_{32}(E_2)=(q_2(\widehat{E}_2),\Pi)$, and program $\Pi$ includes the rule $q(\widehat{E}) \gets q_1(\widehat{E}_1),q_2(\widehat{E}_2)$.
    \begin{enumerate}
    \item
      By definition, $t \in \ev{E}{D}$ if and only if there exists two tuples $t_1$ and $t_2$ such that $t_1 \sim t_2$, $t=t_1 \cup t_2$, $t_1 \in \ev{E_1}{D}$ and $t_2 \in \ev{E_2}{D}$. By the induction hypothesis, $t \in \ev{E}{D}$ if and only if there exists two {\NRMDatalogSN} solutions $\theta_1 \in \ev{f_{32}(E_1)}{g_{32}(D)}$ and $\theta_2 \in \ev{f_{32}(E_2)}{g_{32}(D)}$ where $h_{2,3}(\theta_1)=t_1$ and $h_{2,3}(\theta_2)=t_2$. Let $\theta$ be $\theta_1 \cup \theta_2$. By construction, $\theta \in \ev{f_{32}(E)}{g_{32}(D)}$ and $h_{2,3}(\theta)=t$.
    \item
      Assume the respective answers $t$, $t_1$, $t_2$, $\theta$, $\theta_1$, and $\theta_2$ described in (a).
      By definition,
      \begin{align*}
        &\card(t, \ev{E}{D}) = \card(t_1, \ev{E_1}{D}) \times \card(t_2, \ev{E_1}{D}),\\
        &\card(\theta, \ev{f_{32}(E)}{g_{32}(D)}) =
          \card(\theta_1, \ev{f_{32}(E_1)}{g_{32}(D)}) \times
          \card(\theta_2, \ev{f_{32}(E_1)}{g_{32}(D)}).
      \end{align*}
      By the inductive hypothesis, these two cardinalities are equal.
    \end{enumerate}
  \item
    If $E$ is a query $E_1 \setminus E_2$ then $\widehat{E}_1 = \widehat{E}$, $\widehat{E}_1 = \widehat{E}$, $f_{32}(E)=(q(\widehat{E}),\Pi)$, $f_{32}(E_1)=(q_1(\widehat{E}),\Pi)$, $f_{32}(E_2)=(q_2(\widehat{E}),\Pi)$, and program $\Pi$ includes the rule $q(\widehat{E}) \gets q_1(\widehat{E}),\neg q_2(\widehat{E})$.
    \begin{enumerate}
    \item
      By definition, $t \in \ev{E}{D}$ if and only if $t \in \ev{E_1}{D}$ and $t \notin \ev{E_2}{D}$. By the induction hypothesis, $t \in \ev{E}{D}$ if and only if there exists a {\NRMDatalogSN} solution $\theta$ such that $\theta \in \ev{f_{32}(E_1)}{g_{32}(D)}$, $\theta \notin \ev{f_{32}(E_2)}{g_{32}(D)}$, and $h_{2,3}(\theta)=t$. By construction, $\theta \in \ev{f_{32}(E)}{g_{32}(D)}$.
    \item
      Assume the respective answers $t$ and $\theta$ described in (a). By definition, $\card(t, \ev{E}{D}) = \card(t, \ev{E_1}{D})$ and $\card(\theta, \ev{f_{32}(E)}{g_{32}(D)}) = \card(\theta, \ev{f_{32}(E_1)}{g_{32}(D)})$. By the induction hypothesis, these two cardinalities are equal.
    \end{enumerate}
  \item
    If $E$ is a query $\pi_S(E_1)$ then $\widehat{E}=S$ and $S \subseteq \widehat{E_1}$, and $f_{32}(E)$ is a {\NRMDatalogSN} query $(q(\bar{X}),\Pi)$ such that $f_{32}(E_1)=(q_1(\widehat{E}),\Pi)$, and program $\Pi$ includes the rule $q(\widehat{E}) \gets q_1(\widehat{E}_1)$.
    \begin{enumerate}
    \item
      By definition, $t \in \ev{E}{D}$ if and only if there exists a tuple $t_1 \in \ev{E_1}{D}$ such that $t_1|_{\widehat{E}}=t$. By the induction hypothesis, $t_1 \in \ev{E}{D}$ if and only if there exists $\theta_1 \in \ev{f_{32}(E_1)}{g_{32}(D)}$ such that $h_{2,3}(\theta_1)=t_1$. Let $\theta$ be $\theta_1|_{\widehat{E}}$. By construction, $t \in \ev{E}{D}$ if and only if $\theta \in \ev{f_{32}(E)}{g_{32}(D)}$ and $h_{2,3}(\theta)=t$.
    \item
      Assume the respective answers $t$ and $\theta$ described in (a). By definition,
      \begin{align*}
        &\card(t, \ev{E}{D}) = \sum_{\substack{
          t_1 \in \ev{E_1}{D} \\ t_1|_{\widehat{E}}=t
        }} \card(t_1, \ev{E_1}{D}),\\
        &\card(\theta, \ev{f_{32}(E)}{g_{32}(D)}) = \sum_{\substack{
          \theta_1 \in \ev{f_{32}(E_1)}{g_{32}(D)} \\ \theta_1|_{\widehat{E}}=\theta
        }} \card(\theta_1, \ev{f_{32}(E_1)}{g_{32}(D)}).
      \end{align*}
      By the induction hypothesis, these two cardinalities are equal.
    \end{enumerate}
  \item
    If $E$ is a query $\rho_{A/B}(E_1)$ then $\widehat{E}=(\widehat{E}_1 \setminus \{A\}) \cup \{B\}$, $f_{32}(E)=(q(\widehat{E}),\Pi)$, and $f_{32}(E_1)=(q(\widehat{E}_1),\Pi)$.
    \begin{enumerate}
    \item
      By definition, $t \in \ev{E}{D}$ if and only if there exists a tuple $t_1 \in \ev{E_1}{D}$ where $t(C)=t_1(C)$ for every attribute $C \in \widehat{E} \setminus \{A\}$, and $t(A)=t_1(B)$. By the induction hypothesis, $t_1 \in \ev{E_1}{D}$ if and only if there exists a solution $\theta_1 \in \ev{f_{32}(E_1)}{g_{32}(D)}$ such that $h_{2,3}(\theta_1) = t_1$. Let $\theta$ be the tuple with domain $\widehat{E}$ such that $\theta(C)=\theta_1(C)$ for every attribute $C \in \widehat{E} \setminus \{A\}$, and $\theta(A)=\theta_1(B)$. By construction, $\theta \in \ev{f_{32}(E)}{g_{32}(D)}$ if and only if $\theta_1 \in \ev{f_{32}(E_1)}{g_{32}(D)}$ and $h_{2,3}(\theta)=t$. 
    \item
      Assume the respective query answers $t$, $t_1$, $\theta$, and $\theta_1$ described in (a). By definition,
      \begin{align*}
        &\card(t, \ev{E}{D}) = \card(t_1, \ev{E_1}{D}),\\
        &\card(\theta, \ev{f_{32}(E)}{g_{32}(D)}) = \card(\theta_1, \ev{f_{32}(E_1)}{g_{32}(D)}).
      \end{align*}
      By the induction hypothesis, these two cardinalities are equal.
    \end{enumerate}
  \item
    If $E$ is a query $\sigma_{A=B}(E_1)$ then $\widehat{E}=\widehat{E}_1$, $f_{32}(E)=(q(\widehat{E}),\Pi)$, and $f_{32}(E_1)=(q_1(\widehat{E}_1),\Pi)$ and program $\Pi$ includes the rule $q(\widehat{E}) \gets q_1(\widehat{E}_1), A = B$.
    \begin{enumerate}
    \item
      By definition, $t \in \ev{E}{D}$ if and only if $t \in \ev{E_1}{D}$ and $t(A)=t(B)$. By the induction hypothesis, there is an answer $\theta \in \ev{f_{32}(E_1)}{g_{32}(D)}$ such that $h_{2,3}(\theta)=t$. By construction, $\theta(A)=\theta(B)$. Then, $\theta \in \ev{f_{32}(E)}{g_{32}(D)}$.
    \item
      Assume the respective query answers $t$ and $\theta$ described in (a). By definition,
      \begin{align*}
        &\card(t, \ev{E}{D}) = \card(t_1, \ev{E_1}{D}),\\
        &\card(\theta, \ev{f_{32}(E)}{g_{32}(D)}) = \card(\theta_1, \ev{f_{32}(E_1)}{g_{32}(D)}).
      \end{align*}
      By the induction hypothesis, these two cardinalities are equal.
    \end{enumerate}
  \end{enumerate}
  Hence, the triple $(f_{32}, g_{32}, h_{2,3})$ is a simulation of {\MRA} in {\NRMDatalogSN}.
\end{proof}

\begin{claim}[{\NRMDatalogSN} to {\MRA}]
  \label{claim:md2mra}
  The triple $(f_{2,3},g_{2,3},h_{32})$ is a simulation of {\NRMDatalogSN} in {\SPARQL}.
\end{claim}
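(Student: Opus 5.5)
By Lemma~\ref{lemma:normalized-md}, it suffices to treat normalized {\NRMDatalogSN} queries $Q=(q(\bar{X}),\Pi)$. In these, every rule is a projection rule, a join rule, or a negation rule. (The claim is stated for {\SPARQL}, but the triple $(f_{23},g_{23},h_{23})$ targets {\MRA}, so that is the simulation I will establish.) The argument is an induction along the dependency graph of $\Pi$, which is acyclic because $\Pi$ is non-recursive. For every literal $L=p(\bar{Y})$ occurring in $\Pi$ and every substitution $\theta$ with $\dom(\theta)=\var(L)$, the inductive statement is
\[
\card(\theta, \ev{(L,\Pi)}{D}) = \card(t_\theta, \Eval(\delta_1(L,\Pi), g_{23}(D))),
\]
where $t_\theta$ is the tuple over the attributes $\var(L)$ with $t_\theta(X)=\theta(X)$. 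The claim for $Q$ then follows by taking $L$ to be the goal and applying $h_{23}$, which simply reads tuples back as substitutions with the same cardinalities. Stating equality of cardinalities for all $\theta$ also covers membership, since cardinality $0$ means absence.

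\textbf{Base case.} The predicate $p$ is extensional. Then $\delta_1(L,\Pi)$ is the relation $\psi(D_p)$ with its positional attributes $att\_i$ renamed to $Y_i$. Each colored copy $\colCopy{p(\bar c)}{i}$ in $\col(D)$ gives exactly one derivation tree, so the number of proofs of $p(\bar c)$ equals its multiplicity in $D$. By definition of $\psi$, this is also the multiplicity of the corresponding tuple, and renaming preserves cardinalities. One subtlety needs checking: a literal such as $p(X,X)$ with a repeated variable. Here the renaming $\rho$ cannot map two attributes to the same name. I would handle this by inserting a selection $\sigma_{att\_1=att\_2}$ followed by a projection, or by assuming the normal form has already separated repeated variables.

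\textbf{Inductive case.} The predicate $p$ is intensional, defined by rules $r_1,\dots,r_m$. Every proof of $\theta(L)$ has a root labelled by a unique rule, and the derivation-tree construction in Section~\ref{ss:datalogsemantics} ranges over all sequences $ST$ of subproofs. Hence the number of proofs of $\theta(L)$ is the sum over the rules $r_k$ of the number of proofs whose root uses $r_k$. This matches the arithmetic union $\cup$ in $\delta_1$. Before analysing each rule, I rename it to agree with $L$, using $\operatorname{vr}$ as in the {\SPARQL} section. The three rule types are then handled as follows.
\begin{itemize}
\item \emph{Projection rule} $L\gets L_1$. The proofs of $\theta(L)$ correspond bijectively to pairs $(\theta_1,\text{proof of }\theta_1(L_1))$ with $\theta_1|_{\var(L)}=\theta$. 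Summing over these pairs gives exactly the {\MRA} projection cardinality $\sum_{t_1[S]=t}\card(t_1,r_1)$.
\item \emph{Join rule} $L\gets L_1,L_2$ with $\var(L)=\var(L_1)\cup\var(L_2)$. The substitution $\theta$ determines $\theta_1=\theta|_{\var(L_1)}$ and $\theta_2=\theta|_{\var(L_2)}$ uniquely. Proofs are therefore pairs of subproofs, which gives the product; this is the natural join cardinality, since $t_\theta=t_{\theta_1}\cup t_{\theta_2}$ for a unique compatible pair.
\item \emph{Negation rule}. The condition on the negative literal is that no proof of $\theta(L_2)$ exists. By the induction hypothesis, this holds iff $t_\theta\notin \Eval(\delta_1(L_2,\Pi),g_{23}(D))$. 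The cardinality is then that of $\theta(L_1)$, matching $\setminus$.
\end{itemize}

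\textbf{Main obstacle.} The delicate step is justifying the proof-counting decomposition: that the number of derivation trees with a given root equals the sum over rules of the products of subproof counts, with no double counting. Two points need care. First, derivation trees built from distinct sequences $ST$, or via distinct rules, must be regarded as distinct; I would argue this using the root edge labels, which record the rule. Second, for negation, the set $DT$ must be complete for the negated predicate at the moment it is consulted. I would settle the latter by evaluating along a topological order of the acyclic dependency graph, so that all proofs of $p_2$ exist before any negation rule mentioning $p_2$ fires. A smaller technical point is the need to rename head variables consistently across the rules defining the same predicate, so that the union operands share the schema $\var(L)$.
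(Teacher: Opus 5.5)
Your proposal is correct and follows the paper's proof essentially step for step. Both restrict to normalized programs and induct over the non-recursive structure. The base case is an extensional predicate, translated as the renamed relation $\rho_{A_1/X_1}(\cdots\rho_{A_n/X_n}(R)\cdots)$, whose tuple multiplicities are the fact multiplicities. The intensional case is an arithmetic union over the defining rules, with projection, join and negation rules handled by $\pi$, $\Join$ and $\setminus$.

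You are more careful than the paper on several points it glosses over:
\begin{itemize}
\item the target language is MRA, not SPARQL, and the answer map is $h_{23}$;
\item rule heads must be renamed so that the union operands share a schema;
\item repeated variables in extensional literals need separate treatment;
\item derivation trees must not be double-counted;
\item negation must be evaluated along a topological order, so that ``no proof of $\theta(L_2)$'' is well defined.
\end{itemize}
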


\begin{proof}
To prove this claim we consider only normalized {\NRMDatalogSN} queries $Q$, that is, queries where rules consist of projection rules, join rules and negation rules (see Section~\ref{sec:datalog2sparql}). This proof follows from induction on the structure of query $Q=(q(\bar{X}),\Pi)$ with inductive hypothesis $\theta \in \ev{Q}{D}$ if and only if $h_{32}(\theta) \in \ev{f_{2,3}(Q)}{g_{2,3}(D)}$ and $\card(\theta, \ev{Q}{D}) = \card(h_{32}(\theta), \ev{f_{2,3}(Q)}{g_{2,3}(D)})$.
  \begin{enumerate}
  \item
    If $q$ is a extensional predicate, then $f_{21}(Q)$ is the {\MRA} query $\rho_{A_1/X_1}(\cdots\rho_{A_n/X_n}(R))$, where the {\MRA} attributes $X_1,\dots,X_n$ correspond to the $n$ {\NRMDatalogSN} variables in $\bar{X}$, and $R$ is the relation name corresponding to predicate $q$.
    \begin{enumerate}
    \item
      Let $r$ be the {\MRA} relation associated to relation name $R$ in the {\MRA} database $g_{2,3}(D)$. Let $\theta$ be a {\NRMDatalogSN} answer with domain $\{X_1,\dots,X_1\}$, and $t$ be the {\MRA} tuple where $t(A_i)=\theta(X_i)$ for $1 \leq i \leq n$. By definition, $\theta \in \ev{Q}{D}$ if and only if $p(\theta(X_1),\dots,\theta(X_n)) \in D$. Because, by definition, each fact $q(a_1,\dots,a_n)$ in $D$ corresponds to a tuple $t \in r$ where $t(A_i)=a_i$ for $1 \leq i \leq n$, then $\theta \in \ev{Q}{D}$ if and only if $t \in r$. Let $s$ be a {\MRA} tuple with $\hat{s} = \{X_1,\dots,X_n\}$ where $s(X_i)=t(A_i)$, for $1 \leq i \leq n$. By definition, $t \in r$ if and only if $s \in \ev{\rho_{A_1/X_1}(\cdots\rho_{A_n/X_n}(R))}{g_{2,3}(D)}$. By construction, $s = h_{32}(\theta)$. Hence, $\theta \in \ev{Q}{D}$ if and only if $h_{32}(\theta) \in \ev{f_{2,3}(Q)}{g_{2,3}(D)}$.
    \item
      The identity $\card(\theta, \ev{Q}{D}) = \card(h_{32}(\theta), \ev{f_{2,3}Q}{f_{2,3}(D)})$ follows from the next identities:
      \begin{align*}
        \card(\theta, \ev{Q}{D})
        \begin{aligned}[t]
          &= \card(q(\theta(X_1),\dots,\theta(X_n)), D) \\
          &= \card(t, r) \\
          &= \card(s, \ev{f_{2,3}(Q)}{g_{2,3}(D)}) \\
          &= \card(h_{32}(\theta), \ev{f_{2,3}Q}{g_{2,3}(D)}).
        \end{aligned}
      \end{align*}
    \end{enumerate}
  \item
    If $q$ is an intensional predicate then there are several rules in $\Pi$ with head $q(\bar{X})$, each one has matches of the following forms:
    \begin{itemize}
    \item $q(\bar{X}) \gets p(\bar{Y})$,
    \item $q(\bar{X}) \gets p_1(\bar{Y}_1), p_2(\bar{Y}_2)$,
    \item $q(\bar{X}) \gets p_3(\bar{Y}_3), \neg p_4(\bar{Y}_4)$.
    \end{itemize}
    where $\bar{X} \subseteq \bar{Y}$, $\bar{Y}_1 \cup \bar{Y}_2 = \bar{X}$, $\bar{Y}_3 = \bar{X}$, and $\bar{Y}_4 = \bar{X}$.
    The function $f_{2,3}(Q)$ maps each of these rules to one of the following {\MRA} queries:
    \begin{itemize}
    \item $\pi_{\bar{X}}(f_{2,3}((p(\bar{Y}),\Pi)))$,
    \item $(f_{2,3}((p_1(\bar{X}),\Pi)) \Join f_{2,3}((p_2(\bar{X}),\Pi)))$,
    \item $(f_{2,3}((p_3(\bar{X}),\Pi)) \setminus f_{2,3}((p_4(\bar{X}),\Pi)))$.
    \end{itemize}
    If $\{R_1,\dots,R_n\}$ is the set rules in $\Pi$ with predicate $q$ in the head, then the {\MRA} query $f_{2,3}(Q)$ has the form $(E_1 \cup \cdots \cup E_n)$, where $E_i$ is the corresponding {\MRA} expression for the rule $R_i$, for $1 \leq i \leq n$.
    \begin{enumerate}
    \item
      First, we will prove that the {\MRA} expression and the {\NRMDatalogSN} query have the same answers. A substitution $\theta$ is an answer of query $Q$ if and only if one of the following conditions holds:
      \begin{itemize}
      \item
        There exists a solution $\theta'$ of query $(p(\bar{Y}),\Pi)$ such that $\theta(x)=\theta'(x)$ for every variable $x \in \bar{X}$. By the induction hypothesis, there exists a solution $t' \in \ev{f_{2,3}((p(\bar{Y}),\Pi))}{g_{2,3}(D)}$ such that $h_{32}(t') = \theta'$. Let $t$ be the solution mapping $t'|_{\bar{X}}$. By construction, $t \in \ev{f_{2,3}(Q)}{g_{2,3}(D)}$ and $h_{32}(t)=\theta$.
      \item
        {\sloppy Substitutions $\theta_1=\theta|_{\bar{Y}_1}$ and $\theta_2 =\theta|_{\bar{Y}_2}$ are solutions of queries $(p_1(\bar{Y}_1),\Pi)$ and $(p_2(\bar{Y}_2),\Pi)$. By the induction hypothesis, there exists two solutions $t_1 \in \ev{f_{2,3}((p_1(\bar{Y}_1), \Pi))}{g_{2,3}(D)}$ and $t_2 \in \ev{f_{2,3}((p_2(\bar{Y}_2), \Pi))}{g_{2,3}(D)}$ such that $h_{32}(t_1)=\theta_1$ and $h_{32}(t_2)=\theta_2$. Let $t$ be the {\MRA} solution $t_1 \cup t_2$. By construction, $t \in \ev{f_{2,3}(Q)}{g_{2,3}(D)}$ and $h_{32}(t)=\theta$.\par}
      \item
        {\sloppy $\theta$ is a solution of query $(p_3(\bar{Y}_3),\Pi)$ and $\theta$ is not a solution of query $(p_4(\bar{Y}_3),\Pi)$. By the induction hypothesis, there exists a solution $t \in \ev{f_{2,3}((p_3(\bar{Y}_3), \Pi))}{g_{2,3}(D)}$ such that $t \notin \ev{f_{2,3}((p_4(\bar{Y}_4), \Pi))}{g_{2,3}(D)}$, and $h_{32}(t)=\theta$. By construction, $t \in \ev{f_{2,3}(Q)}{g_{2,3}(D)}$.\par}
      \end{itemize}
      Hence, $\theta \in \ev{Q}{D}$ if and only if there exists $\mu$ such that $f_{12}(\mu)=\theta$ and $\mu \in \ev{f_{21}(Q)}{g_{21}(D)}$.
    \item
      We next prove that the answers have the same cardinality in {\MRA} and {\NRMDatalogSN}.
      By definition,
      \[\card(\theta, \ev{Q}{D})=
        \begin{aligned}[t]
          &\sum_{\theta'|_{\bar{X}}=\theta}
            \card(\theta', \ev{(p(\bar{Y}),\Pi)}{D})\; +\\
          &\card(\theta_1, \ev{(p_1(\bar{Y}_1),\Pi)}{D})\times
            \card(\theta_2, \ev{(p_2(\bar{Y}_2),\Pi)}{D})\; +\\
          &\card(\theta, \ev{(p_3(\bar{Y}_3),\Pi)}{D}).
        \end{aligned}
      \]
      By the induction hypothesis,
      \begin{align*}
        \card(\theta, \ev{Q}{D})
        &=
          \begin{aligned}[t]
            &\sum_{\substack{\theta'|_{\bar{X}}=\theta \\ h_{12}(t')=\theta'}}
            \card(t', \ev{f_{2,3}((p(\bar{Y}),\Pi))}{g_{2,3}(D)})\; +\\
            &\card(t_1, \ev{f_{2,3}((p_1(\bar{Y}_1),\Pi))}{g_{2,3}(D)})\times
              \card(t_2, \ev{f_{2,3}((p_2(\bar{Y}_2),\Pi))}{g_{2,3}(D)})\; +\\
            &\card(t, \ev{f_{2,3}((p_3(\bar{Y}_3),\Pi))}{g_{2,3}(D)})
          \end{aligned}\\
        &= \card(t, \ev{f_{2,3}(Q)}{g_{2,3}(D)}).
      \end{align*}
    \end{enumerate}
  \end{enumerate}
  Hence, the triple $(f_{2,3}, g_{2,3}, h_{32})$ is a simulation of {\NRMDatalogSN} in {\MRA}.
\end{proof}

\begin{claim}[{\MRA} to {\SPARQL}]
  \label{claim:mra2sparql}
  The triple $(f_{31},g_{31},h_{13})$ is a simulation of {\NRMDatalogSN} in {\SPARQL}.
\end{claim}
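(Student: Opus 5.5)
The plan is to prove the claim, despite its wording, as the statement that $(f_{31},g_{31},h_{31})$ simulates {\MRA} in {\SPARQL}, which is the content of Lemma~\ref{lemma:mra2sparql}. Concretely, for every {\MRA} expression $E$ and database $D$ we show $\Eval(E,D)=h_{31}(\ev{f_{31}(E)}{g_{31}(D)})$. Since $f_{31}(E)=(\Upsilon(E)\UNION\operatorname{AttrQuery}(E))$, the argument splits into two parts.

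The first part handles the auxiliary pattern. On $g_{31}(D)$ the only triple with subject $\NULL$ is $(\NULL,\NULL,\NULL)$, so $\operatorname{AttrQuery}(E)$ returns exactly one mapping, $\mu_{V\mapsto\NULL}$ with $V=\widehat{E}$, and it has cardinality $1$. Since $\NULL$ is never the value of an attribute in $D$, this mapping cannot occur in $\ev{\Upsilon(E)}{g_{31}(D)}$. Hence the arithmetic union simply adds this one marker mapping, $h_{31}$ removes it, and $h_{31}$ also recovers the schema $V$ even when the relation is empty. This reduces the claim to the following core statement, proved by induction on $E$: every mapping $\mu\in\ev{\Upsilon(E)}{g_{31}(D)}$ has $\dom(\mu)=\widehat{E}$ (so $\inScope(\Upsilon(E))=\widehat{E}$ and no variable is unbound), and $\card(\mu,\ev{\Upsilon(E)}{g_{31}(D)})=\card(t_\mu,\Eval(E,D))$.

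For the base case $E=R$, each colored copy $t^i$ of a tuple $t$ contributes an IRI $\iriT{i}$. The pattern $\Upsilon(R)$ matches $?Y$ exactly against these IRIs, and each match determines the attribute values uniquely. The $\SELECT$ removes $?Y$, and summing over the $\card(t,r)$ copies gives the required cardinality. Note that the generic variable names used for $?X_i$ must not clash; I assume a fresh $?Y$ for each occurrence of a relation.

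The inductive cases then follow operator by operator.
\begin{itemize}
\item For $\Join$ versus $\AAND$: all mappings are total on their schemas, so SPARQL compatibility coincides with agreement on the shared attributes. The cardinality formulas agree, because a merged mapping determines its two parts, so each sum has a single term.
\item For $\cup$ versus $\UNION$ and $\setminus$ versus $\EXCEPT$: the schemas are equal, so the semantics match verbatim.
\item For $\pi_S$ versus $\SELECT$: the restriction of mappings and the summation are identical.
\item For $\sigma_\psi$ versus $\FILTER$: because every variable of $\varphi$ is bound, the three-valued evaluation never produces $\error$ and coincides with $t\models\psi$. Constant-only atoms $c_1=c_2$ must be replaced by $\true$ or $\false$, or handled by a trivial rewriting.
\item For $\rho_{A/B}$: we need a lemma that $\ev{\subs_{\varX/\varY}(P)}{G}$ is obtained from $\ev{P}{G}$ by renaming $\varX$ to $\varY$ while preserving cardinalities. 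This is proved by its own induction over the cases of the renaming definition, using the fresh variable $\varZ$ in the $\SELECT$ case to avoid variable capture.
\end{itemize}

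I expect the main obstacle to be the renaming lemma together with keeping variable names fresh and disjoint. A stray shared variable, such as a reused $?Y$ in two relation patterns or a hidden variable that clashes in $\AAND$, would break the join case. I would first fix a convention that all non-projected variables introduced by $\Upsilon$ are globally fresh, and then prove by induction that $\Upsilon$ preserves this convention, so that hidden variables never interact.
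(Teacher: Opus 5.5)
Your proposal is correct and is essentially the paper's proof: a structural induction on $E$ that pairs each {\MRA} operator with its image under $\Upsilon$ and checks membership and cardinality case by case. Your separate handling of the $\operatorname{AttrQuery}$ marker, the invariant $\dom(\mu)=\widehat{E}$, and the error-free evaluation of arbitrary selection formulas make precise what the paper only asserts ``by construction'' (the paper also restricts selections to atoms $A=B$).

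One caution on the renaming lemma: the appendix's $\SELECT$ clause for $\subs_{\varX/\varY}$ never renames $\varX$ inside $P_1$, so you must prove the lemma for the corrected clause $(\SELECT\,(W\setminus\{\varX\}\cup\{\varY\})\;\subs_{\varX/\varY}(P_1))$, first moving a hidden $\varY$ to a fresh $\varZ$. Your concern about the join case, by contrast, is unfounded: variables projected away by an inner $\SELECT$ never appear in the mappings an outer $\AAND$ sees, so a tuple variable reused across relation patterns is harmless and freshness matters only for this syntactic renaming.
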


\begin{proof}
  We proof this claim for normalized {\MRA} expressions where the condition of a selection formula is always an equality atom (e.g., $\sigma_{A=B}(R)$). We proof this claim by induction on the structure of a {\MRA} expression $E$, assuming that given an {\MRA} database $D$, for every subquery $E'$ of $E$ it holds that $t' \in \ev{E'}{D}$ if and only if there exists a {\SPARQL} solution $\mu' \in \ev{f_{31}(E)}{g_{31}(D)}$ such that $h_{13}(\mu')=t'$.
  \begin{enumerate}
  \item
    If $E$ is a relation name $R$ then $f_{31}(E)$ is the {\SPARQL} query $(\SELECT\; {?A_1}\cdots {?A_n}\;P)$ where $P$ is the basic graph pattern
    $((\varX,u_b,u_r) \AAND (\varX,u_1,?A_1)\AAND \cdots \AAND (\varX,u_n,?A_n)))$, and $?A_1,\dots,?A_n$ are the variables corresponding to the attributes associated to relation name $R$.
    \begin{enumerate}
    \item
      Let $t$ be an {\MRA} tuple with $\hat{t}=\widehat{R}$, and $\mu$ be an {\SPARQL} mapping with $h_{13}(\mu)=t$. By definition, $t \in \ev{E}{D}$ if and only if tuple $t$ belongs to multiset relation $R^D$. By construction, $t \in \ev{E}{D}$ is thus equivalent to the existence of an an IRI $u$ such that the triples $(u,u_b,u_r)$, $(u,u_1,t(A_1))$, \dots, $(u,u_n,t(A_n))$ belong to the RDF graph $g_{31}(D)$. By definition, there exists such an IRI $u$ if and only if there exists a {\SPARQL} mapping $\mu' \in \ev{P}{g_{31}(D)}$ where $\mu'(\varX)=u$ and $\mu(?A_i)=t(a_i)$, for $1 \leq i \leq n$. By construction, $\mu'|_{?A_1,\dots,?A_n} = \mu$. Then, $\mu' \in \ev{P}{g_{31}(D)}$ if and only if $\mu \in \ev{f_{31}(Q)}{g_{31}(D)}$.
    \item
      The fact that $\card(t, \ev{E}{D})=\card(\mu, \ev{f_{31}(E)}{g_{31}(D)})$ follows by construction; the cardinality of $t$ in the multiset relation $R^D$ is the same as the number of IRIs $u$ such that such that the triples $(u,u_b,u_r)$, $(u,u_1,t(A_1))$, \dots, $(u,u_n,t(A_n))$ belong to the RDF graph $g_{31}(D)$.
    \end{enumerate}
  \item
    If $E$ is a query $E_1 \cup E_2$, then $\widehat{E}_1=\widehat{E}$, $\widehat{E}_2=\widehat{E}$, and $f_{31}(E)$ is the {\SPARQL} query $(f_{31}(E_1) \UNION f_{31}(E_2))$.
    \begin{enumerate}
    \item
      Let $t$ be an {\MRA} tuple with $\hat{t}=\widehat{E}$, and $\mu$ be an {\SPARQL} mapping such that $h_{13}(\mu)=t$. By definition, $t \in \ev{E}{D}$ if and only if $t \in \ev{E_1}{D}$ or $t \in \ev{E_2}{D}$. By the induction hypothesis, $t \in \ev{E}{D}$ if and only if $\mu \in \ev{f_{31}(E_1)}{g_{31}(D)}$ or $\mu \in \ev{f_{31}(E_2)}{g_{31}(D)}$. By definition, $t \in \ev{E}{D}$ if and only if $\mu \in \ev{f_{31}(E)}{g_{31}(D)}$.
    \item
      Assume the respective answers $t$ and $\mu$ described in (a).
      By definition,
      \begin{align*}
        &\card(t, \ev{E}{D}) = \card(t, \ev{E_1}{D}) + \card(t, \ev{E_2}{D}),\\
        &\card(\mu, \ev{f_{31}(E)}{g_{31}(D)}) =
          \card(\mu, \ev{f_{31}(E_1)}{g_{31}(D)}) +
          \card(\mu, \ev{f_{31}(E_2)}{g_{31}(D)}).
      \end{align*}
      By the induction hypothesis, these two cardinalities are equal.
    \end{enumerate}
  \item
    If $E$ is a query $E_1 \Join E_2$  then $\widehat{E}_1 \cup \widehat{E}_2 = \widehat{E}$, and $f_{31}(E)$ is the {\SPARQL} query $(f_{31}(E_1) \AAND f_{31}(E_2))$.
    \begin{enumerate}
    \item
      Let $t$ be an {\MRA} tuple with $\hat{t}=\widehat{E}$, and $\mu$ be an {\SPARQL} mapping such that $h_{13}(\mu)=t$. By definition, $t \in \ev{E}{D}$ if and only if there exists two tuples $t_1$ and $t_2$ such that $t_1 \sim t_2$, $t=t_1 \cup t_2$, $t_1 \in \ev{E_1}{D}$ and $t_2 \in \ev{E_2}{D}$. By the induction hypothesis, $t_1 \in \ev{E_1}{D}$ and $t_2 \in \ev{E_2}{D}$ if and only there exist two {\SPARQL} mappings $\mu_1$ and $\mu_2$ such that $h_{13}(\mu_1)=t_1$, $h_{13}(\mu_2)=t_2$, $\mu_1 \in \ev{f_{31}(E_1)}{g_{31}(D)}$, and $\mu_1 \in \ev{f_{31}(E_1)}{g_{31}(D)}$. By construction, $\mu_1 \sim \mu_2$, $\mu_1 \cup \mu_2 = \mu$, and $\mu \in \ev{f_{31}(E)}{g_{31}(D)}$. Hence, $t \in \ev{E}{D}$ if and only if $\mu \in \ev{f_{31}(E)}{g_{31}(D)}$.
    \item
      Assume the respective answers $t$, $t_1$, $t_2$, $\mu$, $\mu_1$, and $\mu_2$ described in (a).
      By definition,
      \begin{align*}
        &\card(t, \ev{E}{D}) = \card(t_1, \ev{E_1}{D}) \times \card(t_2, \ev{E_1}{D}),\\
        &\card(\mu, \ev{f_{31}(E)}{g_{31}(D)}) =
          \card(\mu_1, \ev{f_{31}(E_1)}{g_{31}(D)}) \times
          \card(\mu_2, \ev{f_{31}(E_1)}{g_{31}(D)}).
      \end{align*}
      By the induction hypothesis, these two cardinalities are equal.
    \end{enumerate}
  \item
    If $E$ is a query $E_1 \setminus E_2$ then $\widehat{E}_1 = \widehat{E}$, $\widehat{E}_1 = \widehat{E}$, and $f_{31}(E)$ is the {\SPARQL} query $(f_{31}(E_1) \EXCEPT f_{31}(E_2))$.
    \begin{enumerate}
    \item
      Let $t$ be an {\MRA} tuple with $\hat{t}=\widehat{E}$, and $\mu$ be an {\SPARQL} mapping such that $h_{13}(\mu)=t$. By definition, $t \in \ev{E}{D}$ if and only if $t \in \ev{E_1}{D}$ and $t \notin \ev{E_2}{D}$. By the induction hypothesis, $t \in \ev{E}{D}$ if and only if $\mu \in \ev{f_{31}(E_1)}{g_{31}(D)}$ and $\mu \notin \ev{f_{31}(E_2)}{g_{31}(D)}$. Hence, $t \in \ev{E}{D}$ if and only if $\mu \in \ev{f_{31}(E)}{g_{31}(D)}$.
    \item
      Assume the respective answers $t$ and $\mu$ described in (a). By definition, $\card(t, \ev{E}{D}) = \card(t, \ev{E_1}{D})$ and $\card(\mu, \ev{f_{31}(E)}{g_{31}(D)}) = \card(\mu, \ev{f_{31}(E_1)}{g_{31}(D)})$. By the induction hypothesis, these two cardinalities are equal.
    \end{enumerate}
  \item
    If $E$ is a query $\pi_S(E_1)$ then $\widehat{E}=S$ and $S \subseteq \widehat{E_1}$, and $f_{31}(E)$ is a {\SPARQL} query $(\SELECT\;W\;f_{31}(E_1))$ such that $W$ is the corresponding set of {\SPARQL} variables for the set of attributes $S$.
    \begin{enumerate}
    \item
      Let $t$ be an {\MRA} tuple with $\hat{t}=\widehat{E}$. By definition, $t \in \ev{E}{D}$ if and only if there exists a tuple $t_1 \in \ev{E_1}{D}$ such that $t_1|_{\widehat{E}}=t$. By the induction hypothesis, $t_1 \in \ev{E}{D}$ if and only if there exists $\mu_1 \in \ev{f_{31}(E_1)}{g_{31}(D)}$ such that $h_{13}(\mu_1)=t_1$. Let $\mu$ be $\mu_1|_{W}$. By construction, $t \in \ev{E}{D}$ if and only if $\mu \in \ev{f_{31}(E)}{g_{31}(D)}$ and $h_{13}(\mu)=t$.
    \item
      Assume the respective answers $t$ and $\mu$ described in (a). By definition,
      \begin{align*}
        &\card(t, \ev{E}{D}) = \sum_{\substack{
          t_1 \in \ev{E_1}{D} \\ t_1|_{\widehat{E}}=t
        }} \card(t_1, \ev{E_1}{D}),\\
        &\card(\mu, \ev{f_{31}(E)}{g_{31}(D)}) = \sum_{\substack{
          \mu_1 \in \ev{f_{31}(E_1)}{g_{31}(D)} \\ \mu_1|_{W}=\mu
        }} \card(\mu_1, \ev{f_{31}(E_1)}{g_{31}(D)}).
      \end{align*}
      By the induction hypothesis, these two cardinalities are equal.
    \end{enumerate}
  \item
    If $E$ is a query $\rho_{A/B}(E_1)$ then $\widehat{E}=(\widehat{E}_1 \setminus \{A\}) \cup \{B\}$, $f_{31}(E)$ is the {\SPARQL} query that results from consistently renaming variable $?A$ as variable $?B$ in $f_{31}(E_1)$ (i.e., $\subs_{?A/?B}(A)$), and $?A$ and $?B$ are the corresponding {\SPARQL} variables for atributes $A$ and $B$.
    \begin{enumerate}
    \item
      By definition, $t \in \ev{E}{D}$ if and only if there exists a tuple $t_1 \in \ev{E_1}{D}$ where $t(C)=t_1(C)$ for every attribute $C \in \widehat{E} \setminus \{A\}$, and $t(A)=t_1(B)$. By the induction hypothesis, $t_1 \in \ev{E_1}{D}$ if and only if there exists a solution $\mu_1 \in \ev{f_{31}(E_1)}{g_{31}(D)}$ such that $h_{13}(\mu_1) = t_1$. Let $\mu$ be the {\SPARQL} mapping with domain $(\dom(\mu') \setminus \{?A\}) \cup \{?B\}$ such that $\mu(?C)=\mu_1(?C)$ for every variable $?C \in \dom(\mu') \setminus \{?A\}$, and $\mu(?A)=\mu_1(?B)$. By construction, $h_{13}(\mu)=t$. Hence, $t \in \ev{E}{D}$ if and only if $\mu \in \ev{f_{31}(E)}{g_{31}(D)}$. 
    \item
      Assume the respective query answers $t$, $t_1$, $\mu$, and $\mu_1$ described in (a). By definition,
      \begin{align*}
        &\card(t, \ev{E}{D}) = \card(t_1, \ev{E_1}{D}),\\
        &\card(\mu, \ev{f_{31}(E)}{g_{31}(D)}) = \card(\mu_1, \ev{f_{31}(E_1)}{g_{31}(D)}).
      \end{align*}
      By the induction hypothesis, these two cardinalities are equal.
    \end{enumerate}
  \item
    If $E$ is a query $\sigma_{A=B}(E_1)$ then $\widehat{E}=\widehat{E}_1$, $f_{31}(E)=(q(\widehat{E}),\Pi)$, and $f_{31}(E_1)$ is the SPARQ query $(P_1\; \FILTER\; {?A} = {?B})$ where $?A$ and $?B$ are the corresponding {\SPARQL} variables for the {\MRA} attributes $A$ and $B$.
    \begin{enumerate}
    \item
      By definition, $t \in \ev{E}{D}$ if and only if $t \in \ev{E_1}{D}$ and $t(A)=t(B)$. By the induction hypothesis, there is an answer $\mu \in \ev{f_{31}(E_1)}{g_{31}(D)}$ such that $h_{13}(\mu)=t$. By construction, $\mu(A)=\mu(B)$. Then, $t \in \ev{E}{D}$ if and only if $\mu \in \ev{f_{31}(E)}{g_{31}(D)}$.
    \item
      Assume the respective query answers $t$ and $\theta$ described in (a). By definition,
      \begin{align*}
        &\card(t, \ev{E}{D}) = \card(t_1, \ev{E_1}{D}),\\
        &\card(\mu, \ev{f_{31}(E)}{g_{31}(D)}) = \card(\mu_1, \ev{f_{31}(E_1)}{g_{31}(D)}).
      \end{align*}
      By the induction hypothesis, these two cardinalities are equal.
    \end{enumerate}
  \end{enumerate}
  Hence, the triple $(f_{31}, g_{31}, h_{13})$ is a simulation of {\MRA} in {\NRMDatalogSN}.
\end{proof}

\begin{claim}[{\SPARQL} to {\MRA}]
  \label{claim:sparql2mra}
  The triple $(f_{13},g_{13},h_{31})$ is a simulation of {\NRMDatalogSN} in {\SPARQL}.
\end{claim}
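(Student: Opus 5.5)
We establish the simulation by structural induction on a normalized {\SPARQL} pattern $P$ whose filter conditions are atomic; this is no loss of generality by the normalization lemma and Lemma~\ref{formula-reduction}. The claim's label is garbled: what must be shown is that $(f_{13},g_{13},h_{13})$ simulates {\SPARQL} in {\MRA}, with $h_{13}$ the answer translation of Definition~\ref{def:sparql2mra-answer}, which drops $\bot$-valued attributes.

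The inductive hypothesis is the following invariant. For every graph $G$ and every normalized $P$, the relation $r_P=\Eval(f_{13}(P),g_{13}(G))$ has schema equal to the attributes corresponding to $\inScope(P)$. Moreover, the map $t\mapsto h_{13}(t)$ is a bijection from $\set(r_P)$ onto $\set(\ev{P}{G})$ that preserves cardinalities. Its inverse sends $\mu$ to the tuple $t_\mu$ defined by $t_\mu(X)=\mu(\varX)$ if $\varX\in\dom(\mu)$ and $t_\mu(X)=\bot$ otherwise. This is well defined because $\dom(\mu)\subseteq\inScope(P)$, and because $\bot$ never occurs in $\TripRel$. Stating the invariant as a bijection with the $\bot$-padding, rather than just as an equality after applying $h_{13}$, is what makes the union, except and join cases work. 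Otherwise two distinct tuples could collapse to the same mapping.

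The cases are then as follows.
\begin{enumerate}
\item \emph{Triple patterns.} Each $\Lambda(T)$ selects the rows of $\TripRel$ matching the constants and the repeated-variable equalities, then renames and projects. Since $G$ is a set and every triple carries distinct positions, each matching mapping arises from exactly one tuple, so it has cardinality $1$.
\item \emph{$\UNION$ and $\EXCEPT$.} Normalization gives equal schemas, so these follow immediately from the injectivity of $\mu\mapsto t_\mu$ on a fixed schema: membership and $\notin$ transfer, and cardinalities either add or are copied.
\item \emph{$\FILTER$.} One checks for each atomic $\varphi$ that $t_\mu\models\gamma(\varphi)$ iff $\mu(\varphi)=\true$. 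The conjuncts $\neg(X=\bot)$ handle exactly the $\error$ case, and for $\bound$ they give the right value.
\item \emph{$\SELECT$.} Joining with $\Delta_{\mathcal{B}}$ pads each new attribute with $\bot$ and multiplies cardinalities by $1$. Projecting onto $\mathcal{A}$ commutes with the padding: $(t_\mu)[\mathcal{A}]=t_{\mu|_W}$, because variables outside $\inScope(P_1)$ are unbound. Hence the projection sums coincide with the {\SPARQL} sums.
\end{enumerate}

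The main obstacle is the $\AAND$ case, i.e.\ the operator $E_1*E_2$. Let $\mathcal{X}=\inScope(P_1)\cap\inScope(P_2)$. The key step is a lemma: for tuples $t_1=t_{\mu_1}$ and $t_2=t_{\mu_2}$ over renamed attributes, the join $\CompRel(\nu_1,\nu_2,\mathcal{X})\Join\nu_1(t_1)\Join\nu_2(t_2)$ is nonempty iff $\mu_1\sim\mu_2$. When it is nonempty, it contains exactly one tuple, and that tuple's projection is $t_{\mu_1\cup\mu_2}$. This holds because for each $X\in\mathcal{X}$ the rows of $\CompRel$ with first two coordinates $(a,a)$, $(a,\bot)$, $(\bot,a)$ or $(\bot,\bot)$ are unique, determine the third coordinate as the merged value, and have no row for $(a,b)$ with $a\neq b$. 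The only RDF terms that can appear are those of $G$, and these are all present in $\CompRel$.

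With this lemma, $\card(t_\mu,r_P)$ equals the sum over pairs $(t_1,t_2)$ with merge $t_\mu$ of $\card(t_1)\cdot\card(t_2)$. Here the projection $\pi_{\widehat{E}_1\cup\widehat{E}_2}$ aggregates different pairs, and the renamed copies become irrelevant. By the inductive bijection this is exactly the {\SPARQL} $\AAND$ cardinality. I expect the delicate points to be this uniqueness of the $\CompRel$ witness and the fact that the projection sums over pairs rather than collapsing them.
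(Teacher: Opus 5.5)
Your proposal is correct and follows essentially the same route as the paper's proof. Both use structural induction on normalized patterns with atomic filter conditions, handling triple patterns via $\Lambda$, $\AAND$ via $E_1 * E_2$ and $\CompRel$, $\UNION$ and $\EXCEPT$ directly, $\FILTER$ via $\gamma$, and $\SELECT$ via $\Delta_{\mathcal{B}}$, and matching membership and cardinalities in each case. Your explicit $\bot$-padding bijection, the uniqueness of the $\CompRel$ witness, and the remark that all bound values lie in $\terms(G)$ make precise what the paper's $\AAND$ case leaves implicit: there, a sum over triples $(t_1,t_2,t_3)$ with $\card(t_3)=1$.
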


\begin{proof}
  To prove this claim we show that, for every {\SPARQL} query $Q$ and RDF graph $G$, it holds that $\ev{Q}{G} = h_{31}(\ev{f_{13}(Q)}{g_{13}})$. For simplicity, we write $D$ instead of $D$. We next show this identity by induction on the structure of a normalized {\SPARQL} query $Q$. In this proof we assume that $t$ is a {\MRA} tuple with the attributes of the {\MRA} expression $f_{13}(Q)$, and $\mu$ is the {\SPARQL} mapping $h_{31}(t)$. To show that $\ev{Q}{G} = h_{31}(\ev{f_{13}(Q)}{g_{13}})$, we prove that $\mu \in \ev{Q}{G}$ if and only if $t \in \ev{f_{13}(Q)}{D}$ and $\card(\mu, \ev{Q}{G}) = \card(t, \ev{f_{13}(Q)}{D})$.
  \begin{enumerate}
  \item
    Case $Q$ is a triple pattern.
    \begin{enumerate}
    \item
      By definition, every triple pattern is translated to a {\MRA} expression consisting of operations $\sigma$, $\rho$, and $\pi$ over the relation name $\TripRel$. For example, if $Q$ is the triple pattern $(\varX,p,\varX)$, then $f_{12}(Q)$ is the expression $\Pi_{X}(\rho_{S/X}(\sigma_{P=p \land S=O}(\TripRel)))$. It can be shown that the triple pattern $Q=(\varX,p,\varX)$ is equivalent to the {\SPARQL} query $Q'=(\SELECT\; \varX\; ((\varX,?P,?O) \FILTER (?P=p\, \land\, \varX=?O))$. Then, $\mu \in \ev{Q}{G}$ if and only if there exists a solution $\mu' \in \ev{(\varX,?P,?O)}{G}$ such that $\mu = \mu'|_{\{X\}}$, $\mu'(?P)=p$ and $\mu'(\varX)=\mu'(?O)$. Without loss of generality, let $\mu'(\varX)=a$. Such mapping $\mu'$ is a solution of the triple pattern $(\varX,?P,?O)$ if and only if $(a,p,a) \in G$. By construction, $(a,p,a) \in G$ if and only if $(a,p,a) \in \TripRel^D$, where $D$ is the {\MRA} database $D$. If $(a,p,a) \in \TripRel^D$ then $t \in \ev{f_{13}(Q)}{D}$. Hence, $\mu \in \ev{Q}{G}$ if and only if $t \in \ev{f_{13}(Q)}{D}$. So far, we showed that the claim follows for a particular triple pattern. This result can be extended for all the triple patterns following the same procedure.
    \item
      By construction, $\card(\mu, \ev{f_{1,1}(Q)}{g_{1,1}(G)})=1$ and $\card(\mu, \ev{Q}{G})=1$. Hence, $\card(t, \ev{f_{1,1}(Q)}{g_{1,1}(G)}) = \card(\mu, \ev{Q}{G})$.
    \end{enumerate}
  \item
    Case $Q$ is a query $(P_1 \AAND P_2)$. Without loss of generality assume that $\inScope(P_1)=\{\varX,\varY\}$ and $\inScope(P_2)=\{\varX,\varZ\}$. By definition, the {\MRA} expression for query $Q$ is:
    \begin{align*}
      f_{13}(Q)
      &= f_{13}(P_1) * f_{13}(P_2) \\
      &= \pi_{X,Y,Z}(\rho_{A_1/X_1}(\rho_{A_2/X_2}(\rho_{A/X}(\CompRel)))
        \Join \rho_{X/X_1}(f_{13}(P_1)) \Join \rho_{X/X_2}(f_{13}(P_2))).
    \end{align*}
    \begin{enumerate}
    \item
      If $t \in \ev{f_{13}(Q)}{D}$ then, there are {\MRA} tuples $t_1 \in \ev{f_{13}(P_1)}{D}$, $t_2 \in \ev{f_{13}(P_2)}{D}$, and $t_3 \in \ev{\CompRel}{D}$ such that $t(X)=t_3(A)$, $t(Y)=t_2(Y)$, $t(Z)=t_3(Z)$, and $t_1(X)=t_3(A_1)$, $t_2(X)=t_3(A_2)$. Let $\mu_1=h_{31}(t_1)$, $\mu_2=h_{31}(t_2)$, and $\mu_3=h_{31}(t_3)$. By the induction hypothesis in $P_1$ and $P_2$, $t_1 \in \ev{f_{13}(P_1)}{D}$ and $t_2 \in \ev{f_{13}(P_2)}{D}$ if and only if  $\mu_1 \in \ev{P_1}{G}$ and $\mu_2 \in \ev{P_2}{G}$. By the definition of $\CompRel^{D}$, it holds that $\mu_1 \sim \mu_2$ and
      $\mu_1 \cup \mu_2 = \mu$. By the semantics of the {\SPARQL}
      operator $\AAND$, it holds then that $\mu \in \ev{Q}{G}$. Hence, $t \in \ev{f_{13}(Q)}{D}$ if and only if $\mu \in \ev{Q}{G}$.
    \item
      By definition,
      \begin{align*}
        \card(\mu, \ev{Q}{G})
        &=
          \sum_{
            \substack{
              \mu_1 \in \ev{P_1}{G}\\
              \mu_2 \in \ev{P_2}{G}\\
              \mu_1 \sim \mu_2\\
              \mu=\mu_1 \cup \mu_2
            }
          }
        \card(\mu_1,\ev{P_1}{G}) \times
        \card(\mu_2,\ev{P_2}{G}),\\
        \card(t, \ev{f_{13}(Q)}{D})
        &=
        \sum_{
          \substack{
            t_1 \in \ev{f_{13}(P_1)}{D}\\
            t_2 \in \ev{f_{13}(P_2)}{D}\\
            t_3 \in \ev{\CompRel}{D}\\
            \varphi(t_1,t_2,t_3)
          }
        }
        \card(t_3,\ev{\CompRel}{D}) \times
        \card(t_1,\ev{P_1}{D}) \times
        \card(t_2,\ev{P_2}{D}),
      \end{align*}
      where $\varphi(t_1,t_2,t_3)$ is a condition coresponding to the compatibility, that is true if and only if the following statements hold:
      \begin{enumerate}
      \item $t_1(Y) = t(Y)$,
      \item $t_2(Z) = t(Z)$, and
      \item either
        \begin{enumerate}
        \item $(t_1(X) = t(X)$ and $t_2(X) = t(X)$,
        \item $(t_1(X) = t(X)$ and $t_2(X) = t(X)$, or
        \item $(t_1(X) = t(X)$ and $t_2(X) = t(X)$.
        \end{enumerate}
      \end{enumerate}
      By the induction hypothesis, $\card(\mu_1, \ev{P_1}{G}) = \card(t_1, \ev{f_{13}(P_1)}{D})$ and $\card(\mu_2, \ev{P_2}{G}) = \card(t_2, \ev{f_{13}(P_2)}{D})$. By construction, $\card(t_3, \ev{\CompRel}{D}) = 1$. Hence, $\card(\mu, \ev{Q}{G}) = \card(t, \ev{f_{13}(Q)}{D})$.
    \end{enumerate}
  \item
    Case $Q$ is a query $(P_1 \EXCEPT P_2)$. Let $?\bar{X}$ be the list of {\SPARQL} variables in set $\inScope(Q)$. The {\MRA} query $f_{13}(Q)$ is then $f_{13}(P_1) \setminus f_{13}(P_2)$.
    \begin{enumerate}
    \item
      By definition, $t \in \ev{f_{13}(Q)}{D}$ if and only if $t \in \ev{f_{13}(P_1)}{D}$ and $t \notin \ev{f_{13}(P_2)}{D}$. By the induction hypothesis, the last condition is equivalent to $\mu \in \ev{P_1}{G}$ and $\mu \notin \ev{P_2}{G}$. By the {\SPARQL} semantics, $t \in \ev{f_{13}(Q)}{G}$ if and only if $\mu \in \ev{Q}{G}$.
    \item
      By definition, $\card(\mu, \ev{Q}{G})=\card(\mu, \ev{P_1}{G})$ and $\card(t, \ev{f_{13}(Q)}{D}) = \card(t, \ev{f_{13}(P_1)}{D})$. By the induction hypothesis, $\card(\mu, \ev{P_1}{G}) = \card(t, \ev{f_{13}(P_1)}{D})$. Hence, $\card(\mu, \ev{Q}{G})=\card(t, \ev{f_{13}(Q)}{D})$.
    \end{enumerate}
  \item
    Case $Q$ is a {\SPARQL} query $(P_1 \UNION P_2)$. The {\MRA} expression $f_{13}(Q)$ is then $f_{13}(P_1) \cup f_{13}(P_2)$.
    \begin{enumerate}
    \item
      By definition, $t \in \ev{f_{13}(Q)}{G}$ if and only if $t \in \ev{f_{13}(P_1)}{D}$ or $t \in \ev{f_{13}(P_2)}{D}$. By the induction hypothesis, $t \in \ev{f_{13}(Q)}{G}$ if and only if $\mu \in \ev{P_1}{G}$ or $\mu \in \ev{P_1}{G}$. By the {\SPARQL} semantics, $t \in \ev{f_{13}(Q)}{G}$ if and only if $\mu \in \ev{Q}{G}$.
    \item
      By definition,
      \begin{align*}
        \card(t, \ev{f_{13}(Q)}{D})
        &= \card(t, \ev{f_{13}(P_1)}{D}) +
          \card(t, \ev{f_{13}(P_2)}{D}),\\
        \card(\mu, \ev{Q}{G})
        &= \card(\mu, \ev{P_1}{G}) +
          \card(\mu, \ev{P_2}{G}).
      \end{align*}
      By the induction hypothesis,
      \[\card(t, \ev{f_{13}(P_1)}{D}) = \card(\mu, \ev{P_1}{G})
      \quad\text{and}\quad
      \card(t, \ev{f_{13}(P_2)}{D}) = \card(\mu, \ev{P_2}{G}).\]
      Hence $\card(t, \ev{f_{13}(Q)}{D}) = \card(\mu, \ev{Q}{G})$.
    \end{enumerate}
  \item
    Case $Q$ is a {\SPARQL} query $(P \FILTER \varphi)$ where $\varphi$ is an atomic filter condition (i.e., a filter condition of the form $\varX=c$, $\varX=\varY$, or $\bound(\varX)$). The {\MRA} expression $f_{13}(Q)$ is then $\sigma_{\psi}(f_{13}(P))$ where $\psi$ is the {\MRA} selection condition defined as follows:
    \[
      \psi = \left\{
        \begin{array}{ll}
          X = c \land \neg(X=\bot) & \quad\text{if }\varphi \text{ is } \varX=c, \\
          X = Y \land \neg(X=\bot) \land \neg(Y=\bot) & \quad\text{if }\varphi \text{ is } \varX=\varY, \\
          \neg(X=\bot) & \quad\text{if } \varphi \text{ is } \bound(\varX).
        \end{array}
      \right.
    \]
    \begin{enumerate}
    \item
      By definition, $t \in \ev{f_{13}(Q)}{D}$ if and only if $t \in \ev{f_{13}(P)}{D}$ and $t$ satisfies condition $\psi$. It is not difficult to see that $t$ satisfies condition $\psi$ if and only if $\mu$ satisfies condition $\varphi$. By the induction hypothesis, $t \in \ev{f_{13}(P)}{D}$ if and only if $\mu \in \ev{P}{G}$. Hence, $\mu \in \ev{f_{13}(Q)}{D}$ if and only if $\mu \in \ev{Q}{G}$.
    \item
      By definition, if $t$ and $\mu$ satisfy the respective conditions, then:
      \begin{align*}
        \card(t, \ev{f_{13}(Q)}{D}) &= \card(t, \ev{f_{13}(P)}{D}),\\
        \card(\mu, \ev{Q}{G}) &= \card(\mu, \ev{P}{G}).
      \end{align*}
      By the induction hypothesis, $\card(t, \ev{f_{13}(P)}{D}) = \card(\mu, \ev{P}{G})$. Hence,
      \[\card(t, \ev{f_{13}(Q)}{D}) = \card(\mu, \ev{Q}{G}).\]
    \end{enumerate}
  \item
    Case $Q$ is a {\SPARQL} query $(\SELECT\; ?\bar{X} \; P)$. The {\MRA} expression $f_{13}(Q)$ is then $\pi_{\bar{X}}(f_{13}(P) \Join \Delta_{\bar{Y}})$, where $\bar{X}$ is the corresponding set of attributes for the variables $?\bar{X}$ and $\bar{Y}$ is the correponding set of attributes for the variables in set $\inScope(P) \setminus \inScope(Q)$.
    \begin{enumerate}
    \item
      By definition, $t \in \ev{f_{13}(Q)}{D}$ if and only if $t(Y)=\bot$ for every attribute name $Y \in \bar{Y}$ and there exists a solution $t' \in \ev{f_{13}(P)}{D}$ such that $t'(A)=t(A)$ for every attribute $A \in \bar{X} \setminus \bar{Y}$. Let $\mu' = h_{31}(t')$. By the induction hypothesis, $t' \in \ev{f_{13}(P)}{D}$ if and only if $\mu' \in \ev{P}{G}$. By construction $\mu = \mu'|_{?\bar{X}}$. Hence, $t \in \ev{f_{13}(Q)}{D}$ if and only if $t \in \ev{P}{G}$.
    \item
      By construction,
      \begin{align*}
        \card(t, \ev{f_{13}(Q)}{D})
        &= \sum_{\substack{
          t'|_{\inScope(Q)} = t \\
          t' \in \ev{f_{13}(P)}{D}}}
        \card(t', \ev{f_{13}(P)}{D}),\\
        \card(\mu, \ev{Q}{G})
        &= \sum_{\substack{
          \mu'|_{\inScope(Q)} = \mu \\
          \mu' \in \ev{P}{G}}}
        \card(\mu', \ev{P}{G}),
      \end{align*}
      By the induction hypothesis, $\card(t', \ev{f_{13}(P)}{D}) = \card(\mu', \ev{P}{G})$. Hence,
      \[\card(t, \ev{f_{13}(Q)}{D}) = \card(\mu', \ev{Q}{G}).\]
    \end{enumerate}
  \end{enumerate}
  Hence, the triple $(f_{13} , g_{13} , h_{31})$ is a simulation of {\SPARQL} in {\MRA}.
\end{proof}


\end{document}